\newcommand{\ueq}[1][]{%
  \if\relax\detokenize{#1}\relax
    \sbox0{$\underbrace{=}_{}$}%
    \mathrel{\mathmakebox[\wd0]{=}}
  \else
    \mathrel{\underbrace{=}_{\mathclap{#1}}}
  \fi}
\newcommand {\ctn}{\cite}
\newcommand{\bd}{\boldsymbol{d}}
\newcommand{\boeta}{\boldsymbol{\eta}}
\newcommand{\btheta}{\boldsymbol{\theta}}
\newcommand{\bbeta}{\boldsymbol{\beta}}
\newcommand{\bdelta}{\boldsymbol{\delta}}
\newcommand{\bxi}{\boldsymbol{\xi}}
\newcommand{\bgamma}{\boldsymbol{\gamma}}
\newcommand{\bSigma}{\boldsymbol{\Sigma}}
\newcommand{\bpsi}{\boldsymbol{\psi}}
\newcommand{\bpi}{\boldsymbol{\pi}}
\newcommand{\bmu}{\boldsymbol{\mu}}
\newcommand{\bzeta}{\boldsymbol{\zeta}}
\newcommand{\bV}{\boldsymbol{V}}
\newcommand{\bG}{\boldsymbol{G}}
\newcommand{\bA}{\boldsymbol{A}}
\newcommand{\bS}{\boldsymbol{S}}
\newcommand{\bt}{\boldsymbol{t}}
\newcommand{\bu}{\boldsymbol{u}}
\newcommand{\bv}{\boldsymbol{v}}
\newcommand{\bs}{\boldsymbol{s}}
\newcommand{\bx}{\boldsymbol{x}}
\newcommand{\bX}{\boldsymbol{X}}
\newcommand{\bY}{\boldsymbol{Y}}
\newcommand{\bz}{\boldsymbol{z}}
\newcommand{\bw}{\boldsymbol{w}}
\newtheorem{theorem}{Theorem}
\newtheorem{corollary}[theorem]{Corollary}
\newtheorem{definition}[theorem]{Definition}
\newtheorem{lemma}[theorem]{Lemma}
\newtheorem{remark}[theorem]{Remark}
\newcommand{\statesp}{\ensuremath{\mathcal X}}
\newcommand{\Y}{\ensuremath{\mathcal Y}}
\newcommand{\D}{\ensuremath{\mathcal D}}
\newcommand{\e}{\ensuremath{\epsilon}}
\newcommand{\supr}[2]{{#1}^{(#2)}}
\newcommand{\be}{\pmb\e}
\newcommand{\bm}{\mathbf}
\newcommand{\topline}{\hrule height 1pt width \textwidth \vspace*{2pt}}
\newcommand{\botline}{\vspace*{2pt}\hrule height 1pt width \textwidth \vspace*{4pt}}
\newtheorem{algo}{Algorithm} 
\numberwithin{equation}{section}
\numberwithin{algo}{section}
\numberwithin{table}{section}
\numberwithin{figure}{section}
\newcommand{\bi}[1]{\mbox{\boldmath{$ #1 $}}}
\begin{document}

\title{\vspace{-0.8in}
\textbf{Nonstationary, Nonparametric, Nonseparable Bayesian Spatio-Temporal Modeling Using Kernel Convolution of Order Based Dependent Dirichlet Process}}
\author{Moumita Das and Sourabh Bhattacharya\thanks{
Moumita Das is a postdoctoral research fellow at Basque Center for Applied Mathematics, Spain and
Sourabh Bhattacharya is an Associate Professor in Interdisciplinary Statistical Research Unit, Indian Statistical
Institute, 203, B. T. Road, Kolkata 700108.
Corresponding e-mail: sourabh@isical.ac.in.}}
\date{\vspace{-0.5in}}
\maketitle%
	
\begin{abstract}

Spatio-temporal processes are important modeling tools for varieties of problems in
environmental science, biological science, geographical science, etc. 
It is generally assumed that the underlying model is parametric, typically a Gaussian process,
and that the covariance function is stationary and separable.
That this structure does not need to be always realistic have been perceived by many researchers
and attempts have been made to construct nonparametric processes consisting of
neither stationary nor separable covariance functions. 
But, as we elucidate, some desirable and important spatio-temporal properties are
not guaranteed by the existing approaches, thus calling for further innovative ideas. 

In this article, using kernel convolution of order based dependent Dirichlet process (\ctn{Griffin06}) we construct
a nonstationary, nonseparable, nonparametric space-time process, which, as we show,
satisfies desirable properties, and includes the stationary, separable, parametric
processes as special cases. We also investigate the smoothness properties
of our proposed model. 

Since our model entails an infinite random series, for Bayesian model fitting purpose
we must either truncate the series or more appropriately consider a random number of summands,
which renders the model dimension a random variable. We attack the variable dimensionality problem
using Transdimensional Transformation based Markov Chain Monte Carlo  
introduced by \ctn{Das14}, which can update all the variables and also change
dimensions in a single block using essentially a single random variable drawn from some arbitrary density defined on a relevant support.
For the sake of completeness we also address the problem of truncating the infinite series by 
providing a uniform bound on the error incurred by truncating 
the infinite series.

	We illustrate the effectiveness of our model and methodologies on a simulated data set and demonstrate that our approach significantly 
outperforms that of \ctn{Fuentes03} which is based on principles somewhat similar to ours. We also fit two real, spatial and spatio-temporal datasets with our 
approach and obtain quite encouraging results in both the cases.
\\[2mm]
{\bf Keywords:} {\it Kernel convolution; Nonstationary; Nonseparable;
Order based Dependent Dirichlet Process; Spatio-temporal data;
Transdimensional Transformation based Markov Chain Monte Carlo.}

\end{abstract}
	
\tableofcontents
\pagebreak

\section{Introduction}

Recent years have witnessed considerable amount of research on spatial and spatio-temporal modeling.
The major inferential objectives of spatio-temporal modeling are to predict a plausible value at some point in space and time, forecasting the 
future value at some location, and to make inference about the parameters of the spatio temporal processes.
 A model must take account of spatio-temporal dependence structure of the given process. 
It is common practice to assume that the underlying spatial or spatio-temporal
process is stationary and isotropic Gaussian process, as  it facilitates prediction. 
In particular, the geostatistical method of kriging assumes a Gaussian process structure for the unknown spatial or spatio-temporal field and focuses on calculating the 
optimal linear predictor of the field. When performing kriging, researchers  generally assume a stationary, often 
isotropic, covariance function. The covariance of responses at any two locations is assumed to be a function of the
separation vector or 
of the distance between locations, but not a function of the actual locations.
Researchers often estimate the parameters of an isotropic covariance function from the semivariogram, the estimation of which is based 
on the squared differences between the responses as a function of the distance between locations. The standard kriging approach allows
one to flexibly estimate a smooth spatial field, with no pre-specified parametric stochastic model for the data. However, these approaches have several 
drawbacks. The most important is that the
true covariance structure may not be stationary. This is because there may be local influences affecting the correlation structure
of the random process. For instance, orographic effects influence the atmospheric transport of pollutants, and result in a correlation
structure that depends on different spatial locations (\ctn{Guttorp94}). If one is modelling an environmental variable across the United States, 
the field is likely to be much more 
smooth in the topographically-challenged Great Plains than in the Rocky Mountains.
This is manifested as different covariance structures in those two regions.
Assuming a stationary covariance structure will result in oversmoothing the field in the mountains and undersmoothing the field in 
great plains (\ctn{Paciorek03}).

%in the analysis of most spatio-temporal
%process underlying  environmental studies , there is little reasoto expect
%the spatial covariance structure to be stationary and isotropic over spatial scales

Realizing the limitations of stationary parametric processes (almost invariably Gaussian
processes) researchers have come up
with many novel ideas for constructing nonstationary and/or nonparametric
processes. %deemed more suitable for realistic applications. 
The first significant work
in the framework of nonstationary parametric processes is by \ctn{Sampson92}, who proposed an approach
based on spatial deformation. This work is followed up by \ctn{Damian01}
and \ctn{Schmidt03}, providing the corresponding Bayesian generalizations. 
Nonstationarity has been induced in parametric space-time models by 
\ctn{Haas95} by proposing a moving window regression residual kriging.
A similar approach has been proposed by \ctn{Nott02}.
\ctn{Higdon98} (see also \ctn{Higdon99}, \ctn{Higdon02}) proposed a kernel convolution approach for 
inducing nonstationarity in Gaussian processes.
Similar approaches are also proposed by \ctn{Fuentes01} and \ctn{Fuentes02}.
Approaches that attempt to model the underlying process as nonparametric, in addition
to modeling the covariance structure as nonstationary are more recent in comparison, 
the approach of \ctn{Gelfand05} based on Dirichlet processes (see, for example, 
\ctn{Ferguson73}, \ctn{Ferguson74}) being the first in this regard; see \ctn{Duan07} for
a generalization. \ctn{Duan09} use stochastic differential equations to construct a nonstationary,
non-Gaussian process. We discuss these proposals in some detail in Section \ref{sec:comparison}. 
%As we point out in Section \ref{sec:comparison}, each of the aforementioned proposals fails to satisfy
%some important properties.

\ctn{Fuentes03} proposed a nonparametric nonstationary model based on kernel processes mixing. 
In their study they showed that their proposed model outperformed all other models for several types of simulation designs (Stationary Gaussian, Nonstationary Gaussian, 
Stationary Non-Gaussian, Nonstationary Non-Gaussian). 
They illustrated their model with application to the monthly average values of ammonium and nitrate %for by 24 January, 2007 
at 209 monitoring stations in the  US.
Their proposed nonstationary non-Gaussian model reduced the root mean square error (RMSE) by 24\% for ammonium and 18\% for nitrate when compared to the 
nonstationary Gaussian approach. RMSE also reduced compared to stationary Gaussian and stationary non-Gaussian approaches, although the gain is more moderate in these cases.

\ctn{Griffin06} (henceforth, GS) proposed the novel order-based dependent Dirichlet processes (ODDP). They introduced a framework for nonparametric modeling with dependence on continuous covariates. Dependence is induced 
through relevant weights utilizing similarities in the covariate information.
Each weight is a transformation of independently and identically distributed ($iid$) random variables. 
GS derived an ordering $\pi$ of these random variables at each covariate value such that distributions for similar covariate values are associated with similar orderings and
thus will be close. These orderings combined with Poisson point process give a simple analytical expression for the correlation function of the distributions, which ensures that if two points are similar in the covariate space they will get higher correlation compared to the points that are not. Furthermore when the distance between two points is large enough in the covariate space, the correlation approaches zero.  
In spatial/spatio-temporal context, it translates into the fact that when two observations are widely separated in space/space-time, the model based correlations 
tend to zero. But the ODDP process suffers from the limitation of being stationary. 

Preserving all the desirable properties of the correlation function of ODDP, we attempt to incorporate further flexibility in our spatial/temporal/spatio-temporal 
model in terms of nonstationarity and nonseparability through our proposed kernel convolution based methodology. Specifically, we propose a new class of 
spatial/temporal/spatio-temporal models that is nonparametric, nonstationary, nonseparable, and such that the correlation tends to zero if either of spatial and temporal
distance tends to infinity. All these properties are desirable in real data scenarios, and hence any effective, realistic model must satisfy these properties. 
Unfortunately, such a wholesome model does not seem to exist in the current literature, as we point out in our review.  
Hence, this paper is an attempt to create one class of such realistic stochastic processes. We illustrate our ideas not only wth simulation study, 
but also with a real spatial data on ozone and a real spatio-temporal data on
particulate matters. That these data sets are both strictly and weakly nonstationary, are inferred in a separate paper by \ctn{Roy20} using novel Bayesian methodologies.
In this article, we further show that the empirical correlations for the spatio-temporal data tend to zero as the spatio-temporal lags increase. 
A similar property is also expected of the spatial ozone data, but the small size of the data did not permit such rigorous analysis.
Moreover, these data sets are far from Gaussianity, as simple quantile-quantile plots indicate.
As we argued, these properties are expected in reality, and the general class of nonparametric spatio-temporal models that we propose, 
provides adequate fits to both these data sets.  
Moreover, comparison of our analyses with one of the most competent existing models, shows that our model is possibly indeed worth pursuing.  

%However, as we argue in Section \ref{sec:comparison},
%ODDP generally results in stationarity, in fact, isotropy and it is unclear when nonstationarity can be
%achieved under ODDP. 
%Nor can separable covariance structures, if desired, be obtained by the ODDP approach. 

The rest of our paper is structured as follows.
In Section \ref{sec:comparison} we provide a brief overview of the existing approaches to construction
of nonstationary, nonseparable space-time processes in both parametric and nonparametric frameworks, 
arguing that not all desirable
properties are necessarily accounted for in these approaches. Such issues necessitate
development of new approaches to construction of nonstationary, nonparametric, nonseparable space-time
models. 
In Section \ref{sec:oddp_convolution} we introduce our proposed space-time model based on
kernel convolution of ODDP and show that it satisfies the properties that are
not guaranteed by the existing models. We investigate continuity and smoothness properties
of our model in Section \ref{sec:smoothness}. Since our proposed model involves a random infinite
series, for model fitting one needs to either truncate the series or assume a random number of
summands and adopt variable dimensional Markov Chain Monte Carlo (MCMC) approaches. Although
we adopt the latter framework for our applications, and implement the recently developed Transdimensional Transformation
based Markov Chain Monte Carlo (TTMCMC) (\ctn{Das14}) for simulating from our variable dimensional model,
for the sake of completeness we also investigate the truncation approach. Indeed, in Section \ref{sec:truncation}
we consider the difference between the prior predictive models with and without truncation
of the random infinite series, providing a bound that depends upon the truncation parameter. Thus,
the truncation parameter can be chosen so that the bound falls below any desired level.
In Section \ref{sec:kernel_prior} we discuss the choice of suitable kernels, prior distributions
and choice of the spatio-temporal domain that is relevant for computational purpose. 
We describe the joint posterior distribution associated with our model, and provide a brief
discussion of TTMCMC in Section \ref{sec:joint_posterior}.
We detail a simulation study illustrating the performance of our model and comparison with \ctn{Fuentes03} in Section \ref{sec:simulation_study}.
Indeed, the model of \ctn{Fuentes03}, in spite of being very different from our ideas, comes closest to our model conceptually, among the existing models.
%and follow it up by analysis of a real, spatial ozone data set in Section \ref{sec:realdata}, using our new model and methodologies. 
In Section \ref{sec:realdata_analyses} we consider application of our ideas to two real datasets: a spatial ozone dataset, and a
spatio-temporal dataset on particulate matters.
%In Section \ref{sec:realdata2} we consider application
%of our ideas to a spatio-temporal dataset on particulate matters.
Finally, we summarize our contributions and provide concluding remarks in Section \ref{sec:conclusion}.

Proofs of our results and requisite details of TTMCMC, particularly in the context of our spatio-temporal
model, and details regarding generation of the data for the simulation experiment, are provided in the supplement \ctn{Das14supp}, 
whose sections and algorithms have %, figures and tables have 
the prefix ``S-" when referred to in this paper. 

\section{Overview of other available nonstationary approaches}
\label{sec:comparison}

\subsection{Parametric approaches}
\label{subsec:parametric}

%\subsubsection{Approaches based on deformation}
The deformation approaches of \ctn{Sampson92}, \ctn{Damian01}, and \ctn{Schmidt03} are based on Gaussian processes. 
In these approaches replications of the data are necessary, which the authors relate to temporal independence
of the data. This also means that space-time data can not be modeled using these approaches, unless all the temporal dependence can be captured thrrough
a trend term in the mean structure.
Moreover, in the deformation-based approaches model based theoretical correlations between 
random observations separated by large enough 
distances need not necessarily tend to zero. 
Letting $Y(\bs,t)$ denote the response at spatial location $\bs$ and time $t$, \ctn{Sampson92} deal with the variogram of the following form:
\begin{equation}
\mbox{Var}(Y(\bs_1,t)-Y(\bs_2,t))=f(\|\bd(\bs_1,t)-\bd(\bs_2,t)\|),
\label{eq:deform1}
\end{equation}
for any $\bs_1,\bs_2,t$, where $f$ is an appropriate monotone function and $\bd$ is
a one-to-one nonlinear mapping. The technique of \ctn{Sampson92} involves appropriately approximating $f$ by $\hat f$ 
using the multidimensional scaling method, and obtaining
a configuration of points $\{\bu_1,\ldots,\bu_n\}$ in a ``deformed" space where the process is assumed
isotropic. Then, using thin-plate splines, a nonlinear approximation of $\bd$, which we denote by $\hat\bd$, is determined such 
that $\hat\bd (\bs_i)\approx\bu_i$, for $i=1,\ldots,n$. Bayesian versions of the key idea have been described
in \ctn{Damian01}, who use random thin-plate splines and \ctn{Schmidt03}, who use Gaussian process to implement
the nonlinear transformation $\bd$. Rather than estimate $f$ nonparametrically, both specify a parametric
functional form from a valid class of such monotone functions. 

As is clear, since large differences $\|\bs_1-\bs_2\|$ does not imply that $\|\bd(\bs_1)-\bd (\bs_2)\|$
is also large, the model based correlations between two observations widely separated need 
not necessarily tend to zero, in either of the
aforementioned deformation-based approaches.

%\subsubsection{Approaches based on process convolution}
%Albeit in the parametric, Gaussian process framework, 
The kernel convolution approaches of \ctn{Higdon99}, \ctn{Higdon02},
and \ctn{Fuentes01} overcome some of the difficulties of the deformation approach. 
In these approaches data replication is not necessary, and for appropriate choices of the kernel,
stationarity, nonstationarity, separability, and nonseparability can be achieved with respect to spatio-temporal data.
In the approach of \ctn{Higdon99}, \ctn{Higdon02},
\begin{equation}
Y(\bx)=\int K(\bx,\bu)Z(\bu) d\bu,
\label{eq:higdon1}
\end{equation}
where $K$ is a kernel function and $Z(\cdot)$ is a white noise process. 
Then the covariance between $Y(\bx_1)$ and $Y(\bx_2)$ is given by
\begin{equation}
C(\bx_1,\bx_2)=\int K(\bx_1,\bu)K(\bx_2,\bu)d\bu.
\label{eq:higdon2}
\end{equation}
In general, this does not depend upon $\bx_1$ and $\bx_2$ only through $\bx_1-\bx_2$, thus
achieving nonstationarity. However, it is clear from the covariance structure (\ref{eq:higdon2}) that
$C(\bx_1,\bx_2)$ does not generally tend to zero as $d=\|\bx_1-\bx_2\|\rightarrow\infty$.
But for separable space-time processes (see, for example, \ctn{Cressie11} for various illustrations) related to representation (\ref{eq:higdon1}) this property holds
under the additional assumption of isotropy with respect to either space or time. We elaborate this below.

Although representation (\ref{eq:higdon1}) can not achieve separability with respect to
space and time, a modified representation
of the following form does:
\begin{equation}
Y(\bs,t)=\int K_1(\bs,\bu)K_2(t,\bv)Z_1(\bu)Z_2(\bv) d\bu d\bv.
\label{eq:higdon3}
\end{equation}
In (\ref{eq:higdon3}), $K_1, K_2$ are two kernel functions, and $Z_1(\bx), Z_2(\bx)$ are independent
white noise processes.
Now the covariance is given by
\begin{align}
C((\bs_1,t_1),(\bs_2,t_2))&=\int K_1(\bs_1,\bu)K_1(\bs_2,\bu)K_2(t_1,\bv)K_2(t_2,\bv)d\bu d\bv\notag\\
&=C_1(\bs_1,\bs_2)\times C_2(t_1,t_2),
\label{eq:higdon4}
\end{align}
where 
\begin{align}
C_1(\bs_1,\bs_2)&=\int K_1(\bs_1,\bu)K_1(\bs_2,\bu)d\bu,\label{eq:higdon5}\\
C_2(t_1,t_2)&=\int K_2(t_1,\bv)K_2(t_2,\bv)d\bv,\label{eq:higdon6}
\end{align}
exhibiting separability. Further assuming that either of $C_1$ or $C_2$ is isotropic, it follows that
if either of $d_1=\|\bs_1-\bs_2\|$ or $d_2=|t_1-t_2|$ tends to infinity, the covariance given by
(\ref{eq:higdon4}) tends to zero even though either of $C_1$ or $C_2$ is nonstationary. But if both
$C_1$ and $C_2$ are nonstationary, then this result need not hold. 

The approach of \ctn{Fuentes01} comes close towards solving the problem of zero covariance in the limit 
with large enough separation between observations, which we now explain. They model the underlying process as
\begin{equation}
Y(\bx)=\int K(\bx-\bu)Z_{\btheta (\bu)}d\bu,
\label{eq:fuentes1}
\end{equation}
where $Z_{\btheta}(\bx);\bx\in D$ is a family of independent, stationary Gaussian processes
indexed by $\btheta$, where the covariance of $Z_{\btheta (\bu)}$ is given by 
\begin{equation}
\mbox{Cov}\left(Z_{\btheta (\bu)}(\bx_1),Z_{\btheta (\bu)}(\bx_2)\right)=C_{\btheta (\bu)}(\bx_1-\bx_2).
\label{eq:fuentes2}
\end{equation} 
Then, the covariance between $Y(\bx_1)$ and $Y(\bx_2)$ is given by
\begin{equation}
C(\bx_1,\bx_2;\btheta)=\int K(\bx_1-\bu)K(\bx_2-\bu)C_{\btheta (\bu)}(\bx_1-\bx_2)d\bu.
\label{eq:fuentes3}
\end{equation}
For practical purposes, \ctn{Fuentes01} approximate $Y(\bx)$ with
\begin{equation}
\hat Y(\bx)=\frac{1}{M}\sum_{m=1}^MK(\bx-\bu_m)Z_{\btheta (\bu_m)}(\bx),
\label{eq:fuentes4}
\end{equation}
and $C(\bx_1,\bx_2;\btheta)$ by
\begin{equation}
\hat C(\bx_1,\bx_2;\btheta)=\frac{1}{M}\sum_{m=1}^M K(\bx_1-\bu_m)K(\bx_2-\bu_m)C_{\btheta (\bu_m)}(\bx_1-\bx_2),
\label{eq:fuentes5}
\end{equation}
where $\{\bu_1,\ldots,\bu_M\}$ can be thought of as a set of locations drawn independently from the domain $D$.
Assuming that the family of independent Gaussian processes $Z_{\btheta}(\bx);\bx\in D$ is also isotropic, 
it follows, using the fact that $M$ is finite, that $\hat C(\bx_1,\bx_2;\btheta)\rightarrow 0$ as $\|\bx_1-\bx_2\|\rightarrow\infty$
since $C_{\btheta (\bu_m)}(\bx_1-\bx_2)\rightarrow 0$ for each $m=1,\ldots,M$.
However, this of course does not guarantee that $\hat C(\bx_1,\bx_2;\btheta)\rightarrow 0$ as $M\rightarrow\infty$.
That is, this does not necesasarily imply that $C(\bx_1,\bx_2;\btheta)\rightarrow 0$.

A nonstationary process has been constructed by \ctn{Chang10}, by representing the underlying process
as a linear combination of basis functions and stationary Gaussian processes. This approach also does not
guarantee that the correlation tends to zero if
$\|\bx_1-\bx_2\|\rightarrow\infty$. For other available parametric approaches to nonstationarity
we refer to the references provided in \ctn{Chang10}.

\subsection{Nonparametric approaches}
\label{subsec:nonparametric}

\ctn{Gelfand05} seem to be the first to propose a nonstationary, noparametric Bayesian model
based on Dirichlet process mixing. They represent
the random field $\bY_D=\{Y(\bx);\bx\in D\}$ as $\sum_{\ell=1}^{\infty}w_{\ell}\delta_{\btheta_{\ell,D}}$,
where $\btheta_{\ell,D}=\{\theta_{\ell}(\bx);\bx\in D\}$ are realizations from a specified stationary Gaussian process,
which we denote as $\bG_0$,
$w_1=V_1$, $w_{\ell}=V_{\ell}\prod_{r=1}^{\ell-1}(1-V_r)$ for $\ell\geq 2$, where $V_r\stackrel{iid}{\sim}Beta(1,\alpha);~r=1,2,\ldots$.
Thus, a random process $\bG$ is induced on the space of processes 
of $\bY_D$ with $\bG_0$ being the ``central" process. \ctn{Gelfand05} assume the space-time data 
$\bY_t=(Y(\bs_1,t),\ldots,Y(\bs_n,t))'$ to be time-independent for $t=1,\ldots,T$, which is the same
assumption of data replication used in the deformation-based approaches. The temporal-independence
assumption allows \ctn{Gelfand05} to model the data as follows: for $t=1,\ldots,T$, $\bY_t\stackrel{iid}{\sim}\bG^{(n)}$ 
and $\bG^{(n)}\sim DP(\bG^{(n)}_0)$, where $\bG^{(n)}$
and $\bG^{(n)}_0$ denote the $n$-variate distributions corresponding to the processes $\bG$ and $\bG_0$.
The development leads to the following covariance structure: for any $\bs_1,\bs_2,t$,
\begin{equation}
\mbox{Cov}(Y(\bs_1,t),Y(\bs_2,t)\mid\bG)=\sum_{\ell=1}^{\infty}w_{\ell}\theta_{\ell}(\bs_1)\theta_{\ell}(\bs_2)
-\left\{\sum_{\ell=1}^{\infty}w_{\ell}\theta_{\ell}(\bs_1)\right\}
\left\{\sum_{\ell=1}^{\infty}w_{\ell}\theta_{\ell}(\bs_2)\right\},
\label{eq:nonpara1}
\end{equation}
which is nonstationary.
However, marginalized over $\bG$, the covariance between
$Y(\bs_1,t)$ and $Y(\bs_2,t)$ turns out to be stationary.
Since, in \ctn{Gelfand05}, the Bayesian inference of the data $\bY_1,\ldots,\bY_n$
proceeds by integrating out $\bG^{(n)}$, the
entire flavour of nonstationarity is lost.
Also, given $\bG$, (\ref{eq:nonpara1}) is nonstationary 
but does not necessarily converge to zero if $\|\bs_1-\bs_2\|\rightarrow\infty$.

\ctn{Duan07} attempt to generalize the model of \ctn{Gelfand05} by specifying $\bG$ as
\begin{equation}
Pr\{Y(\bx_1)\in A_1,\ldots,Y(\bx_n)\in A_n\}=\sum_{i_1=1}^{\infty}\cdots\sum_{i_n=1}^{\infty}
p_{i_1,\ldots,i_n}\delta_{\theta_{i_1}(\bx_1)}(A_1)\cdots \delta_{\theta_{i_n}(\bx_n)}(A_n),
\label{eq:nonpara2}
\end{equation}
where $\btheta_j$'s are $iid$ $\bG_0$ as in \ctn{Gelfand05}, and  
$\{p_{i_1,\ldots,i_n}\geq 0: \sum_{i_1=1}^{\infty}\cdots\sum_{i_n=1}^{\infty}p_{i_1,\ldots,i_n}=1\}$ 
determine the site-specific joint selection
probabilities, which also must satisfy simple constraints to ensure consistency. The resulting  
conditional covariance (conditional on $\bG$) and the marginal covariance are somewhat modified
versions of those of \ctn{Gelfand05}, but now even the marginal covariance is nonstationary.
By choosing $\bG_0$ to be an isotropic Gaussian process it can be ensured that the marginal
covariance tends to zero as two observations are widely separated, but the same can not be ensured
for the conditional covariance. Moreover, replications of the data is necesary even for this generalized
version of \ctn{Gelfand05}, and modeling temporal dependence is precluded as before.
A methodology very similar to that of \ctn{Duan07} is proposed in \ctn{Petrone09}.

Although the aforementioned approaches are temporally independent, \ctn{Kottas07} have considered a first order autoregressive setup to model temporal
dependence as a simple parametric temporal extension of the temporally independent model proposed in \ctn{Gelfand05}.

A nonstationary, nonseparable non-Gaussian spatiotemporal process has been constructed by \ctn{Duan09}
using discretized versions of stochastic differential equations, but again, the correlations between largely separated
observations do not necessarily tend to zero under their model. Also, stationarity or separability can not be 
derived as special cases of this approach. 

A flexible approach using kernel convolution of L\'{e}vy random measures has been detailed in \ctn{Wolpert11}, but even
this approach does not guarantee that correlations tend to zero for largely separated distances for arbitrarily 
chosen kernels.

An univariate and multivariate nonparametric spatial model
based on kernel process mixing has been proposed by \ctn{Fuentes03} (henceforth, FR). 
In this work, the idea of  stick-breaking prior of \ctn{Sethuraman94} was extended to a spatial set up. 
A different, unknown distribution was assigned to each location, with a series of 
space-dependent kernel functions that have a space-varying bandwidth parameter. Essentially, the Beta-distributed sequence $\left\{V_r:r=1,2,\ldots\right\}$ in 
the stick-breaking construction
of the traditional Dirichlet process are multiplied with a sequence of space-dependent kernels $\left\{K_r(\bs):r=1,2,\ldots\right\}$, and the $\bG_0$-distribured
sequence is replaced with an isotropic Gaussian process with nonstationary variance. The kernel functions attempt to impose a natural ranking for the 
different mixture components based on distances 
of locations to knots, which seems to be an alternative way to mimic the role of the orderings imposed in GS.
As the bandwidths of the kernels tend to zero uniformly, the covariance conditional on $\left\{V_r:r=1,2,\ldots\right\}$ tends to the isotropic covariance of the 
underlying Gaussian process. Marginally, the covariance structure, albeit nonstationary, need not yield zero covariance even if the distance between the locations
tend to infinity. Moreover, this idea has been considered only for spatial modeling. Although it is simple to extend the method to spatio-temporal situations,
enforcing separability is needed, does not seem to be as straightforward.
%Their model also lacks some desirable properties
%of a non stationary spatial/spatio-temporal model.

Compared to the vast literature on continuous nonstationary spatio temporal processes, there are very few methods available to model non-smooth covariance structures 
over the space or both space-time (\ctn{Guttorp2013}). Among them, \ctn{Kim2005} developed a method based on a Bayesian approach to Voronoi tesselation. 
Since our approach hinges upon the idea of GS, and smoothness properties of the ODDP depends on the order generating process, it is discontinuous in nature. 
We will discuss in details the smoothness properties of our model in Section \ref{sec:smoothness}.
Another possible source of nonstationarity is the local influence of some covariates on the spatial process of interest. Recently, there have been some proposals 
in the literature that account for covariate information in the covariance structure of spatial and spatio-temporal processes; see, for example, \ctn{Reich11}, \ctn{Schmidt11},
\ctn{Neto14}, \ctn{Ing14}, \ctn{Risser15}, \ctn{Gilani16}, \ctn{Risser19}. 
Since in our model we introduce dependence via the ODDP, where weights in the Sethuraman representation are dependent on the covariate information, 
we can efficiently incorporate the local influence of covariate information into our model. The covariate information can also be incorporated in the kernel 
that we convolve the ODDP with.

In the next section we introduce our idea based on kernel convolution of ODDP and 
show that it overcomes the issues faced by the traditional approaches to construction of
flexible space-time models.

\section{Kernel convolution of ODDP}
\label{sec:oddp_convolution}

Before introducing our proposal, it is necessary to first provide an overview of ODDP.

\subsection{Overview of ODDP}
\label{subsection:overview}

In order to induce spatial dependence between observations at different locations
GS modify the nonparametric stick-breaking construction of \ctn{Sethuraman94}
in the following way:
for each point $\bx\in D$, where $D$ is some specified domain, they define the distribution:
\begin{equation}
G_{\bx}\stackrel{\mathcal D}{=}\sum_{i=1}^{\infty}p_i(\bx)\delta_{\btheta_{\pi_i(\bx)}},
\label{eq:odpp1}
\end{equation}
where 
\begin{equation}
p_i(\bx)=V_{\pi_i(\bx)}\prod_{j<i}(1-V_{\pi_j(\bx)}).
\label{eq:odpp2}
\end{equation}
In (\ref{eq:odpp1}) and (\ref{eq:odpp2}), $\bi{\pi}(\bx)=(\pi_1(\bx),\pi_2(\bx),\ldots)$
denotes the ordering at $\bx$, where $\pi_i(\bx)\in\{1,2,\ldots\}$ and $\pi_i(\bx)=\pi_j(\bx)$
if and only if $i=j$. 
For $j=1,2,\ldots$, the parameters $\btheta_j\stackrel{iid}{\sim} G_0$, where $G_0$ is some
specified parametric centering distribution, and $V_j\stackrel{iid}{\sim}Beta(1,\alpha)$, where $\alpha>0$
is a specified parameter. The process associated with specification (\ref{eq:odpp1}) is the ODDP.
Clearly, if $\pi_i(\bx)=i$ for each $\bx$ and $i$, then the Dirichlet process (DP) results at all locations.

GS construct $\bi{\pi}(\bx)$ in a way such that 
it is associated with the realization of a point process. Specifically, they consider a stationary Poisson process
$\Phi$ and a sequence of sets $U(\bx)$ for $\bx\in D$, the latter determining the relevant region for the ordering purpose.
In the case of only spatial problems, if $\bx\in D\subset\mathbb R^d$, for $d\geq 1$, 
then GS suggest $U(\bx)=D$ for all $\bx\in D$
as a suitable construction of $U(\bx)$. For time series problems
they suggest $D=\mathbb R$ and $U(x)=(-\infty,x]$. When $\bx=(\bs',t)'$, that is, when $\bx$ consists
of both spatial and temporal co-ordinates, for our modeling purpose, we use $U(\bx)=D\times (-\infty,t]$.

Letting $\{\bz_1,\bz_2,\ldots\}$ denote a realization of the stationary Poisson point process, the ordering $\bi{\pi}(\bx)$
is chosen to satisfy
$\|\bx-\bz_{\pi_1(\bx)}\|<\|\bx-\bz_{\pi_2(\bx)}\|<\|\bx-\bz_{\pi_3(\bx)}\|<\cdots$,
where $\|\cdot\|$ is a distance measure and $\bz_{\pi(\bx)}\in\Phi\cap U(\bx)$.
Thus, although the set of probabilities $\{p_i(\bx);i=1,2,\ldots\}$ remains same
for all locations, they are randomly permuted.   
This random permutation, in turn, induces spatial dependence. Assuming a homogeneous Poisson point process
with intensity $\lambda$, ODDP is characterized by $G_0$, $\alpha$, and $\lambda$. We express
dependence of ODDP on these parameters by $\mbox{ODDP}(\alpha G_0,\lambda)$.

Assuming that data $\{y_1,\ldots,y_n\}$ are available at sites $\{\bx_1,\ldots,\bx_n\}$, 
GS embed the ODDP in a hierarchical Bayesian model:
\begin{align}
y_i &\sim f_{\btheta_i}(\cdot)\notag\\
\btheta_i&\sim G_{\bx_i}\notag\\
G_{\bx_i} &\sim \mbox{ODDP}(\alpha G_0,\lambda).\notag
\end{align}
Note that the same theory can be extended to space-time situations with
$\bx=(\bs',t)'$, where $\bs$ stands for the spatial location and $t$ stands for the time point. 

Next, we introduce our proposed idea of kernel convolution of ODDP.

\subsection{Kernel convolution of ODDP}
\label{subsec:kernel_convolution}

We consider the following model for the data $\bY=\{y_1,\ldots,y_n\}$ at locations/times  
$\{\bx_i=(\bs'_i,t_i)';~i=1,\ldots,n\}$:
\begin{equation}
y_i=f(\bx_i)+\epsilon_i,
\label{eq:model}
\end{equation}
where $\epsilon_i\stackrel{iid}{\sim}N(0,\sigma^2)$, for unknown $\sigma^2$.
We represent the spatio-temporal process $f(\bi{x})$ as a convolution of ODDP $G_{\bx}$ with a smoothing 
kernel $K(\bi{x},\cdot)$:
\begin{equation}
f(\bx)=\int K(\bx,\btheta)dG_{\bx}(\btheta)= \sum_{i=1}^{\infty}K(\bx,\btheta_{\pi_{i}(\bx)})p_{i}(\bx) \mbox{       }
\forall \bx\in D\subseteq\mathbb{R}^d,
\label{eq:kernel_convolution}
\end{equation}
$d~(\geq 1)$ being the dimension of $\bx$.
Thus, given $\bG_{\bx_i}$, 
\begin{equation}
	y_i\sim N\left(f(\bx_i),\sigma^2\right),
	\label{eq:kc1}
\end{equation}
%\newline
the normal distribution with mean $f(\bx_i)$ of the form (\ref{eq:kernel_convolution}) and variance $\sigma^2$.
Thus, given $\bG_{\bx_i}$ and $\bG_{\bx_j}$, $y_i$ and $y_j$ are independent.

Since the ODDP model of GS can also be viewed as a convolution, it is important to clarify its differences with (\ref{eq:kernel_convolution}) and (\ref{eq:kc1}). 
Indeed, note that with respect to GS, the response data $y_i$ has the following distribution: 
\begin{equation}
	y_i\sim \int f_{\btheta}(\cdot)dG_{\bx_i}(\btheta)=\sum_{j=1}^{\infty}f_{\btheta_{\pi_{j}(\bx)}}(\cdot)p_{j}(\bx_i).
	\label{eq:gs_conv}
\end{equation}
Thus, under (\ref{eq:gs_conv}) (that is, under the model proposed by GS), for any choice of $f_{\btheta}$, $y_i$ arises from an infinite mixture 
with mixture density components $f_{\btheta_{\pi_{j}(\bx_i)}}(\cdot)$
and corresponding mixture probabilities $p_{j}(\bx_i)$. On the other hand, our model postulates a normal distribution for $y_i$ via (\ref{eq:kernel_convolution})
and (\ref{eq:kc1}), where the mean is the kernel convolution given by (\ref{eq:kernel_convolution}).
The convolutions given by (\ref{eq:kernel_convolution}) and (\ref{eq:gs_conv}) also have different interpretations. The latter is a density, whereas, the former
is any real-valued function. Note that unlike the case of (\ref{eq:model}), given $\bG_{\bx_i}$ and $\bG_{\bx_j}$, $y_i$ and $y_j$ are not independent if 
$f_{\btheta}(\cdot)=\delta_{\btheta}(\cdot)$, that is, when $y_i\sim \bG_{\bx_i}$. 
Further implications, with respect to nonstationarity and correlation structure tending to zero with widely separated distances, are discussed 
following Theorem \ref{theorem:corr}.
%Furthermore, the density $f_{\btheta}$ of GS depends upon $\bx$ only through the ODDP structure associated with $\btheta$. It is
%unnatural, but possible to model $f_{\btheta}$ as a deterministic function of $\bx$, but that might disturb stationarity of the model of GS.

In spatio-temporal processes we have to specify the joint distribution for an uncountable number of random variables. 
But, in practice we observe the process at a finite number of locations only. To infer about the process, it is better to have finite moments, that 
ensures existence of the posterior distribution. It also facilitates the prediction of the process at an arbitrary unobserved 
location. 
The following theorem, the proof of which is presented in Section S-1 of the supplement, 
gives an expression of the expectation of $f(\bx)$.
\begin{theorem} Let $\int |K(\bi{x},\bi{\theta})|dG_{0}(\bi{\theta}) < \infty$. 
Then $\int |K(\bi{x},\bi{\theta})|dG_{\bi{x}}(\bi{\theta}) < \infty$ with probability one, 
and 
\[
 {{E}}(f(\bi{x}))={{E}} \int K(\bi{x},\bi{\theta})dG_{\bi{x}}(\bi{\theta})=\int K(\bi{x},\bi{\theta})d{{E}}G_{\bi{x}}(\bi{\theta})=\int K(\bi{x},\bi{\theta})dG_{0}(\bi{\theta})=E_{G_0}K(\bx,\btheta).
\]
\end{theorem}

Before deriving the covariance structure of $f(\cdot)$, we define the necessary notation following GS.
Let
\[
T(\bi{x_{1}},\bi{x_{2}})=\{k:\mbox{ there exists } i,j \mbox{ such that } \pi_{i}(\bi{x_{1}})=\pi_{j}(\bi{x_{2}})=k\}.
\]
$\mbox{For } k\in T(\bi{x_{1}},\bi{x_{2}})$, we further define
$
A_{lk}=\{\pi_{j}(\bi{x_{l}}): j<i \mbox{ where } \pi_i(\bi{x_{l}})=k\},
$
$S_{k}$ = $A_{1k}\cap A_{2k}$ and $S'_{k}$ = $A_{1k}\cup A_{2k}-S_{k}$.
Then, the following theorem, the proof of which is deferred to Section S-2 of the supplement, 
provides an expression for the covariance structure 
of $f(\cdot)$, which will be our reference point for arguments regarding nonstationarity and other desirable spatial properties
in comparison with the existing methods.
\begin{theorem} 
\label{theorem:cov}
If $\int |K(\bi{x},\bi{\theta})|dG_{0}(\bi{\theta}) < \infty$ and $\int|K(\bi{x_{1}},\bi{\theta})K(\bi{x_{2}},\bi{\theta})|dG_{0}(\bi{\theta}) < \infty$, then for a fixed ordering at $\bx_1$ and $\bx_2$,
\begin{align}
\mbox{Cov}(f(\bi{x_{1}}),f(\bi{x_{2}}))=&\mbox{Cov}_{G_0}(K(\bi{x_{1}},\bi{\theta}),K(\bi{x_{2}},\bi{\theta}))\notag\\
 %\sum_{\pi_{i}(\bi{x_{1}})=\pi_{j}(\bi{x_{2}})} {{E}}(p_{i}(\bi{x_{1}})p_{j}(\bi{x_{2}}))
& \ \ \ \times \frac{2}{(\alpha+1)(\alpha+2)}\sum_{k\in T(\bi{x_{1}},\bi{x_{2}})}\left(\frac{\alpha}{\alpha+2}\right)^{\#S_{k}}
\left(\frac{\alpha}{\alpha+1}\right)^{\#S_{k}'}.\label{eq:cov_f}
%\\[1ex]
%&\quad+
%E(K(\bi{x_{1}},\bi{\theta}))E(K(\bi{x_{2}},\bi{\theta})).
\end{align}
where
%\[E(K(\bi{x},\bi{\theta}))=\int K(\bi{x},\bi{\theta})dG_{0}(\bi{\theta})
% \]
%and
\begin{equation}
\mbox{Cov}_{G_0}(K(\bi{x_{1}},\bi{\theta}),K(\bi{x_{2}},\bi{\theta}))=\int K(\bi{x_{1}},\bi{\theta})K(\bi{x_{2}},\bi{\theta})dG_{0}(\bi{\theta})
-E_{G_0}(K(\bi{x_{1}},\bi{\theta}))E_{G_0}(K(\bi{x_{2}},\bi{\theta})).\label{eq:cov_G0}
\end{equation}
%Define
%\[
%T(\bi{x_{1}},\bi{x_{2}})=\{k|\mbox{ there exists } i,j \mbox{ such that } \pi_{i}(\bi{x_{1}})=\pi_{j}(\bi{x_{2}})=k\},
%\]
%$\mbox{for } k\in T(\bi{x_{1}},\bi{x_{2}})$, we further define
%$
%A_{lk}=\{\pi_{j}(\bi{x_{l}})|j<i \mbox{ where } \pi_i(\bi{x_{l}})=k\},
%$
%$S_{k}$ = $A_{1k}\cap A_{2k}$ and $S'_{k}$ = $A_{1k}\cup A_{2k}-S_{k}$.
%\\[2ex]
%Then we can show that
%\[ \sum_{\pi_{i}(\bi{x_{1}})=\pi_{j}(\bi{x_{2}})}
%{\cal{E}}(p_{i}(\bi{x_{1}})p_{j}(\bi{x_{2}}))
%=
%\frac{2}{(M+1)(M+2)}\sum_{k\in T(\bi{x_{1}},\bi{x_{2}})}\left(\frac{M}{M+2}\right)^{\#S_{k}}\left(\frac{M}{M+1}\right)^{\#S_{k}'}.
%\]
%
%Hence
%\begin{align*}
%\mbox{cov}(Y(\bi{x_{1}}),Y(\bi{x_{2}}))
%&=
%\mbox{cov}(K(\bi{x_{1}},\bi{\theta}),K(\bi{x_{2}},\bi{\theta}))
%\frac{2}{(M+1)(M+2)}\times
%\\[1ex]
%&\quad\sum_{k\in T(\bi{x_{1}},\bi{x_{2}})}\left(\frac{M}{M+2}\right)^{\#S_{k}}\left(\frac{M}{M+1}\right)^{\#S_{k}'}.
%\end{align*}
%where 
%\[
% \mbox {cov}(Y(\bi{x_{1}}),Y(\bi{x_{2}}))={\mathcal{E}}(Y(\bi{x_{1}})Y(\bi{x_{2}}))-{\cal{E}}(Y(\bi{x_{1}})){\cal{E}}(Y(\bi{x_{2}}))
%\]
\end{theorem}
\begin{corollary}
It follows from the above theorem that for $i=1,2$, if 
$\int K^2(\bi{x}_i,\bi{\theta})dG_{0}(\bi{\theta}) < \infty$, then
\begin{equation}
\mbox{Var}(f(\bx_i))=\frac{\mbox{Var}_{G_0}(K(\bx_i,\btheta))}{\alpha+1}
\label{eq:var_f}
\end{equation}
and
\begin{align}
\mbox{Corr}(f(\bi{x_{1}}),f(\bi{x_{2}}))=&\mbox{Corr}_{G_0}(K(\bi{x_{1}},\bi{\theta}),K(\bi{x_{2}},\bi{\theta}))
 \times \mbox{Corr}(G_{\bx_1},G_{\bx_2}),\label{eq:corr_f}
 %\sum_{\pi_{i}(\bi{x_{1}})=\pi_{j}(\bi{x_{2}})} {{E}}(p_{i}(\bi{x_{1}})p_{j}(\bi{x_{2}}))
%& \ \ \ \times \frac{2}{\alpha+2}\sum_{k\in T(\bi{x_{1}},\bi{x_{2}})}\left(\frac{\alpha}{\alpha+2}\right)^{\#S_{k}}
%\left(\frac{\alpha}{\alpha+1}\right)^{\#S_{k}'}.\label{eq:corr_f}
%\\[1ex]
%&\quad+
%E(K(\bi{x_{1}},\bi{\theta}))E(K(\bi{x_{2}},\bi{\theta})).
\end{align}
where 
\begin{equation}
\mbox{Corr}(G_{\bx_1},G_{\bx_2})=
\frac{2}{\alpha+2}\sum_{k\in T(\bi{x_{1}},\bi{x_{2}})}\left(\frac{\alpha}{\alpha+2}\right)^{\#S_{k}}
\left(\frac{\alpha}{\alpha+1}\right)^{\#S_{k}'}.
\label{eq:corr_G}
\end{equation}
\end{corollary}
\noindent The expression for the correlation in (\ref{eq:corr_G}) has been obtained by GS.

The above results provide an expression for the correlation conditional on a fixed ordering. 
To obtain the unconditional correlation it is necessary to marginalize the conditional correlation
over the point process $\Phi$. Following GS we also modify the notation as follows:
we now let $T(\bx_1,\bx_2)=\Phi\cap U(\bx_1)\cap U(\bx_2)$, $A_{\ell k}=A_{\ell}(\bz_k)$, where
$A_{\ell}(\bz)=\{\bw\in\Phi\cap U(\bx_{\ell}):\|\bw-\bx_{\ell}\|<\|\bz-\bx_{\ell}\|\}$,
for $\bz\in\Phi\cap U(\bx_{\ell})$. As already mentioned in Section \ref{subsection:overview}, 
when $\bx=(\bs',t)'$, we define $U(\bx)=D\times (-\infty,t]$.

Also, for $\bz\in T(\bx_1,\bx_2)$, we let $S(\bz)=A_1(\bz)\cap A_2(\bz)$ and 
$S'(\bz)=A_1(\bz)\cup A_2(\bz)-S(\bz)$, which imply that 
$S(\bz)=\{\bw\in T(\bx_1,\bx_2):\|\bw-\bx_1\|<\|\bz-\bx_1\|
~\mbox{and}~\|\bw-\bx_2\|<\|\bz-\bx_2\|\}$. 

We further define, as in GS,
$S_{-\bz}(\bz)$ and $S'_{-\bz}(\bz)$ to be translations of $S(\bz)$ and $S'(\bz)$, respectively, by $-\bz$.
Then, the refined Campbell theorem yields, in the case where $\Phi$ is a stationary point process
with intensity $\lambda$:
\begin{align}
\mbox{Corr}(f(\bx_1),f(\bx_2))&=\mbox{Corr}_{G_0}(K(\bx_1,\btheta),K(\bx_2,\btheta))\notag\\
&\times\frac{2\lambda}{\alpha+2}\int_{U(\bx_1)\cap U(\bx_2)}\int
\left(\frac{\alpha}{\alpha+2}\right)^{\phi_{-\bz}(S_{-\bz})}
 \left(\frac{\alpha}{\alpha+1}\right)^{\phi_{-\bz}(S'_{-\bz})}P_0(d\phi)d\bz
 \label{eq:uncond_corr}
\end{align} 
In (\ref{eq:uncond_corr}), $P_0(d\phi)$ is the Palm distribution of $\Phi$ at the origin,
and $\phi_{-\bz}$ is the realization of $\Phi$ translated by $-\bz$. 
Note also that the second factor of the above correlation is the unconditional correlation between
$G_{\bx_1}$ and $G_{\bx_2}$ (see GS). 

\begin{remark}
\label{remark:remark1}
It is worth pointing out that unlike \ctn{Gelfand05} who obtained covariance structure conditional on the random process $\bG$, 
in our case, the covariance structures conditional on the random measures $\bG_{\bx}$ are not relevant,
since it follows from (\ref{eq:kc1}) and the subsequent discussion that $\mbox{Cov}\left(y_1,y_2|\bG_{\bx_1},\bG_{\bx_2}\right)=0$. 
Indeed, dependence among the responses is induced through dependence among $\bG_{\bx}$.
\end{remark}

The following theorem, the proof of which is provided in Section S-3 of the supplement, 
shows that the above correlation structure of our kernel convolution based
ODDP satisfies desirable properties.
 
\begin{theorem}
\label{theorem:corr}
$\mbox{Corr}(f(\bx_1),f(\bx_2))\rightarrow 1$ as $\|\bx_1-\bx_2\|\rightarrow 0$
and $\mbox{Corr}(f(\bx_1),f(\bx_2))\rightarrow 0$ as $\|\bx_1-\bx_2\|\rightarrow\infty$.
\end{theorem}
It is clear from the above theorem and model (\ref{eq:model}) that $\mbox{Corr}(y_i,y_j)\rightarrow 1$ as $\|\bx_i-\bx_j\|\rightarrow 0$
and $\mbox{Corr}(y_i,y_j)\rightarrow 0$ as $\|\bx_i-\bx_j\|\rightarrow \infty$.
%To see these, it is more convenient to deal with the notation used in the context of 
%Theorem \ref{theorem:cov}.
%Note that $\|\bx_1-\bx_2\|\rightarrow 0$ implies that 
%$\# S_k\rightarrow (k-1)$,
%$\# S'_k\rightarrow 0$, and hence, for any realization of the point process, 
%$\mbox{Corr}(G_{\bx_1},G_{\bx_2})=\sum_{k\in T(\bx_1,\bx_2)}\left(\frac{\alpha}{\alpha+2}\right)^{\# S_k}
%\left(\frac{\alpha}{\alpha+1}\right)^{\# S'_k}
%\rightarrow 
%\sum_{k=1}^{\infty}\left(\frac{\alpha}{\alpha+2}\right)^{k-1}
%=(\alpha+2)/2$.
%Since $\mbox{Corr}_{G_0}(K(\bx_1,\btheta),K(\bx_2,\btheta))\rightarrow 1$ as
%$\|\bx_1-\bx_2\|\rightarrow 0$, it follows that
%$\mbox{Corr}(f(\bx_1),f(\bx_2))\rightarrow 1$ as $\|\bx_1-\bx_2\|\rightarrow 0$.
%On the other hand, as $\|\bx_1-\bx_2\|\rightarrow\infty$, 
%$\# T(\bx_1,\bx_2)$ is at most finite, $\# S_k\rightarrow 0$ and $\# S'_k\rightarrow\infty$.
%This implies that $\mbox{Corr}(G_{\bx_1},G_{\bx_2})\rightarrow 0$, and hence,
%$\mbox{Corr}(f(\bx_1),f(\bx_2))\rightarrow 0$.
%Hence, by the dominated convergence theorem it follows that the unconditional
%correlation between $f(\bx_1)$ and $f(\bx_2)$ goes to 1 and 0, respectively, as 
%$\|\bx_1-\bx_2\|\rightarrow 0$ and $\|\bx_1-\bx_2\|\rightarrow\infty$.

%Thus, the kernel convolution of ODDP has the desired nonstationary spatio-temporal properties.

%In the case of only spatial problems, if $\bx\in D\subset\mathbb R^d$, for $d\geq 1$, 
%then GS suggest $U(\bx)=D$ for all $\bx\in D$
%as a suitable construction of $U(\bx)$. For time series problems
%they suggest $D=\mathbb R$ and $U(x)=(-\infty,x]$.
Under a stationary Poisson process assumption
for $\Phi$, and for particular specifications of $U(\bx)$ mentioned in Section \ref{subsection:overview}, 
the calculations of 
GS show that the second factor of (\ref{eq:uncond_corr}) depends 
upon $\bx_1$ and $\bx_2$ only through $\|\bx_1-\bx_2\|$, leading to isotropy of the
process. There does not seem to exist any result analogous to the refined Campbell
theorem in the context of nonstationary Poisson process which might allow one to construct a
nonstationary correlation structure in this case. The analytic form of the ODDP correlation structure 
need not be available
for other constructions of $U(\bx)$ either. Isotropy results even in the case of the more flexible
Cox processes. Note that the correlations between any two responses $y_i$ and $y_j$ may correspond to nonstationarity if
their expectations under the density $f_{\btheta}(\cdot)$ are nonlinear in $\btheta$. However, there is no guarantee that the correlation tends to zero
as $\|\bx_i-\bx_j\|\rightarrow\infty$.

On the other hand, our kernel convolution idea neatly solves this problem of attainment of 
nonstationarity via the first factor of our correlation structure given in (\ref{eq:uncond_corr}).
Indeed, the kernel $K(\bx,\btheta)$ can be chosen in the spirit of \ctn{Higdon99},
for instance, such that $\mbox{Corr}_{G_0}(K(\bx_1,\btheta),K(\bx_2,\btheta))$ does not depend upon
$\bx_1-\bx_2$ alone. In other words, by simply controlling the kernel we can ensure nonstationarity
of our process $f(\cdot)$ even if the underlying ODDP is stationary or even isotropic.
Of course, our process can be made stationary as well by choosing the kernel, say, in the spirit of \ctn{Higdon98},
and setting $U(\bx)$ to be of the forms specified by GS, when $\bx$ consists of either only spatial co-ordinates
or only temporal co-ordinate. When $\bx=(\bs',t)'$, then we set $U(\bx)=D\times (-\infty,t]$, as already
mentioned before. 

We further note that our general space-time correlation structure given by (\ref{eq:corr_f}) is nonseparable, that is,
in general, $\mbox{Corr}(f(\bs_1,t_1),f(\bs_2,t_2))\neq \mbox{Corr}_1(\bs_1,\bs_2)\times \mbox{Corr}_2(t_1,t_2)$, where 
$\mbox{Corr}_1$
and $\mbox{Corr}_2$ are spatial and temporal structures respectively. However, if desired,
separability can be easily induced by allowing the kernel to depend upon only the spatial location and
by allowing the ordering $\bpi$ to depend only upon time, or the vice versa. 
Specifically,  
letting $K(\bx,\btheta)=K(\bs,\btheta)$ and $\bpi(\bx)=\bpi(t)$, we obtain
\begin{equation}
\mbox{Corr}(f(\bs_1,t_1),f(\bs_2,t_2))=\mbox{Corr}_{G_0}(K(\bs_1,\btheta),K(\bs_2,\btheta))\times \mbox{Corr}(G_{t_1},G_{t_2}),
\label{eq:separable1}
\end{equation}
and letting $K(\bx,\btheta)=K(t,\btheta)$ and $\bpi(\bx)=\bpi(\bs)$, we obtain
\begin{equation}
\mbox{Corr}(f(\bs_1,t_1),f(\bs_2,t_2))=\mbox{Corr}_{G_0}(K(t_1,\btheta),K(t_2,\btheta))\times \mbox{Corr}(G_{\bs_1},G_{\bs_2}),
\label{eq:separable2}
\end{equation}
In contrast, under the ODDP approach of GS, it is clear from the correlation structure that 
$\mbox{Corr}(G_{\bx_1},G_{\bx_2})\neq \mbox{Corr}_1(\bs_1,\bs_2)\times \mbox{Corr}_2(t_1,t_2)$, 
showing that separability can not be
enforced if desired.

Thus, following our approach it is easy to construct nonparametric covariance structures that are either 
stationary or nonstationary,
which, in turn, can be constructed as either separable or nonseparable, as desired. 
These illustrate the considerable flexibility inherent
in our approach, while satisfying at the same time the desirable conditions that the correlation
between $f(\bx_1)$ and $f(\bx_2)$ tends to 1 or zero accordingly as the distance between $\bx_1$ and $\bx_2$
tends to zero or infinity. 

\section{Continuity and smoothness properties of our model}
\label{sec:smoothness}

%This section explores the smoothness criterion of our model. 
For stationary models, properties like continuity and smoothness can be quite generally characterized by 
the continuity and smoothness of the correlation function.
In particular, continuity and smoothness of stationary processes typically depend upon the behaviour of 
the correlation function
at zero; see \ctn{Yaglom87a} and \ctn{Yaglom87b} for details. For nonstationary processes, however, such elegant
theory is not available. Indeed, the structure of the correlation function itself may be difficult to get hold of,
rendering it difficult to investigate the properties of the underlying nonstationary stochastic process.
For our purpose, we utilize the notions of almost sure continuity, mean square continuity and mean square
differentiability of stochastic processes (see, for example, \ctn{Stein99}, \ctn{Banerjee03}) to study the properties of 
our nonstationary spatio-temporal process.

%In the real data context many naturally occurring spatial processes are highly non smooth in nature. In case of stationary model in most cases covariance function dies off smoothly as a distance of function. Since for the cases of non stationary one main challenge is to get a structured correlation funciton, it is difficult to conclude about the smoothness of the process.In our model we characterise smoothness properties with the notion of mean square continuity and mean square differentiability of a process.

\begin{definition}
A process $\{X(\bi{x}),\bi{x}\in \mathbb{R}^d\}$ is $L_2$ continuous at $\bi{x}_0$ if
$\underset{\bi{x}\rightarrow \bi{x}_0}{\lim} E[X(\bi{x})-X(\bi{x}_0)]^2 =0$.
Continuity in the $L_2$ sense is also referred to as mean square
continuity and will be denoted by $X(\bi{x})\stackrel{L_2}\rightarrow X(\bi{x}_0)$.
\end{definition}

\begin{definition}
A process $\{X(\bi{x}),\bi{x}\in \mathbb{R}^d\}$ is almost surely continuous at $\bi{x}_{0}$ if
$X(\bi{x})\rightarrow X(\bi{x}_0)$ $a.s.$ as $\bi{x}\rightarrow\bi{x_{0}}$.
If the process is almost surely continuous for every $\bi{x_{0}}\in \mathbb{R}^{d}$ then the process is said
to have continuous realizations.
\end{definition}

\begin{theorem}
Assume the following conditions:
\begin{enumerate}
 \item[(A1)] For all $\bx$ and $\btheta$, $\left|K(\bx,\btheta)\right|<M$ for some $M<\infty$.
 \item[(A2)] Given any $\btheta$, $K(\bx,\btheta)$ is a continuous function of $\bx$.
\end{enumerate}
Then $f(\cdot)$ is both almost surely continuous and mean square continuous in the interior of 
$\cap_{k=1}^{\infty} A_{ki_k}$, where
$A_{ki_k}=\{\bx:\pi_k(\bx)=i_k\}$, and for each $k=1,2,\ldots,$ $i_k\in\{1,2, \ldots\}$;
$i_k\neq i_{k'}$ for any $k\neq k'$. On the other hand, $f(\cdot)$ is almost surely discontinuous 
at any point $\bx_0\in\cap_{k=1}^{\infty} A_{ki_k}$ lying on the boundary of $A_{ki_k}$, for any
$i_k$.
\end{theorem}

See Section S-4 for a proof of this result.
%The source of this discontinuity is the poisson point process used in the ODDP prior. 
%Since the rankings are induced through the point process, due to any small changes in the point 
%process rankings may causes significant changes in the rankings also, which causes discontinuity 
%other than the interior of the jump points defined by $\cap_{k=1}^{\infty} A_{ki_k}$.
%
Now we examine mean square differentiability of our process. 

\begin{definition}
A process $\{X(\bi{x}),\bx\in \mathbb{R}^d\}$ is said to be mean square differentiable at $\bi{x}_0$ if for any 
direction $\bu$, there exists a process $L_{\bi{x}_0}(\bu)$, linear in $\bu$ such that
\begin{align*}
 X(\bi{x}_0+\bu)=X(\bi{x}_0)+ L_{\bi{x}_0}(\bu) + R(\bi{x}_0,\bu),
%\\[1ex]
\mbox{ where } \frac{R(\bi{x}_0,\bu)}{\|\bu\|} \stackrel{L_2}\rightarrow 0.
\end{align*}
\end{definition}

\begin{theorem}
Assume the following conditions:
\begin{enumerate}
 \item[(B1)] For all $\bx$ and $\btheta$, $\left|K(\bx,\btheta)\right|<M$ for some $M<\infty$.
 \item[(B2)] Given any $\btheta$, $K(\bx,\btheta)$ is a continuously differentiable function of $\bx$.
% with bounded partial derivatives.
\end{enumerate}
Then $f(\cdot)$ is mean square differentiable in the interior of $\cap_{k=1}^{\infty} A_{ki_k}$.
%$A_{ki_k}=\{\bx:\pi_k(\bx)=i_k\}$, and for each $k=1,2,\ldots,$ $i_k\in\{1,2, \ldots\}$.
\end{theorem}

See Section S-5 for a proof of this theorem.

In real life applications most of the spatio-temporal processes are expected to be irregular in nature. One of the desirable properties
of a spatio-temporal model is that, it allows the different degrees of smoothness across space than across time. 
Our model has achieved this property regarding smoothness.
For example, if we associate the ODDP prior only to the spatial locations, then the process  becomes smoother 
across time than across space depending on the choice of the kernel.

%Here we see that the our process is discontinuous other than the interior point of the jump processes.

\section{Truncation of the infinite summand}
\label{sec:truncation}

Since our proposed model $f(\bx)=\sum_{k=1}^{\infty}K(\bx,\btheta_{\pi(\bx)})p_i(\bx)$
is an infinite (random) series, for model-fitting purpose it is necessary to truncate the series
to $f(\bx)=\sum_{k=1}^{N}K(\bx,\btheta_{\pi(\bx)})p_i(\bx)$, where $N$ is to be determined,
or to implement variable-dimensional Markov chain methods where $N$ is to considered a random variable
so that the number of parameters associated with $f(\bx)$ is also a random variable. 

Although we will describe and implement TTMCMC,
%a novel variable
%dimensional Markov chain methodology introduced by \ctn{Das14}, which the authors 
%refer to as Transdimensional Transformation
%based Markov Chain Monte Carlo (TTMCMC), 
we first prove a theorem with respect to truncation of the infinite
random series. Note that in the context of traditional Dirichlet process characterized by Sethuraman's
stick breaking construction (\ctn{Sethuraman94}) which involves infinite random series, \ctn{Ishwaran01} proposed a method of
truncating the infinite series. 

%Though we implement our variable dimensional model using TTMCMC in very effecient manner, we have derived some properties of our model based on the truncation of the infinite summand. Here we follow the idea of truncation of Sethuram Construction by  
%Ishwaran and James(2001). We are able to construct a desired error bound in this context.

%We have derived the following results associated with the error bounds. Here using these results one can avoid the variable dimensional problems. Fixing the error bound one can deduce the truncation points for achieving the error bounds. 

We now state our theorem on truncation, the proof of which is provided in Section S-6
of the supplement.
But before stating the theorem it is necessary to define some required notation.
Let 
$$P_N(\bx_{i})=\sum_{i=1}^N K(\bx_{i},\theta_{i})p_{i},\ \ \mbox{and} \ \ 
P(\bx_{i})=\sum_{i=1}^\infty K(\bx_{i},\theta_{i})p_{i},$$
where $N$ needs to be determined. 
Also let 
$$P_N=(P_N(\bx_1),\ldots,P_N(\bx_n))' \ \ \mbox{and} \ \ P=(P(\bx_1),\ldots,P(\bx_n))',$$
and denote by $\Theta_N$ and $\Theta$ the sets of random quantities $(V_i,\btheta_i)$ associated
with $P_N$ and $P$ respectively.
We define the following marginal densities of the vector of observations $\bi{y}=\{y_1,\ldots,y_n\}$, where $[\cdot|\cdot]$ and $[\cdot]$ denote
conditional and marginal densities, respectively: 
\begin{align*}
m_N({\bi{y}}) &=\int_{\Theta_N} [\bi{y}|P_{N}][P_{N}]d\Theta_N
\\
 &=\int_\Theta [\bi{y}|P_{N}][P]d\Theta,
\end{align*}
and
\[
 m_\infty(\bi{y})=\int_\Theta[\bi{y}|P][P]d\Theta.
\]

\begin{theorem}
\label{theorem:truncation2}
Under the assumption
that $\underset{\btheta}{\sup} K(x_{i},\btheta)\leq M $ for $i=1,\ldots,n$, where $M>0$
is a finite constant,
we have 
\[
 \int _{\mathbb{R}^n} \left|m_N(\bi{y})-m_\infty(\bi{y})\right|d\bi{y} 
 \leq 4M^2n \left(\frac{\alpha}{\alpha+2}\right)^N + 2\sqrt{\frac{2}{\pi}}Mn\left(\frac{\alpha}{\alpha+1}\right)^N.
\]
\end{theorem}

\section{Choice of kernel, prior distributions and computational region}
\label{sec:kernel_prior}
The choice of kernel $K(\cdot,\cdot)$ plays a crucial role in nonstationary spatio-temporal data analysis.
For instance, if $K(\bx,\btheta)=K(\bx-\btheta)$, then the correlation between $Y(\bx_1)$ and $Y(\bx_2)$
turns out to be a function of $\bx_1-\bx_2$, thus inducing stationarity. 
For the purpose of nonstationarity, it is necessary to make the parameters of the kernel depend upon space and
time. In the spatial context such nonstationary kernels are considered in \ctn{Higdon99}.
In this paper, we consider a nonstationary space-time kernel; for the spatial part of the kernel
we essentially adopt the dependence structure and the associated prior distributions proposed by \ctn{Higdon99} and for the temporal part
we allow the relevant coefficient to be time varying, modeled by a stationary Gaussian process.

% such as the Gaussian
%\[
% K(\bx,\theta)= \exp\left\{\left(-0.5\|x-\theta\|^2\right)\right\},
%\]
%
%or the Laplace
%\[
% K(\bx,\theta)= exp{\left(-0.5(||\bx-\theta||\right)}
%\]
%
%
%For the space time applications we assume that the variance due to spatial locations are different than variation due to time points, so 
%we slightly modify our General Gaussian Kernel.We assume different decay parameter for time points.

In particular, we consider the following kernel for our applications:
\[
 K(\bs,t,\btheta,\tau)=\exp\left\{-\frac{1}{2}(\bs-\btheta)^{T}\Sigma (\bs)(\bs-\btheta)-\delta (t)|t-\tau|\right\},
\]
where $\Sigma(\bs)$ is a $2\times 2$ positive definite dispersion matrix depending upon $\bs$, and 
$\delta(t)>0$ depends upon time $t$. We assume that $\log(\delta(t))$ is a zero mean Gaussian process
with covariance $c_{\delta}(t_1,t_2)=\sigma^2_{\delta}\exp\left\{(t_1-t_2)^2/a_{\delta}\right\}$.
%Following \ctn{Higdon99} 
We set
\begin{align}
\Sigma (\bs)^{\frac{1}{2}}&=\varphi\left(\begin{array}{cc}
\left[\frac{\sqrt{4 A^2+\|\psi (\bs)\|^4\pi^2}}{2\pi}+\frac{\|\psi (\bs) \|^2}{2}\right]^{\frac{1}{2}} & 0\\
0 & \left[\frac{\sqrt{4 A^2+\|\psi (\bs)\|^4\pi^2}}{2\pi}-\frac{\|\psi (\bs) \|^2}{2}\right]^{\frac{1}{2}}
\end{array}\right)
\left(\begin{array}{cc}\cos~\alpha (\bs) & \sin~\alpha (\bs)\\
-\sin~\alpha (\bs) & \cos~\alpha (\bs)
\end{array}
\right),
\notag
\end{align}
where $\|\psi (\bs)\|^2=\psi^2_1(\bs)+\psi^2_2(\bs)$ and $\alpha (\bs)=\tan^{-1}\left(\frac{\psi_2(\bs)}{\psi_1(\bs)}\right)$.
We assume %following \ctn{Higdon99}, 
that $\psi_1(\cdot)$ and $\psi_2(\cdot)$ are independent and identical 
zero mean Gaussian processes with covariance 
$c_{\psi}(\bs_1,\bs_2)=\sigma^2_{\psi}\exp\left\{-\|\bs_1-\bs_2\|^2/b_{\psi}\right\}$.
%Following \ctn{Higdon99} 
We put the $U(3,200)$ prior on $\varphi$, $a_{\delta}$, 
and $b_{\psi}$; we set $\sigma^2_{\delta}=\sigma^2_{\psi}=1$. Also, %proposed by \ctn{Higdon99}, 
we set $A=3.5$.
Since in our applications we center and scale the observed time points, for $\tau$ we specify the $N(0,1)$ prior.

\subsection{Elicitation of hyperparameters of the underlying ODDP}
\label{subsec:hyperprior_dp}

\subsubsection{Choice of $G_0$}
\label{subsubsec:G_0}

In our applications, we center and scale each of the two components $\{s_{1i};~i=1,\ldots,n\}$ 
and $\{s_{2i};~i=1,\ldots,n\}$ of the available spatial locations
$\{\bs_i=(s_{1i},s_{2i});~i=1,\ldots,n\}$. Consequently, the choosing $G_0$ to be the
bivariate normal distribution with both means zero, both variances equal to one, and correlation
$\rho$ appears to be reasonable. We estimate $\rho$ by the empirical correlation between
$\{s_{1i};~i=1,\ldots,n\}$ and $\{s_{2i};~i=1,\ldots,n\}$.

\subsubsection{Prior selection for $\alpha$}
\label{subsubsec:prior_alpha}
For the choice of prior distributions of the parameters associated with the ODDP we follow \ctn{Griffin04} and 
GS. %\ctn{Griffin06}. 
In particular, we put the 
inverted Beta distribution prior on $\alpha$, given by
\[
 p(\alpha)=\frac{{n_0}^\eta\Gamma(2\eta)\alpha^{\eta-1}}{{\Gamma(\eta)}^2 (\alpha+n_0)^{2\eta}},
\]
where the hyperparameter $n_0$ is the prior median of $\alpha$. Note that the prior variance exists if 
$\eta>2$ and is a decreasing function of $\eta$. This prior implies that $\frac{\alpha}{\alpha+n_0}$ follows a 
$Beta(\eta,\eta)$ distribution.

%\ctn{Griffin06} 
\subsubsection{Prior selection for $\lambda$}
\label{subsubsec:prior_lambda}
Note that, for  small $\alpha$, only the first few elements of stick breaking representation are important, 
so fewer number of points from the underlying Poisson process is needed to induce  
the second factor of the correlation structure (\ref{eq:uncond_corr})
which roughly depends upon the ratio $\lambda/(\alpha+1)$ for $U(\bx)=D$ (spatial problem) and 
$U(x)=(-\infty,x]$ (temporal problem); see GS for the details. Thus
a relatively small value of $\lambda$ suffices in such cases.
Similarly, when $\alpha$ is larger, 
%we need a large number of points to reorder many atoms to
%change the distribution appreciably, and thus we need a 
larger $\lambda$ is necessary to obtain the same correlation. Keeping these in mind, we select the log-normal prior for 
$\lambda$ with mean $\log(\alpha)$ and variance $b_{\lambda}$, say. For our applications, we choose
$b_{\lambda}=20$, so that we obtain a reasonably vague prior.

\subsection{Computational region}
\label{subsec:comp_region}

Following GS, we consider a truncated region for the point process $Z$ which includes the range 
of the observed $\bx$. This truncated region has been referred to as the computational region by GS.
In particular, we choose a bounding box of the form $(a_1,b_1)\times (a_2,b_2)\times\cdots\times (a_d,b_d)$ 
as the computational region, where $a_i=d_{a_i}-r$, $b_i=d_{b_i}-r$. Here   
$d_{a_i}$ and $d_{b_i}$ are the minimum and the maximum of $\bx$ in dimension $i$, and 
$r=2\left(\frac{\Gamma(d/2)d}{2\pi^{d/2}}\frac{\alpha+1}{\lambda}\log\frac{1}{\epsilon}\right)^{\frac{1}{d}}$,
with $\epsilon=\exp\left\{-\frac{\lambda}{\alpha+1}\frac{2\pi^{d/2}}{\Gamma(d/2)d}\left(\frac{r}{2}\right)^d\right\}$.
See GS for justification of these choices.

\section{Joint posterior and a briefing of TTMCMC for updating parameters in our
variable dimensional modeling framework}
\label{sec:joint_posterior}

Let $k$ denote the random number of summands in
\begin{equation}
f_k(\bx)=\sum_{i=1}^{k}K(\bx,\btheta_{\pi_{i}(\bx)})p_{i}(\bx) \mbox{       }
\forall \bx\in D\subseteq\mathbb{R}^d.
\label{eq:kernel_convolution2}
\end{equation}
Let $\bV=(V_1,\ldots,V_k)$, $\bz=(z_1,\ldots,z_k)$, $\btheta=(\btheta_1,\btheta_2)$,
with $\btheta_1=(\theta_{11},\ldots,\theta_{1k})$ and $\btheta_1=(\theta_{21},\ldots,\theta_{2k})$.
Let also $\bpsi_1=(\psi_1(\bs_1),\ldots,\psi_1(\bs_n))$, $\bpsi_2=(\psi_2(\bs_1),\ldots,\psi_2(\bs_n))$
and $\bdelta=(\delta(t_1),\ldots,\delta(t_n))$.
The joint posterior is of the form
\begin{align}
&\pi(k,\bV,\bz,\btheta_1,\btheta_2,\bpsi_1,\bpsi_2,\bdelta,\tau,\sigma,\alpha,\lambda,b_{\psi},a_{\delta}
\vert\bY)\notag\\
&\propto\pi(k)\pi(\bV,\bz,\btheta_1,\btheta_2\vert k)\pi(\bpsi_1,\bpsi_2,\bdelta)
\pi(\tau,\varphi,b_{\psi},a_{\delta})\pi(\sigma)\pi(\alpha)\pi(\lambda\vert\alpha)\times L(\bV,\bz,\btheta_1,\btheta_2,\sigma\vert k,\bY),
\label{eq:joint}
\end{align}
where $L(\bV,\bz,\btheta_1,\btheta_2,\sigma\vert k,\bY)$ is the joint normal likelihood of 
$\bV,\bz,\btheta_1,\btheta_2,\sigma$ under the model
\begin{equation}
y_i=f_k(\bx_i)+\e_i;~\e_i\stackrel{iid}{\sim}N(0,\sigma^2);~i=1,\ldots,n,
\label{eq:likelihood}
\end{equation}
conditional on $f_k(\cdot)$.

For our applications, as the prior $\pi(k)$ on $k$ we assume the discrete uniform prior on $\{1,2,\ldots,30\}$;
in our applications $k$ never even reached $30$. Under $\pi(\bV,\bz,\btheta_1,\btheta_2\vert k)$,
$V_i\stackrel{iid}{\sim}Beta(1,\alpha);~i=1,\ldots,k$, $\bz$ are realizations from the Poisson process
with intensity $\lambda$, and for $i=1,\ldots,k$, $(\theta_{1i},\theta_{2i})\stackrel{iid}{\sim} G_0$.
Under $\pi(\bpsi_1,\bpsi_2,\bdelta)$, $\psi_1,\psi_2,\delta$ are independent Gaussian processes, as detailed
in Section \ref{sec:kernel_prior}. The prior distribution of $\tau,\varphi,b_{\psi},a_{\delta}$,
denoted by $\pi(\tau,\varphi,b_{\psi},a_{\delta})$, is already provided in Section \ref{sec:kernel_prior}.
For the error standard deviation $\sigma$, the prior denoted by $\pi(\sigma)$ is the log-normal distribution
with parameters 0 and 1, so that the mean and variance of $\sigma$ are about 1.6 and 5, respectively. These
quantities appear to be reasonable, and yielded adequate inference.

In order to obtain samples from the joint posterior (\ref{eq:joint}) which involve the
variable dimensional $f_k(\cdot)$, we implement the TTMCMC methodology. In a nutshell, TTMCMC
updates all the parameters, both fixed and variable dimensional, as well as the number
of parameters of the underlying posterior distribution in a single block using simple deterministic
transformations of some low-dimensional random variable drawn from some fixed, but low-dimensional
arbitrary distribution defined on some relevant support. The idea is an extension of 
Transformation based Markov Chain Monte Carlo (TMCMC) introduced
by \ctn{Dutta14} for updating high-dimensional parameters with
known dimensionality in a single block using simple deterministic transformations of some
low-dimensional (usually one-dimensional) random variable having arbitrary distribution
on some relevant support. The strategy of updating high and variable dimensional
parameters using very low-dimensional random variables clearly reduces dimensionality dramatically,
thus greatly improving acceptance rate, mixing properties, and computational speed. 
In Section S-7 of the supplement we provide a detailed overview of TTMCMC, propose a general algorithm
(Algorithm S-7.1) with certain advantages, and in Section S-8 of the supplement we 
specialize the algorithm to our spatio-temporal modeling set-up,
providing full updating details (Algorithm S-8.1).

\section{Simulation study}
\label{sec:simulation_study}

%To illustrate the performance of our model we first create a synthetic data 
%which is nonstationary and non-Gaussian.  
%In Section \ref{subsec:process_simdata} we detail such an approach.

To illustrate the performance of our model we first create a synthetic data generating process
which is nonstationary and non-Gaussian. One popular method to create such process is the kernel 
convolution approach. However, since we have developed our spatio-temporal model itself using the 
kernel convolution approach, it is perhaps desirable to obtain the synthetic data from some 
nonstationary, non-Gaussian process created using some approach independent of the kernel convolution method. 
In Section \ref{subsec:process_simdata} we detail such an approach.
Then we fit our proposed model to the data pretending that
the data-generating process is unknown. 

\subsection{A nonstationary non-Gaussian data generating process}
\label{subsec:process_simdata}

Let $X(\cdot)$ denote a stationary Gaussian process with mean function 
$\mu(t,\bs)=\beta_0+\beta_1 t+\beta_2 s_1+\beta_3 s_2$, with $\bs=(s_1,s_2)$,
and covariance function $$A(i,j)=c((t_i,\bs_i),(t_j,\bs_j))
=\exp\left\{-0.5\left(\sqrt{(t_i-t_j)^2+(s_{1i}-s_{1j})^2+(s_{2i}-s_{2j})^2}\right)\right\},$$
for any $t_i,t_j,\bs_i=(s_{1i},s_{2i}),\bs_j=(s_{1j},s_{2j})$. 

Let $\bX=(X(t_1,\bs_1),\ldots,X(t_n,\bs_n))'$ denote observed data points from the Gaussian process $X$
at the design points $\{(t_i,\bs_i);~i=1,\ldots,n\}$. Let $\bt=(t_1,\ldots,t_n)'$ and
$\bS=(\bs'_1,\ldots,\bs'_n)'$. Further, let us denote by $\bA=(A(i,j); i=1,\ldots,n;~j=1,\ldots,n)$  the
covariance matrix 
%and let $\bD=(\bt\vdots\bS)$ denote the 
%design matrix. Also, let $\bbeta=(\beta_0,\beta_1,\beta_2,\beta_3)'$ 
and $\bmu=(\mu(t_1,\bs_1),\ldots,\mu(t_n,\bs_n))'$.
Then the posterior process $[X(\cdot)\vert\bX]$ is non-stationary Gaussian with mean function
$\mu_X(t,\bs)=\mu(t,\bs)+\bA_{12}\bA^{-1}_{22}(\bX-\bmu)$ and variance
$\bA_{11}-\bA_{12}\bA^{-1}_{22}\bA_{21}$, where 
$\bA=\left(\begin{array}{cc}\bA_{11} & \bA_{12}\\
\bA_{21} & \bA_{22}\end{array}\right)$.

Let the posterior nonstationary Gaussian process $[X(\cdot)\vert\bX]$ be denoted by $X^*(\cdot)$.
%Let $\bX^*=(x^*(t^*_1,\bs^*_1),\ldots,x^*(t^*_m,\bs^*_m))'=(x^*_1,\ldots,x^*_m)'$ denote a data-set generated from
%the posterior nonstationary Gaussian process $[X(\cdot)\vert\bX]$. 
Now, conditionally on the process $X^*(\cdot)$, consider another process $Y(\cdot)$
with mean function $\mu^*(t,\bs)=X^*(t,\bs)$ and covariance function
$c_Y((t_i,\bs_i),(t_j,\bs_j))=\exp\left\{-0.5\left|X^*(t_i,\bs_i)-X^*(t_j,\bs_j)\right|\right\}$.
Then marginally, $Y(\cdot)$ is a nonstationary non-Gaussian process.

For our illustration we will simulate the synthetic dataset from the process $Y(\cdot)$.
The algorithm for generation of this synthetic data is provided in supplementary material (S-9.1).

\subsection{Results of fitting our model to the simulated data}
\label{subsec:simulation_results}

Note that for this problem the number 
of parameters to be updated ranges between 300 to 400. 
Our TTMCMC based model implementation took 35 mins to yield 900000
realizations following a burn-in of 100000. 
Quite encouragingly, TTMCMC exhibited satisfactory acceptance rate and mixing properties. Traceplots are shown in Figure S-9.1 of supplement.  
%Also TTMCMC exhibited satiafactory performance for simulated data of size 1000, 
%where the dimension of the spatio temporal problem varies randomly between 3000 to 3100. 
%Also the acceptance probability and mixing is very good. 

\subsubsection{Leave-one-out cross-validation}
\label{subsubsec:simulation_xval}
We asses the predictive power of our model with the leave-one-out cross validation method. 
All the 95 cases were included in the 95\% highest posterior densities of the corresponding leave-one-out
posterior predictive densities
Figure \ref{fig:posterior_predictive2} displays the posterior predictive densities of six randomly selected
space-time points, along with the true values, the latter denoted by the vertical lines.
Thus, satisfactory performance of our proposed model is indicated by the results, particularly given
the fact that our model does not assume knowledge of the true, data-generating, parametric model.

\begin{figure}%[H]
%\begin{center}
\centering
\includegraphics[trim={0 0 0 0},clip, totalheight=0.25\textheight]{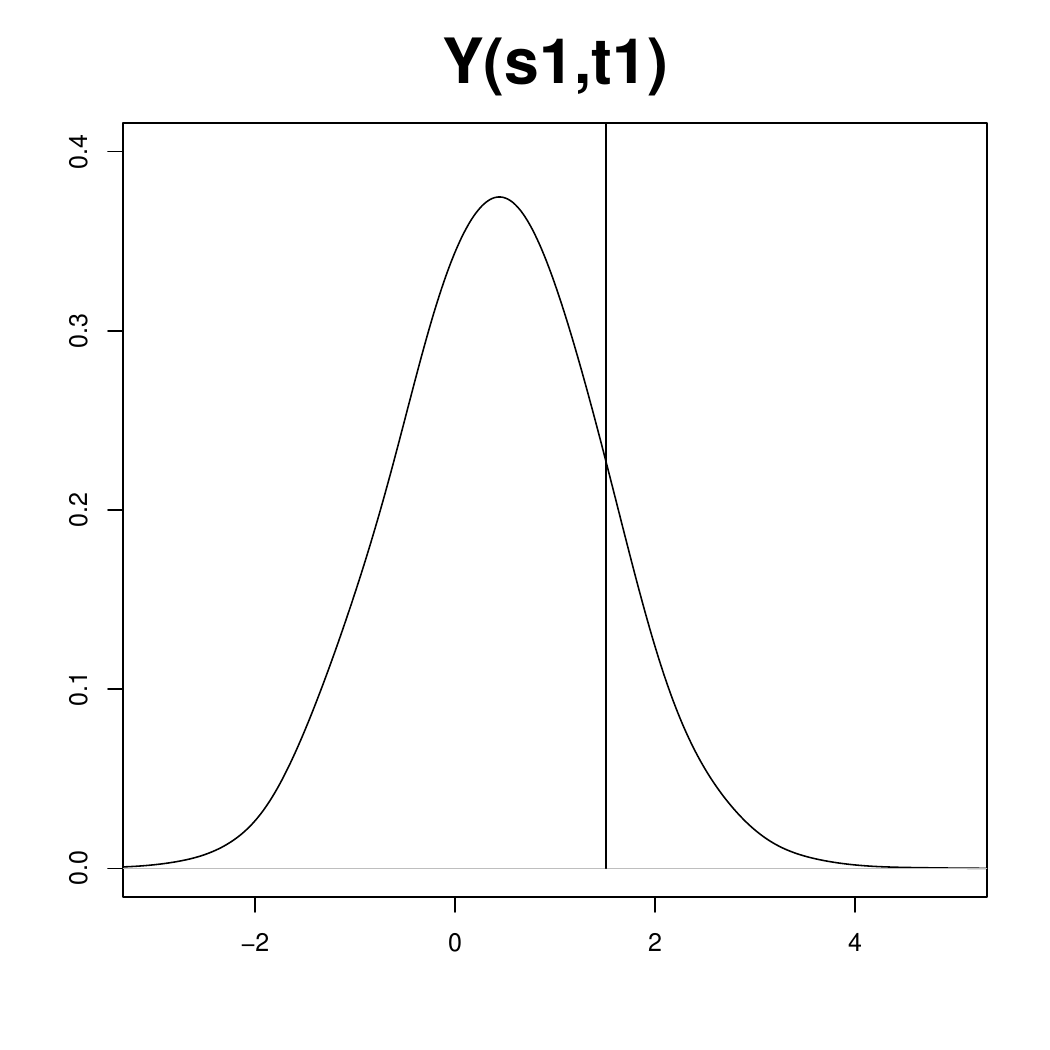}
\includegraphics[trim={0 0 0 0},clip, totalheight=0.25\textheight]{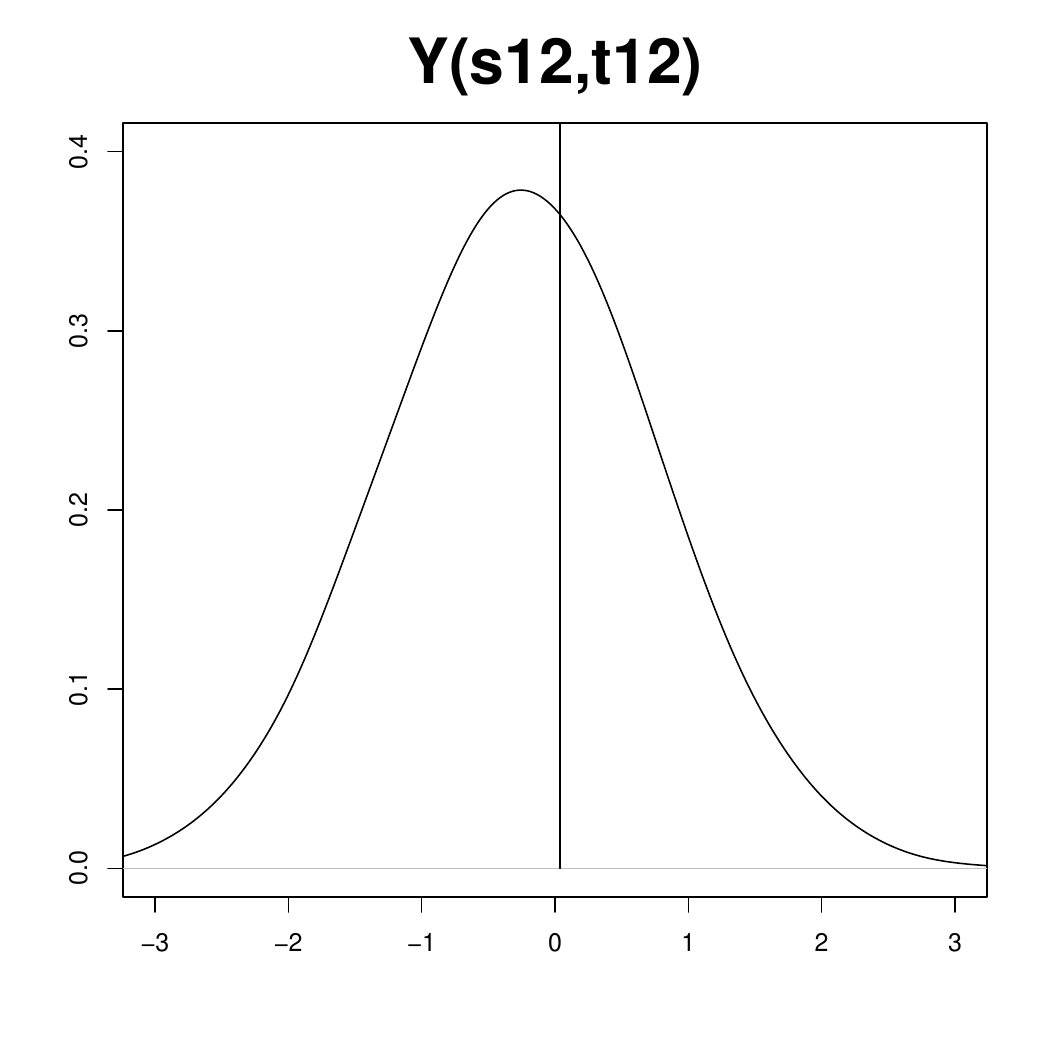}
\includegraphics[trim={0 0 0 0},clip, totalheight=0.25\textheight]{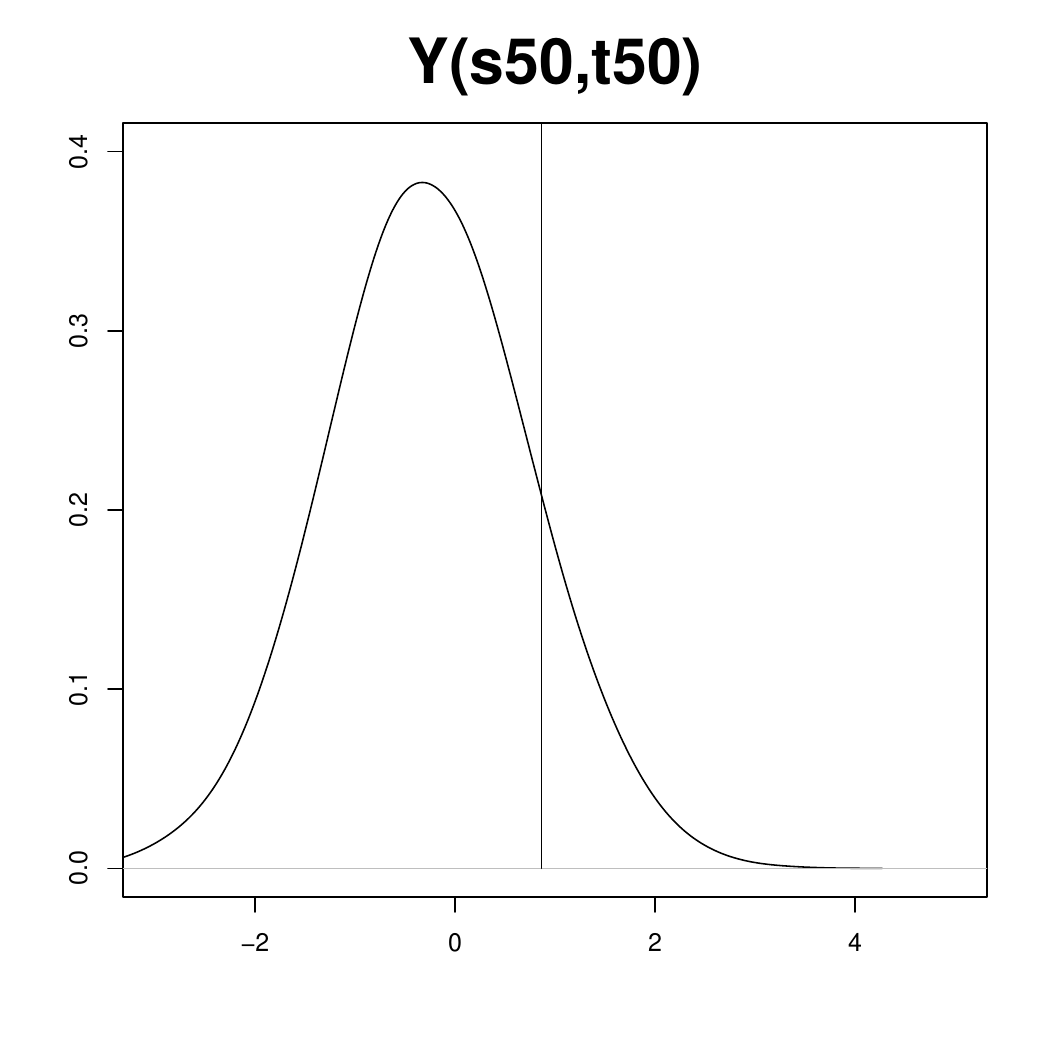}
\includegraphics[trim={0 0 0 0},clip, totalheight=0.25\textheight]{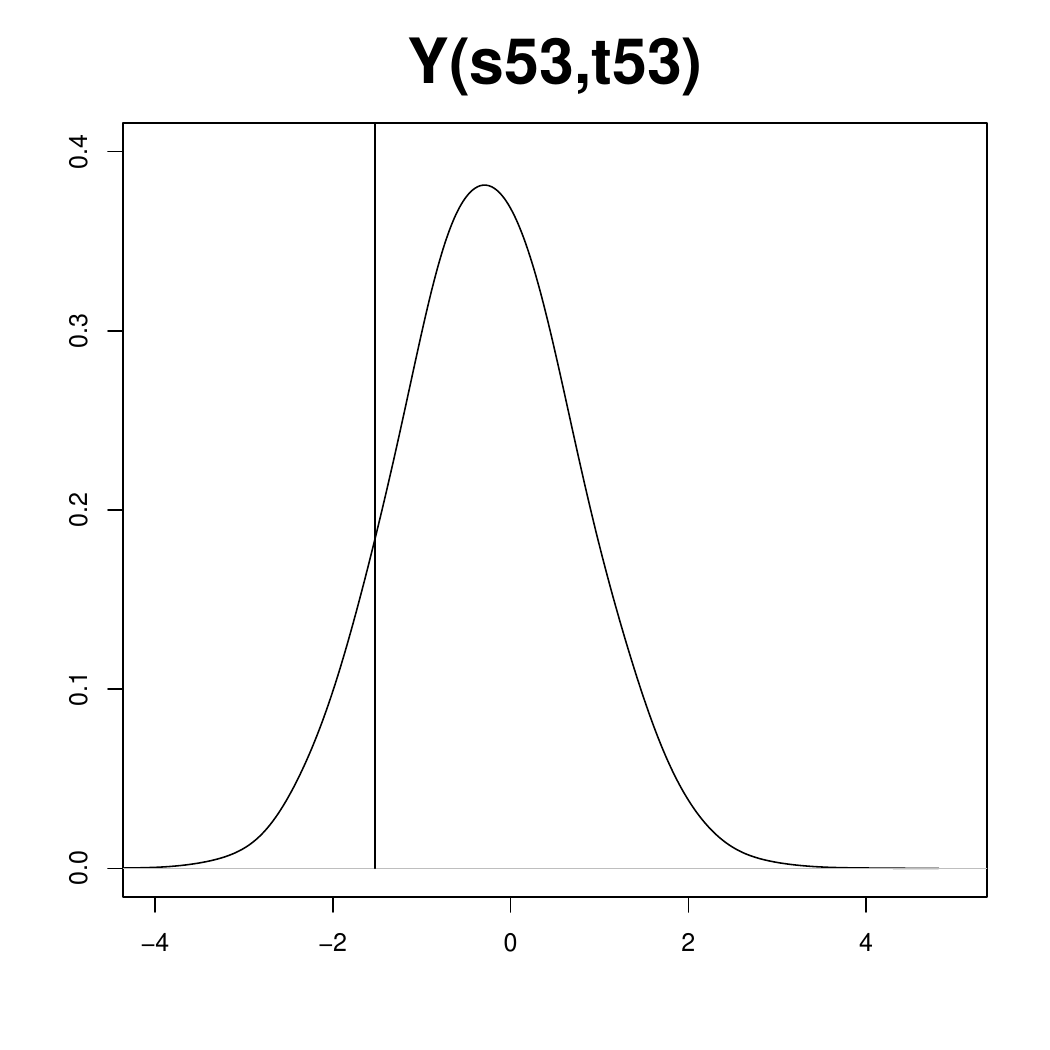}
\includegraphics[trim={0 0 0 0},clip, totalheight=0.25\textheight]{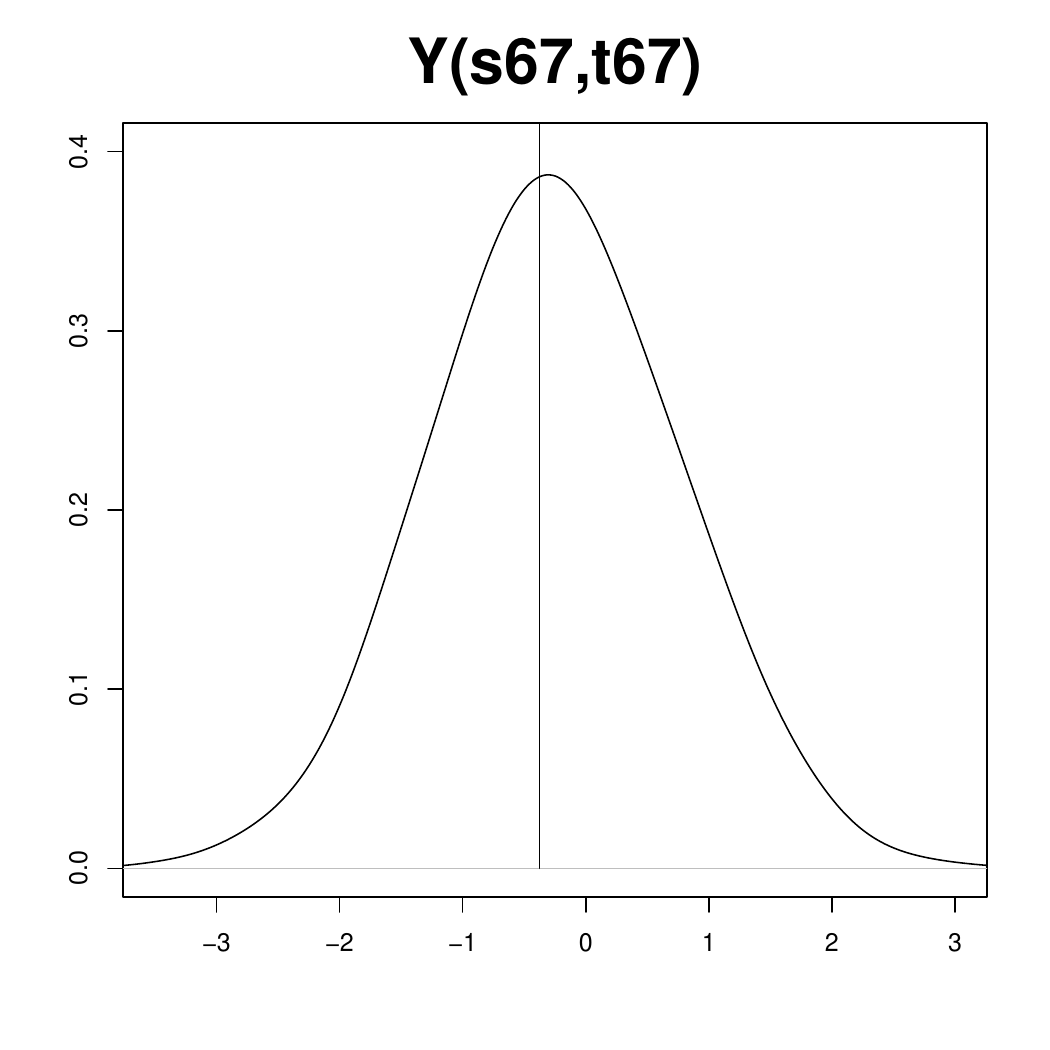}
\includegraphics[trim={0 0 0 0},clip, totalheight=0.25\textheight]{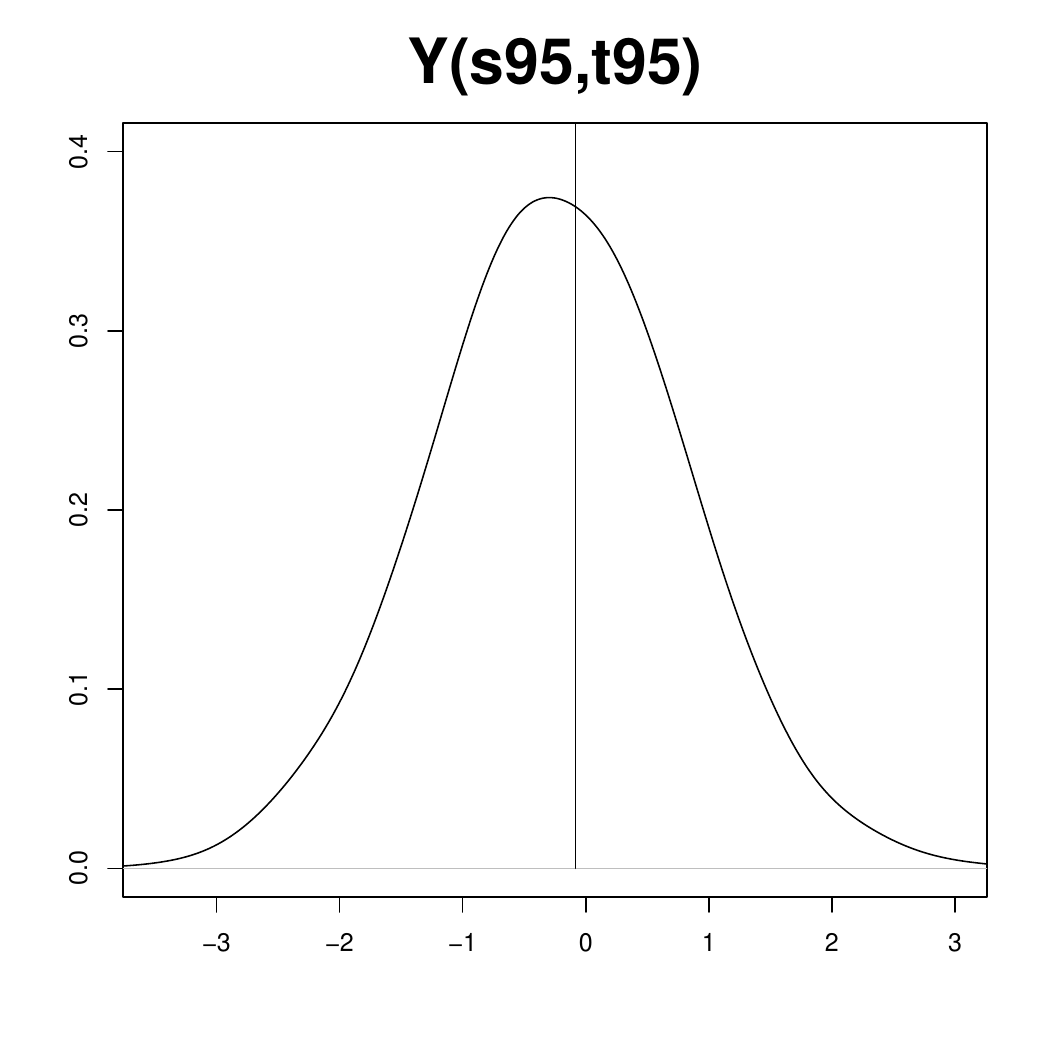}
\caption{{\bf Simulation study}: Posterior predictive densities of $Y(\bs,t)$ for the 6 different location-time pairs of our model
-- the corresponding true values
	are denoted by the vertical line.}
\label{fig:posterior_predictive2}
%\end{center}
\end{figure}

\subsubsection{Correlation Analysis}
\label{subsubsec:simulation_corr}
Though our simulation mechanism is completely different from our proposed model, the simulated data
do exhibit the pattern that the correlations are close to zero for two widely separated locations and/or times. 
Indeed, from the structure of the covariance matrix $\bSigma_{(95|5)}$, it is easily seen that the $(i,j)$-th
element ($i\neq j$) of $\bSigma_{(95|5)}$ is close to zero whenever the distance between $t_i$ and $t_j$
and/or $\bs_i$ and $\bs_j$ is large.

We calculate the posterior densities of correlation for different pairs of space-time points. In formation of the pair, 
we select nearby locations, as well as locations which are widely separated, 
such that we obtain both high and low correlation values under the true, data-generating model. 
It is evident from Figure \ref{fig:correlation} that the true correlations, ranging from 
small to high values, lie well within 
their respective $95\%$ credible intervals, vindicating reasonable performance 
of our model in terms of capturing the true correlation structure.

\begin{figure}%[htb]
%\begin{center}
\centering
\includegraphics[height=1.5in,width=1.75in]{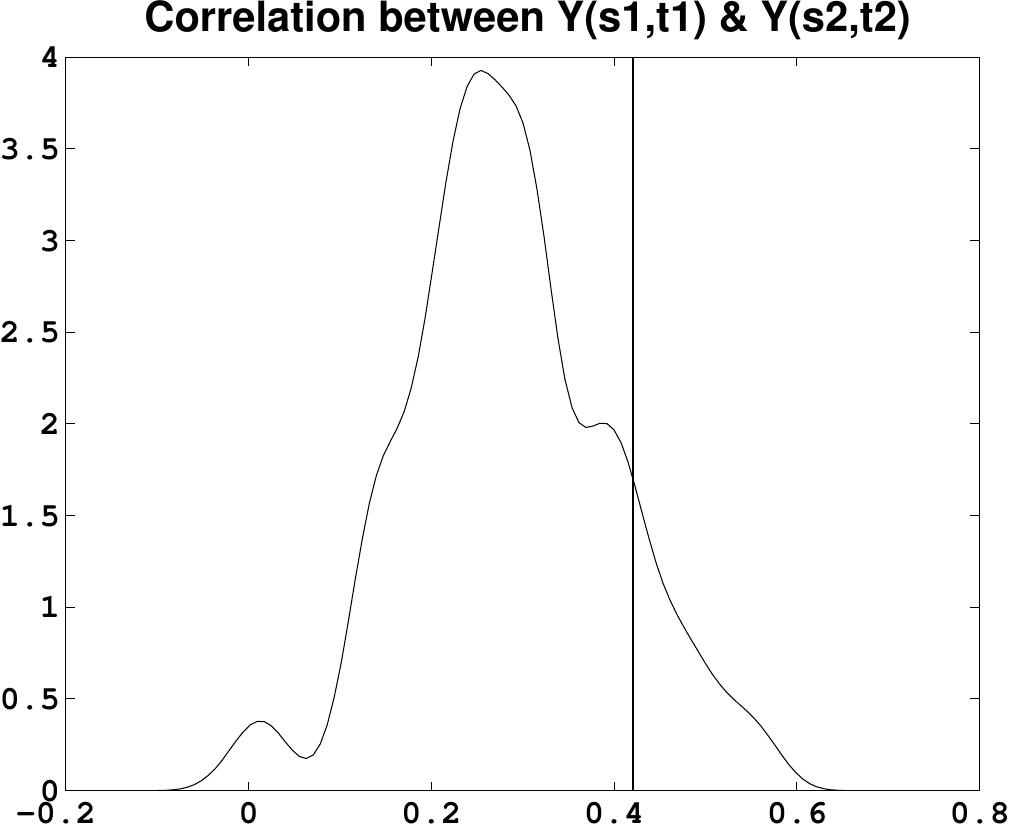}
\includegraphics[height=1.5in,width=1.75in]{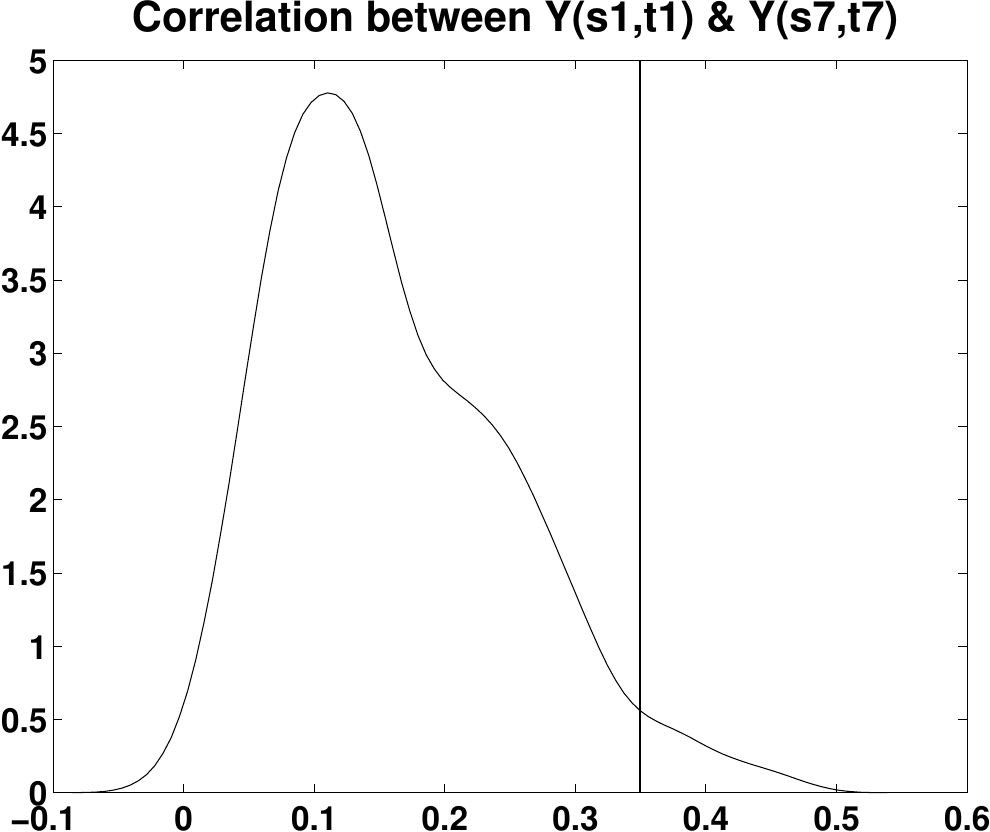}
\includegraphics[height=1.5in,width=1.75in]{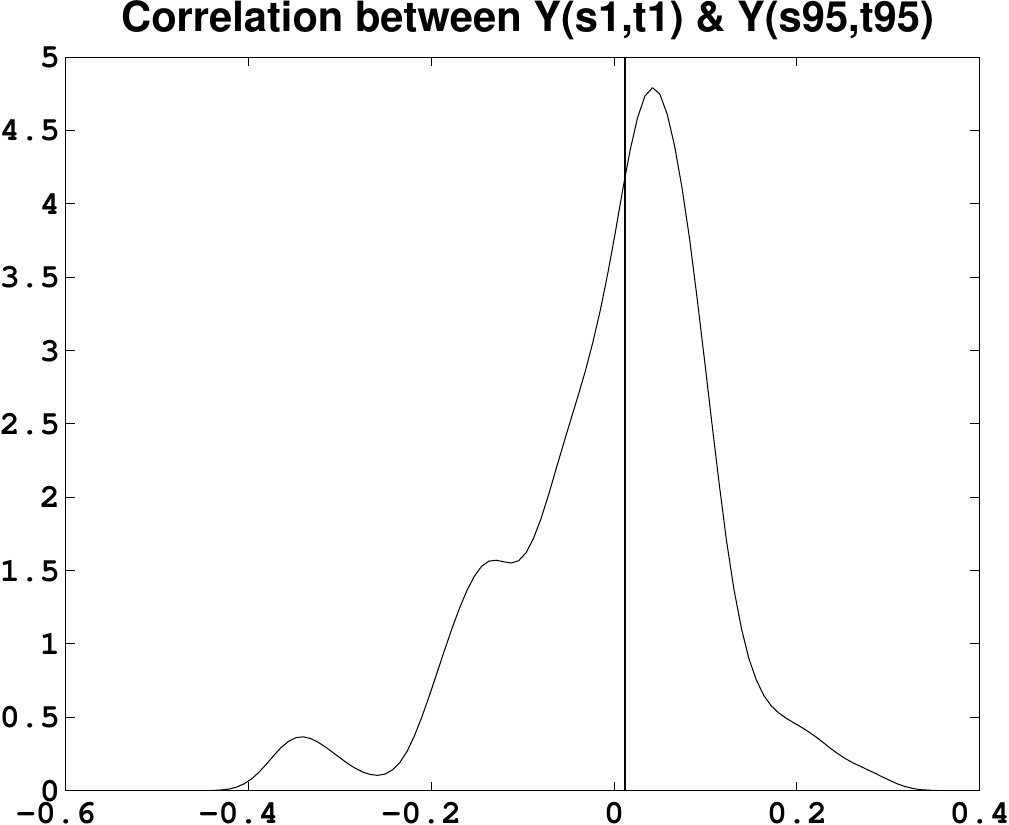}
\\[5mm]
\includegraphics[height=1.5in,width=1.75in]{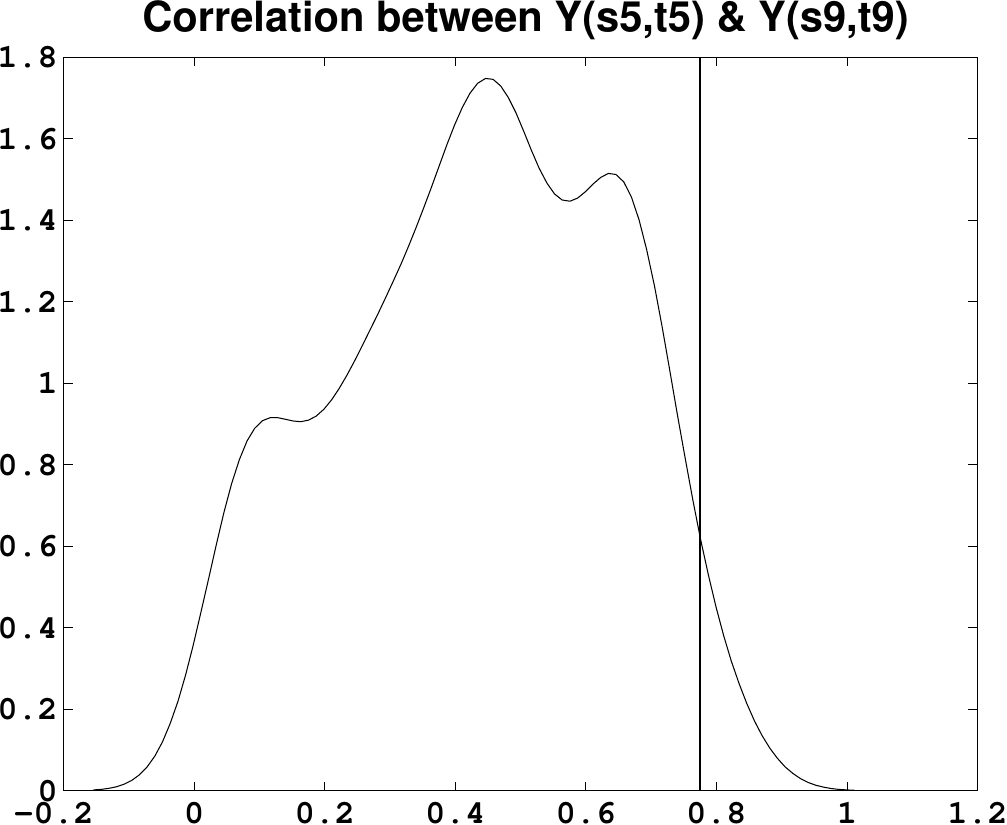}
\includegraphics[height=1.5in,width=1.75in]{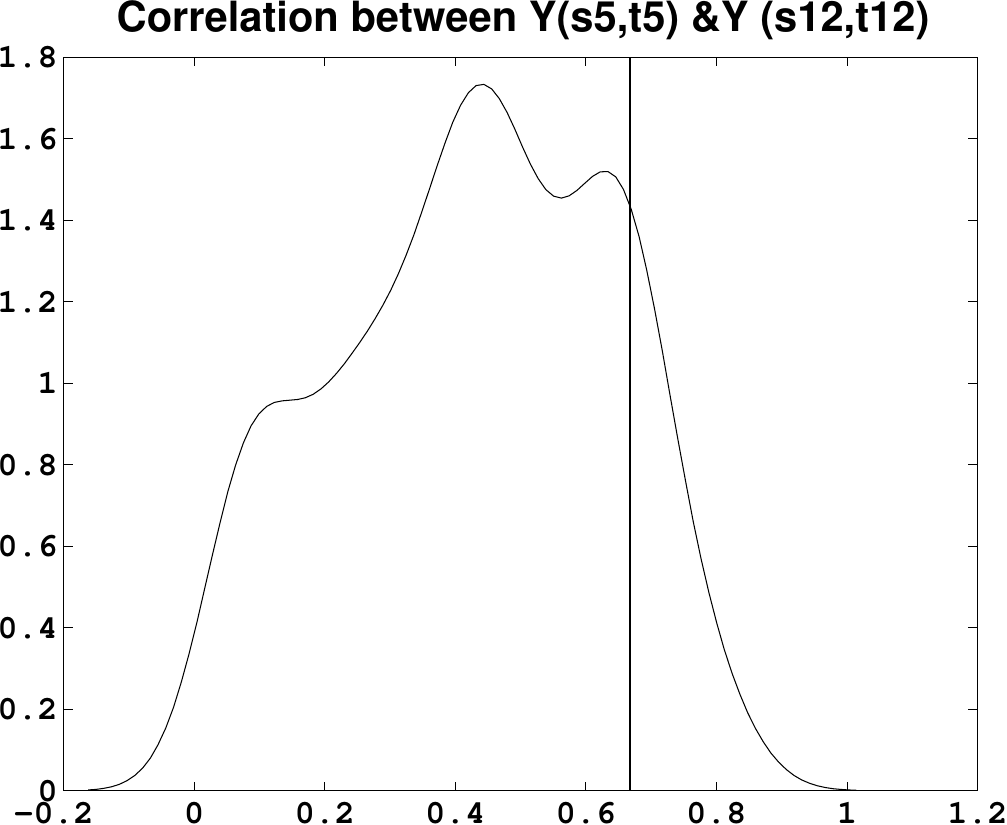}
\includegraphics[height=1.5in,width=1.75in]{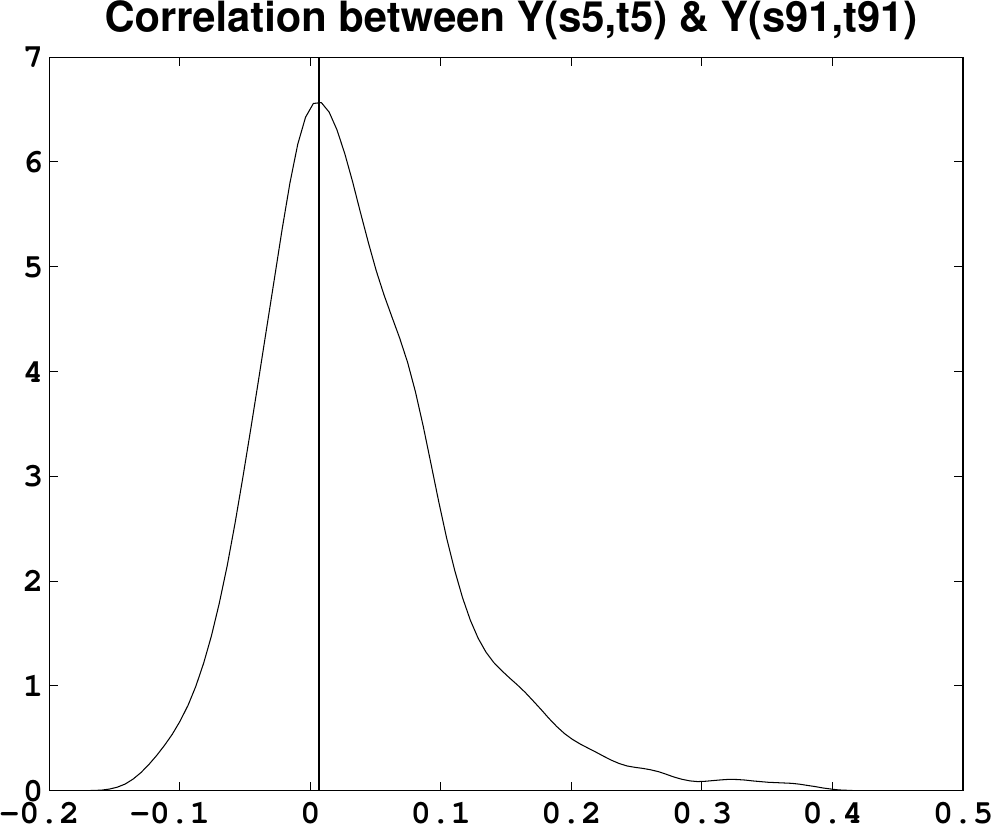}
\\[5mm]
\includegraphics[height=1.5in,width=1.75in]{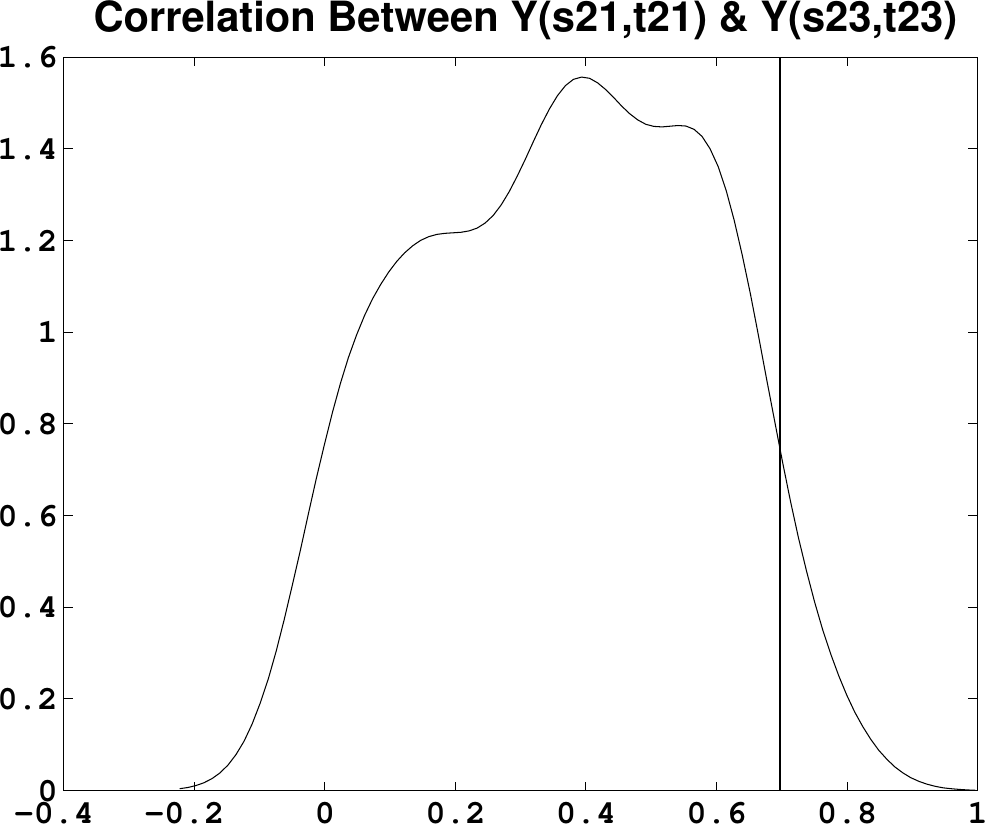}
\includegraphics[height=1.5in,width=1.75in]{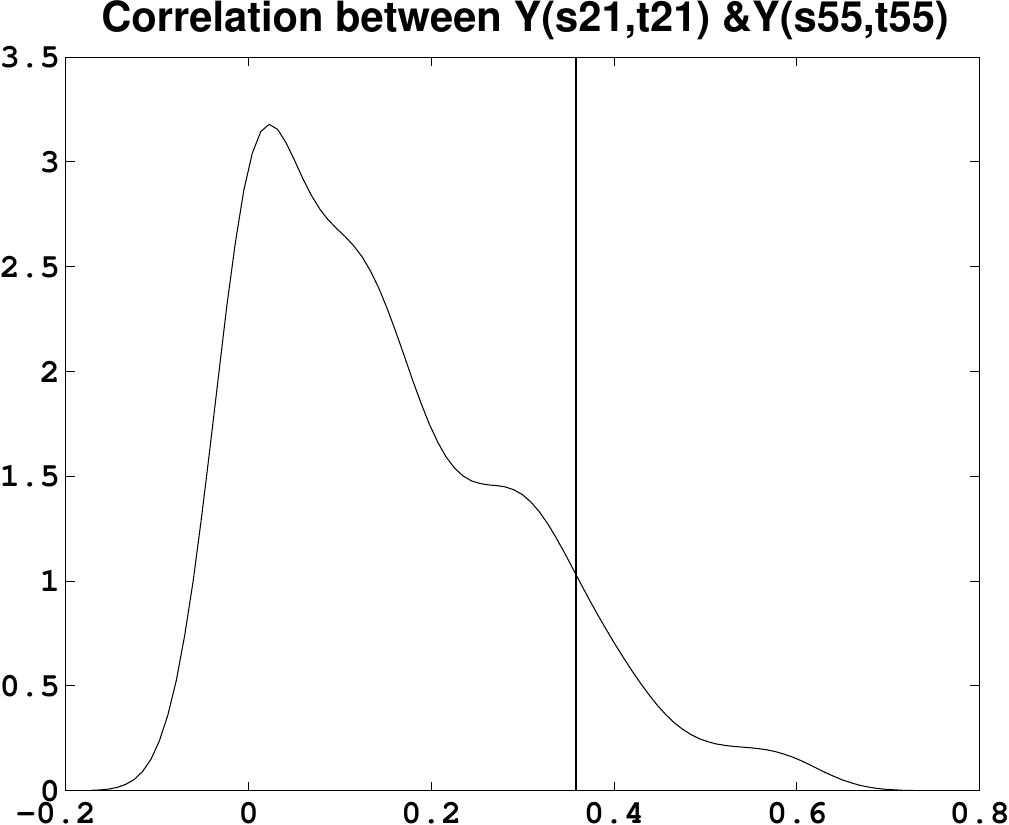}
\includegraphics[height=1.5in,width=1.75in]{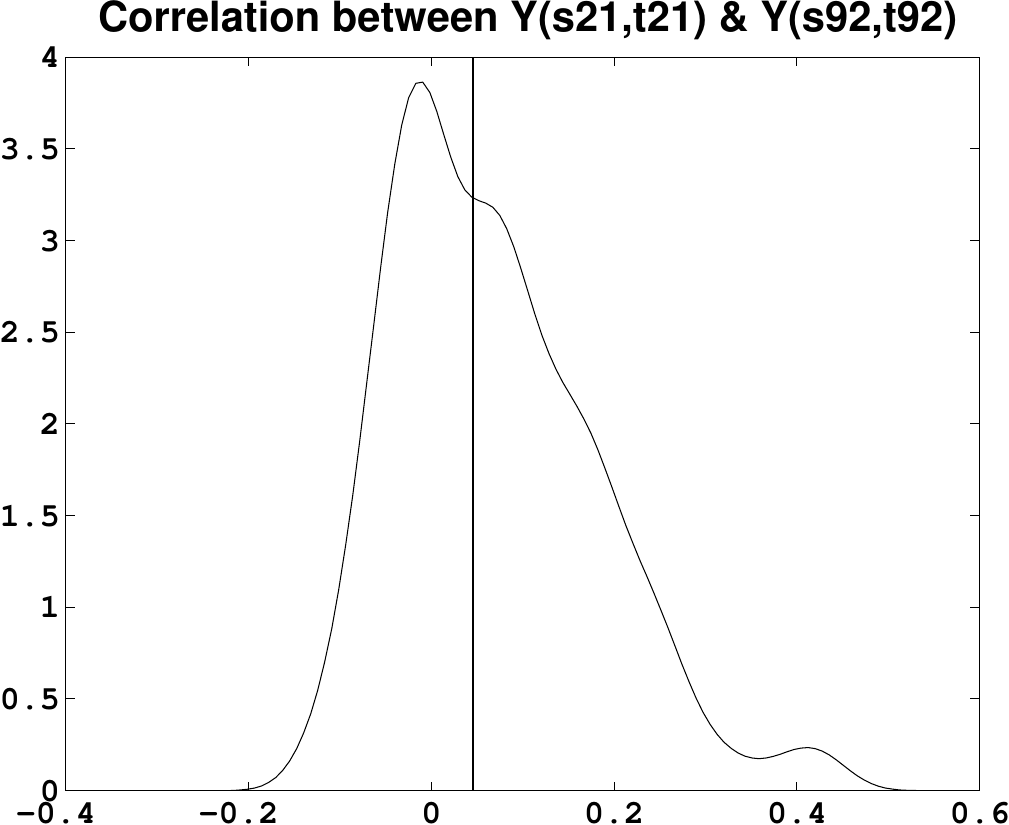}
\\[5mm]
\includegraphics[height=1.5in,width=1.75in]{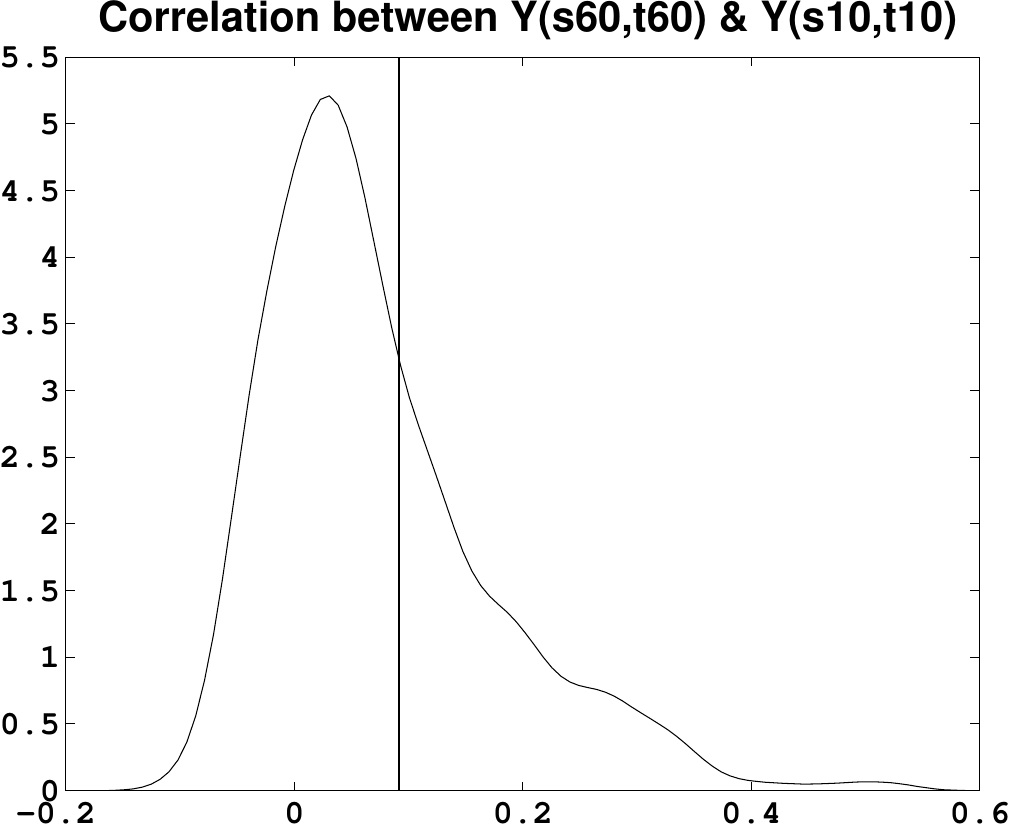}
\includegraphics[height=1.5in,width=1.75in]{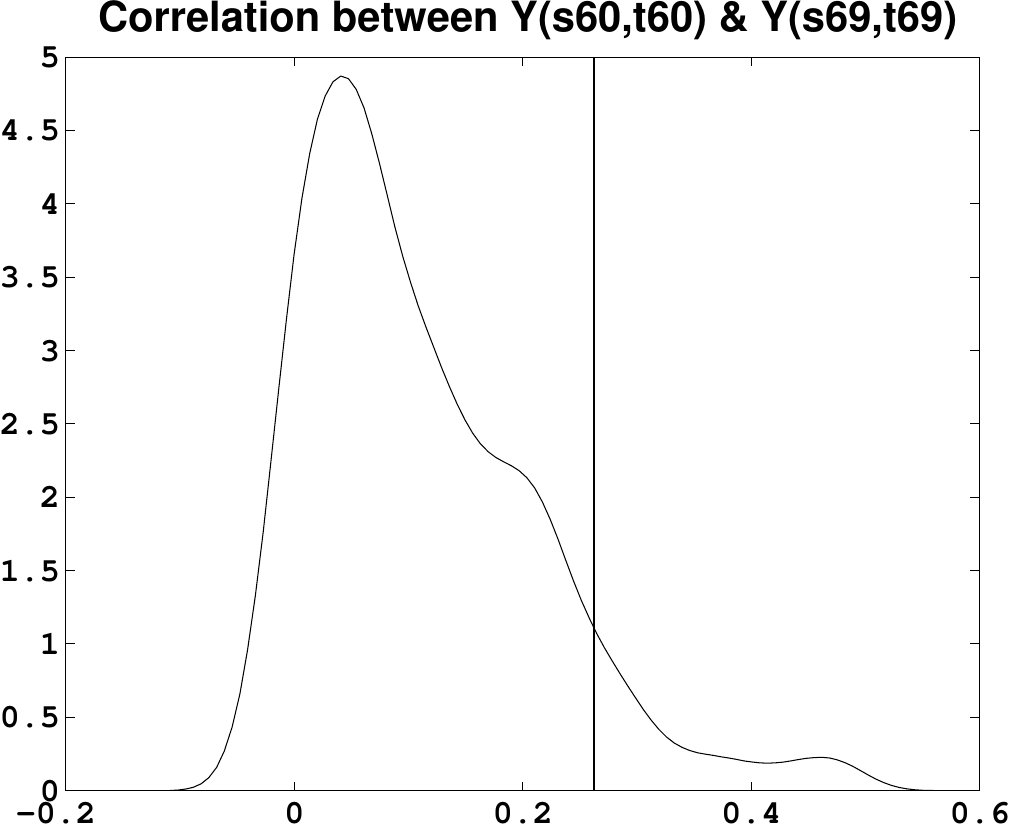}
\includegraphics[height=1.5in,width=1.75in]{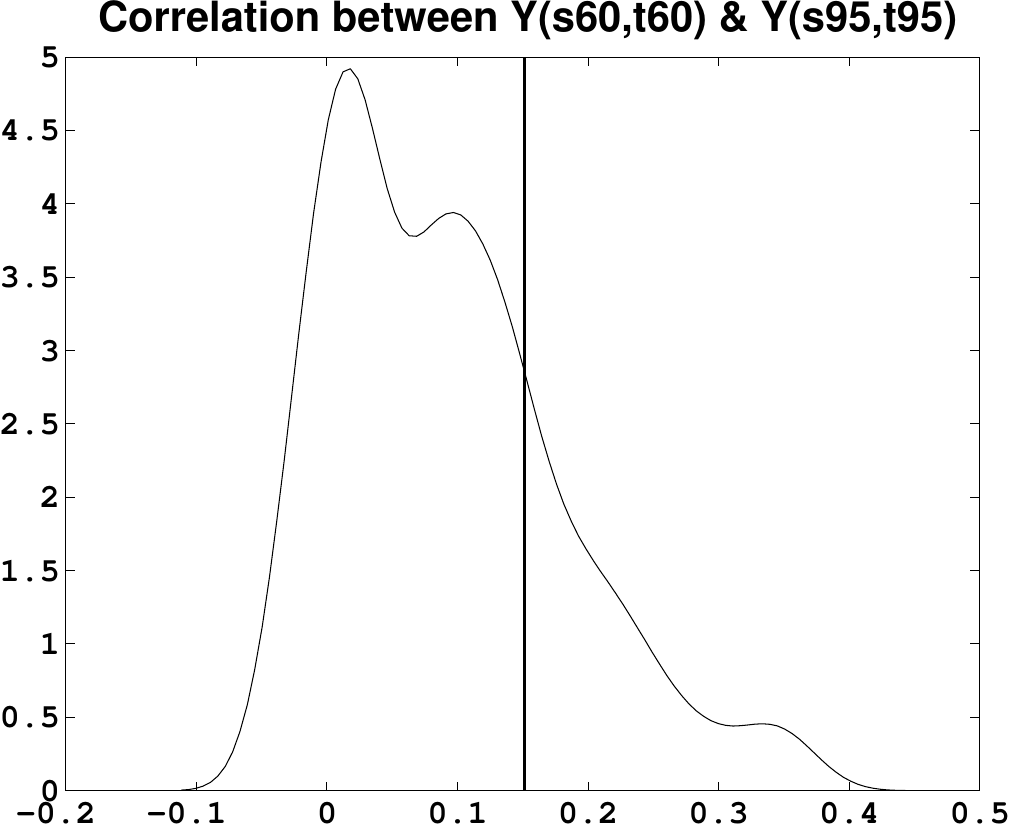}
\caption{{\bf Simulation study}: Posterior densities of the correlations for the  12 different pairs of spatio-temporal points of our model; the vertical lines 
indicate the true correlations.}
\label{fig:correlation}
%\end{center}
\end{figure}
%\vspace*{-0.5cm}
\subsection{Comparative study with respect to FR's approach}
\label{subsec:comparison_fuentes}
We now compare the performance of our model with FR. 
%They proposed an univariate and multivariate nonparametric spatial model
%based on kernel process mixing. In this work the idea of  stick-breaking prior \cite{Sethuraman94} was extended to a spatial set up. 
%A different, unknown distribution was assigned in each location, with a series of 
%space-dependent kernel functions that have a space-varying bandwidth parameter. 
For the purpose of comparison, we extended the exclusively spatial model of FR %\ctn {Fuentes03}
to space-time model, and apply the same to our simulated data. 

We asses the predictive power of their model with the leave-one-out cross validation method. 
All the 95 cases were included in the 95\% highest posterior densities of the corresponding leave-one-out
posterior predictive densities.
Figure \ref{fig:posterior_predictive2_fuentes} displays the posterior predictive densities along with the true values obtained 
by employing FR's %the \cite {Fuentes03} 
model for the same six locations that were investigated in our model. 
If we consider the CPO measure defined by $CPO_{i}= \pi(y_{i}^{obs}|y_{-i})$ (Conditional Predictive Ordinate)
(\ctn{Pettit1}, \ctn{Geisser}), except for a few locations 
where the CPO measure for our model is slightly smaller than the model proposed by FR, %\cite{Fuentes03}, 
our model performance is 
significantly better  
for most of the locations. Moreover, variabilities of the leave-one-out posterior predictive densities associated with the model of FR %\cite{Fuentes03}
are substantially larger for all the locations.

\subsubsection{Correlation Analysis}

We calculate the posterior densities of the correlation for the same 12 pairs of space-time points. that were investigated in our model. The main features of the correlation analysis are the following :
\begin{itemize}
 \item 
	 The posterior densities, which are highly multimodal in nature, are in keeping with the trace plots of the correlations (not shown), which clearly indicate
convergence to multimodal distributions.
\item
Analogous to the CPO measure described above, here we evaluate the correlation based performance of the models in terms of the
densities of the true correlations 
under the corresponding posterior distributions. From  Figures \ref{fig:correlation} and \ref{fig:correlation_Fuentes},
except for a few space-time pairs, our model significantly outperforms that of FR %\ctn{Fuentes03} 
for all the remaining space-time pairs.

\item 
Moreover, when the true correlations are close to zero, for all the space-time pairs, the densities of the true correlations under the corresponding 
posterior distributions are significantly higher than that of FR. %\ctn{Fuentes03}. 

\item
The above facts strengthen our claim that, compared to other models, our correlation structure is sufficiently rich for capturing the actual correlations, specifically when the true correlation 
is close to zero for nonstationary models.

\end{itemize}

\begin{figure}%[H]
%\begin{center}
\centering
\includegraphics[trim={0 0 0 0},clip, totalheight=0.25\textheight]{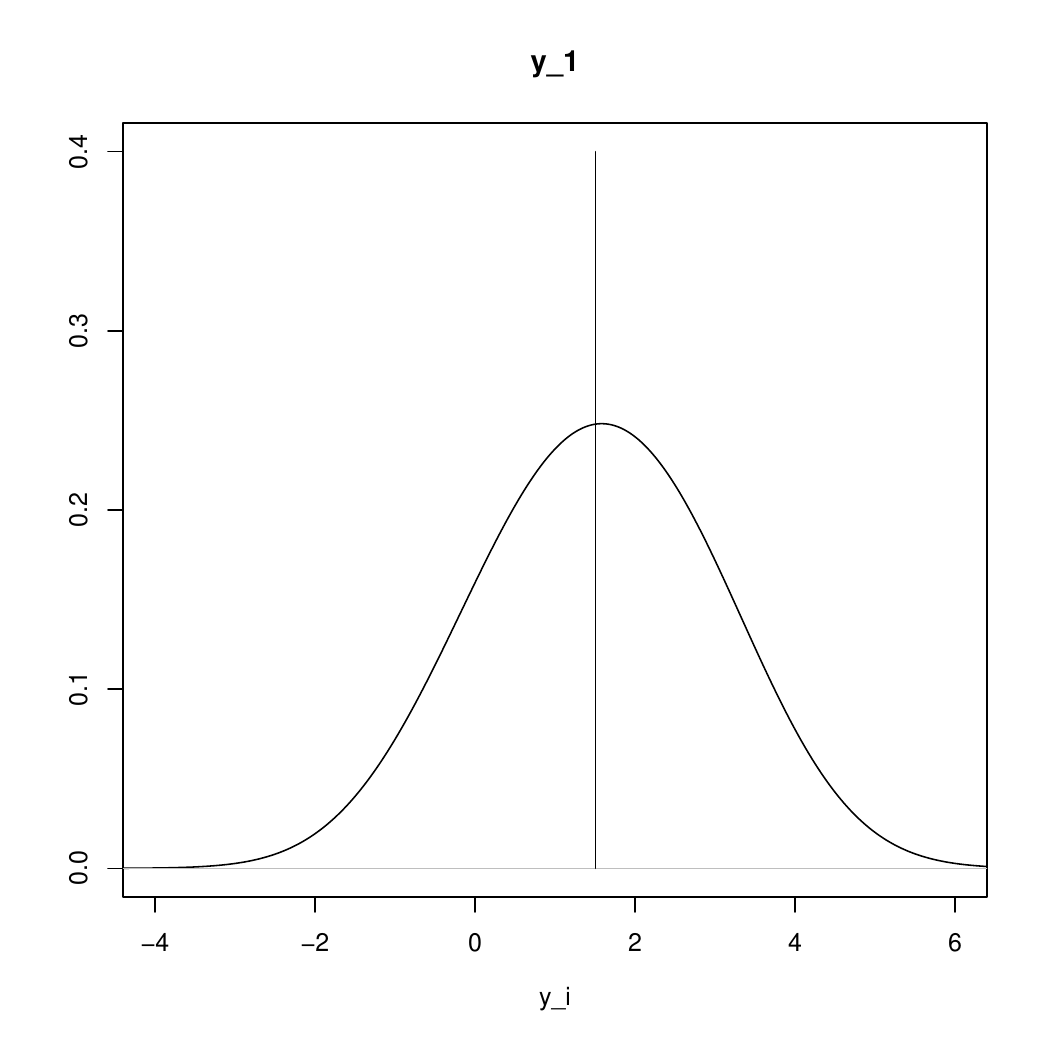}
\includegraphics[trim={0 0 0 0},clip, totalheight=0.25\textheight]{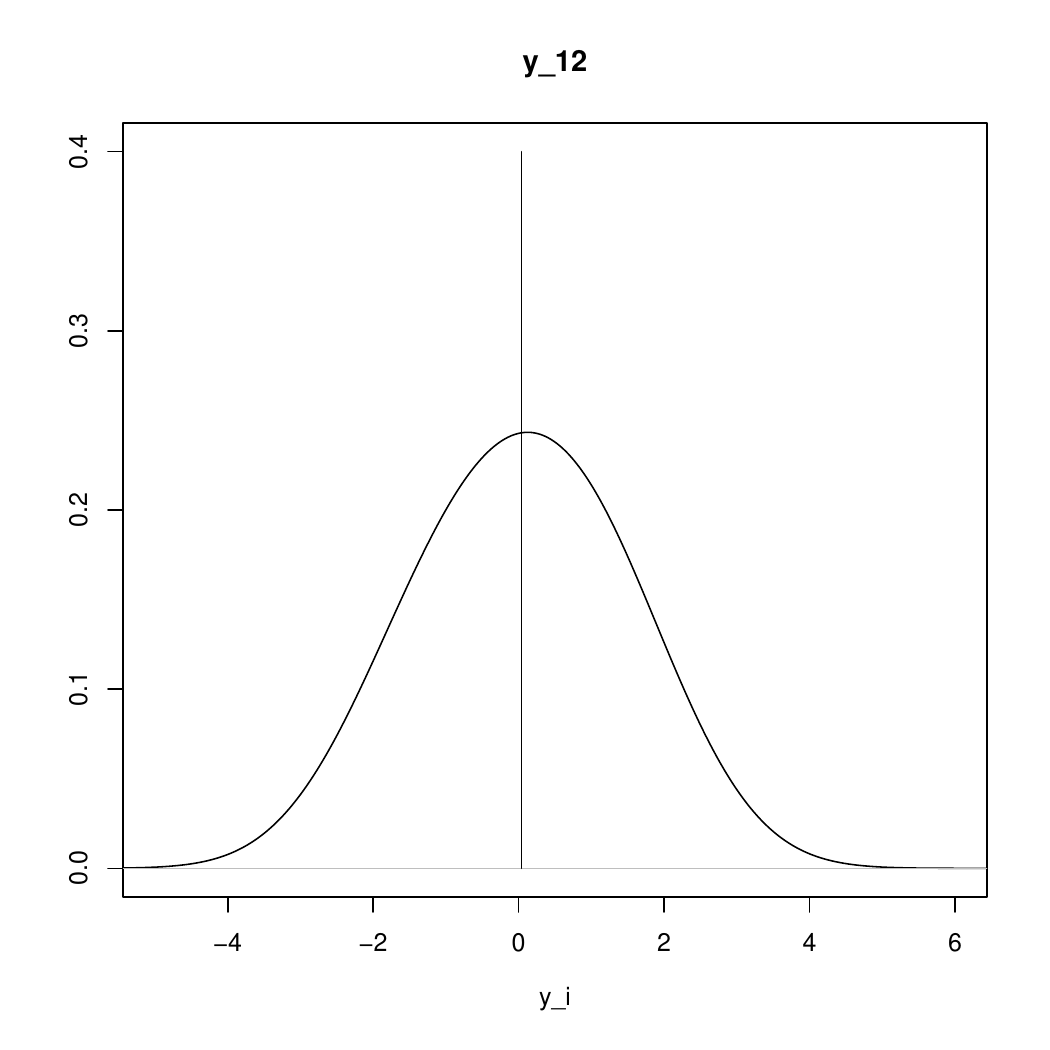}
\includegraphics[trim={0 0 0 0},clip, totalheight=0.25\textheight]{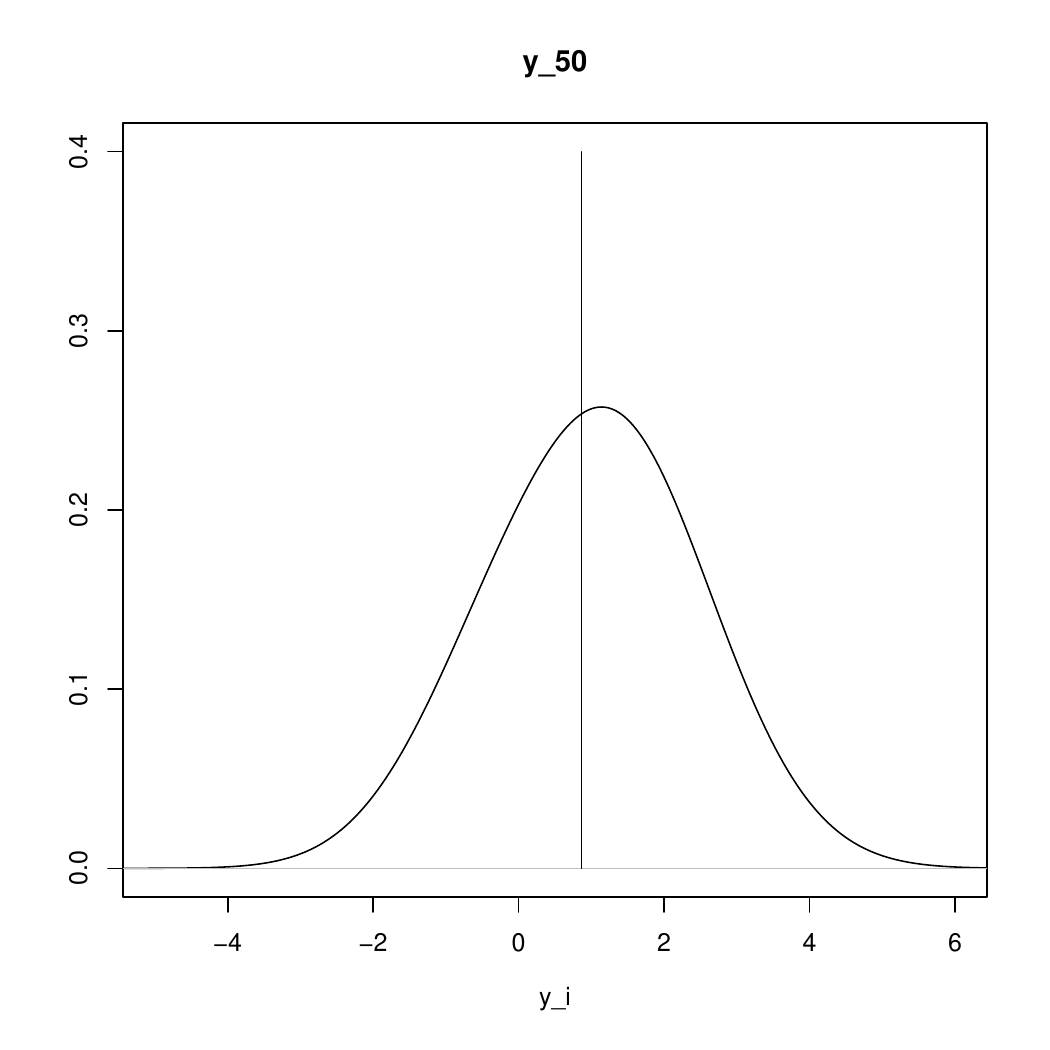}
\includegraphics[trim={0 0 0 0},clip, totalheight=0.25\textheight]{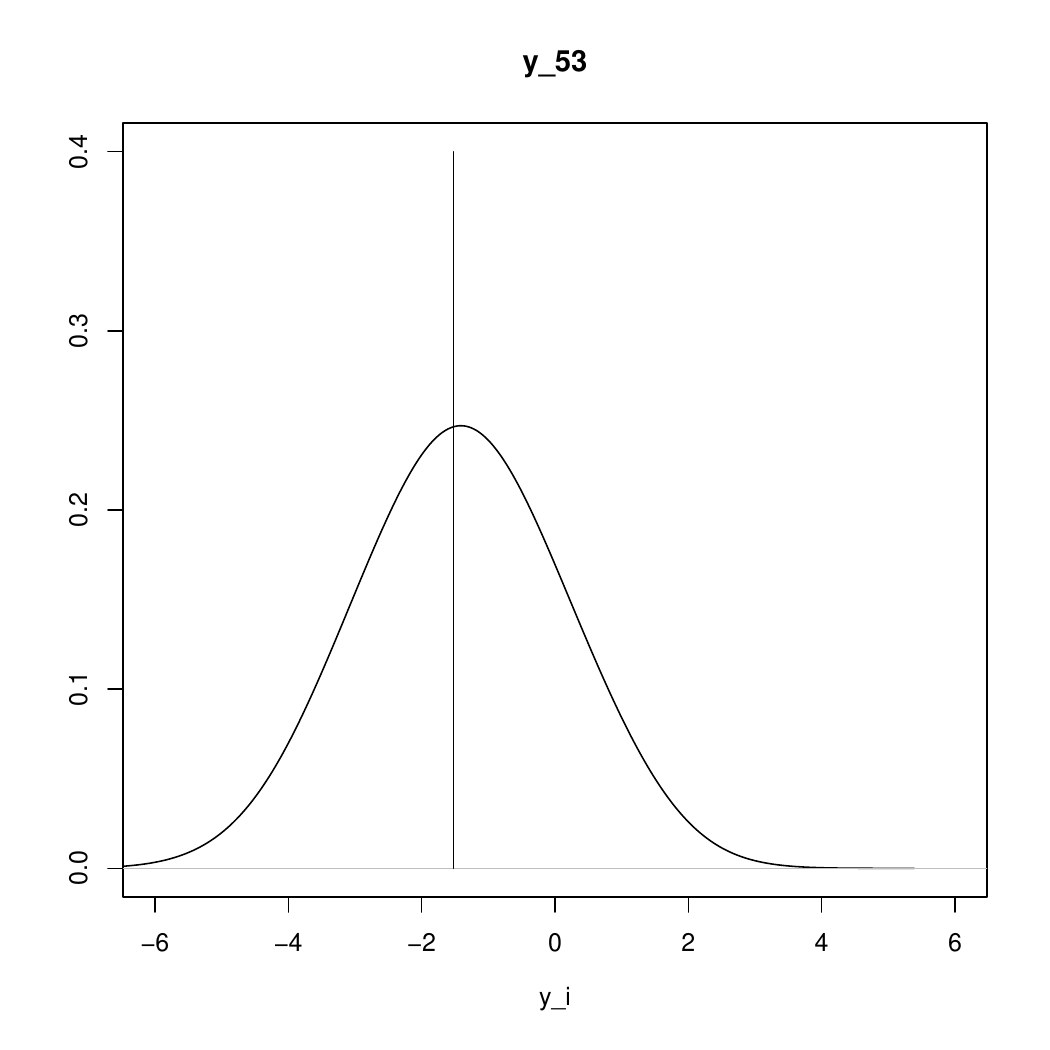}
\includegraphics[trim={0 0 0 0},clip, totalheight=0.25\textheight]{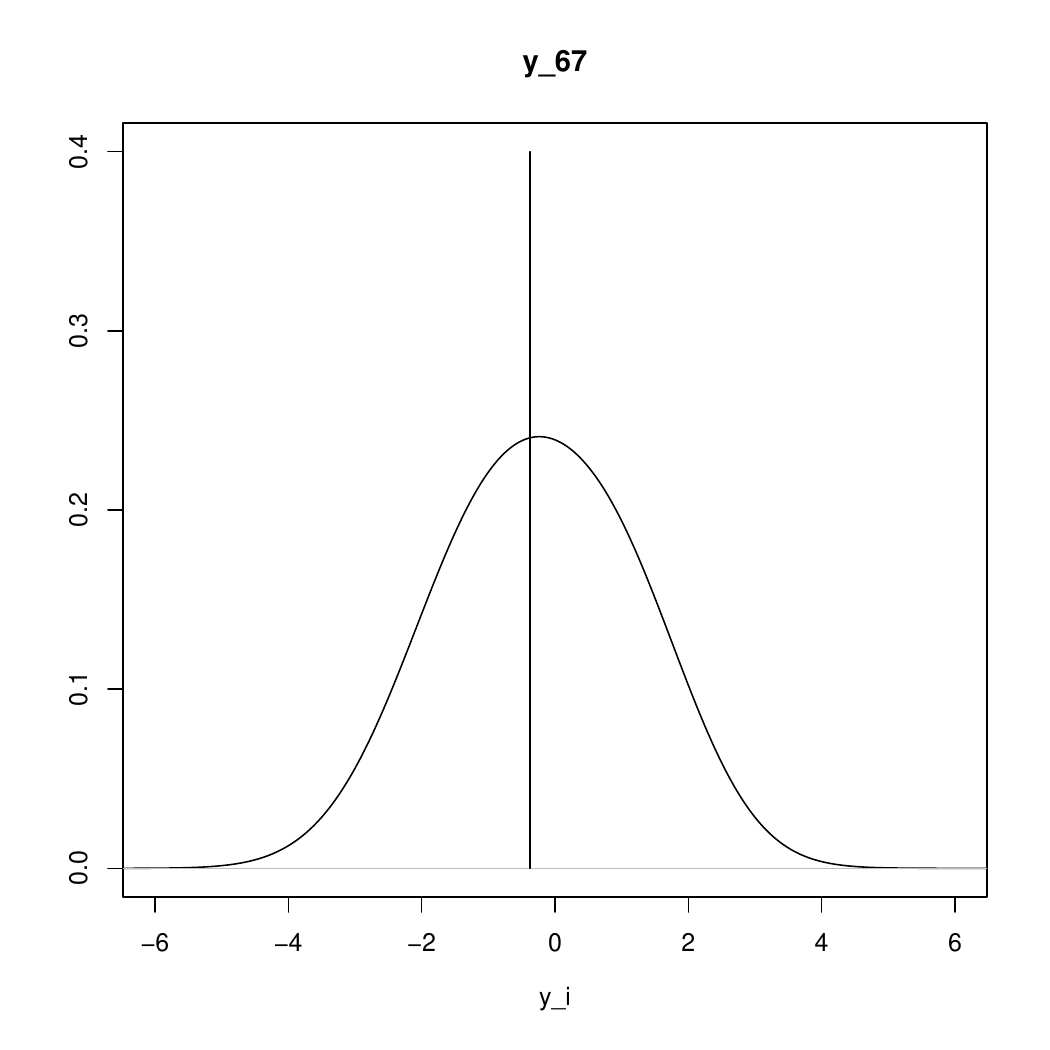}
\includegraphics[trim={0 0 0 0},clip, totalheight=0.25\textheight]{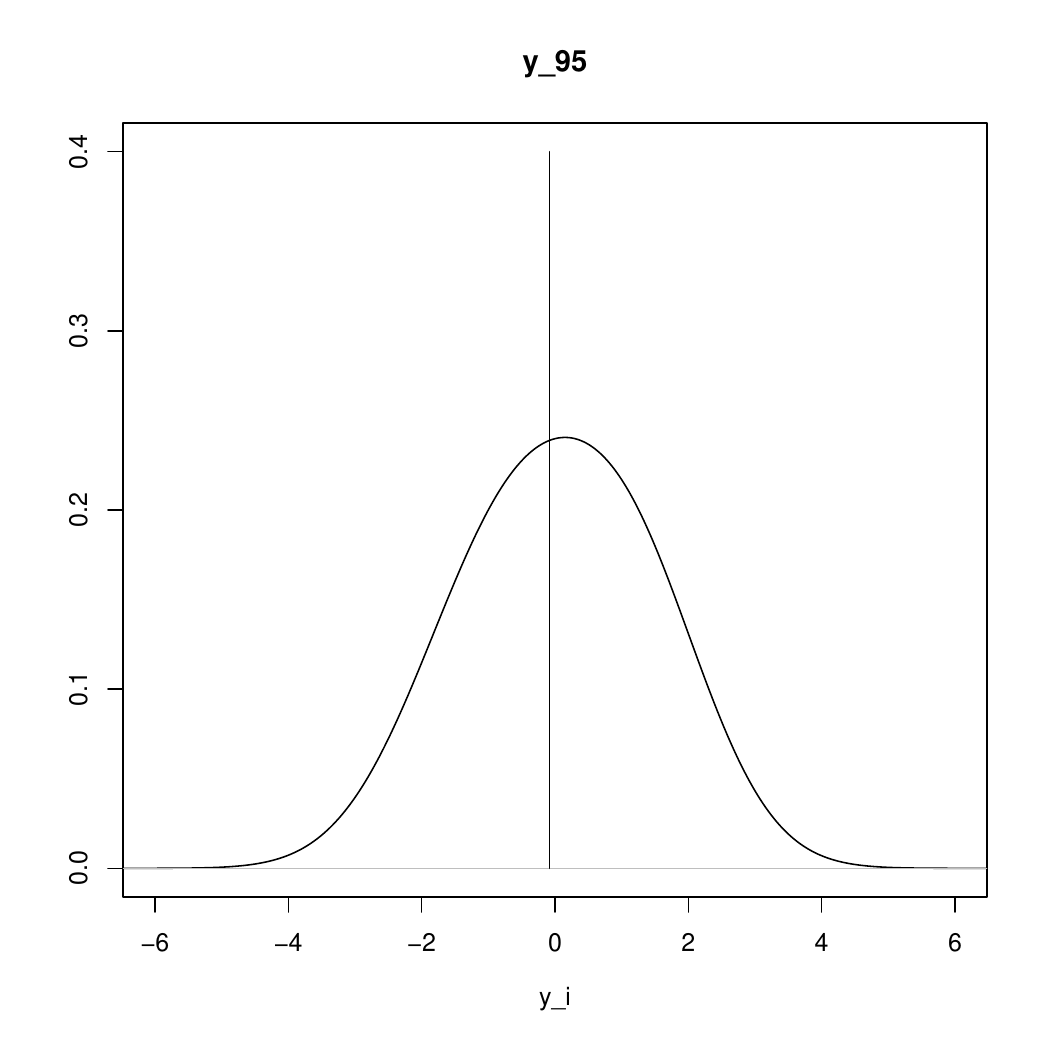}
\caption{{\bf Simulation study}: Posterior predictive densities of $Y(\bs,t)$ for the 6 different location-time pairs of FR -- %\cite{Fuentes03}-- 
the corresponding true values
are denoted by the vertical line.}
\label{fig:posterior_predictive2_fuentes}
%\end{center}
\end{figure}

\begin{figure}%[htb]
%\begin{center}
\centering
\includegraphics[height=1.5in,width=1.75in]{figures/correlation_plots_comparison/correlation_1_2_fuentes-crop.pdf}
\includegraphics[height=1.5in,width=1.75in]{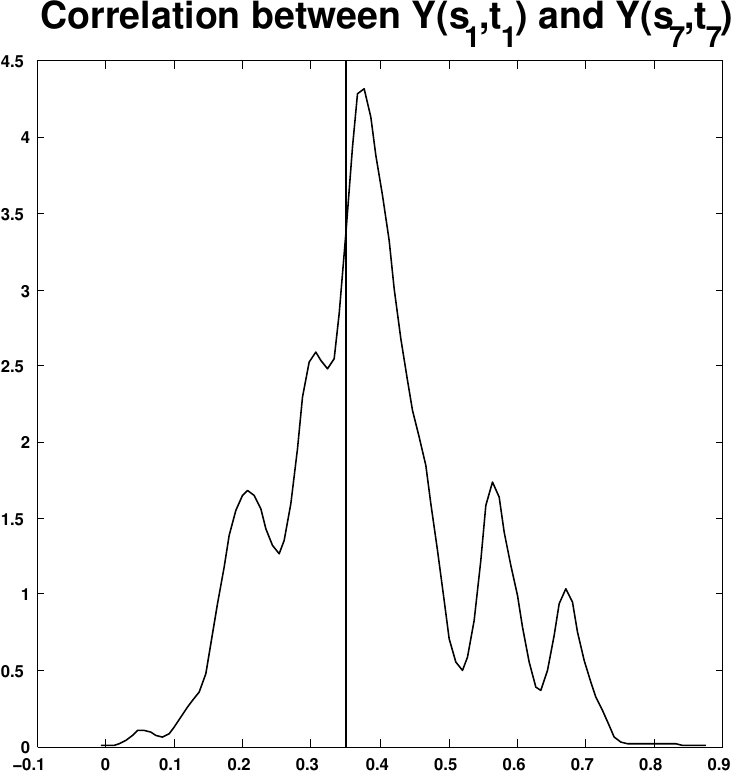}
\includegraphics[height=1.5in,width=1.75in]{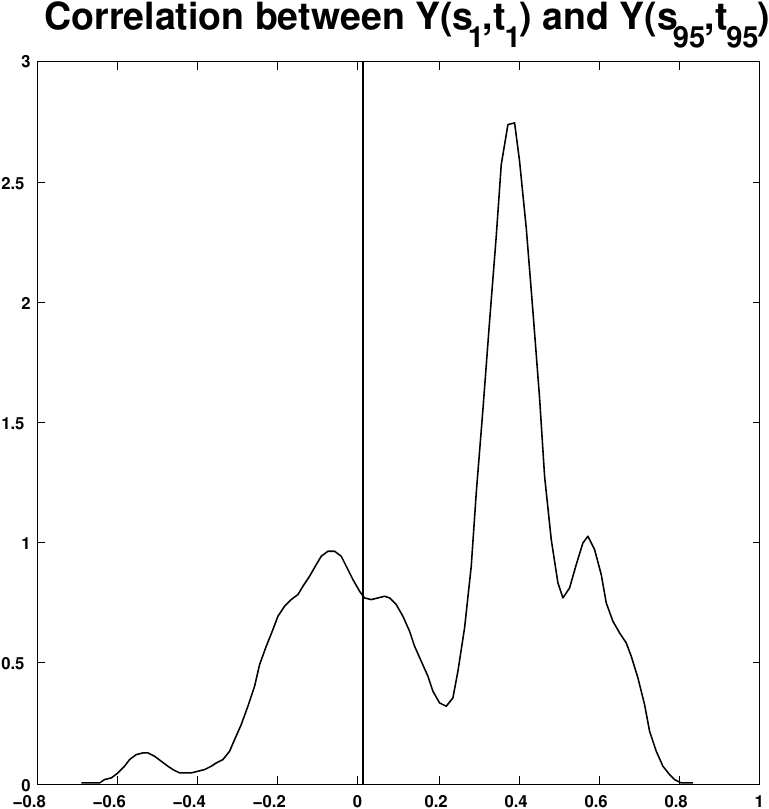}
\\[5mm]
\includegraphics[height=1.5in,width=1.75in]{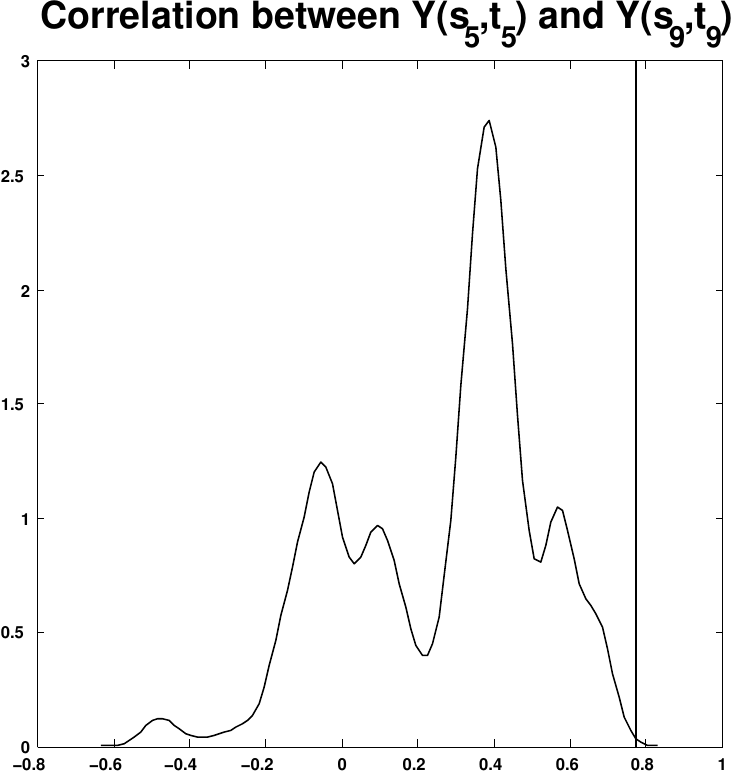}
\includegraphics[height=1.5in,width=1.75in]{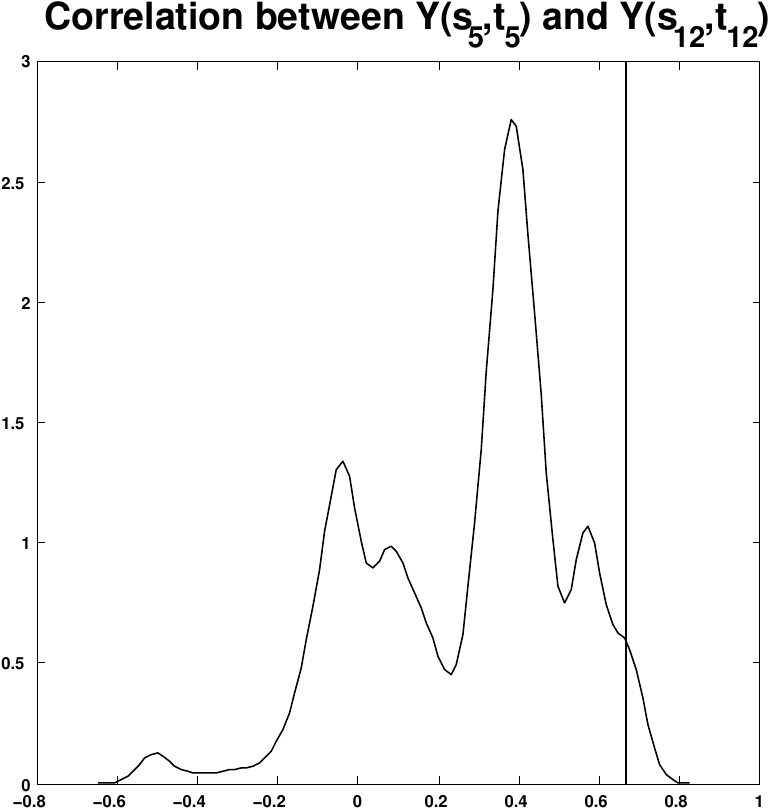}
\includegraphics[height=1.5in,width=1.75in]{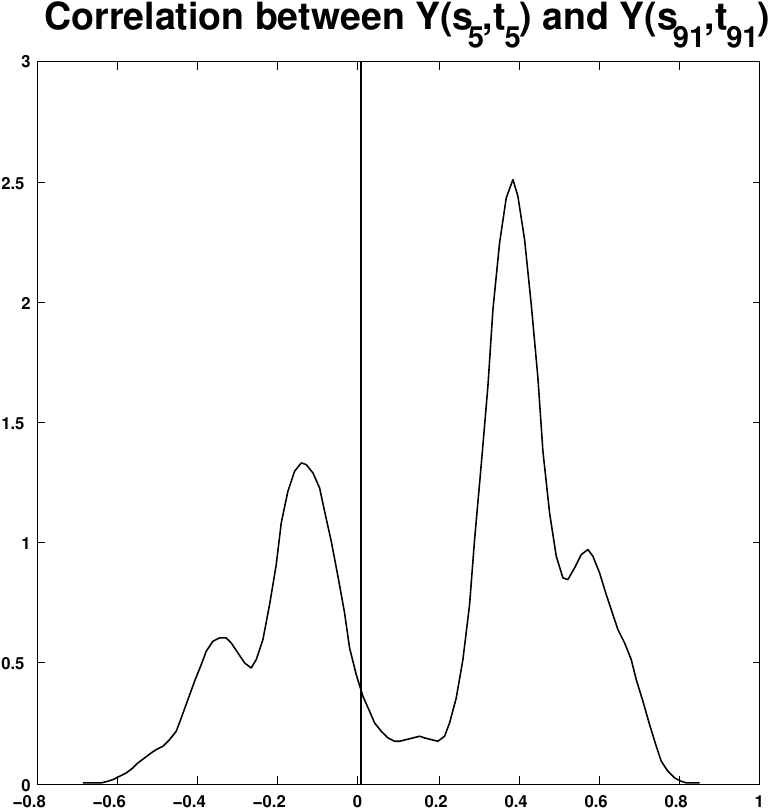}
\\[5mm]
\includegraphics[height=1.5in,width=1.75in]{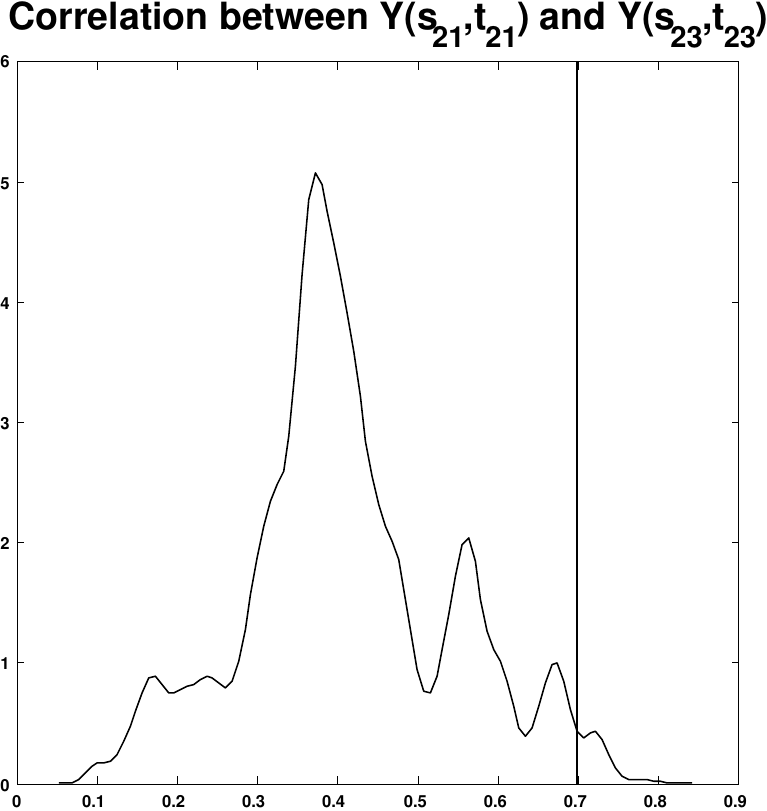}
\includegraphics[height=1.5in,width=1.75in]{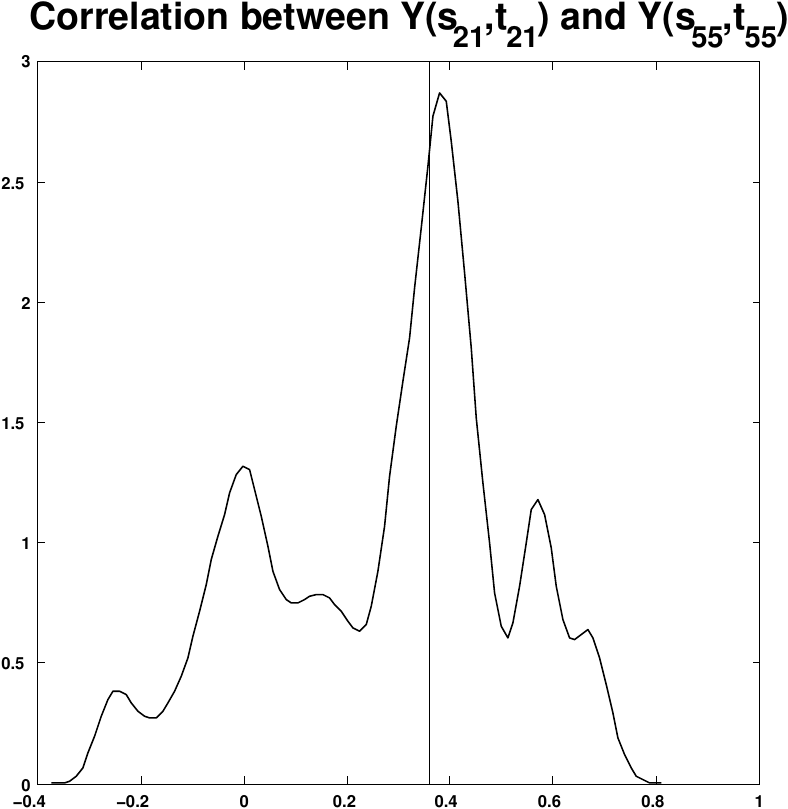}
\includegraphics[height=1.5in,width=1.75in]{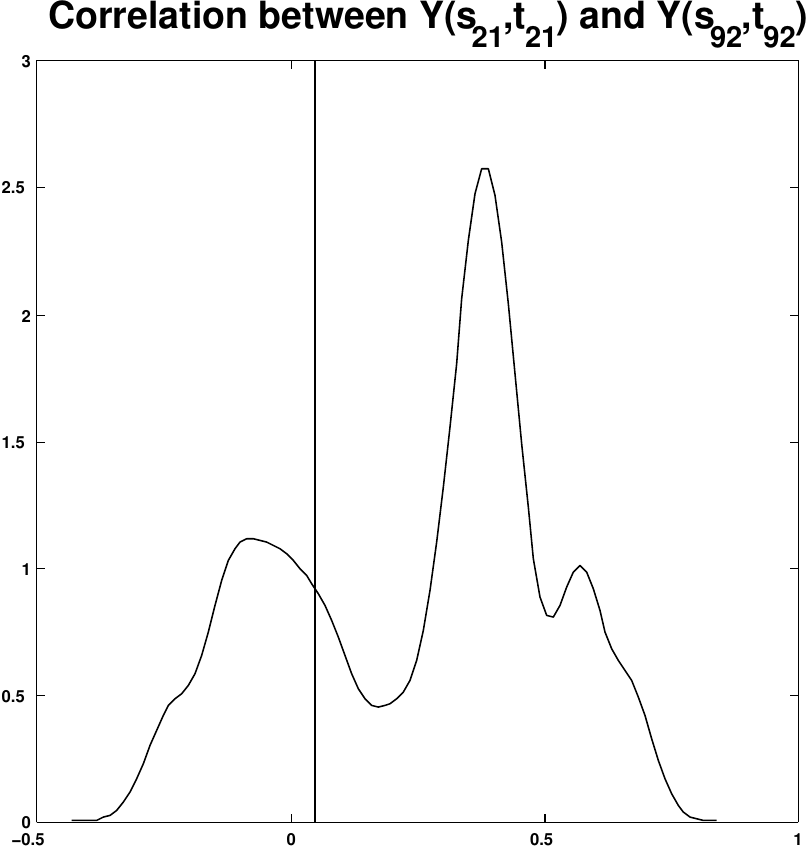}
\\[5mm]
\includegraphics[height=1.5in,width=1.75in]{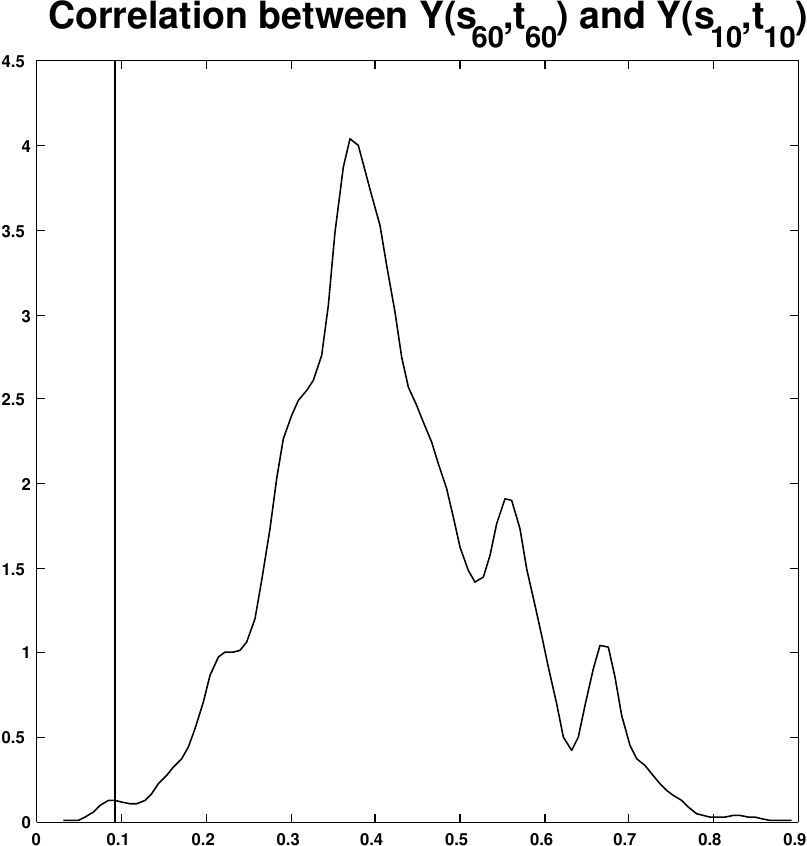}
\includegraphics[height=1.5in,width=1.75in]{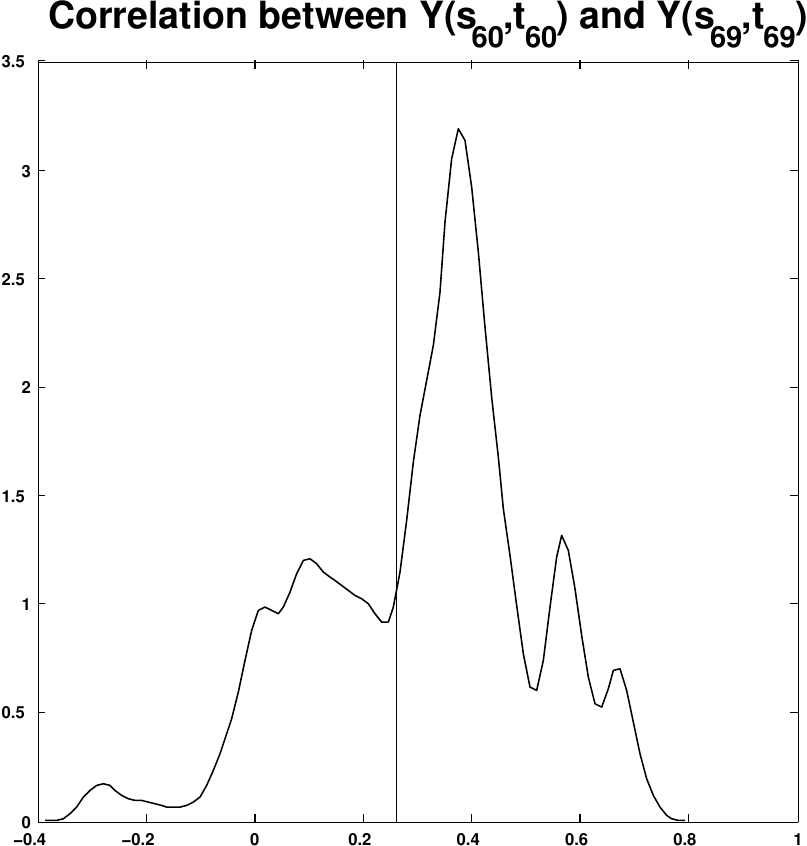}
\includegraphics[height=1.5in,width=1.75in]{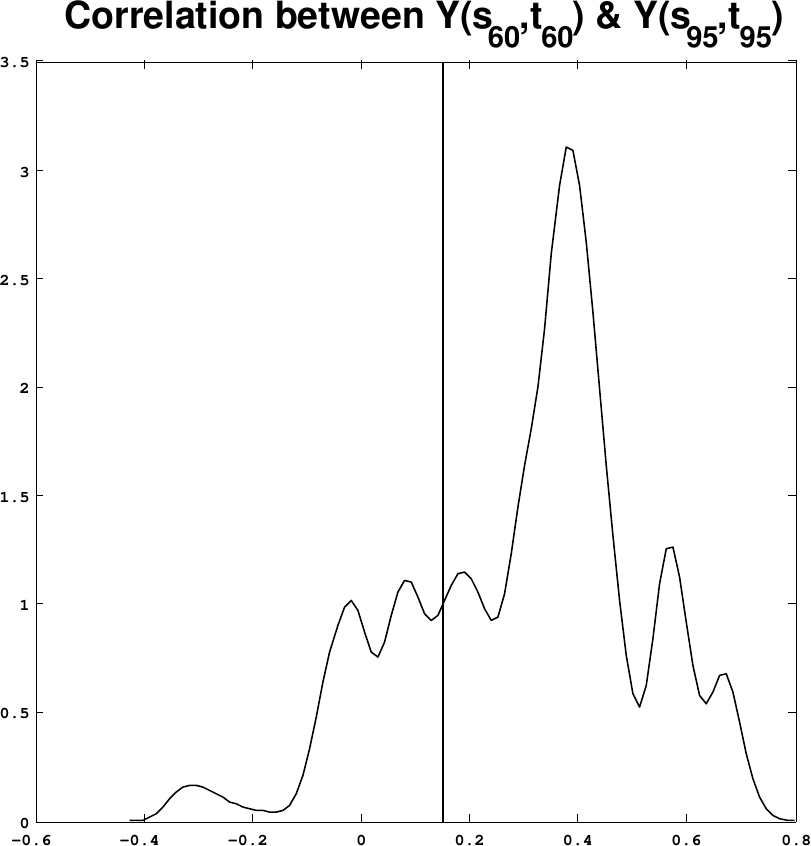}
\caption{{\bf Simulation study}: Posterior densities of the correlations for the 12 different pairs of spatio-temporal points of FR; %\cite{Fuentes03}; 
the vertical lines indicate the true correlations.}
\label{fig:correlation_Fuentes}
%\end{center}
\end{figure}

%Also TTMCMC exhibited satiafactory performance for simulated data of size 1000, 
%where the dimension of the spatio temporal problem varies randomly between 3000 to 3100. 
%Also the acceptance probability and mixing is very good.

\section{ Real data analysis}
\label{sec:realdata_analyses}
\subsection{Spatial data}
\label{sec:realdata}
According to the Clean Air Act certain air quality is to be maintained to protect the public health, 
and to maintain proper survival environment of animals and vegetations. 
As a measure of the quality of air, the Clean Air Act set standard limits for important 
air pollutants such as ozone. For our real data analysis we use the ozone metric called W126 metric. The impact of ozone exposure on trees, plants and ecosystems is often assessed using a seasonal index known as a ``W126 index", which is the annual maximum of consecutive three month running total of weighted sum of hourly concentrations observed between 8AM and 8PM on each day during the high ozone season of April through October. A fundamental principle behind W126 metric is that higher hourly average ozone concentrations should be weighted more than middle and lower values when assessing human and environmental effects. 
The cumulative W126 exposure index uses a sigmoidally weighted function.  The W126 index is a cumulative exposure index and not an ``average" value. As indicated above, it is a biology based index, which is supported by research results (i.e., under both experimental and ambient conditions) that show that the higher hourly average ozone concentrations should be weighted greater than the mid- and lower-level values. The US EPA reviewed the National Ambient Air Quality Standards (NAAQS) for ozone in 2015, and determined that a 3-month W126 index level of 17 ppm-hrs is sufficient to protect the public welfare based on the latest science on effects of ozone on vegetation (US Federal Register, 2015). Also, we have information on 
Community Multiscale Air Quality indices (CMAQ), which is highly correlated with ozone level, so that 
we can use CMAQ as a covariate in our model.

\subsubsection{Calculating the W126 metric}

Let $Q_l(\bs,t)$ denote the observed ozone concentration level in parts per million
(ppm) units at location $\bs$ at hour $l$ on day t, for $t=1, \ldots, T$ and $l=1, \ldots, 12$,
where $T = 214$ days between April 1 and October 31 in a given year. The hours
are the 12 day light hours between 8AM and 7PM. The W126 metric for site $\bs$
is calculated as follows.
\\[2mm] 
The weighted hourly metric is calculated using the transformation:
\[
U_l(\bs,t) = Q_l(\bs,t) \times\left(\frac{1}{1 + 4403\times \exp(-126 \times Q_l(\bs,t))}\right).
\]
This logistic transformation truncates the values smaller than 0.05ppm to zero, 
but does not alter the magnitude of values larger than 0.10ppm.
%\\[2mm]

The daily index from the 12-hourly weighted values in each day is obtained
as
\[
Z(\bs,t)=\sum_{l=1}^{12}U_l(\bs,t).
\]

The monthly index is calculated from the daily indices by summing and
then adjusting for the number of days in the month as follows:
\[M_j(\bs) =\sum_{t\in \text{month}j}Z(\bs,t) ,j=1,\ldots, 7,
\]
where the summation is over all the days $l$ that fall within the calender month
$j$.

The three-month running totals are centered at the last month and are
obtained as:
\[
\bar M_j(\bs)=\sum_{k=j-2}^{j}M_k(s),j = 3,\ldots, 7. 
\]

Finally, the annual W126 index value is calculated by:
\[
Y(\bs)=\max_{j=3}^{7}\bar M_j(\bs).
\]

The secondary ozone standard is met at a site $\bs$ at a given year when the true
value of $Y(\bs)$ is less than 21 ppm-hours.

Corresponding to each observed ozone concentration $Q_l(\bs, t)$ we have a CMAQ
model output $v_l(A, t)$, where the site $\bs$ is contained in the unique grid cell $A$.
Using the output $v_l(A, t)$ and the above details daily and annual indices of CMAQ values 
namely $X(A,t)$ and $X(A)$ are constructed.

We have data on annual indices of ozone values $(W126)$ $Y(\bs)$, and corresponding CMAQ $X(A)$ 
values for 76 locations in the US. Now we fit our model to this real data set. 
Here we model the data on the log scale; we also use the log transformation of the CMAQ values.  
In other words, we consider
\[
\log(Y(\bs))=\alpha_{0} +\alpha_{1}\log (X(A_{i}))+f(\bs_{i})+\epsilon_{i},~ i=1,\ldots 76,
\]
where $\alpha_{0}$ and $\alpha_{1}$ are regression coefficients, $f(\bs_{i})$ is an annual level spatial random
effect at location $\bs_i$ and $\epsilon_{i}$ is an independent nugget effect with variance $\sigma^2$. 
Here $f(\bs_{i})$ is our proposed spatial
model based on kernel convolution with ODPP. 

It is worth mentioning that we had initially considered a stationary kernel for convolution, but obtained poor fit.
This possibly suggested nonstationary process as an appropriate model, but until recently, we were not aware of any formal method
for checking stationarity and nonstationarity in a completely nonparametric setup. Indeed, \ctn{Roy20} proposed a novel recursive Bayesian methodology for 
characterizing stationarity and nonstationarity for general stochastic processes, among various other characterizations, and illustrated their ideas with ample examples
in fields as varied as time series, MCMC convergence diagnosis, spatial and spatio-temporal setups, point processes, as well as (multiple) frequency determination of 
oscillating time series. With their ideas, they also analyse this ozone data to check stationarity. The details of their analyses and the results, 
presented in Section 13.7.1 of their paper, indicate that the ozone data is indeed nonstationary. Further, a simple quantile-quantile plot shows indicates non-normality
of the data.

The above arguments justify our nonparametric model choice and nonstationary kernel used for convolution with ODDP. 
All the prior distributions are the same as mentioned before. 
For the additional parameters $\alpha_{0}$ and $\alpha_1$, we use the vague prior distribution $N(0,10^4)$, 
and for $\sigma$ we use the log-normal prior with mean zero and variance $10^4$.
The TTMCMC trace plots shown in Figure S-10.1 of supplement bear out adequate performance of our model and methodologies.

As before, we assess the predictive power of the model using leave-one-out cross validation. 
For all the locations, the true value of ozone concentration lies within the 95\% credible interval
of the respective cross-validation posterior. This is summarized in the top panel
of Figure \ref{fig:real_surface_plots}, where the middle surface represents the observed data;
the lower and the upper surfaces represent the lower and the upper 95\% credible
regions associated with the respective leave-one-out posterior predictive densities.
The surface in the middle of the bottom panel are the posterior medians, while the lower
and the upper surfaces denote the 95\% credible intervals as before. 
For the convenience of visually comparing the observed data and the posterior medians, we
include Figure \ref{fig:median}, which also contains the 95\% credible intervals. The plots
clearly show that our proposed model is quite adequate for the ozone data.

\begin{figure}%[H]
%\begin{center}
\centering
\includegraphics[height=3in,width=4in]{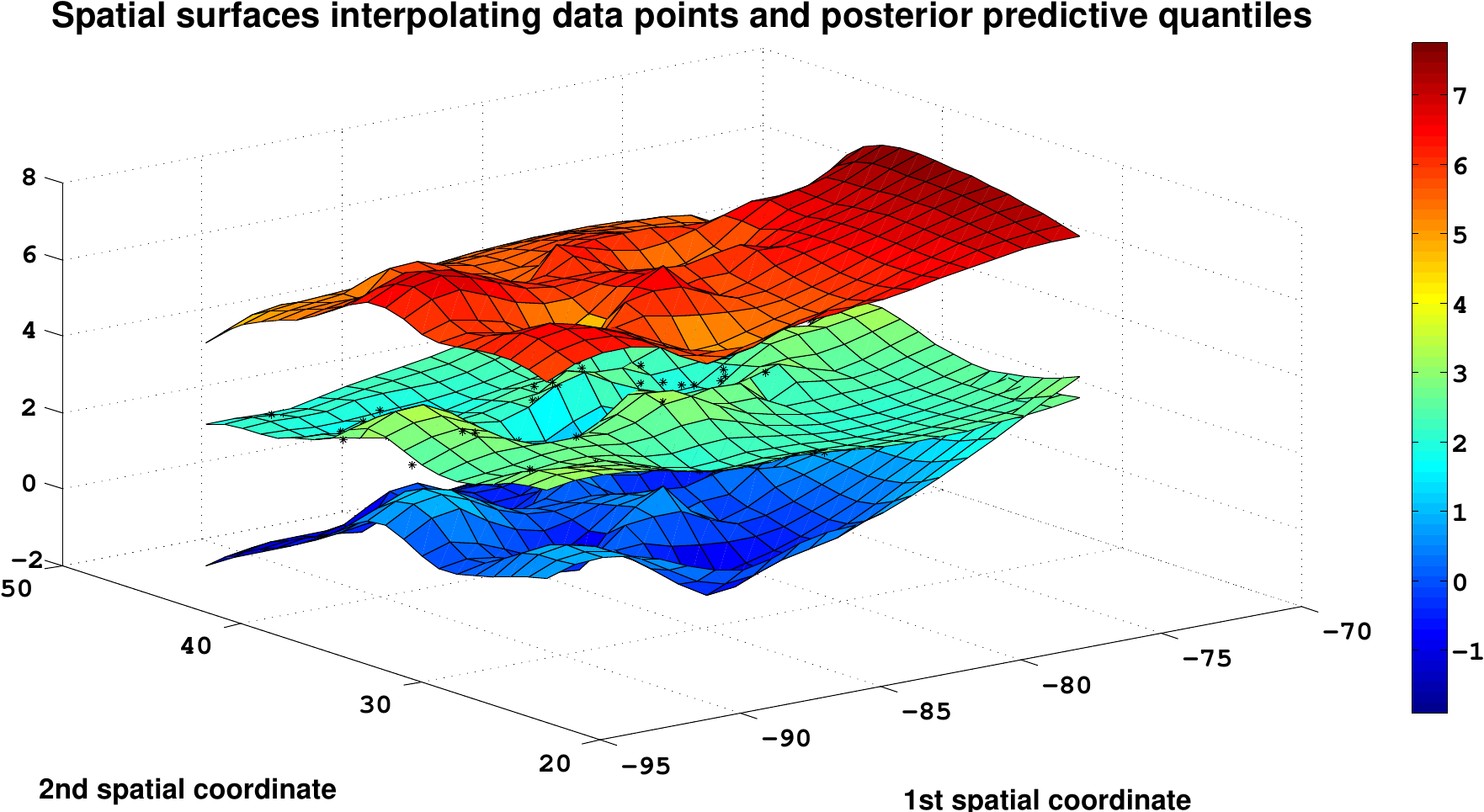}
\\[2cm]
\includegraphics[height=3in,width=4in]{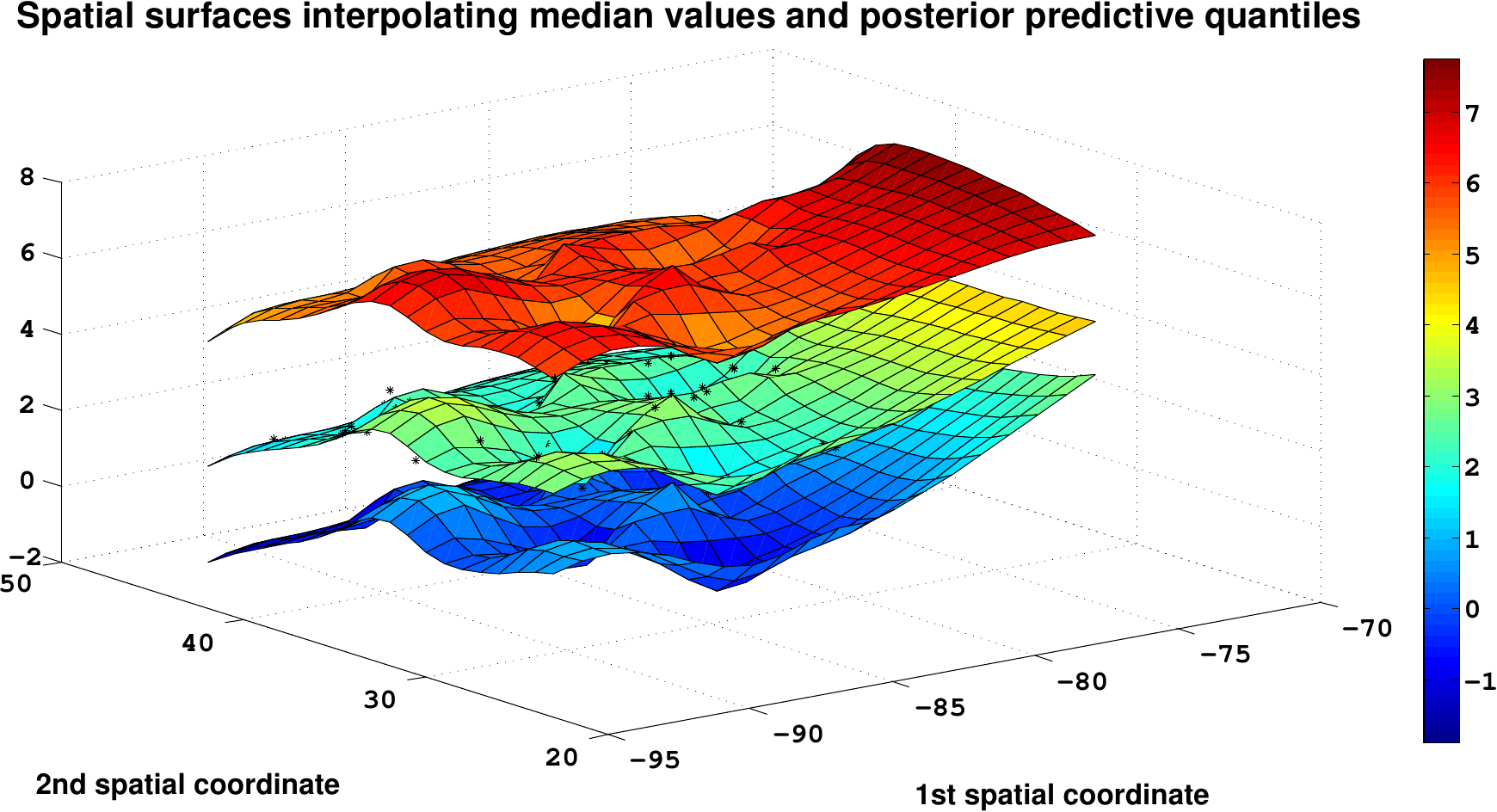}
\caption{{\bf Real spatial data analysis}: The top panel shows the surface plot of ozone concentrations (middle), the lower and the upper
95\% credible intervals associated with the leave-one-out posterior predictive densities, denoted
by the lower and the upper surfaces, respectively. The bottom panel shows the surface
plot of the posterior medians (middle) along with the lower and the upper 95\% credible intervals
associated with the leave-one-out posterior predictive densities (lower and the upper surfaces, respectively).
The observed data points are indicated by `*'.}
\label{fig:real_surface_plots}
%\end{center}
\end{figure}

%\newpage
%\begin{figure}[htb]
%%\begin{center}
%\centering
%\includegraphics[height=1.5in,width=2.1in]{}
%\includegraphics[height=1.5in,width=2.1in]{}
%\includegraphics[height=1.5in,width=2.1in]{}
%\caption{Posterior Predictive Densities of Ozone Concentration at 3 different locations}
%\label{Fig2:Posterior Predictive at locations 1,39,50}
%%\end{center}
%\end{figure}
%
\begin{figure}%[H]
 \centering
\includegraphics[height=3in,width=6in]{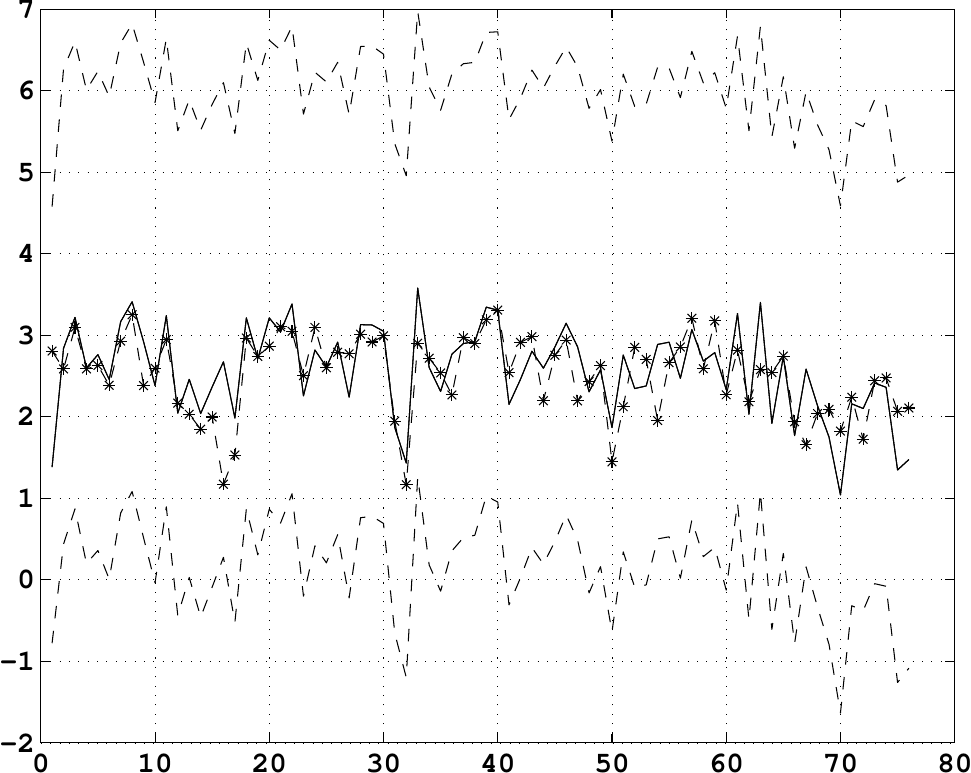}
\caption{{\bf Real spatial data analysis}: Posterior predictive distributions summarised by the median (middle line) and the $95\%$ credible intervals 
as a function of $\bs$. 
The observed data points are denoted by `*'.  }
\label{fig:median}
%\end{center}
\end{figure}

Posterior densities of correlations, for 6 pairs of sites, are shown in Figure S-10.2 of supplement. 
All of them seem to give high posterior probability to the approximate range $(0.1,0.3)$.

\subsection{Spatio-temporal data analysis}
\label{sec:realdata2}
`Particulate matter' (PM) is the general term used for a mixture of solid particles and liquid droplets found in the air. 
Airborne PM comes from many different sources. ``Primary" particles are released directly into the 
atmosphere from sources such as cars, trucks, heavy equipment, forest fires, and other burning activities. An extensive body 
of scientific evidence shows that there are adverse effects of this PM particles on health, including cardiovascular problems,
premature death and many more. Ambient air monitoring stations generally measure air concentrations of different ranges of 
particles, but most monitoring station  is for two size ranges: 
$PM_{2.5}$ and $PM_{10}$. 

\subsubsection{Data}

Our data is a part of a big data set 
analysed by \cite{Paciorek09} (Data Source: \url{http://www.stat.berkeley.edu/~paciorek/data/pm/}). 
They specify stationary spatial structures through the use of penalized thin plate splines. 
Assumption of stationarity leads to an important simplification in their model. 
The assumption of stationarity is particularly appropriate for $PM_{2.5}$ values, but there is evidence 
of nonstationarity for $PM_{10}$ values. Indeed, \ctn{Roy20} infer with their novel Bayesian recursive methodology that
the $PM_{10}$ data is strictly, as well as weakly nonstationary (Section 13.7.2 of their paper) and that the $PM_{2.5}$ data is stricly stationary 
(Section 13.7.3 of their paper).

For illustration purpose we fit a nonstationary spatio temporal model for a smaller section of the full data set.
We analyse monthly average values of $PM_{10}$ for the year 1988-2002 (180 time points) at 50 locations. 
There are few locations with fewer sample points. Our model will be appropriate for this kind of data, since we are using
the spatial locations and time points as  arguments of our proposed mean functional. Our data consists of total 3934 
observations for monthly $PM_{10}$ values. To increase the predictive performance of the model, 
we have used available covariate information for different spatial locations and time points.
It is expected that inclusion of covariates may better explain the spatio-temporal heterogeneity. 
The details of the  covariate
selection are discussed in \cite{Yanosky08a}, \cite{Yanosky08b}.
The non-time-varying covariates are as follows: distances to the nearest road within four road size classes; particulate 
point source emissions within 1 and 10 km buffers; the proportion of urban land use of  within 1 km; elevation; 
and block group, tract, and county population density from the 1990 US Census. The time varying covariates are 
wind speed, precipitation and barometric pressure, with hourly values averaged to the month at each station. 

We also analyse the properties of the empirical correlations for increasing spatio-temporal lags with respect to the complete data set consistng of $70572$ observations. 
Figure \ref{fig:lagged_corrs_sma50}, obtained from the raw correlations
after taking moving averages of length $50$ for better visualization, shows that the correlations tend to zero with increasing lags, as realistically expected, 
in spite of the data being nonstationary. Moreover, a simple quantile-quanile plot (not shown for brevity) shows that the data is far from normality. 
These very much support our modeling idea. 
\begin{figure}
\centering
%\subfigure []{ \label{fig:lagged_corr_PM10}
\includegraphics[width=10.5cm,height=9.5cm]{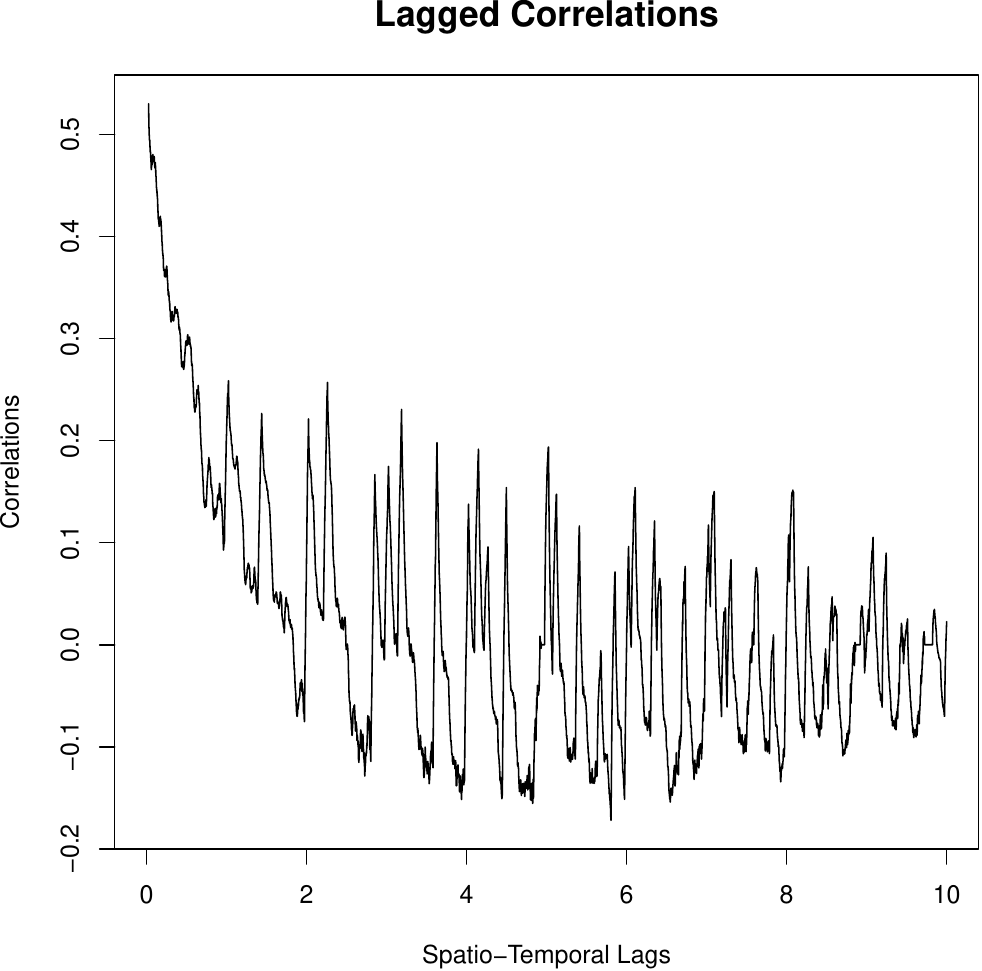}
\caption{Empirical correlations for increasing spatio-temporal lags. The raw correlations are smoothed by taking moving averages of size $50$ for better visualization.} 
\label{fig:lagged_corrs_sma50}
\end{figure}

\subsubsection{Model}
We propose the following model for the real data: 
\[
\log y_{it}=\alpha_{0} +f(\bs_{i},t)+g_1(\tilde\bz_{i})+g_2(\bz_{it})+ \epsilon_{it},~ i=1,\ldots 50,t=1,\ldots, 180,
\]
where $\alpha_{0}$ is intercept term, $f(\bs_{i},t)$ is our proposed spatio-temporal
model based on kernel convolution with ODDP. 
In the above, $g_1$ and $g_2$ are functionis of non-time varying covariates $\tilde\bz_i$ and time varying covariates $\bz_{it}$, respectively.
We assume a Gaussian process prior on $g_1$ such that $\mu_1(\bz)=E\left[g_1(\bz)\right]=\bbeta'\bz$ and $Cov\left(g_1(\bz_i),g_1(\bz_j)\right)=
\exp\left(-\frac{1}{2}\|\bz_i-\bz_j\|\right)$.
%\[
% g_1 \sim GP\left( {\bbeta }'{\bx},\exp\left(-\frac{1}{2}\|x_i-x_j\|\right)\right)
%\]

We set $g_2$ as a linear function of time varying covariates: $g_2(\bz)={\bgamma}'{\bz}$. The assumption of linearity 
in $g_2$ will simplify our computation to a great extent. Also there is evidence from the previous analysis that using linear terms in places of the 
unknown function led to only negligible decrease in predictive ability. In our model, $\epsilon_{it}$ are independent nugget effects with variance $\sigma^2$. 
 
All the prior distributions are the same as mentioned before.  
For the additional parameters $\alpha_{0},\bbeta,\bgamma$,  we use the vague prior distribution $N(0,10^4)$, 
and for $\sigma$ we use the log-normal prior with mean zero and variance $10^4$.

\subsubsection{Implementation}
Note that here we have total 3934 number of observations. We have to update the number of parameters ranging between between 300 to 400. 
Our TTMCMC based algorithm took 25 minutes to generate 5000 observations following a burn in of 20000. As in the other cases, TTMCMC 
exhibited satisfactory acceptance rate and mixing properties, as evident from the trace plots displayed in S-10.3 of supplement.

\subsubsection{Leave-one-out cross validation}

As before, we assess the predictive power of the model using leave-one-out cross validation. 
For all the spatio-temporal points, the true value of $PM_{10}$ lies within the 95\% credible interval
of the respective cross-validation posterior.
Also we calculate the mean square prediction error ($MSPE$), given by $\frac{\sum(y_{it}-{\hat y_{it}})^2}{n}$, where $\hat y_{it}$ is the median of the
posterior predictive density at the spatial location $(\bs_{i},t)$. In this case, we obtain $MSPE = 0.101$.
Figure \ref{fig:median_spatio_temporal} displays the observed data and posterior medians for at three spatial locations having data for more
than 10 years, which also contains 95\% credible intervals. We have also reported $MSPE$ for these three locations. The values are significantly lower than 
overall $MSPE$. 
It reveals the fact that our model have captured more precise information for the spatial locations, 
having larger number of time points. We also provide a visual representation of the model performance at 50 locations summarised over time points.
In Figure \ref{fig:spatial_surface}, the surface represents the posterior median values, averaged over all month-specific predictions
for 50 spatial locations. From the plots it is clear that our model performs quite satisfactorily for the data. 
Posterior densities of correlations, for 6 pairs of sites, are shown in Figure S-10.4 of the supplement.

\begin{figure}%[H]
 \centering
 \subfigure[spatial location:(38.270833	-85.740278), MSPE:0.0344]{ \label{fig:spatial_big_quantile_1}
\includegraphics[height=2in,width=6in]{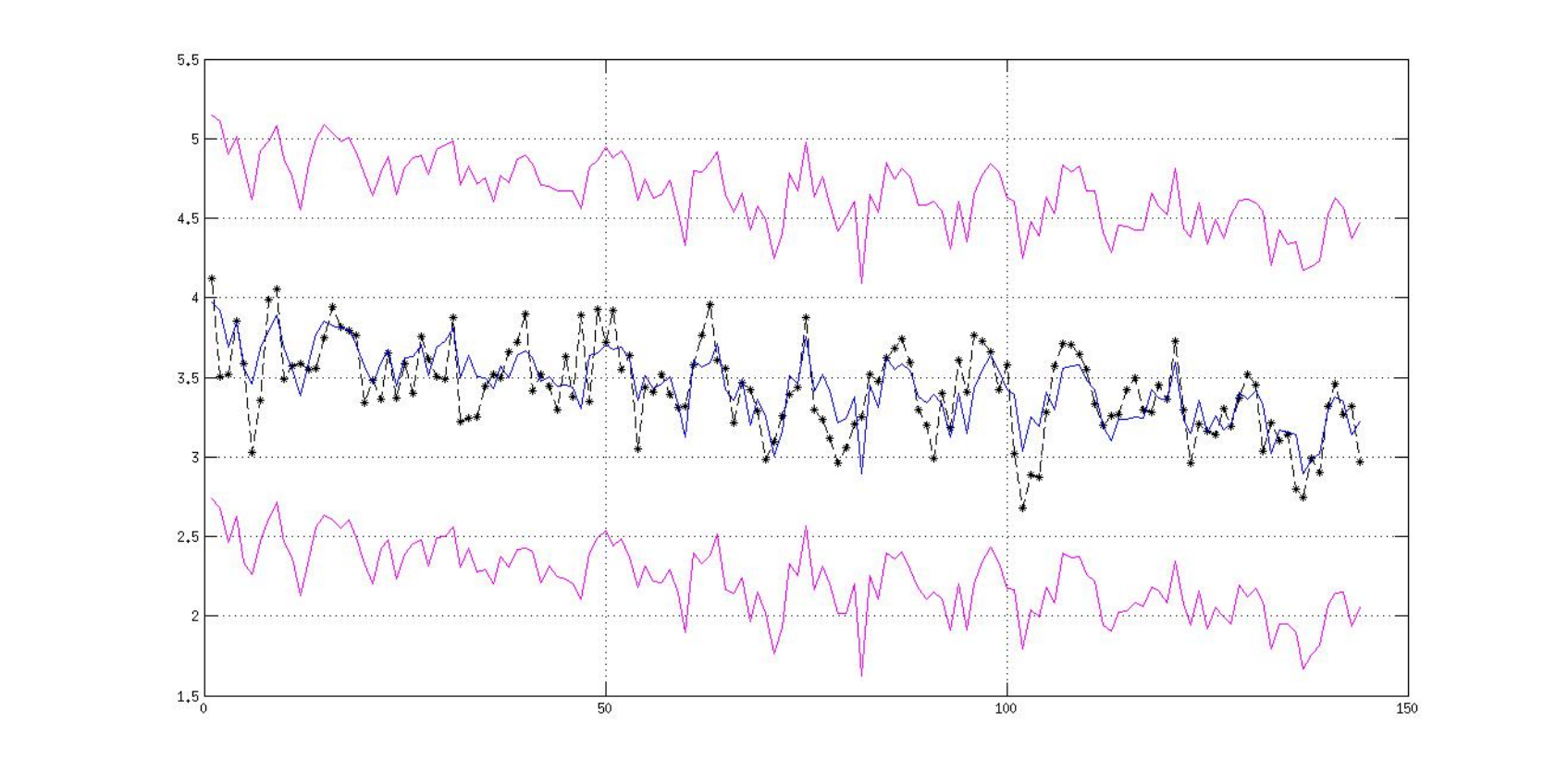}}
\subfigure[spatial location:(41.600278	-87.334722), MSPE:0.051]{ \label{fig:spatial_big_quantile_2}
\includegraphics[height=2in,width=6in]{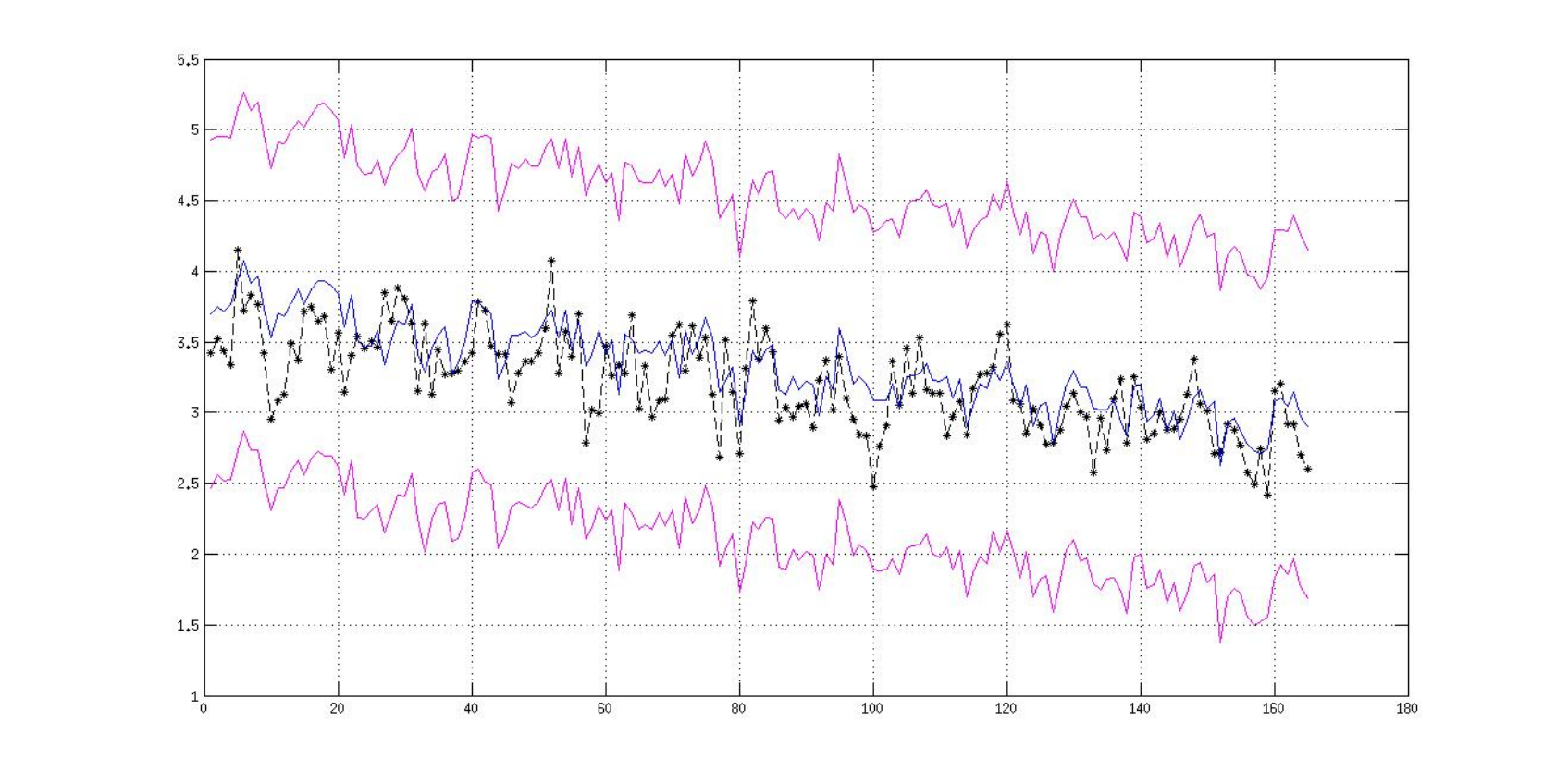}}
\subfigure[spatial location:(39.766111,-86.129167), MSPE:0.048]{ \label{fig:spatial_big_quantile_3}
\includegraphics[height=2in,width=6in]{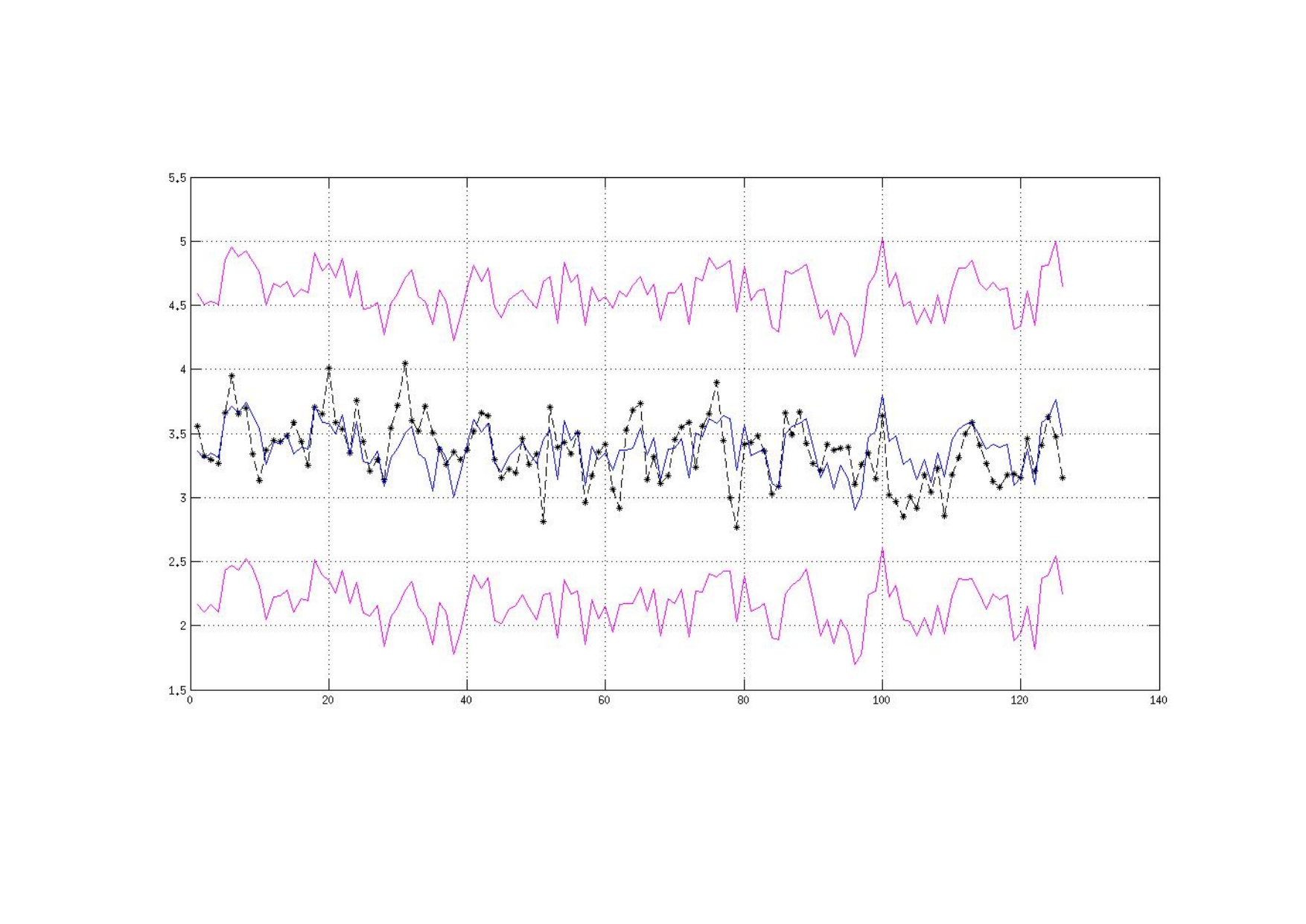}}

\caption{{\bf Real spatio-temporal data analysis:} Posterior predictive distributions summarized by the median 
(middle line) and the $95\%$ credible intervals 
as a function of $t$ for three randomly chosen spatial locations. 
The observed data points are denoted by `*'.  }
\label{fig:median_spatio_temporal}
%\end{center}
\end{figure}

\begin{figure}%[H]
%\begin{center}
\centering
\includegraphics[height=6in,width=4in,angle=270]{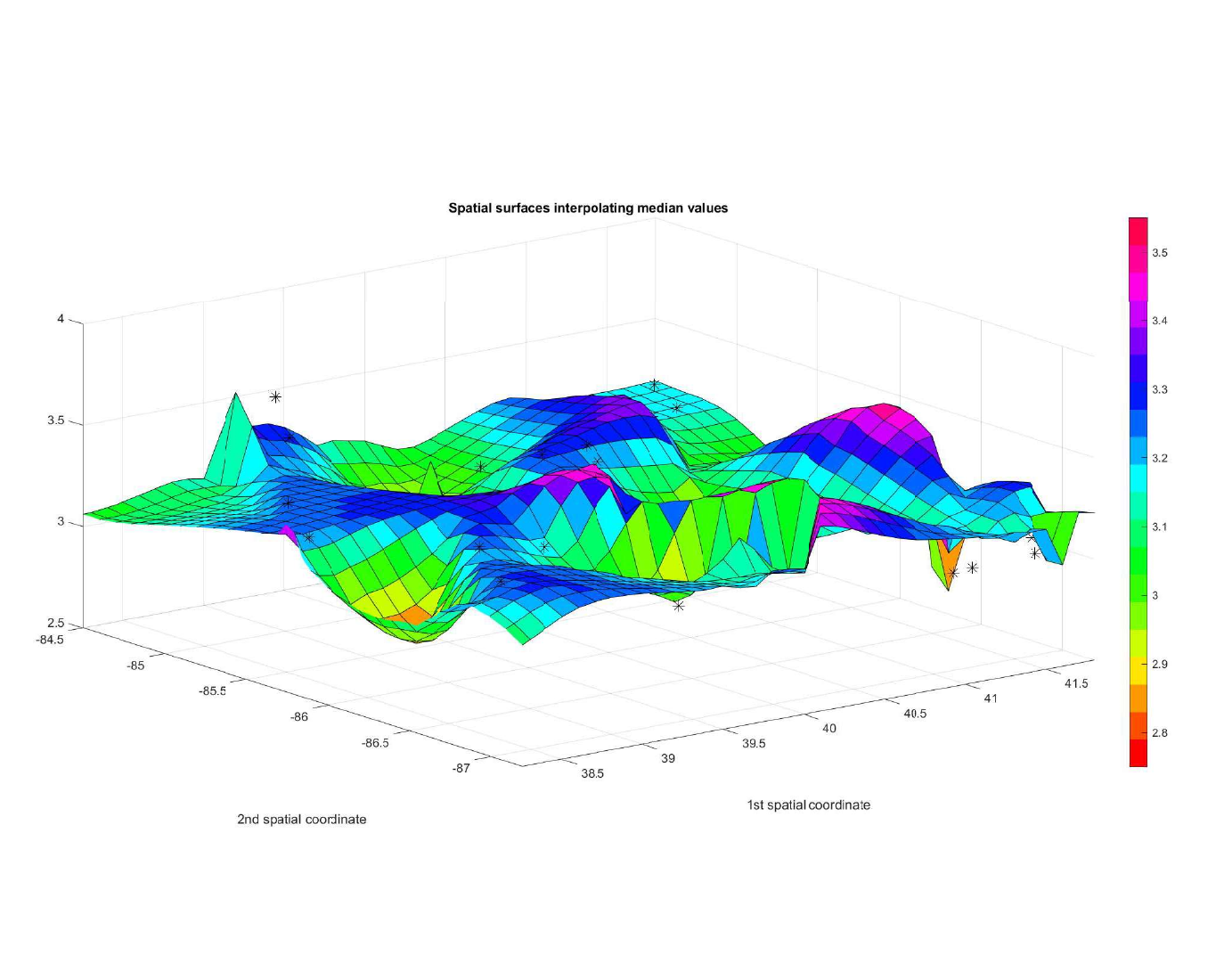}
\caption{{\bf Real spatio-temporal data analysis}: the surface plot of posterior median values at 50 locations, averaged over all month specific predictions 
from 1988-2002. The observed data points are indicated by `*'.}
\label{fig:spatial_surface}
%\end{center}
\end{figure}

\section{Summary and conclusion}
\label{sec:conclusion}
In this article, we have developed a non stationary, non-Gaussian spatio-temporal model based 
on kernel convolution of ODDP. Dependence is induced in the weights through similarities in the ordering of the atoms. 
Using this property we could ensure that our model-based correlation between two random data points 
which are widely separated, will be close to zero. We incorporated non-stationarity via appropriate kernels,
which would be convolved with ODDP. Although our proposed model is non stationary and non-separable, 
it includes stationarity and separability as special cases. Moreover, since our model is based on 
kernel convolution, replication is not necessary for inference. If one wishes to achieve different 
degrees of smoothness across space and across time, then that is also allowed by our model framework. 
For example, if we associate the ODDP prior only to the spatial locations, then the process will become smoother 
across time than across space depending on the choice of the kernel.

From the computational point of view, we have developed a fast and efficient TTMCMC based algorithm for 
implementing our variable dimensional spatio-temporal model. Indeed, our model consists of a 
large number of variables, where the number of variables associated with the summands is random. 
Using TTMCMC, we could update all the parameters, as well as number of the parameters simultaneously, 
using simple deterministic transformations of some a one-dimensional random variable. 
%It makes our posterior computation very fast and efficient. 

We illustrated the performance of our model with a simulation study 
and have compared the performance of our model with the model of FR. %\ctn{Fuentes03}. 
The comparative study supports our claim that our model
is capable of capturing the zero correlations between two widely separated data points (either in respect to space and/or time) more precisely.
We have also applied our model and methods to two real data examples pertaining to spatial and spatio-temporal dependence. 
As illustrated in detail, in both cases our model exhibited excellent performance.

%\subsection{Extension}
%\label{subsec:extension}
Although for the current paper we restricted ourselves to spatio-temporal applications only, 
our model is readily applicable in the functional data context. In fact, in the context of 
nonparametric function estimation, a new class of prior distributions can be introduced 
through our proposed model. Note that unknown functions can be modeled as a limit of a weighted sum of  
kernels or generator functions indexed by continuous parameters. In our model the 
weights will be the $p_{i}$'s of ODDP, and kernels are indexed by $\theta_{i}$, where 
for $i=1,2,\ldots$, $\theta_i\stackrel{iid}{\sim}G_0$. 
We have already obtained some sufficient conditions ensuring that our model converges 
in $L_{p}$ norm and Besov semi-norm. These results make our proposed model a promising 
candidate for function estimation.
 
%Generalisation of the idea of ODDP can be one interesting area to explore. In ODDP order is induced by the point process. We know that, there does not exist any standard result for non stationary point process. So, if one can construct a different set of priors, where the random orderings will be induced in different ways, may be useful in this context.
  
\section*{Acknowledgments}
We are grateful to Prof. Sujit Sahu for kindly providing us with the ozone data set
and to Mr. Suman Guha for helpful dicsussions.

\newpage
\renewcommand\thefigure{S-\arabic{figure}}
\renewcommand\thetable{S-\arabic{table}}
\renewcommand\thesection{S-\arabic{section}}

\setcounter{section}{0}
\setcounter{figure}{0}
\setcounter{table}{0}

\begin{center}
	{\bf \LARGE Supplementary Material}
\end{center}

\section{Proof of Theorem 1}
\label{sec:proof_theorem1}
%\begin{proof}
Note that
\begin{equation}
\int |K(\bi{x},\bi{\theta})|dG_{\bi{x}}(\bi{\theta}) =\sum_{i=1}^{\infty}|K(\bi{x},\bi{\theta}_{\pi_{i}(\bi{x})})|p_{i},
\label{eqn:1}
\end{equation}
%\\[1ex]
so that the monotone convergence theorem yields
%\\[1ex]
\begin{align*}
E\left(\int |K(\bi{x},\bi{\theta})|dG_{\bi{x}}(\bi{\theta})\right)
&=\sum_{i=1}^{\infty}E\left(|K(\bi{x},\bi{\theta}_{\pi_{i}(\bi{x})})|\right)E(p_{i})
\\%[1ex]
&=\int|K(\bi{x},\bi{\theta})|dG_{0}(\bi{\theta})\sum_{i=1}^{\infty}E\left(p_{i}\right)
\\%[1ex]
&=\int|K(\bi{x},\bi{\theta})|dG_{0}(\bi{\theta})\quad \mbox{[since  $ \sum_{i=1}^{\infty}p_{i}=1$]}.
\end{align*}
Here we have used the fact that $\bi{\theta}_{i}$ and $V_{i}$ 's are independent. 
Therefore $\int |K(\bi{x},\bi{\theta})|dG_{\bi{x}}(\bi{\theta})$ is finite with probability one and hence
\[
f(\bi{x})=\int K(\bi{x},\bi{\theta})dG_{x}(\bi{\theta})=\sum_{i=1}^{\infty}K(\bi{x},\bi{\theta}_{\pi_{i}(\bi{x})})p_{i}(\bi{x}) 
\]
is absolutely convergent with probability one. Since this series is bounded by (\ref{eqn:1}), which is integrable, 
the bounded convergence theorem implies
\begin{align*}
E(f(\bi{x}))=&\sum_{i=1}^{\infty}E\left(K(\bi{x},\bi{\theta}_{\pi_{i}(\bi{x})})\right)E(p_{i})
\\%[1ex]
&=\int K(\bi{x},\bi{\theta})dG_{0}(\bi{\theta}).
\end{align*}
\hfill$\blacksquare$
%\end{proof}

%\begin{theorem} If $\int |K(\bi{x},\bi{\theta})dG_{0}(\bi{\theta}) < \infty$ and $\int|K(\bi{x_{1}},\bi{\theta})K(\bi{x_{2}},\bi{\theta})|dG_{0}(\bi{\theta}) < \infty$, then
%\begin{align*}
% {\cal{E}}(Y(\bi{x_{1}})Y(\bi{x_{2}}))=&\mbox{cov}(K(\bi{x_{1}},\bi{\theta}),K(\bi{x_{2}},\bi{\theta}))\sum_{\pi_{i}(\bi{x_{1}})=\pi_{j}(\bi{x_{2}})} {\cal{E}}(p_{i}(\bi{x_{1}})p_{j}(\bi{x_{2}}))
%\\[1ex]
%&\quad+
%E(K(\bi{x_{1}},\bi{\theta}))E(K(\bi{x_{2}},\bi{\theta})).
%\end{align*}
%where
%\[E(K(\bi{x},\bi{\theta}))=\int K(\bi{x},\bi{\theta})dG_{0}(\bi{\theta})
% \]
%and
%\[
%\mbox{cov}(K(\bi{x_{1}},\bi{\theta}),K(\bi{x_{2}},\bi{\theta}))=\int K(\bi{x_{1}},\bi{\theta})K(\bi{x_{2}},\bi{\theta})dG_{0}(\bi{\theta})-E(K(\bi{x_{1}},\bi{\theta}))E(K(\bi{x_{2}},\bi{\theta})).
%\]
%\end{theorem}
%\begin{proof}

\section{Proof of Theorem 2}
\label{sec:proof_theorem2}
\begin{align}
f(\bi{x_{1}})f(\bi{x_{2}})&=\int K(\bi{x_{1}},\bi{\theta})dG_{\bi{x_{1}}}(\bi{\theta})\int K(\bi{x_{2}},\bi{\theta})dG_{\bi{x_{2}}}(\bi{\theta})\notag
\\%[1ex]
&=\sum_{i} K(\bi{x_{1}},\bi{\theta}_{\pi_{i}(\bi{x_{1}})})p_{i}(\bi{x_{1}})\sum_{j} K(\bi{x_{2}},\bi{\theta}_{\pi_{j}(\bi{x_{2}})})p_{j}(\bi{x_{2}})\notag
\\%[1ex]
&=\sum_{i}\sum_{j}K(\bi{x_{1}},\bi{\theta}_{\pi_{i}(\bi{x_{1}})})K(\bi{x_{2}},\bi{\theta}_{\pi_{j}(\bi{x_{2}})})p_{i}(\bi{x_{1}})p_{j}(\bi{x_{2}})\label{eqn:2}
\end{align}
since both the series are absolutely convergent with probability one. This is bounded in absolute value by 
\begin{equation}
\sum_{i}\sum_{j}|K(\bi{x_{1}},\bi{\theta}_{\pi_{i}(\bi{x_{1}})})K(\bi{x_{2}},\bi{\theta}_{\pi_{j}(\bi{x_{2}})})|
p_{i}(\bi{x_{1}})p_{j}(\bi{x_{2}}).
\label{eqn:3}
\end{equation}
Now if this is an integrable random variable, we may take the expectation of (\ref{eqn:2}) inside the summation sign and obtain
\begin{align*}
E\left(f(\bi{x_{1}})f(\bi{x_{2}})\right)&=E\left(\int K(\bi{x_{1}},\bi{\theta})dG_{\bi{x_{1}}}(\bi{\theta})
\int K(\bi{x_{2}},\bi{\theta})dG_{\bi{x_{2}}}(\bi{\theta})\right)\notag
\\%[1ex]
&=\sum_{i}\sum_{j}E\left(K(\bi{x_{1}},\bi{\theta}_{\pi_{i}(\bi{x_{1}})})
K(\bi{x_{2}},\bi{\theta}_{\pi_{j}(\bi{x_{2}})})\right)E\left(p_{i}(\bi{x_{1}})p_{j}(\bi{x_{2}})\right)
\\%[1ex]
&=\sum_{\pi_{i}(\bi{x_{1}})\neq\pi_{j}(\bi{x_{2}})}
E\left(K(\bi{x_{1}},\bi{\theta}_{\pi_{i}(\bi{x_{1}})})K(\bi{x_{2}},\bi{\theta}_{\pi_{j}(\bi{x_{2}})})\right)\,
E\left(p_{i}(\bi{x_{1}})p_{j}(\bi{x_{2}})\right)
\\%[1ex]
&\quad +
\sum_{\pi_{i}(\bi{x_{1}})=\pi_{j}(\bi{x_{2}})}
E\left(K(\bi{x_{1}},\bi{\theta}_{\pi_{i}(\bi{x_{1}})})K(\bi{x_{2}},\bi{\theta}_{\pi_{j}(\bi{x_{2}})})\right)
E\left(p_{i}(\bi{x_{1}})p_{j}(\bi{x_{2}})\right)
\\%[1ex]
&=
\sum_{\pi_{i}(\bi{x_{1}})\neq\pi_{j}(\bi{x_{2}})}
E\left(K(\bi{x_{1}},\bi{\theta})\right)E\left(K(\bi{x_{2}},\bi{\theta})\right)
E\left(p_{i}(\bi{x_{1}})p_{j}(\bi{x_{2}})\right)
\\%[1ex]
&\quad+
\sum_{\pi_{i}(\bi{x_{1}})=\pi_{j}(\bi{x_{2}})}
E\left(K(\bi{x_{1}},\bi{\theta})K(\bi{x_{2}},\bi{\theta})\right)
E\left(p_{i}(\bi{x_{1}})p_{j}(\bi{x_{2}})\right)
\\%[1ex]
&=
E\left(K(\bi{x_{1}},\bi{\theta})\right)E\left(K(\bi{x_{2}},\bi{\theta}\right)
\sum_{\pi_{i}(\bi{x_{1}})\neq\pi_{j}(\bi{x_{2}})}
E\left(p_{i}(\bi{x_{1}})p_{j}(\bi{x_{2}})\right)
\\%[1ex]
&\quad+
E\left(K(\bi{x_{1}},\bi{\theta})K(\bi{x_{2}},\bi{\theta})\right)
\sum_{\pi_{i}(\bi{x_{1}})=\pi_{j}(\bi{x_{2}})}
E\left(p_{i}(\bi{x_{1}})p_{j}(\bi{x_{2}})\right)
\\%[1ex]
%\end{align*}
%
%\begin{align*}
&=
E\left(K(\bi{x_{1}},\bi{\theta})\right)E\left(K(\bi{x_{2}},\bi{\theta}\right)\notag\\
&\bigg[\sum_{\pi_{i}(\bi{x_{1}})\neq\pi_{j}(\bi{x_{2}})}
E\left(p_{i}(\bi{x_{1}})p_{j}(\bi{x_{2}})\right)
+\sum_{\pi_{i}(\bi{x_{1}})=\pi_{j}(\bi{x_{2}})}
E\left(p_{i}(\bi{x_{1}})p_{j}(\bi{x_{2}})\right)
\bigg]
\\%[1ex]
&\quad+
E\left(K(\bi{x_{1}},\bi{\theta})K(\bi{x_{2}},\bi{\theta})\right)
\sum_{\pi_{i}(\bi{x_{1}})=\pi_{j}(\bi{x_{2}})}
E\left(p_{i}(\bi{x_{1}})p_{j}(\bi{x_{2}})\right)
\\%[1ex]
&\quad-
E\left(K(\bi{x_{1}},\bi{\theta})\right)
E\left(K(\bi{x_{2}},\bi{\theta})\right)
\sum_{\pi_{i}(\bi{x_{1}})=\pi_{j}(\bi{x_{2}})}
E\left(p_{i}(\bi{x_{1}})p_{j}(\bi{x_{2}})\right)
%\\%[1ex]
\end{align*}
\begin{align*}
&=
\mbox{Cov}\left(K(\bi{x_{1}},\bi{\theta}),K(\bi{x_{2}},\bi{\theta})\right)
\sum_{\pi_{i}(\bi{x_{1}})=\pi_{j}(\bi{x_{2}})}
E\left(p_{i}(\bi{x_{1}})p_{j}(\bi{x_{2}})\right)
\\%[1ex]
&\quad+
E\left(K(\bi{x_{1}},\bi{\theta})\right)E\left(K(\bi{x_{2}},\bi{\theta})\right)
\bigg[
\sum_{\pi_i(\bi{x_{1}}),\pi_j(\bi{x_{2}})}
E\left(p_{i}(\bi{x_{1}})p_{j}(\bi{x_{2}})\right)
\bigg]
\\%[1ex]
&=
\mbox{Cov}\left(K(\bi{x_{1}},\bi{\theta}),K(\bi{x_{2}},\bi{\theta})\right)
\sum_{\pi_{i}(\bi{x_{1}})=\pi_{j}(\bi{x_{2}})}
E\left(p_{i}(\bi{x_{1}})p_{j}(\bi{x_{2}})\right)
\\%[1ex]
&\quad+
E\left(K(\bi{x_{1}},\bi{\theta})\right)E\left(K(\bi{x_{2}},\bi{\theta})\right).
\end{align*}
An analogous equation shows that (\ref{eqn:3}) is bounded, since, by our assumption 
\[
 \int |K(\bi{x},\bi{\theta})dG_{0}(\bi{\theta}) < \infty \mbox{ and } 
 \int|K(\bi{x_{1}},\bi{\theta})K(\bi{x_{2}},\bi{\theta})|dG_{0}(\bi{\theta}) < \infty.
\]
%Hence the proof follows.
%\end{proof}
%\hfill$\blacksquare$
%\\[2ex]

To obtain the complete analytical expression of $E\left(f(\bi{x_{1}})f(\bi{x_{2}})\right)$ we need to calculate\\ 
$\sum_{\pi_{i}(\bi{x_{1}})=\pi_{j}(\bi{x_{2}})}E\left(p_{i}(\bi{x_{1}})p_{j}(\bi{x_{2}})\right)$.
%\\[2ex]
Define
\[
T(\bi{x_{1}},\bi{x_{2}})=\{k|\mbox{ there exists } i,j \mbox{ such that } \pi_{i}(\bi{x_{1}})=\pi_{j}(\bi{x_{2}})=k\},
\]
for $k\in T(\bi{x_{1}},\bi{x_{2}})$. We further define
$
A_{lk}=\{\pi_{j}(\bi{x_{l}})|j<i \mbox{ where } \pi_i(\bi{x_{l}})=k\},
$
$S_{k}$ = $A_{1k}\cap A_{2k}$ and $S'_{k}$ = $A_{1k}\cup A_{2k}-S_{k}$.
\\[2ex]
Then it can be easily shown that
\[ \sum_{\pi_{i}(\bi{x_{1}})=\pi_{j}(\bi{x_{2}})}
E\left(p_{i}(\bi{x_{1}})p_{j}(\bi{x_{2}})\right)
=
\frac{2}{(\alpha+1)(\alpha+2)}\sum_{k\in T(\bi{x_{1}},\bi{x_{2}})}
\left(\frac{\alpha}{\alpha+2}\right)^{\#S_{k}}\left(\frac{\alpha}{\alpha+1}\right)^{\#S_{k}'}.
\]

Hence
\begin{align*}
\mbox{Cov}(f(\bi{x_{1}}),f(\bi{x_{2}}))
&=
\mbox{Cov}_{G_0}(K(\bi{x_{1}},\bi{\theta}),K(\bi{x_{2}},\bi{\theta}))
\times\frac{2}{(\alpha+1)(\alpha+2)}
\\%[1ex]
&\quad\times\sum_{k\in T(\bi{x_{1}},\bi{x_{2}})}\left(\frac{\alpha}{\alpha+2}\right)^{\#S_{k}}
\left(\frac{\alpha}{\alpha+1}\right)^{\#S_{k}'}.
\end{align*}
%where 
%\[
% \mbox {cov}(Y(\bi{x_{1}}),Y(\bi{x_{2}}))={\mathcal{E}}(Y(\bi{x_{1}})Y(\bi{x_{2}}))-{\cal{E}}(Y(\bi{x_{1}})){\cal{E}}(Y(\bi{x_{2}}))
%\]

\hfill$\blacksquare$

\section{Proof of Theorem 4}
In this context, it is more convenient to deal with the notation used in the context of 
Theorem 2. %\ref{theorem:cov}.
Note that $\|\bx_1-\bx_2\|\rightarrow 0$ implies that 
$\# S_k\rightarrow (k-1)$,
$\# S'_k\rightarrow 0$, and hence, for any realization of the point process, 
$\mbox{Corr}(G_{\bx_1},G_{\bx_2})=\sum_{k\in T(\bx_1,\bx_2)}\left(\frac{\alpha}{\alpha+2}\right)^{\# S_k}
\left(\frac{\alpha}{\alpha+1}\right)^{\# S'_k}
\rightarrow 
\sum_{k=1}^{\infty}\left(\frac{\alpha}{\alpha+2}\right)^{k-1}
=(\alpha+2)/2$.
Since $\mbox{Corr}_{G_0}(K(\bx_1,\btheta),K(\bx_2,\btheta))\rightarrow 1$ as
$\|\bx_1-\bx_2\|\rightarrow 0$, it follows that 
$\mbox{Corr}(f(\bx_1),f(\bx_2))\rightarrow 1$ as $\|\bx_1-\bx_2\|\rightarrow 0$.
On the other hand, as $\|\bx_1-\bx_2\|\rightarrow\infty$, 
$\# T(\bx_1,\bx_2)$ is at most finite, $\# S_k\rightarrow 0$ and $\# S'_k\rightarrow\infty$.
This implies that $\mbox{Corr}(G_{\bx_1},G_{\bx_2})\rightarrow 0$, and hence,
$\mbox{Corr}(f(\bx_1),f(\bx_2))\rightarrow 0$.
Hence, by the dominated convergence theorem it follows that the unconditional
correlation between $f(\bx_1)$ and $f(\bx_2)$ goes to 1 and 0, respectively, as 
$\|\bx_1-\bx_2\|\rightarrow 0$ and $\|\bx_1-\bx_2\|\rightarrow\infty$.

\section{Proof of Theorem 7}

%\begin{proof}
Since for each $\bx$, $f(\bx)=\sum_{i=1}^{\infty}K(\bx,\btheta_{\pi_i(\bx)})p_i(\bx)$, 
each $\bx$ must satisfy $\pi_k(\bx)=i_k$, for every $k=1,2,\ldots$, where $i_k\in\{1,2, \ldots\}$
($i_k\neq i_{k'}$ for any $k\neq k'$), we must have $\bx\in \cap_{k=1}^{\infty} A_{ki_k}$.
For simplicity but without loss of generality let $i_k=k$ for $k=1,2,\ldots$. Then for $\bx\in \cap_{k=1}^{\infty} A_{ki_k}$
it holds that $K(\bx,\btheta_{\pi_i(\bx)})=K(\bx,\btheta_i)$ and $p_i(\bx)=V_i\prod_{j<i}(1-V_j)=p_i$, say.
Then, for any arbitrary $\bx_0$ in the interior of $\cap_{k=1}^{\infty} A_{kk}$, it holds almost surely that
\begin{align}
\underset{\bi{x}\rightarrow \bi{x}_0}{\lim} f(\bi{x})&=\underset{\bi{x}\rightarrow \bi{x}_0}{\lim} 
\sum_{i=1}^\infty K(\bi{x},\btheta_i)p_i
\\[1ex]
&=\sum_{i=1}^\infty \underset{\bi{x}\rightarrow \bi{x}_0}{\lim} K(\bx,\btheta_i)p_i\notag
\\[1ex]
&[\mbox{using (A1) and the dominated convergence theorem}\notag
\\[1ex]
&\mbox{following from the facts that $K(\cdot,\cdot)$ is bounded and} \sum_{i=1}^{\infty}p_i=1.]\notag
\\[1ex]
&=\sum_{i=1}^{\infty}K(\bi{x}_0,\btheta_i)p_i\hspace{3mm}[\mbox{using (A2)}.]\notag
\\[1ex]
&=f(\bx_0)\label{eq:as_continuity1}.
\end{align}
Hence $f(\cdot)$ is almost surely continuous in the interior of $\cap_{k=1}^{\infty} A_{ki_k}$.

To prove mean square continuity first note that the dominated convergence theorem can be applied as before, using 
boundedness of $K(\cdot,\cdot)$ and the fact that $\sum_{i=1}^{\infty}p_i=1$ 
to guarantee that
the following hold almost surely:
\begin{align}
\lim_{\bx\rightarrow\bx_0}f(\bx)^2&=f(\bx_0)^2,\label{eq:as_continuity2}
\\[1ex]
\lim_{\bx\rightarrow\bx_0}f(\bx)f(\bx_0)&=f(\bx_0)^2.\label{eq:as_continuity3}
\end{align}
Combining (\ref{eq:as_continuity1}), (\ref{eq:as_continuity2}) and (\ref{eq:as_continuity3}) 
implies that $(f(\bx)-f(\bx_0))^2\rightarrow 0$ almost surely.
Now since $f(\cdot)$ is bounded almost surely by $M$ (follows from (A1)
and the fact that $\sum_{i=1}^{\infty}p_i=1$), $f(\cdot)^2$ and $f(\cdot)f(\bx_0)$ are almost surely bounded as well.
Hence, taking
expectations and using the dominated convergence theorem using the boundedness of $(f(\bx)-f(\bx_0))^2$, it follows that 
$$\lim_{\bi{x}\rightarrow \bi{x}_0} E[f(\bi{x})-f(\bi{x}_0)]^2=0.$$
Therefore, $f(\bi{x})$ is mean square continuous in the interior of $\cap_{k=1}^{\infty} A_{ki_k}$.

Let us now show that if $\bx_0\in \cap_{k=1}^{\infty} A_{k_k}$ lies at the boundary of $A_{kk}$ for some $k$,
then $f(\cdot)$ is almost surely discontinuous at $\bx_0$. 
It is useful to note that for each $k$, $\pi_k(\cdot)$ is a step function and admits the representation
\[
\pi_k(\bx)=\sum_{i=1}^{\infty}iI_{A_{ki}}(\bx).
\]
For the sake of clarity, without loss of generality, 
let us assume that the dimensionality $d=1$, so that $\bx_0=x_0$ is one-dimensional. Let us further assume, without loss
of generality, that $x_0$ falls on the rightmost boundary of $A_{11}=\{x:\pi_1(x)=1\}$, so that
$x_0=\sup\{x:\pi_1(x)=1\}$. Then, almost surely,
\begin{align}
\lim_{x\downarrow x_0}f(x)=\sum_{i=1}^\infty K({x},\btheta_{i+1})p_{i+1}
\neq \sum_{i=1}^\infty K({x},\btheta_i)p_i=f(x_0),\notag
\end{align}
showing that $f(\cdot)$ is almost surely discontinuous at $x_0$.
%\end{proof}
\hfill$\blacksquare$

\section{Proof of Theorem 9}
\label{sec:proof_theorem4}

%\begin{proof}
Without loss of generality, let 
$\bx_0$ be an arbitrary point in the interior of $\cap_{k=1}^{\infty} A_{kk}$.
Then, for any direction $\bu$ such that $\bx=\bx_0+\bu\in\mathcal N(\bx_0)\cap\{\cap_{k=1}^{\infty} A_{kk}\}$,
where $\mathcal N(\bx_0)$ is any neighborhood of $\bx_0$,
\begin{equation}
f(\bi{x})=\sum_{i=1}^\infty K(\bi{x},\btheta_i)p_i
\label{eq:diff1}
\end{equation}
and for each $i=1,2,\ldots$,  
$K(\bx,\btheta_i)$ admits the following (multivariate) Taylor's series expansion:
\begin{equation}
K(\bi{x}_0 +\bi{u},\btheta_i)
=K(\bi{x}_0,\btheta_i) + \bi{u}'\bigtriangledown K(\bi{x}_0,\btheta_i)+R(\bi{x}_0,\bu,\btheta_i)
\label{eq:diff2}
\end{equation}
where, $\left|R(\bx_0,\bu,\btheta_i)\right|\leq c(\bx_0,\btheta_i)\|\bu\|^2$,
for some function $c(\cdot,\cdot)$, independent of $\bu$.
The boundedness assumption (B1) guarantees that $c(\cdot,\cdot)$ is bounded above by some
finite constant $M_1$. Hence, for $i=1,2,\ldots$,
\begin{equation}
R(\bx_0,\bu,\btheta_i)\leq M_1\|\bu\|^2.
\label{eq:diff3}
\end{equation}
It follows from (\ref{eq:diff2}) that
\begin{equation}
 f(\bi{x})=f(\bx_0)+\bu'\bigtriangledown f(\bx_0)+R_2(\bx_0,\bu),
 \label{eq:diff4}
\end{equation}
where
\begin{align}
f(\bx_0)&=\sum_{i=1}^{\infty}K(\bx_0,\btheta_i)p_i\notag
\\[1ex]
\bigtriangledown f(\bx_0)&=\sum_{i=1}^{\infty}\bigtriangledown K(\bx_0,\btheta_i)p_i\notag
\\[1ex]
R_2(\bx_0,\bu)&=\sum_{i=1}^{\infty}R(\bi{x}_0,\bu,\btheta_i)p_i\notag
\end{align}
In (\ref{eq:diff4})  $\bu'\bigtriangledown f(\bx_0)$ is clearly a process linear in $\bu$.
Moreover, since $\sum_{i=1}^{\infty}p_i=1$, $\left|R_2(\bx_0,\bu)\right|$ is bounded above by $M_1\|\bu\|^2$. Hence,
almost surely, $\frac{\left|R_2(\bx_0,\bu)\right|}{\|\bu\|}\rightarrow 0$.
Hence, using the dominated convergence theorem it follows that
\[
\underset{\|\bi{u}\|\rightarrow 0}{\operatorname {\mbox{lim}}} E\left[\frac{f(\bi{x}_{0}+\bi{u})-f(\bi{x}_{0})-\bu'\bigtriangledown f(\bx_0)}{\|\bi{u}\|}\right]^2
=\underset{\|\bi{u}\|\rightarrow 0}{\operatorname {\mbox{lim}}} E\left[\frac{R(\bi{x}_{0},\bi{u})}{\|\bi{u}\|}\right]^2
 = 0.
\]
Hence, $f(\cdot)$ is mean square differentiable in the interior of $\cap_{k=1}^{\infty} A_{ki_k}$.
%\end{proof}
\hfill$\blacksquare$

\section{Proof of Theorem 10}
\label{sec:proof_theorem5}

For our purpose, we first state and prove a lemma.
\begin{lemma}
\label{Theorem:truncation1}
Let $p_k$ denote the random weights from an ODDP. For each positive integer $N\geq 1$ 
and each positive integer $r\geq 1$, let 
\[
 T_N(r,\alpha)=\left(\sum_{k=N}^\infty p_{k}\right)^r, \ U_N(r,\alpha)=\sum_{k=N}^\infty p_{k}^r. 
\]
Then 
\[
 E(T_N(r,\alpha))=\left(\frac{\alpha}{\alpha+r}\right)^{N-1},
\]
and
\[
 E(U_N(r,\alpha))=\left(\frac{\alpha}{\alpha+r}\right)^{N-1}\frac{\Gamma(r)\Gamma(\alpha+1)}{\Gamma(\alpha+r)}.
\]
\end{lemma}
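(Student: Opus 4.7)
The plan is to exploit the stick-breaking representation $p_k=V_k\prod_{j<k}(1-V_j)$ with $V_j\stackrel{iid}{\sim}\mbox{Beta}(1,\alpha)$ and reduce both expectations to Beta moments of the i.i.d.\ $V_j$'s; the location-dependent permutations $\pi_i(\cdot)$ play no role here since $p_k$ depends only on the $V_j$'s. Two elementary identities do all the work: $E[(1-V_j)^r]=\alpha/(\alpha+r)$ and $E[V_j^r]=\Gamma(r+1)\Gamma(\alpha+1)/\Gamma(\alpha+r+1)$, both obtained directly from the $\mbox{Beta}(1,\alpha)$ density.

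For $T_N(r,\alpha)$, the key step is telescoping. Since $\sum_{k=1}^{\infty}p_k=1$, we have $\sum_{k=N}^{\infty}p_k=\prod_{j=1}^{N-1}(1-V_j)$, and hence $T_N(r,\alpha)=\prod_{j=1}^{N-1}(1-V_j)^r$. Independence of the $V_j$'s then gives $E(T_N(r,\alpha))=\left(\alpha/(\alpha+r)\right)^{N-1}$ immediately.

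For $U_N(r,\alpha)$, nonnegativity of the summands justifies exchanging expectation and sum via monotone convergence. Writing $p_k^r=V_k^r\prod_{j<k}(1-V_j)^r$ and using independence yields
\[
E(p_k^r)=E(V_k^r)\prod_{j<k}E((1-V_j)^r)=\frac{\Gamma(r+1)\Gamma(\alpha+1)}{\Gamma(\alpha+r+1)}\left(\frac{\alpha}{\alpha+r}\right)^{k-1}.
\]
The resulting tail sum over $k\geq N$ is geometric with ratio $\alpha/(\alpha+r)$ and evaluates to $\frac{\alpha+r}{r}\left(\frac{\alpha}{\alpha+r}\right)^{N-1}$. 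Combining factors and simplifying via $\Gamma(\alpha+r+1)=(\alpha+r)\Gamma(\alpha+r)$ and $\Gamma(r+1)=r\Gamma(r)$ collapses the prefactor to $\Gamma(r)\Gamma(\alpha+1)/\Gamma(\alpha+r)$, yielding the stated expression. The calculation is entirely routine; there is no real obstacle beyond careful bookkeeping of the Gamma-function factors and recognising the telescoping identity that reduces $T_N(r,\alpha)$ to a simple product.
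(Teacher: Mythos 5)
Your proof is correct, and for the $U_N$ part it takes a genuinely different route from the paper. The paper first establishes the self-similarity $\mathcal{P}(\cdot)\stackrel{\mathcal{D}}{=}V_{\pi_1}\delta_{\theta_{\pi_1}}(\cdot)+(1-V_{\pi_1})\mathcal{P}^*(\cdot)$ with $\mathcal{P}^*$ again an ODDP independent of $V_{\pi_1}$, deduces the distributional fixed-point equation $U_1(r,\alpha)\stackrel{\mathcal{D}}{=}V_{\pi_1}^r+(1-V_{\pi_1})^rU_1(r,\alpha)$, takes expectations and solves the resulting linear equation for $E(U_1(r,\alpha))$, and only then peels off the factor $(1-V_{\pi_1})^r\cdots(1-V_{\pi_{N-1}})^r$ to handle general $N$. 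You instead compute $E(p_k^r)$ termwise from the Beta moments and sum the geometric series over $k\geq N$ directly; your Gamma-function bookkeeping ($\Gamma(r+1)=r\Gamma(r)$, $\Gamma(\alpha+r+1)=(\alpha+r)\Gamma(\alpha+r)$) checks out and reproduces the stated constant. Your route is more elementary and has the minor advantage of not needing to know a priori that $E(U_1(r,\alpha))$ is finite, which the fixed-point argument implicitly requires before one can solve for it; the paper's route is the standard Sethuraman-style recursion, which extends more readily to functionals whose termwise expectations are not available in closed form. For $T_N$ the two arguments coincide: both reduce $\sum_{k=N}^{\infty}p_k$ to the product $\prod_{j=1}^{N-1}(1-V_{\pi_j})$ and use independence. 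Your opening remark that the location-dependent permutation is immaterial because the permuted $V$'s remain i.i.d.\ $\mbox{Beta}(1,\alpha)$ is exactly the justification the paper leaves implicit.
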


\begin{proof} 
Let $\mathcal{P}$ be a specific random measure from ODDP. Then 
\[
 \mathcal{P}(\cdot)=V_{\pi_{1}}\delta_{\theta_{\pi_{1}}}(\cdot)+ 
 (1-V_{\pi_{1}})\left(V_{\pi_{1}}^*\delta_{\theta_{\pi_{1}}^*}(\cdot) +
(1-V_{\pi_{1}}^*)V_{\pi_{2}}^*\delta_{\theta_{\pi_{2}}^*}(\cdot)
+(1-V_{\pi_{1}}^*)(1-V_{\pi_{2}}^*)V_{\pi_{3}}^*\delta_{\theta_{\pi_{3}}^*}(\cdot)+\cdots\right),
\]
where $V_{\pi_{k}}^*=V_{\pi_{k+1}}$ are independent $Beta(1,\alpha)$ random variables and 
$\theta_{\pi_{k}}^*=\theta_{\pi_{k+1}}$ are $iid$ $G_0$. So we have  
\[
 \mathcal{P}(\cdot) \stackrel{\mathcal{D}}{=} 
 V_{\pi_{1}}\delta_{\theta_{\pi_{1}}}(\cdot) + (1-V_{\pi_{1}})\mathcal{P}^*(\cdot),
\]
where  $V_{\pi_{1}},\theta_{\pi_{1}}$ and  $\mathcal{P^*}(\cdot)$ are independent and  
$ \mathcal{P^*}(\cdot)$ is an ODDP.

Similarly we can show that 
\[
 U_1(r,\alpha)\stackrel{\mathcal{D}}{=}V_{\pi_{1}}^r+(1-V_{\pi_{1}})^r U_1(r,\alpha),
\]
where on the right-hand side $V_{\pi_{1}}$ and $ U_1(r,\alpha)$ are mutually independent. Therefore taking expectations
\[
 E(U_1(r,\alpha))=\frac{\Gamma{(r+1)}\Gamma{(\alpha+1)}}{\Gamma{(\alpha+r+1)}}+\frac{\alpha}{\alpha+r}E(U_1(r,\alpha)).
\]
Then we have 
\begin{equation}
\label{equation:truncation1}
  E (U_1(r,\alpha))=\frac{\Gamma(r)\Gamma(\alpha+1)}{\Gamma(\alpha+r)}.
\end{equation}

Furthermore for $N\geq 2$, we have that 
\[
 U_N(r,\alpha)= (1-V_{\pi_{1}})^r\cdots(1-V_{\pi_{(N-1)}})^r(U_1(r,\alpha)),
\]
where all the variables on the right hand side are mutually independent. Taking expectations, we have 
\[
 E(U_N(r,\alpha))= \left(\displaystyle\prod_{k=1}^{N-1}E(1-V_{\pi_{k}})^r\right)E((U_1(r,\alpha))).
\]
Then using (\ref{equation:truncation1}) we have 
\[
 E(U_N(r,\alpha))=\left(\frac{\alpha}{\alpha+r}\right)^{N-1}\frac{\Gamma(r)\Gamma(\alpha+1)}{\Gamma(\alpha+r)}.
\]
Now, similarly we can show that 
\begin{align*}
 T_N(r,\alpha)&\stackrel{\mathcal{D}}{=}(1-V_{\pi_{1}})^r\cdots(1-V_{\pi_{(N-1)}})^r(T_1(r,\alpha))
\\
&=(1-V_{\pi_{1}})^r\cdots(1-V_{\pi_{(N-1)}})^r
\\
&=\left(\frac{\alpha}{\alpha+r}\right)^{N-1}.
\end{align*}
Hence, the lemma is proved.
\end{proof}

%\begin{proof}
We now proceed to the proof of Theorem 10.
Note that
\begin{align}
&\quad \left|m_N(\bi{y})-m_\infty(\bi{y})\right| \notag
\\
&=\left|\int_\Theta [\bi{y}|P_N][P]d\Theta-\int_\Theta[\bi{y}|P][P]d\Theta\right|\notag
\\
&\leq \int_\Theta \left| [\bi{y}|P_N] - [\bi{y}|P] \right|[P]d\Theta \label{trunacation3}
\end{align}

Now we expand $[\bi{y}|P_N]$ around $P$ using  multivariate Taylor's series expansion up to the second order:
\begin{align*}
& [\bi{y}|P_N]= [\bi{y}|P] +\frac{\partial[\bi{y}|P_N]}{\partial P_{N}}\bigg |_{P_N=P}(P_N-P)
+(P_N-P)'\frac{\partial^2[\bi{y}|P_N]}{\partial P_{N}^2}\bigg |_{P_N=P^*}(P_N-P), 
\\
&\mbox{[ where $P^*$ lies between $P$ and $P_N$, that is,   } \| P-P^*\| \leq \|P-P_{N}\|.\quad ]
\end{align*}
Noting that $ [\bi{y}|P_N] $ and $[\bi{y}|P_N] $ are  multivariate normal densities with mean  $P_N$ and $P$ 
respectively and variance $\sigma^2I$, where $I$ is the $n\times n$ identity matrix, we have 
\begin{align*}
 &\frac{\partial[\bi{y}|P_N]}{d P_{N}}\bigg |_{P_N=P}=[\bi{y}|P](\bi{y}-P)',\ \ \mbox{and}
\\
&\frac{\partial^2[\bi{y}|P_N]}{\partial P_{N}^2}\bigg |_{P_N=P^*}=[\bi{y}|P^*]I(\bi{y}-P^*)(\bi{y}-P^*)'I-[\bi{y}|P^*] I.
\end{align*}
Therefore (\ref{trunacation3}) becomes
\begin{align*}
& \int \left|[\bi{y}|P](\bi{y}-P)'(P_N-P) +(P_N-P)'[\bi{y}|P^*](\bi{y}-P^*)(\bi{y}-P^*)'(P_N-P)\right.\\
&\left. \quad\quad\quad\quad -(P_N-P)'[\bi{y}|P^*] I (P_N-P)\right|[P]d\Theta.
\end{align*}

So, we have 
\begin{align*}
 &\quad \int _{\mathbb{R}^n} \left|m_N(\bi{y})-m_\infty(\bi{y})\right|d\bi{y}
\\
&\leq \int_ {\mathbb{R}^n}\int \left|[\bi{y}|P](\bi{y}-P)'(P_N-P) 
+(P_N-P)'[\bi{y}|P^*](\bi{y}-P^*)(\bi{y}-P^*)'(P_N-P)\right.\\
&\left.\quad\quad\quad\quad-(P_N-P)'[\bi{y}|P^*] I (P_N-P)\right|[P]d\Theta.
\end{align*}
Now, using Fubini's theorem we can interchange the order of integration.
Finally we have 
\begin{align}
&\quad \int_ {\mathbb{R}^n}\int \left|[\bi{y}|P](\bi{y}-P)'(P_N-P) 
+(P_N-P)'[\bi{y}|P^*](\bi{y}-P^*)(\bi{y}-P^*)'(P_N-P)\right.\notag\\
&\left.\quad\quad\quad\quad -(P_N-P)'[\bi{y}|P^*] I (P_N-P)\right|[P]d\Theta d{\bi{y}}\notag
\\
&=\int\int_ {\mathbb{R}^n} \left|[\bi{y}|P](\bi{y}-P)'(P_N-P) 
+(P_N-P)'[\bi{y}|P^*](\bi{y}-P^*)(\bi{y}-P^*)'(P_N-P)\right.\notag\\
&\left.\quad\quad\quad\quad-(P_N-P)'[\bi{y}|P^*] I (P_N-P)\right|d{\bi{y}}[P]d\Theta \notag
\\
&\leq \int\left[\int_ {\mathbb{R}^n}[\bi{y}|P]\left|(\bi{y}-P)'(P_N-P)\right|d{\bi{y}}\right.\notag\\
&\left.\quad\quad\quad\quad+\int_ {\mathbb{R}^n}(P_N-P)'\left\{[\bi{y}|P^*](\bi{y}-P^*)(\bi{y}-P^*)'+[\bi{y}|P^*] I\right\} (P_N-P)d{\bi{y}}\right][P]d\Theta\notag
\\
&= \int\left[\sqrt{\frac{2}{\pi}}\bi{1}_{n\times 1}'|P_N-P|+2(P_N-P)'I(P_N-P)\right][P]d\Theta\notag
\\
&=E_\Theta\left[\sqrt{\frac{2}{\pi}}\bi{1}_{n\times 1}'|P_N-P|+2(P_N-P)'(P_N-P)\right] \notag
\\
&=2E_\Theta\left[\sqrt{\frac{2}{\pi}}\sum_{i=1}^n|P_N(\bx_{i})-P(\bx_{i})|+\sum_{i=1}^n(P_N(\bx_{i})-P(\bx_{i}))^2\right]. 
\label{truncation4}
\end{align}
In the above, $\bi{1}_{n\times 1}$ denotes the $n$-component vector with each element 1. 
%
%Now, since $\underset{\btheta}{\sup} K(x_{i},\btheta)\leq M $ for $i=1,\ldots,n$,
%we have $$(P_N(x_{i})-P(x_{i}))\leq M(p_N^N-p_N^\infty-\sum_{i=N+1}^\infty p_{i})$$
%where $ p_N^N=1-\sum_{i=1}^{N-1}p_i(\bx_i)$ and $p_N^\infty$ are the random weights corresponding to the N-th coefficient 
%in the Sethuraman construction of truncated ODDP and the original ODDP respectively.
%
Now,
\begin{align*}
 &\quad \left|P_N(\bx_{i})-P(\bx_{i})\right|\leq  M\left|p_N^N(\bx_{i})-p_N^\infty(\bx_{i})
 -\sum_{i=N+1}^\infty p_{i}(\bx_{i})\right|,
\end{align*}
where $ p_N^N(\bx_i)=1-\sum_{k=1}^{N-1}p_k(\bx_i)
=1-\sum_{k=1}^{N-1}V_{\pi_k(\bx_i)}\prod_{j<k}\left(1-V_{\pi_j(\bx_i)}\right)$ and 
$p_N^\infty(\bx_i)=V_{\pi_N(\bx_i)}\prod_{j<N}\left(1-V_{\pi_j(\bx_i)}\right)$ are the random
weights corresponding to the N-th coefficient 
in the Sethuraman construction of truncated ODDP and the original ODDP respectively.
We also have 
\begin{align*}
 & \left(P_N(\bx_{i})-P(\bx_{i})\right)^2
\\
&\leq  M^2\left(p_N^N(\bx_{i})-p_N^\infty(\bx_{i})-\sum_{i=N+1}^\infty p_{i}(\bx_{i})\right)^2
\\
&=M^2 \left[\left(p_N^N(\bx_{i})-p_N^\infty(\bx_i)\right)^2+\left(\sum_{i=N+1}^\infty p_{i}(\bx_i)\right)^2
+2\left(p_N^N(\bx_i)-p_N^\infty(\bx_i)\right)\left(\sum_{i=N}^\infty p_{i}(\bx_i)\right)\right].
\end{align*}
Therefore 
\begin{align*}
 & E_\Theta\left[\sum_{i=1}^n \left|P_N(\bx_{i})-P(\bx_{i})\right|\right] \notag
\\
&\leq \sum_{i=1}^n E_{\Theta}M \left[\left|p_N^N(\bx_{i})-p_N^\infty(\bx_i)\right |
+\sum_{i=N+1}^\infty p_{i}(\bx_i)\right],\notag
\end{align*}
and
\begin{align}
& E_\Theta\left[\sum_{i=1}^n\left(P_N(\bx_{i})-P(\bx_{i})\right)^2\right] \notag
\\
&\leq \sum_{i=1}^n E_{\Theta}M^2 \left[\left(p_N^N(\bx_{i})-p_N^\infty(\bx_i)\right)^2
+\left(\sum_{i=N+1}^\infty p_{i}(\bx_i)\right)^2+2\left(p_N^N(\bx_i)-p_N^\infty(\bx_i)\right)
\left(\sum_{i=N+1}^\infty p_{i}(\bx_i)\right)\right]\notag
\end{align}

Now,
\begin{align}
 E\left(|p_N^N(\bx_{i})-p_N^\infty(\bx_i)|\right)&=E\left[(1-V_{\pi_{1}})(1-V_{\pi_{2}})\cdots(1-V_{\pi_{N}})\right] \notag
\\
&=\left(\frac{\alpha}{\alpha+1}\right)^N \mbox{ Since $V_{i}$'s are $iid$ $Beta(1,\alpha)$ random variables}. 
\label{IJ11}
\end{align}

Similarly, we have 
\begin{align}
 E(p_N^N(\bx_{i})-p_N^\infty(\bx_i))^2&=E\left[(1-V_{\pi_{1}})(1-V_{\pi_{2}})\cdots(1-V_{\pi_{N}})\right]^2 \notag
\\
&=\left(\frac{\alpha}{\alpha+2}\right)^N. \label{IJ1_2}
\end{align}

Now, using  Theorem \ref{Theorem:truncation1} we have 
\begin{equation}
\label{equation:IJ2}
 E_{\Theta}\left(\sum_{i=N+1}^\infty p_{i}(\bx_i)\right)=\left(\frac{\alpha}{\alpha+1}\right)^N,
\end{equation}
and 
\begin{equation}
\label{equation:IJ22}
 E_{\Theta}\left(\sum_{i=N+1}^\infty p_{i}(\bx_i)\right)^2=\left(\frac{\alpha}{\alpha+2}\right)^N.
\end{equation}

Now,
\begin{align}
&\quad E_{\Theta}\left[2\left(p_N^N(x_i)-p_N^\infty(x_i)\right)\left(\sum_{i=N+1}^\infty p_{i}(x_i)\right)\right]\notag
\\
&=2E(1-V_{\pi_{1}})(1-V_{\pi_{2}})\cdots(1-V_{\pi_{N}}) \sum_{i=N+1}^\infty (1-V_{\pi_{1}})(1-V_{\pi_{2}})
\cdots(1-V_{\pi_{N}})\cdots (1-V_{\pi_{i-1}})V_{\pi_{i}}\notag
\\
&=2\left(\frac{\alpha}{\alpha+2}\right)^N \sum_{i=N}^\infty \left(\frac{\alpha}{\alpha+1}\right)^{i-N-1} \left(\frac{1}{\alpha+1}\right)\notag
\\
&=2\left(\frac{\alpha}{\alpha+2}\right)^N.\label{IJ3}
\end{align}
So, combining (\ref{IJ1_2}), (\ref{equation:IJ2}), (\ref{equation:IJ22}) and (\ref{IJ3}) we have
\begin{align}
&\sum_{i=1}^n E_{\Theta}M^2 \left[\left(p_N^N(\bx_{i})-p_N^\infty(\bx_i)\right)^2
+\left(\sum_{i=N+1}^\infty p_{i}(\bx_i)\right)^2+2\left(p_N^N(\bx_i)-p_N^\infty(\bx_i)\right)
\left(\sum_{i=N+1}^\infty p_{i}(\bx_i)\right)\right]\notag
\\
&=4M^2n \left(\frac{\alpha}{\alpha+2}\right)^N. \label {IJ4}
\end{align}
Therefore, combining (\ref{IJ11}) and (\ref{equation:IJ22}) with (\ref{IJ4}) finally we have 
\[
\int _{\mathbb{R}^n} \left|m_N(\bi{y})-m_\infty(\bi{y})\right|d\bi{y} 
\leq 4M^2n \left(\frac{\alpha}{\alpha+2}\right)^N + 2\sqrt{\frac{2}{\pi}}Mn\left(\frac{\alpha}{\alpha+1}\right)^N,
\]
thus completing the proof.
%\end{proof}
\hfill$\blacksquare$

\section{Transdimensional transformation based Markov chain Monte Carlo (TTMCMC)}
\label{sec:ttmcmc}

In order to obtain a valid algorithm based on transformations, \ctn{Dutta14} design appropriate ``move
types" so that detailed balance and irreducibility hold. We first illustrate the basic idea on transformation
based moves with a simple example. Given that we are in the current state $x$, we may
propose the ``forward move" $x'=x+\e$, where
$\e>0$ is a simulation from some arbitrary density $g(\cdot)$ which is supported on the positive part
of the real line. To move back to $x$ from $x'$, we need to apply the ``backward transformation" $x'-\e$.
In general, given $\e$
and the current state $x$, we shall denote the forward transformation by $T(x,\e)$, and
the backward transformation by $T^b(x,\e)$.
For fixed $\e$ the forward and the backward transformations must be one to one and onto,
and satisfy $T^b(T(x,\e),\e)=x=T(T^b(x,\e),\e)$; see \ctn{Dutta14} for a detailed discussion
regarding these. 

The simple idea discussed above has been generalized to the multi-dimensional situation by
\ctn{Dutta14}. Remarkably, for any dimension, the moves can be constructed by simple deterministic
transformations of the one-dimensional random variable $\e$, which is simulated from any arbitrary
distribution on some relevant support. 
%We provide some examples of such moves in the next section
%after introducing some necessary notation.

The idea based on transformations has been generalized to the case of variable dimensionality
by \ctn{Das14}.  In other words, \ctn{Das14} show that using simple deterministic transformations
and a single $\e$ (or just a few $\e$'s) it is possible to devise an effective dimension-hopping
algorithm which changes dimension as well as updates the other parameters, all in a single block,
while maintaining, at the same time, high acceptance rate. In this sense this new methodology
accomplishes automation of move-types. \ctn{Das14} refer to this dimension changing methodology
as Transdimensional Transformation  based Markov Chain Monte Carlo (TTMCMC). 

Before we illustrate the key concept of TTMCMC with a simple example, it is necessary
to define some requisite notation, borrowed from \ctn{Dutta14}.

\newcommand{\z}{\zeta}
\subsection{Notation}
\label{subsec:notation}
Suppose now that $\statesp$ is a $k$-dimensional space of the form $\statesp = \prod_{i=1}^k \statesp_i$ 
so that $T = (T_1,\ldots,T_k)$ where each $T_i : \statesp_i \times \mathcal D \to \statesp_i$, for some 
set $\mathcal D$, are the component-wise transformations. 
%Then for each nonempty subset 
%$I$ of $\{1,\ldots,k\}$, let $T_{I}^b(\bm x,\pmb \e)$ be the backward transformations in 
%$I$--coordinates of the map $\bm x \longmapsto T(\bm x,\pmb \e)$, i.e. 
%$ T_I^b = (g_1,g_2,\ldots,g_k)$
Let $\bzeta=(\z_1,\ldots,\z_k)$ be a vector of indicator variables, where, 
for $i=1,\ldots,k$, 
$\z_i=1$ and $\z_i=-1$ indicate, respectively, application of forward transformation and  
backward transformation to $x_i$, and let $\z_i=0$ denote no change to $x_i$.
Given any such indicator vector $\bzeta$, let us define
$ T_{\bzeta} = (g_{1,\z_1},g_{2,\z_2},\ldots,g_{k,\z_k})$
where 
\[ g_{i,\z_i} = \left\{ \begin{array}{ccc}
                  T_{i}^b & \textrm{ if } & \z_i=-1 \\ 
		  x_i   & \textrm{ if } & \z_i=0 \\
		  T_{i} & \textrm{ if } & \z_i=1. 
                 \end{array}
\right.\]
Corresponding to any given $\bzeta$, we also define the following `conjugate' vector
$\bzeta^c=(\z^c_1,\z^c_2,\ldots,\z^c_k)$, where 
\[ \z^c_i = \left\{ \begin{array}{ccc}
                  1 & \textrm{ if } & \z_i=-1 \\ 
		  0   & \textrm{ if } & \z_i=0 \\
		  -1 & \textrm{ if } & \z_i=1. 
                 \end{array}
\right.\]
With this definition of $\bzeta^c$, $T_{\bzeta^c}$ can be interpreted as the conjugate of $T_{\bzeta}$.

Since $3^k$ values of $\bzeta$ are possible, it is clear that $T$, via $\bzeta$,
induces $3^k$ many types of `moves' of the forms $\{T_{\bzeta_i};i=1,\ldots,3^k\}$ on the state-space. 
Suppose now that there is a subset $\Y$ of $\D$ such that the sets $T_{\bzeta_i}(\bm x,\Y)$ and 
$T_{\bzeta_j}(\bm x,\Y)$ are disjoint for every $\bzeta_i \ne \bzeta_j$. In fact, $\Y$ denotes the support
of the distribution $g(\cdot)$ from which $\e$ is simulated.

\subsection{Illustration of TTMCMC with a simple example}
\label{subsec:ttmcmc_illustration}

Let us now illustrate the main idea of TTMCMC informally using the additive transformation. 
Although the example we illustrate TTMCMC with is borrowed from \ctn{Das14}, the algorithm we now present
is somewhat different from that of \ctn{Das14}. 
Assume that the current state is $\bm x=(x_1,x_2)\in\mathbb R^2$. We first randomly select
$u=(u_1,u_2,u_3)\sim Multinomial (w_b,w_d,w_{nc})$, where $w_b,w_d,w_{nc}~(>0)$ such that $w_b+w_d+w_{nc}=1$ are
the probabilities of birth, death, and no-change moves, respectively. That is, if $u_1=1$, then we increase
the dimensionality from 2 to 3; if $u_2=1$, then we decrease the dimensionality from 2 to 1, and if $u_3=1$, then
we keep the dimensionality unchanged. 
In the latter case, when the dimensionality is unchanged, the acceptance probability remains the same as in
TMCMC, as provided in Algorithm 3.1 of \ctn{Dutta14}. %\ref{algo:tmcmc}.

If $u_1=1$, we can increase the dimensionality by first selecting one of $x_1$ and $x_2$
with probability $1/2$ -- assuming for clarity that $x_1$ has been selected, we then
construct the move-type $T_{b,\bzeta}(\bm x,\e)=(x_1+a_1\e,x_1-a_1\e,x_2+\z_2a_2\e)$
$=(g_{1,\z_1=1}(x_1,\e),~g_{1,\z^c_1=-1}(x_1,\e),~g_{2,\z_2}(x_2,\e))$, say. 
Here, as in TMCMC, we draw $\e\sim g(\cdot)$, where $g(\cdot)$ is supported on the positive part
of the real line, and draw %$\bz=(z_1,z_2)$ where 
$\z_2=1$ with probability $p_2$ 
and $\z_2=-1$ with probability
$1-p_2$. Note that the value $\z_2=0$ is redundant for additive transformation (see \ctn{Dutta14} for the details) 
and so is omitted here. 
%Also, as before, $\bz^c=(z^c_1,z^c_2)$ is the conjugate of $\bz$, where 
%$z^c_i=1$ if $z_i=-1$ and $z^c_i=-1$ if $z_i=1$. 
We re-label $\bm x'=T_{b,\bzeta}(\bm x,\e)=(x_1+a_1\e,x_1-a_1\e,x_2+\z_2a_2\e)$ as $(x'_1,x'_2,x'_3)$.
Thus, $T_{b,\bzeta}(\bm x,\e)$ increases the dimension from 2 to 3. 

We accept this birth move with probability
\begin{align}
a_b(\bm x,\e)
&=\min\left\{1, \frac{w_d}{w_b}\times\frac{p^{I_{\{1\}}(\z^c_2)}_2q^{I_{\{-1\}}(\z^c_2)}_2}
{p^{I_{\{1\}}(\z_2)}_2q^{I_{\{-1\}}(\z_2)}_2}\right.\notag\\
&\left.\quad\quad\quad\quad\times\frac{\pi(x_1+a_1\e,~x_1-a_1\e,~x_2+\z_2a_2\e)}{\pi(x_1,x_2)}\times
\left|\frac{\partial (T_{b,\bzeta}(\bm x, \e))}{\partial(\bm x, \e)}\right|
\right\}.
\label{eq:acc_birth}
\end{align}
In (\ref{eq:acc_birth}),
\begin{align}
& \left|\frac{\partial (T_{b,\bzeta}(\bm x, \e))}{\partial(\bm x, \e)}\right|
 =\left|\frac{\partial (x_1+a_1\e,x_1-a_1\e,x_2+\z_2a_2\e)}{\partial(x_1,x_2, \e)}\right|
 =\left|\left(\begin{array}{ccc}
1 & 1 & 0 \\
0 & 0 & 1 \\
a_1 & -a_1 & \z_2a_2 \\
 \end{array}
 \right)\right|=2a_1.
 \label{eq:Jacobian_birth}
\end{align}

Now let us illustrate the problem of returning to $\bm =(x_1,x_2)~(\in\mathbb R^2)$ from 
$T_{b,\bzeta}(\bm x,\e)=(x_1+a_1\e,x_1-a_1\e,x_2+\z_2a_2\e)~(\in\mathbb R^3)$.
For our purpose, in this paper, we select one of the first two elements of $T_{b,\bzeta}(\bm x,\e)$ 
with the same probability. Suppose that we select $x_1+a_1\e$ with probability $1/2$. We then deterministically 
choose its right-adjacent $x_1-a_1\e$, and form the average $x^*_1=((x_1+a_1\e)+(x_1-a_1\e))/2=x_1$.
For non-additive transformations we can consider the averages of the backward moves of 
of the selected element and its right-adjacent. Even in this additive transformation example, 
after simulating $\e$ as before we can consider 
the respective backward moves of $x_1+a_1\e$ and $x_1-a_1\e$, both yielding $x_1$, and then take
the average denoted by $x^*_1$. 
For the remaining element $x_2+\z_2a_2\e$, we need to simulate
$\z^c_2$ and then consider the move $(x_2+\z_2a_2\e)+\z^c_2a_2\e=x_2$. Thus, we can return to $(x_1,x_2)$ using
this strategy.

Letting $\bm x'=(x'_1,x'_2,x'_3)=(x_1+a_1\e,x_1-a_1\e,x_2+\z_2a_2\e)$, and
denoting the average
involving the first two elements by $x^*_1$, 
the death move is then given by $\bm x''=T_{d,\bzeta}(\bm x',\e)=(x^*_1,x'_3+\z^c_2a_2\e)$
$=(\frac{x'_1+x'_2}{2},x'_3+\z^c_2a_2\e)$.
Now observe that for returning to $(x'_1,x'_2)$ from $x^*_1$,
we must have $x^*+a_1\e^*=x'_1$ and $x^*-a_1\e^*=x'_2$, which yield $\e^*=(x'_1-x'_2)/2a_1$.
Hence, the Jacobian associated with the death move in this case is given by
\begin{align}
 \left|\frac{\partial \left(T_{d,\bzeta}(\bm x', \e),\e^*,\e\right)}{\partial(\bm x', \e)}\right|
 = \left|\frac{\partial \left(\frac{x'_1+x'_2}{2},x'_3+\z^c_2a_2\e,\frac{x'_1-x'_2}{2a_1},\e\right)}
 {\partial(x'_1,x'_2,x'_3, \e)}\right|
 &=\left|\left(\begin{array}{cccc}
\frac{1}{2} & 0  & \frac{1}{2a_1} & 0\\
\frac{1}{2} & 0 & -\frac{1}{2a_1} & 0 \\
0 & 1 & 0 & 0 \\
0 & \z^c_2a_2 & 0 & 1\\
 \end{array}
 \right)\right|=\frac{1}{2a_1}.
 \label{eq:Jacobian_death}
\end{align}
We accept this death move with probability
 \begin{align}
 a_d(\bm x'', \e, \e^*) &= 
 \min\left\{1,\frac{w_b}{w_d}\times\dfrac{P(\bzeta^c)}{P(\bzeta)} ~\dfrac{\pi(\bm x'')}{\pi(\supr{\bm x'}{t})} 
 ~\left|\frac{\partial (T_{d,\bzeta}(\bm x', \e),\e^*,\e)}{\partial(\bm x',\e)}\right| \right\}\notag\\
 &= \min\left\{1,3\times\frac{w_b}{w_d}\times\dfrac{p^{I_{\{1\}}(\z^c_2)}_2q^{I_{\{-1\}}(\z^c_2)}_2}
 {p^{I_{\{1\}}(\z_2)}_2q^{I_{\{-1\}}(\z_2)}_2} \times\dfrac{\pi(\bm x'')}{\pi(\bm x')} 
 \times\frac{1}{2a_1}\right\}.\notag
% ~\left|\frac{\partial (T_{d,\bz}(\bm x', \e),\e^*,\e)}{\partial(\bm x',\e)}\right| \right)\notag\\
 \end{align}

In general, $\bx\in\mathbb R^{mk}$ may be of the form $(\bx_1,\bx_2,\ldots,\bx_m)$, where 
$\bx_{\ell}=(x_{\ell,1},x_{\ell,2},\ldots,x_{\ell,k})$ for $\ell=1,2,\ldots,m$, where $m\geq 1$ is an integer. 
Let us assume that 
if the dimension of any 
one $\bx_{\ell}$ is changed, then the dimensions of all other $\bx_{\ell'};~\ell'\neq \ell$ must also change accordingly. 
For instance, in our model,where we have summands  with unknown
number of components and the $i$-th component is characterized by the parameters associated with ODDP
 $(\btheta_{1i},\btheta_{2i},V_i,z_i)$, when the dimension of the current $k$-dimensional
 vector of the location parameter of the central distribution  $(\theta_{11},\ldots,\theta_{1k})$ 
 is increased by one, then one must simultaneously increase the dimension
of the other set of the current $k$-dimensional location parameter $(\theta_{21},\ldots,\theta_{2k})$, 
the $k$-dimensional vector of the associated point process $(z_1,\ldots,z_k)$, as well as the $k$-dimensional mass vector 
$(V_1,\ldots,V_k)$ by one. In Section \ref{subsec:ttmcmc3} 
we present a TTMCMC algorithm (Algorithm \ref{algo:ttmcmc3}) for situations of this kind, and show that detailed balance holds 
(irreducibility and aperiodicity hold by the same arguments provided in \ctn{Das14}). It is worth mentioning
that although \ctn{Das14} provide a TTMCMC algorithm for these situations (Algorithm 5.1 of their paper), 
their algorithm is somewhat different
from ours in that, for the death move, we select only one element randomly; then we choose the right-adjacent element;
take backward transformations of both of them, finally taking the average. On the other hand, \ctn{Das14} select
two elements randomly without replacement. This difference between the algorithm is reflected in the acceptance ratios -- 
our algorithm is slightly simpler in that the random selection probabilities do not appear in our acceptance ratio,
unlike that of \ctn{Das14}.

\subsection{General TTMCMC algorithm for jumping more than one dimensions at a time when several sets of
parameters are related}
\label{subsec:ttmcmc3}

\begin{algo}\label{algo:ttmcmc3} \topline General TTMCMC algorithm for jumping $m$ dimensions
with $m$ related sets of co-ordinates.
\botline \normalfont \ttfamily
\begin{itemize}
 \item Let the initial value be $\supr{\bm x}{0}\in\mathbb R^{mk}$, where $k\geq m$. 
 \item For $t=0,1,2,\ldots$
\begin{enumerate}
 \item Generate $u=(u_1,u_2,u_3)\sim Multinomial (1;w_{b,k},w_{d,k},w_{nc,k})$.
 \item If $u_1=1$ (increase dimension from $mk$ to $(m+1)k$), then
 \begin{enumerate}
 \item Randomly select one co-ordinate from $\bm x^{(t)}_1=(x^{(t)}_{11},\ldots,x^{(t)}_{1k})$ without replacement.
 %, that is,  
 %select  probability $1/k$ for each co-ordinate.
 Let $j$ denote the chosen co-ordinate.
 \item Generate $\be_m=(\e_1,\ldots,\e_m)\stackrel{iid}{\sim} g(\cdot)$ and for $i\in\{1,\ldots,k\}\backslash\{j\}$ simulate  
 $\z_{\ell,i} \sim Multinomial(1;p_{\ell,i},q_{\ell,i},1-p_{\ell,i}-q_{\ell,i})$ independently, for every $\ell=1,\ldots,m$. 
 %For $i=1,\ldots,k$, the probabilities $p_i$ and $q_i$ need to be specified;
 %see \ctn{Dutta14} for a detailed discussion regarding these choices.
 \item Propose the birth move as follows: for each $\ell=1,\ldots,m$, 
 apply the transformation $x^{(t)}_{\ell,i}\rightarrow g_{i,\z_{\ell,i}}(x^{(t)}_{\ell,i},\e_1)$ 
 for $i\in\{1,\ldots,k\}\backslash\{j\}$ and, for each $\ell\in \{1,\ldots,m\}$,
 split  $x^{(t)}_{\ell,j}$ into $g_{\ell,\z_{\ell,j}=1}(x^{(t)}_{\ell,j},\e_{\ell})$ and 
 $g_{\ell,\z^c_{\ell,j}=-1}(x^{(t)}_{\ell,j},\e_{\ell})$. In other words, let $\bx'=
 T_{b,\bzeta}(\supr{\bm x}{t}, \be_m)=(\bx'_1,\ldots,\bx'_m)$
 denote the complete birth move, where, for $\ell=1,\ldots,m$, $\bx'_{\ell}$ is given by
 \begin{align} 
 \bm x'_{\ell} & %= T_{b,\bz}(\supr{\bm x}{t}, \be_m)
 =(g_{\ell,\z_{\ell,1}}(x^{(t)}_{\ell,1},\e_1),\ldots,
 g_{j-1,\z_{\ell,j-1}}(x^{(t)}_{\ell,j-1},\e_1),\notag\\
& g_{j,{\z_{\ell,j}=1}}(x^{(t)}_{\ell,j},\e_{\ell}),g_{j,{\z^c_{\ell,j}=-1}}(x^{(t)}_{\ell,j},\e_{\ell}),
g_{j+1,\z_{\ell,j+1}}(x^{(t)}_{\ell,j+1},\e_1),\ldots,
\notag\\
&\ldots,
g_{k,\z_{\ell,k}}(x^{(t)}_{\ell,k},\e_1)).\notag
\end{align}
Re-label the $k+1$ elements of $\bm x'_{\ell}$ as $(x'_{\ell,1},x'_{\ell,2},\ldots,x'_{\ell,k+1})$.
\item Calculate the acceptance probability of the birth move $\bm x'$:
 \begin{align}
 a_b(\supr{\bm x}{t}, \be_m) &= 
 \min\left\{1, \frac{w_{d,k+1}}{w_{b,k}}\times\dfrac{P_{(j)}(\bzeta^c)}{P_{(j)}(\bzeta)} 
 ~\dfrac{\pi(\bm x')}{\pi(\supr{\bm x}{t})} 
 ~\left|\frac{\partial (T_{b,\bzeta}(\supr{\bm x}{t}, \be_m))}{\partial(\supr{\bm x}{t}, \be_m)}\right| \right\},\notag
 \end{align}
where 
 \[ 
P_{(j)}(\bzeta)=\prod_{\ell=1}^m\prod_{i\in\{1,\ldots,k\}\backslash\{j\}}
p^{I_{\{1\}}(\z_{\ell,i})}_{\ell,i}q^{I_{\{-1\}}(\z_{\ell,i})}_{\ell,i},
 \]
 and
 \[ 
P_{(j)}(\bzeta^c)=\prod_{\ell=1}^m\prod_{i\in\{1,\ldots,k\}\backslash\{j\}}
p^{I_{\{1\}(\z^c_{\ell,i})}}_{\ell,i}q^{I_{\{-1\}}(\z^c_{\ell,i})}_{\ell,i}.
 \]

\item Set \[ \supr{\bm x}{t+1}= \left\{\begin{array}{ccc}
 \bm x' & \mbox{ with probability } & a_b(\supr{\bm x}{t},\be_m) \\
 \supr{\bm x}{t}& \mbox{ with probability } & 1 - a_b(\supr{\bm x}{t},\be_m).
\end{array}\right.\]
\end{enumerate}
\item If $u_2=1$ (decrease dimension from $k$ to $k-m$, for $k\geq 2m$), then
 \begin{enumerate}
 \item Generate $\be_m=(\e_1,\ldots,\e_m)\stackrel{iid}{\sim} g(\cdot)$. 
 \item Randomly select one co-ordinate (say, the $j$-th co-ordinate)  
 from $\bx_1=(x_{1,1},\ldots,x_{1,k-1})$. For $\ell=1,\ldots,m$, 
 let $$x^*_{\ell,j}=\left(g_{j,\z^c_{\ell,j}=-1}(x_{\ell,j},\e_{\ell})
 +g_{j',\z_{\ell,j+1}=1}(x_{\ell,j+1},\e_{\ell})\right)/2;$$ 
 replace the co-ordinate $x_{\ell,j}$
 by the average $x^*_{\ell,j}$ and delete $x_{\ell,j+1}$.
 \item Simulate $\bzeta$ by generating independently, for $\ell=1,\ldots,m$ and for $i\in\{1,\ldots,k\}\backslash\{j,j+1\}$, 
 $\z_{\ell,i} \sim Multinomial(1;p_{\ell,i},q_{\ell,i},1-p_{\ell,i}-q_{\ell,i})$.
 \item For $\ell=1,\ldots,m$ and for $i\in\{1,\ldots,k\}\backslash\{j,j+1\}$, apply the transformation 
 $x'_{\ell,i}=g_{i,\z_{\ell,i}}(x^{(t)}_{\ell,i},\e_1)$.  
 \item Propose the following death move $\bx'=T_{d,\bzeta}(\supr{\bm x}{t}, \be_m)=(\bx'_1,\ldots,\bx'_m)$ 
 where for $\ell=1,\ldots,m$, $\bx_{\ell}$ is given by
 \begin{align} 
 \bm x'_{\ell} % &= T_{d,\bz}(\supr{\bm x}{t}, \be_m)\notag\\
 &=(g_{1,\z_{\ell,1}}(x^{(t)}_{\ell,1},\e_1),\ldots,
 g_{j-1,\z_{\ell,j-1}}(x^{(t)}_{\ell,j-1},\e_1),x^*_{\ell,j},g_{j+1,\z_{\ell,j+1}}(x^{(t)}_{\ell,j+1},\e_1),\notag\\
& \ldots,g_{k,\z_{\ell,k}}(x^{(t)}_{\ell,k},\e_1)).\notag
\end{align}
Re-label the elements of $\bm x'_{\ell}$ as $(x'_{\ell,1},x'_{\ell,2},\ldots,x'_{\ell,k-1})$.
 \item For $\ell=1,\ldots,m$, solve for $\e^*_{\ell}$ from the equations 
 $g_{\ell,\z_{\ell,j}=1}(x^*_{\ell,j},\e^*_{\ell})=x_{\ell,j}$ and 
 $g_{\ell,\z^c_{\ell,j}=-1}(x^*_{\ell,j},\e^*_{\ell})=x_{\ell,j+1}$
and express $\e^*_{\ell}$ in terms of $x_{\ell,j}$ and $x_{\ell,j+1}$.
Let $\be^*_m=(\e^*_1,\ldots,\e^*_m)$.
\item Calculate the acceptance probability of the death move:
 \begin{align}
 a_d(\supr{\bm x}{t}, \be_m, \be^*_m) &= 
 \min\left\{1,\frac{w_{b,k-m}}{w_{d,k}}\times\dfrac{P_{(j,j+1)}(\bzeta^c)}{P_{(j,j+1)}(\bzeta)} 
 ~\dfrac{\pi(\bm x')}{\pi(\supr{\bm x}{t})} 
 ~\left|\frac{\partial (T_{d,\bzeta}(\supr{\bm x}{t}, \be_m),\be^*_m,\be_m)}
 {\partial(\supr{\bm x}{t},\be_m)}\right| \right\},\notag
 \end{align}
 where
 \[ 
P_{(j,j+1)}(\bzeta)=\prod_{\ell=1}^m\prod_{i\in\{1,\ldots,k\}\backslash\{j,j+1\}}p^{I_{\{1\}}(\z_{\ell,i})}_{\ell,i}
q^{I_{\{-1\}}(\z_{\ell,i})}_{\ell,i},
 \]
 and
 \[ 
P_{(j,j+1)}(\bzeta^c)=\prod_{\ell=1}^m\prod_{i\in\{1,\ldots,k\}\backslash\{j,j+1\}}
p^{I_{\{1\}(\z^c_{\ell,i})}}_{\ell,i}q^{I_{\{-1\}}(\z^c_{\ell,i})}_{\ell,i}.
 \]
\item Set \[ \supr{\bm x}{t+1}= \left\{\begin{array}{ccc}
 \bm x' & \mbox{ with probability } & a_d(\supr{\bm x}{t},\be_m,\be^*_m) \\
 \supr{\bm x}{t}& \mbox{ with probability } & 1 - a_d(\supr{\bm x}{t},\be_m,\be^*_m).
\end{array}\right.\]
 \end{enumerate}
\item If $u_3=1$ (dimension remains unchanged), 
then implement steps (1), (2), (3) of Algorithm 3.1 
of \ctn{Dutta14}. %\ref{algo:tmcmc}.
\end{enumerate}
\item End for
\end{itemize}
\botline \rmfamily
\end{algo}

\subsection{Detailed balance}
\label{subsec:detailed_balance2}

To see that detailed balance is satisfied for the birth and death moves, note that associated with the birth move, 
the probability of transition $\bm x~(\in\mathbb R^k)\mapsto T_{b,\bz}(\bm x,\be_m)~(\in\mathbb R^{k+m})$, with
$k\geq m$, is given by:
\begin{align}
&\pi(\bm x)\times %\frac{1}{k}\times\frac{1}{k-1}\times\cdots\frac{1}{k-m+1}
\frac{1}{k}\times w_{b,k}\times \prod_{\ell=1}^mg(\e_{\ell})\times\prod_{\ell=1}^m\prod_{i\in\{1,\ldots,k\}\backslash\{j\}}
p^{I_{\{1\}}(\z_{\ell,i})}_{\ell,i}q^{I_{\{-1\}}(\z_{\ell,i})}_{\ell,i}\notag\\
&\times\min\left\{1, \frac{w_{d,k+m}}{w_{b,k}}\times
\frac{\prod_{\ell=1}^m\prod_{i\in\{1,\ldots,k\}\backslash\{j\}} 
p^{I_{\{1\}}(\z^c_{\ell,i})}_{\ell,i}q^{I_{\{-1\}}(\z^c_{\ell,i})}_{\ell,i}}
{\prod_{\ell=1}^m\prod_{i\in\{1,\ldots,k\}\backslash\{j\}}
p^{I_{\{1\}}(\z_{\ell,i})}_{\ell,i}q^{I_{\{-1\}}(\z_{\ell,i})}_{\ell,i}}\right.\notag\\
&\hspace{6cm}\left.
\times\frac{\pi(T_{b,\bzeta}(\bm x,\be_m))}{\pi(\bm x)}\times 
\left|\frac{\partial (T_{b,\bzeta}(\supr{\bm x}{t}, \be_m))}{\partial(\supr{\bm x}{t}, \be_m)}\right|\right\}\notag\\
&=\frac{1}{k}\times\prod_{i=1}^mg(\e_i)\times\min\left\{\pi(\bm x)\times w_{b,k}\times
%\frac{1}{k}\times\frac{1}{k-1}\times\cdots\frac{1}{k-m+1}\times w_b\times
\prod_{\ell=1}^m\prod_{i\in\{1,\ldots,k\}\backslash\{j\}}
p^{I_{\{1\}}(\z_{\ell,i})}_{\ell,i}q^{I_{\{-1\}}(\z_{\ell,i})}_{\ell,i},%\right.\notag\\
%&\quad\quad\quad\quad\left.
 %\frac{1}{k}\times\frac{1}{k-1}\times\cdots\frac{1}{k-m+1}\times\frac{1}{k+m}
\right.\notag\\
&\hspace{1cm}\left.\times w_{d,k+m}\times\prod_{\ell=1}^m\prod_{i\in\{1,\ldots,k\}\backslash\{j\}} 
p^{I_{\{1\}}(\z^c_{\ell,i})}_{\ell,i}q^{I_{\{-1\}}(\z^c_{\ell,i})}_{\ell,i}
\pi(T_{b,\bzeta}(\bm x,\be_m))\times \left|\frac{\partial (T_{b,\bzeta}(\supr{\bm x}{t}, \be_m))}{\partial(\supr{\bm x}{t}, \be_m)}\right|
\right\}.
\label{eq:db_birth2}
\end{align}
%where for integers $a>0$ and $r>0$ with $a>(r-1)$, we define $(a)_r = a\times (a-1)\times (a-r+1)$.
%
%Here we assume that $x_j$ was selected, and was split into $g_{j,z_j=1}(x_j,\e)$ and $g_{j,z^c_j=-1}(x_j,\e)$. 
%Hence, it is not necessary to simulate $z_j$.
%For the remaining co-ordinates we need to simulate $z_i;~i\neq j=1,\ldots,k$.

%At the reverse death move we must be able to return to $\bm x~(\in\mathbb R^k)$ from 
%$T_{b,\bz}(\bm x,\be_m)~(\in\mathbb R^{k+m})$.
%We select $g_{j,z_j=1}(x_j,\e)$ with probability $1/(k+1)$, then select $g_{j,z^c_j=-1}(x_j,\e)$,
%take their respective backward transformations and finally take the resultant average.
%Thus, although it is not necessary to simulate $z_j$ here, we must simulate $z^c_i;~i\neq j=1,\ldots,k$
%for the co-ordinates after re-labelling them appropriately to correspond to the remaining $(k+1)-2=k-1$ 
%co-ordinates and $z_i;~i\neq j=1,\ldots,k$, the latter simulated in the balancing birth move.

The transition probability of the reverse death move is given by:
\begin{align}
&\pi(\bm x)\times w_{d,k+m}\times \frac{1}{k}\times\prod_{\ell=1}^mg(\e_{\ell})\times
\prod_{\ell=1}^m\prod_{i\in\{1,\ldots,k\}\backslash\{j\}} 
p^{I_{\{1\}}(\z^c_{\ell,i})}_{\ell,i}q^{I_{\{-1\}}(\z^c_{\ell,i})}_{\ell,i}
\times \left|\frac{\partial (T^{-1}_{d,\bzeta}(\supr{\bm x}{t}, \be_m),\be^*_m,\be_m)}{\partial(\supr{\bm x}{t}, \be_m)}\right|
\notag\\
&\times\min\left\{1,\frac{w_{b,k}}{w_{d,k+m}}
\times\frac{\prod_{\ell=1}^m\prod_{i\in\{1,\ldots,k\}\backslash\{j\}}
p^{I_{\{1\}}(\z_{\ell,i})}_{\ell,i}q^{I_{\{-1\}}(\z_{\ell,i})}_{\ell,i}}
{\prod_{\ell=1}^m\prod_{i\in\{1,\ldots,k\}\backslash\{j\}} 
p^{I_{\{1\}}(\z^c_{\ell,i})}_{\ell,i}q^{I_{\{-1\}}(\z^c_{\ell,i})}_{\ell,i}}\right.\notag\\
&\hspace{4cm}\left.
\times\frac{\pi(\bm x)}{\pi(T_{b,\bzeta}(\bm x,\be_m))}\times
\left|\frac{\partial (T_{d,\bzeta}(\supr{\bm x}{t}, \be_m),\be^*_m,\be_m)}{\partial(\supr{\bm x}{t}, \be_m)}\right|\right\}
\notag\\
&=\frac{1}{k}\times\prod_{\ell=1}^m g(\e_{\ell})\times\min\left\{\pi(T_{b,\bzeta}(\bm x,\be_m))\times w_{d,k+m}\times
\prod_{\ell=1}^m\prod_{i\in\{1,\ldots,k\}\backslash\{j\}} 
p^{I_{\{1\}}(\z^c_{\ell,i})}_{\ell,i}q^{I_{\{-1\}}(\z^c_{\ell,i})}_{\ell,i}\right.\notag\\
&\left.\times \left|\frac{\partial (T^{-1}_{d,\bzeta}(\supr{\bm x}{t}, \be_m),\be^*_m,\be_m)}
{\partial(\supr{\bm x}{t}, \be_m)}\right|,%\right.\notag\\
%&\hspace{4cm} \left. 
w_{b,k}\times\prod_{\ell=1}^m\prod_{i\in\{1,\ldots,k\}\backslash\{j\}}
p^{I_{\{1\}}(\z_{\ell,i})}_{\ell,i}q^{I_{\{-1\}}(\z_{\ell,i})}_{\ell,i}\times\pi(\bm x)
\right\}.
\label{eq:db_death2}
\end{align}
Noting that $\left|\frac{\partial (T^{-1}_{d,\bzeta}(\supr{\bm x}{t}, \be_m),\be^*_m,\be_m)}{\partial(\supr{\bm x}{t}, \be^*_m,\be_m)}\right|
=\left|\frac{\partial (T_{b,\bzeta}(\supr{\bm x}{t}, \be_m))}{\partial(\supr{\bm x}{t}, \be_m)}\right|$, it follows that
(\ref{eq:db_birth2}) = (\ref{eq:db_death2}), showing that detailed balance holds for the birth and the death moves.
%The proof of detailed balance for move type where the dimension remains unchanged is the same as that of
%TMCMC, and has been been proved in the supplement of \ctn{Dutta14}.

\section{TTMCMC algorithm for our spatio-temporal model}
\label{sec:ttmcmc4}
We now specialize the general TTMCMC algorithm (Algorithm \ref{algo:ttmcmc3}) provided in Section \ref{subsec:ttmcmc3} 
in our spatio-temporal context. For our spatio temporal model, we need to update the variable dimensional mass parameter 
$\bV=(V_1,\ldots,V_k)$, the point process variables $\bz=(z_1,\ldots,z_k)$, 
location parameters $\btheta_1=(\theta_{11},\ldots,\theta_{1k})$,
the other set of location parameters $\btheta_2=(\theta_{21},\ldots,\theta_{2k})$, and fixed dimensional parameters $\alpha$, 
$\lambda$, error variance $\sigma$ and the parameters related to the kernel, namely, $\varphi,a_{\delta}$, $b_{\psi}$, 
$\left(\psi_{1}(\bs_1),\ldots,\psi_{1}(\bs_n)\right)$, $\left(\psi_{2}(\bs_1),\ldots,\psi_{2}(\bs_n)\right)$, 
$\left(\delta(t_1),\ldots,\delta(t_n)\right)$, and $\tau$. For updating the variable dimensional parameters 
we use  proposed TTMCMC algorithm, and for fixed dimension we use the TMCMC algorithm of \ctn{Dutta14}. 
We denote by $\bxi=(\bV,\bz,\btheta_1,\btheta_2)$ the collection of all variable dimensional parameters and 
by
$\boeta=(\varphi,a_{\delta},b_{\psi},\psi_{1}(\bs_1),\ldots,\psi_{1}(\bs_n),\\\psi_{2}(\bs_1),\ldots,\psi_{2}(\bs_n),
\delta(t_1),\ldots,\delta(t_n),\tau,\alpha,\lambda)$, the collection
of all fixed dimensional parameters. 
The detailed updating procedure is provided as Algorithm \ref{algo:ttmcmc4}. 
%{\bf\Large Moumita, Algorithm \ref{algo:ttmcmc3}-ta replace koro tomar algorithm diye.}

\begin{algo}\label{algo:ttmcmc4} \topline Detailed updating procedure of our spatio-temporal model
\botline \normalfont \ttfamily
\begin{itemize}
\item Initialise the number of components $k$; let $k^{(0)}$ be the chosen initial value 
(we chose $k^{(0)}=15$ as the initial value for our applications).
\item 
%Let the initial value be $\supr{\bm X}{0} =(\supr{\bm v}{0},\supr{\bm z}{0},\supr{\bm \btheta}{0}) \in [0,1]^K\times [a,b]^K\times \mathbb R^{2K}$ and $\supr{\bm Y}{0}=(\supr{a_{\delta}}{0},\supr{b_{\psi}}{0},\supr{\psi_{1}(\bs)}{0},\supr{\psi_{2}(\bs)}{0},\supr{\delta(t)}{0},\supr{\tau}{0},\supr{M}{0},\supr{\lambda}{0},\supr{\sigma}{0})$.
Given $k=k^{(0)}$, let $\bxi^{(0)}$ denote the initial value of $\bxi$. Also, let $\boeta^{(0)}$ denote
the initial value of $\boeta$.
\item Since $\bz$ and $\bV$ are constrained random variables, we consider updating 
the reparameterized versions $\bV^{*}=\log(\bV)$ 
and $\bz^{*}=\log\left(\frac{\bz-a}{b-a}\right)$. After every iteration we invert the transformations to
store the original variables $\bV$ and $\bz$.
For the sake of convenience of presentation of our algorithm we slightly abuse notation by referring to
$\bV^*$ and $\bz^*$ as $\bV$ and $\bz$ respectively.
\item For $t=0,1,2,\ldots$
\begin{enumerate}
 \item Generate $u=(u_1,u_2,u_3)\sim Multinomial \left(1;\frac{1}{3},\frac{1}{3},\frac{1}{3}\right)$.
 \item If $u_1=1$ (increase dimension from $k$ to $k+1$ for each of the variables $\bV,\bz,\btheta_1,\btheta_2$), then
 \begin{enumerate}
 \item Randomly select one co-ordinate from $\{1,\ldots,k\}$.
 %$\bm v^{(t)}=(v^{(t)}_{1},\ldots,v^{(t)}_{K})$, 
 %$\bm z^{(t)}=(z^{(t)}_{1},\ldots z^{(t)}_{K})$ and  ${\bm \theta^{(t)}}_{1}=({\theta^{(t)}}_{1,1},\ldots,{
 %\theta^{(t)}}_{1,K})$ and ${\bm \theta^{(t)}}_{2}=({\theta^{(t)}}_{2,1},\ldots,{\theta^{(t)}}_{2,K})$ without replacement.
 %, that is,  
 %select  probability $1/k$ for each co-ordinate.
 Let $j$ denote the chosen co-ordinate.
 \item  Generate $\be_5=(\e_1,\ldots,\e_5)\stackrel{iid}{\sim} N(0,1)\mathbb I_{\{\e>0\}}$ 
 ($\mathbb I_{\{\e>0\}}$ denoting the indicator function). %and \\
 For updating the variable dimensional parameters,  simulate  
 %Also simulate
 \[ \zeta^{(1)}_{\ell,i} = \left\{ \begin{array}{ccc}
                  1 &\mbox{ w.p. } & \frac{1}{2} \\ 
		  -1 & \mbox{ w.p. } & \frac{1}{2} 
                 \end{array}\right\}
\mbox{for } i\in\{1,\ldots,k\}\backslash\{j\}\\
\mbox{ and }~ \ell=1,\ldots,4,\] 
and for updating the fixed one dimensional parameters, simulate 
\[ \zeta^{(2)}_{\ell} = \left\{ \begin{array}{ccc}
                  1 &\mbox{ w.p. } & \frac{1}{2} \\ 
		  -1 & \mbox{ w.p. } & \frac{1}{2} 
                 \end{array}\right\}
\mbox{ for}~ \ell=5,\ldots,10.\]

For updating fixed multi-dimensional parameters, simulate
\[ \zeta^{(3)}_{\ell,i} = \left\{ \begin{array}{ccc}
                  1 &\mbox{ w.p. } & \frac{1}{2} \\ 
		  -1 & \mbox{ w.p. } & \frac{1}{2} 
                 \end{array}\right\}\
\mbox{for }~ i\in\{1,\ldots,n\}\\
\mbox{ and}~ \ell=11,12,13,14.\]

 \item Propose the birth move as follows. %for each variable dimensional vector ($\bv,\bz,\bTheta$) , 
 For $i\in\{1,\ldots,k\}\backslash\{j\}$, apply the additive transformation:\\
 $V^{(t)}_{i}\rightarrow (V^{(t)}_{i}+\zeta^{(1)}_{1,i}a_{1}\e_{1})$\\
$z^{(t)}_{i}\rightarrow (z^{(t)}_{i}+\zeta^{(1)}_{2,i}a_{2}\e_{2})$\\
$\theta^{(t)}_{1i}\rightarrow (\theta^{(t)}_{1i}+\zeta^{(1)}_{3,i}a_{3}\e_{3})$\\
$\theta^{(t)}_{2i}\rightarrow (\theta^{(t)}_{2i}+\zeta^{(1)}_{4,i}a_{4}\e_{4})$\\
%for $i\in\{1,\ldots,k\}\backslash\{j\}$\\ 
 and split:\\
 $V^{(t)}_{j}$ into $(V^{(t)}_{j}+a_{1}\e_{1})$ and
 $(V^{(t)}_{j}-a_{1}\e_{1})$\\
$z^{(t)}_{j}$ into $(z^{(t)}_{j}+a_{2}\e_{2})$ and 
 $(z^{(t)}_{j}-a_{2}\e_{2})$\\
$\theta^{(t)}_{1j}$ into $(\theta^{(t)}_{1j}+a_{3}\e_{3})$ and 
 $(\theta^{(t)}_{1j}-a_{3}\e_{3})$\\
$\theta^{(t)}_{2j}$ into $(\theta^{(t)}_{2j}+a_{4}\e_{4})$ and 
 $(\theta^{(t)}_{2j}-a_{4}\e_{4})$ \\
In other words, let $\bx'=
 T_{b,\zeta^{(1)}}(\supr{\bm x}{t}, \be_m)=(\bV',\bz',\btheta'_1,\btheta'_2)$
 denote the complete birth move, where,
 \begin{align} 
 \bm V' & %= T_{b,\bz}(\supr{\bm x}{t}, \be_m)
 =((V^{(t)}_{1}+\zeta^{(1)}_{1,1}a_{1}\e_{1}) \ldots( V^{(t)}_{j-1}+\zeta^{(1)}_{1,j-1}a_{1}\e_{1}),
 (V^{(t)}_{j}+a_{1}\e_{1}),(V^{(t)}_{j}-a_{1}\e_{1})  
\notag\\
&\ldots (V^{(t)}_{k}+\zeta^{(1)}_{1,k}a_{1}\e_{1}))
\notag\\
\end{align}
\begin{align} 
 \bm z' & %= T_{b,\bz}(\supr{\bm x}{t}, \be_m)
 =((z^{(t)}_{1}+\zeta^{(1)}_{2,1}a_{2}\e_{2}) \ldots( z^{(t)}_{j-1}+\zeta^{(1)}_{2,j-1}a_{2}\e_{2}),
 (z^{(t)}_{j}+a_{2}\e_{2}),(z^{(t)}_{j}-a_{2}\e_{2})  
\notag\\
&\ldots (z^{(t)}_{k}+\zeta^{(1)}_{2,k}a_{2}\e_{2}))
\notag\\
\end{align}
\begin{align} 
 \btheta'_{1} & %= T_{b,\bz}(\supr{\bm x}{t}, \be_m)
 =((\theta^{(t)}_{1,1}+\zeta^{(1)}_{3,1}a_{3}\e_{3}) \ldots( \theta^{(t)}_{1,j-1}+\zeta^{(1)}_{3,j-1}a_{3}\e_{3}),
 (\theta^{(t)}_{1,j}+a_{3}\e_{3}),(\theta^{(t)}_{1,j}-a_{3}\e_{3})  
\notag\\
&\ldots (\theta^{(t)}_{1,k}+\zeta^{(1)}_{3,k}a_{3}\e_{3}))
\notag\\
\end{align}

\begin{align} 
 \btheta'_{2} & %= T_{b,\bz}(\supr{\bm x}{t}, \be_m)
 =((\theta^{(t)}_{2,1}+\zeta^{(1)}_{4,1}a_{4}\e_{4}) \ldots( \theta^{(t)}_{2,j-1}+\zeta^{(1)}_{4,j-1}a_{4}\e_{4}),
 (\theta^{(t)}_{2,j}+a_{4}\e_{4}),(\theta^{(t)}_{2,j}-a_{4}\e_{4})  
\notag\\
&\ldots (\theta^{(t)}_{2,k}+\zeta^{(1)}_{4,k}a_{4}\e_{4}))
\notag\\
\end{align}

Re-label the $k+1$ elements of $\bm V'$ as $(V'_{1},V'_{2},\ldots,V'_{k+1})$,\\ 
$\bm z'$ as $(z'_{1},z'_{2},\ldots,z'_{k+1})$,
$\btheta'_{1}$ as $(\theta'_{1,1},\theta'_{1,2},\ldots,\theta'_{1,k+1})$, 
$\btheta'_{2}$ as $(\theta'_{2,1},\theta'_{2,2},\ldots,\theta'_{2,k+1})$.
\item 
 We apply the additive transformation based on the single $\e_5$ to update all the fixed dimensional parameter 
 $\boeta$ %=(a_{\delta},b_{\psi},\psi_{1}(\bs),\psi_{2}(\bs),\delta(t),\tau,\alpha,\lambda,\sigma)$ 
 as follows:
%Apply the trasfomation:

$\varphi^{(t)} \rightarrow ( \varphi^{(t)}+\zeta^{(2)}_{5}a_{5}\e_{5})$\\
$a^{(t)}_{\delta} \rightarrow ( a^{(t)}_{\delta}+\zeta^{(2)}_{6}a_{6}\e_{5})$\\
$b^{(t)}_{\psi} \rightarrow ( b^{(t)}_{\psi}+\zeta^{(2)}_{7}a_{7}\e_{5})$\\
$\alpha^{(t)} \rightarrow ( \alpha^{(t)}+\zeta^{(2)}_{8}a_{8}\e_{5})$\\
$\lambda^{(t)} \rightarrow ( \lambda^{(t)}+\zeta^{(2)}_{9}a_{9}\e_{5})$\\
$\tau^{(t)} \rightarrow ( \tau^{(t)}+\zeta^{(2)}_{10}a_{10}\e_{5})$\\
$\sigma^{(t)} \rightarrow ( \sigma^{(t)}+\zeta^{(2)}_{11}a_{11}\e_{5})$\\
$\psi^{(t)}_{1}(\bs_i) \rightarrow ( \psi^{(t)}_{1}(\bs_i)+\zeta^{(3)}_{12,i}a_{12}\e_{5})$\\
$\psi^{(t)}_{2}(\bs_i) \rightarrow ( \psi^{(t)}_{2}(\bs_i)+\zeta^{(3)}_{13,i}a_{13}\e_{5})$\\
$\delta^{(t)}(t_i) \rightarrow ( \delta^{(t)}(t_i)+\zeta^{(3)}_{14,i}a_{14}\e_{5})$\\

Let $\boeta'=
 T_{b,\zeta^{(2)}}(\supr{\boeta}{t},\e_5)=(\varphi',a'_{\delta},b'_{\psi},\psi'_{1}(\bs_1),\ldots,\psi'_{1}(\bs_n),
 \psi'_{2}(\bs_1),\ldots,\psi'_{2}(\bs_n),\\
 \delta'(t_1),\ldots,\delta'(t_n),\tau',\alpha',\lambda',\sigma')$
denote the complete move type for fixed dimensional parameters.\\

In the above transformations the $a_i$'s are the scaling constants to be chosen appropriately; see
\ctn{Das14} and \ctn{Dey14a} (see also \ctn{Dey14b}) for the details. 
In our applications we choose the scales on the basis of pilot runs of our TTMCMC algorithm. 

\item Calculate the acceptance probability:
 \begin{align}
 a_{b,\zeta}(\supr{\bxi}{t},\supr{\boeta}{t},\be_5) &= 
 \min\left\{1, \dfrac{\pi(\bxi',\boeta')}{\pi(\supr{\bxi}{t},\supr{\boeta}{t})} 
 ~\left|\frac{\partial (T_{b,\zeta^{(1)}}(\supr{\bxi}{t}, \be_{4}))}{\partial(\supr{\bxi}{t},\be_{4})}\right| 
~\left|\frac{\partial (T_{b,\zeta^{(2)}}(\supr{\boeta}{t}, \e_5))}{\partial(\supr{\boeta}{t},\e_5)}\right|\right\},\notag
 \end{align}
where 
 \[ 
\frac{\partial (T_{b,\zeta^{(1)}}(\supr{\bxi}{t}, \be_4))}{\partial(\supr{\bxi}{t},\be_4)}=2^4a_{1}a_{2}a_{3}a_{4}
 \]
 and
 \[ 
\frac{\partial (T_{b,\zeta^{(2)}}(\supr{\boeta}{t}, \e_5))}{\partial(\supr{\boeta}{t},e_5)}=1.
 \]

\item Set \[ (\supr{\bxi}{t+1},\supr{\boeta}{t+1})= \left\{\begin{array}{ccc}
 (\bxi',\boeta') & \mbox{ with probability } & a_{b,\zeta}(\supr{\bxi}{t},\supr{\boeta}{t},\be_5), \\
 (\supr{\bxi}{t},\supr{\boeta}{t})& \mbox{ with probability } & 1 - a_{b,\zeta}(\supr{\bxi}{t},\supr{\boeta}{t},\be_5).
\end{array}\right.\]
\end{enumerate}
\item If $u_2=1$ (decrease dimension from $k$ to $k-1$ for each of the variables $\bV,\bz,\btheta_1,\btheta_2$), then
 \begin{enumerate}
 \item Generate $\be_5=(\e_1,\ldots,\e_5)\stackrel{iid}{\sim} N(0,1)\mathbb I_{\{\e>0\}}$. 
 \item Randomly select one co-ordinate from $\{1,2,\ldots,k-1\}$. 
% $\bm v^{(t)}=(v^{(t)}_{1},\ldots,v^{(t)}_{K-1})$, 
% $\bm z^{(t)}=(z^{(t)}_{1},\ldots,z^{(t)}_{K-1})$ and  ${\bm \theta^{(t)}}_{1}=({\theta^{(t)}}_{1,1},\ldots,
% {\theta^{(t)}}_{1,K-1})$ and ${\bm \theta^{(t)}}_{2}=({\theta^{(t)}}_{2,1},\ldots,{\theta^{(t)}}_{2,K-1})$ 
% without replacement.
% (say, the $j$-th co-ordinate)  
Let $j$ be the selected co-ordinate.
Then let $$V^*_{j}=\left((V_{j}+a_1\e_1) +(V_{j+1}-a_1\e_1)\right)/2;$$ 
replace the co-ordinate $V_{j}$
by the average $V^*_{j}$ and delete $V_{j+1}$. 
Similarly, let
$$z^*_{j}=\left((z_{j}+a_2\e_2) +(z_{j+1}-a_2\e_2)\right)/2;$$ 
 replace the co-ordinate $z_{j}$
 by the average $z^*_{j}$ and delete $z_{j+1}$. Form
$$\theta^*_{1,j}=\left((\theta_{1,j}+a_3\e_3) +(\theta_{1,j+1}-a_3\e_3)\right)/2;$$ 
and replace the co-ordinate $\theta_{1,j}$
 by the average $\theta^*_{1,j}$ and delete $\theta_{1,j+1}$; 
create
$$\theta^*_{2,j}=\left((\theta_{2,j}+a_4\e_4) +(\theta_{2,j+1}-a_4\e_4)\right)/2;$$ 
and replace the co-ordinate $\theta_{2,j}$
 by the average $\theta^*_{2,j}$ and delete $\theta_{2,j+1}$. 

 \item Simulate $\bm\zeta$ similarly as in the case of the birth move.
 \item For the co-ordinates other than $j$ and $j+1$ apply the additive transformation
$V^{(t)}_{i}\rightarrow (V^{(t)}_{i}+\zeta^{(1)}_{1,i}a_{1}\e_{1})$\\
$z^{(t)}_{i}\rightarrow (z^{(t)}_{i}+\zeta^{(1)}_{2,i}a_{2}\e_{2})$\\
$\theta^{(t)}_{1i}\rightarrow (\theta^{(t)}_{1i}+\zeta^{(1)}_{3,i}a_{3}\e_{3})$\\
$\theta^{(t)}_{2i}\rightarrow (\theta^{(t)}_{2i}+\zeta^{(1)}_{4,i}a_{4}\e_{4})$\\
 for $i\in\{1,\ldots,k\}\backslash\{j,j+1\}$.
 \item 
 In other words, let $\bxi'=
 T_{d,\zeta^{(1)}}(\supr{\bxi}{t}, \be_4)=(\bV',\bz',\btheta'_1,\btheta'_2)$
 denote the complete death move, where,
 \begin{align} 
 \bm V' & %= T_{b,\bz}(\supr{\bm x}{t}, \be_m)
 =((V^{(t)}_{1}+\zeta^{(1)}_{1,1}a_{1}\e_{1}) \ldots( V^{(t)}_{j-1}+\zeta^{(1)}_{1,j-1}a_{1}\e_{1}),V^*_{j},
 ( V^{(t)}_{j+2}+\zeta^{(1)}_{1,j+2}a_{1}\e_{1}) 
\notag\\
&\ldots (V^{(t)}_{k}+\zeta^{(1)}_{1,k}a_{1}\e_{1}))
\notag\\
\end{align}
\begin{align} 
 \bm z' & %= T_{b,\bz}(\supr{\bm x}{t}, \be_m)
 =((z^{(t)}_{1}+\zeta^{(1)}_{2,1}a_{2}\e_{2}) \ldots( z^{(t)}_{j-1}+\zeta^{(1)}_{2,j-1}a_{2}\e_{2}),
 z^*_{j},( z^{(t)}_{j+21}+\zeta^{(1)}_{2,j+21}a_{2}\e_{2}),
\notag\\
&\ldots (z^{(t)}_{k}+\zeta^{(1)}_{2,k}a_{2}\e_{2}))
\notag\\
\end{align}
\begin{align} 
 \btheta'_{1} & %= T_{b,\bz}(\supr{\bm x}{t}, \be_m)
 =((\theta^{(t)}_{1,1}+\zeta^{(1)}_{3,1}a_{3}\e_{3}) \ldots( \theta^{(t)}_{1,j-1}+
 \zeta^{(1)}_{3,j-1}a_{3}\e_{3}),\theta^*_{1,j},( \theta^{(t)}_{1,j+2}+\zeta^{(1)}_{3,j+2}a_{3}\e_{3})  
\notag\\
&\ldots (\theta^{(t)}_{1,k}+\zeta^{(1)}_{3,k}a_{3}\e_{3}))
\notag\\
\end{align}

\begin{align} 
 \btheta'_{1} & %= T_{b,\bz}(\supr{\bm x}{t}, \be_m)
 =((\theta^{(t)}_{2,1}+\zeta^{(1)}_{4,1}a_{4}\e_{4}) \ldots( \theta^{(t)}_{2,j-1}+\zeta^{(1)}_{4,j-1}a_{4}\e_{4}),
 \theta^*_{2,j},( \theta^{(t)}_{2,j+2}+\zeta^{(1)}_{4,j+2}a_{4}\e_{4})  
\notag\\
&\ldots (\theta^{(t)}_{2,k}+\zeta^{(1)}_{4,k}a_{4}\e_{4}))
\notag\\
\end{align}

Re-label the $k-1$ elements of $\bm V'$ as $(V'_{1},V'_{2},\ldots,V'_{k-1})$,\\ 
$\bm z'$ as $(z'_{1},z'_{2},\ldots,z'_{k-1})$,
$\btheta'_{1}$ as $(\theta'_{1,1},\theta'_{1,2},\ldots,\theta'_{1,k-1})$, and\\ 
$\btheta'_{2}$ as $(\theta'_{2,1},\theta'_{2,2},\ldots,\theta'_{2,k-1})$.

 \item Solve for $\e^*_1$ from the equations 
 $V^*_{j}+a_1\e^*_1=V_{j}$ and 
 $V^*_{j}-a_1\e^*_1=V_{j+1}$, which yield
$\e^*_1=\frac{(V_{j}-V_{j+1})}{2a_1}$.
Similarly, we have $\e^*_2=\frac{(z_{j}-z_{j+1})}{2a_2}$
$\e^*_3=\frac{(\theta_{1,j}-\theta_{1,j+1})}{2a_3}$ and
$\e^*_4=\frac{(\theta_{2,j}-\theta_{2,j+1})}{2a_4}$.
Let $\be^*_4=(\e^*_1,\ldots,\e^*_4)$.

\item
For updating the fixed dimensional parameters \\
$\boeta=(\varphi,a_{\delta},b_{\psi},\psi_{1}(\bs_1),\ldots,\psi_{1}(\bs_n),\psi_{2}(\bs_1),\ldots,\psi_{2}(\bs_n),
\delta(t_1),\ldots,\delta(t_n),\tau,\alpha,\lambda,\sigma)$\\
implement step $2~(d)$. 

\item Calculate the acceptance probability of the death move:
 \begin{align}
& a_{d,\zeta}(\supr{\bxi}{t},\supr{\boeta}{t} ,\be_5, \be^*_4)\notag\\  
&= \min\left\{1, 
 ~\dfrac{\pi(\bxi',\boeta')}{\pi(\supr{\bxi}{t},\supr{\boeta}{t})} 
 ~\left|\frac{\partial (T_{d,\bm\zeta^{(1)}}(\supr{\bxi}{t}, \be_{4}),\be^*_{4},\be_{4})}
 {\partial(\supr{\bxi}{t},\be_{4})}\right| 
~\left|\frac{\partial (T_{d,\bm\zeta^{(2)}}(\supr{\boeta}{t}, \e_5))}{\partial(\supr{\boeta}{t},\e_5)}\right|\right\},\notag
 \end{align}
 where
 \[
  \left|\frac{\partial (T_{d,\bm\zeta^{(1)}}(\supr{\bxi}{t}, \be_{4}),\be^*_{4},\be_{4})}
 {\partial(\supr{\bxi}{t},\be_{4})}\right| =\frac{1}{2^4a_1a_2a_3a_4}
 \]
and
\[ 
\frac{\partial (T_{d,\bm\zeta^{(2)}}(\supr{\boeta}{t}, \e_5))}{\partial(\supr{\boeta}{t},\e_5)}=1.
 \]

\item Set \[ (\supr{\bxi}{t+1},\supr{\boeta}{t+1})= \left\{\begin{array}{ccc}
 (\bxi',\boeta') & \mbox{ with probability } & a_{d,\zeta}(\supr{\bxi}{t},\supr{\boeta}{t},\be_5, \be^*_4), \\
 (\supr{\bxi}{t},\supr{\boeta}{t})& \mbox{ with probability } 
 & 1 - a_{d,\zeta}(\supr{\bxi}{t},\supr{\boeta}{t},\be_5,\be^*_4).
\end{array}\right.\]
 \end{enumerate}
\item If $u_3=1$ (dimension remains unchanged),\\
then update $(\supr{\bxi}{t},\supr{\boeta}{t})$  
by implementing steps (1), (2), (3) of Algorithm 3.1 
of \ctn{Dutta14}. %\ref{algo:tmcmc}.
\end{enumerate}
\item End for
\end{itemize}
\botline \rmfamily
\end{algo}

\section{Simulation study}
\subsection{Algorithm for generating the synthetic data}
\label{subsec:algo_simdata}

We have performed the following steps to simulate a non stationary $95\times 1$ vector:
\begin{enumerate}
\item We first take a grid of size 100.
\item We generate one random number $t_i$ from each interval $(i-1,i];$ $i=1,\ldots, 100$ as 100 time points. 
We store the time points in a vector which we denote by $\bi{t}=(t(1),\ldots,t(100))'$.
\item Next we generate 100 random points of the form $\{\bs_i=(s(1,i),s(2,i));~i=1,\ldots,100\}$ 
from $[0, 50]\times [0,50]$ as locations. 
We store the locations in a $100\times 2$ matrix $\bi{S}=(\bs'_1,\ldots,\bs'_{100})'$.
\item Then we randomly choose 5 time points from $\bi{t}$ and 5 locations from $\bi{S}$ and 
omit these random points from $\bi{t}$ and $\bi{S}$. So, we obtain a new time vector of length 95, say $\bi{t}_{95}$ 
and a new matrix of locations of order $95 \times 2$, say $\bi{S}_{95}$. We store the omitted time points 
in a separate vector, $\bi{t}_{5}$, and the locations in a separate matrix, $\bi{S}_{5}$, for future use. 
\item Next we calculate the covariance matrix $\bi{A}$ = $(\bA(i,j))$ of order $100\times 100$ based on $\bi{t}$ and 
$\bi{S}$, where $(i,j)$-th element of $\bi{A}$ is given by
\begin{align*}
{\small 
\bA(i,j) = \begin{cases}
 1 & \mbox{ if } i=j \\[1ex]
  \exp \left(-0.5\sqrt{(t(i)-t(j))^2+(s(1,i)-s(1,j))^2+(s(2,i)-s(2,j))^2}\right)
 & \mbox{ if } i \neq j
\end{cases}
}
\end{align*}
\item We partition the above covariance function $\bi{A}$ consisting of four component matrices 
$\bi{A}_{11}, \bi{A}_{12}, \bi{A}_{21}$ $\mbox{ and } \bi{A}_{22}$, where $\bi{A}_{11}$ is a $5\times 5$  
covariance matrix based on $\bi{t}_5$ and $\bi{S}_{5}$; $\bi{A}_{22}$ is a $95\times 95$ covariance matrix based on 
$\bi{t}_{95}$ and $\bi{S}_{95}$ (the form of the $(i,j)$-th element being the same as for the matrix $\bi{A}$, 
except now $\bi{t}$, $\bi{S}$ are replaced with $\bi{t}_{5}$ and $\bi{S}_{5}$ for $\bi{A}_{11}$; for $\bi{A}_{22}$, 
$\bi{t}$, $\bi{S}$  are replaced with $\bi{t}_{95}$ and $\bi{S}_{95}$); $\bi{A}_{12}$ = $\bi{A}_{21}^{T}$ is a $5\times 95$ 
matrix, containing the covariances between the deleted points and existing points. The $(i,j)$-th element 
of $\bi{A}_{12}$ is given by %, which we denote by $\bi{A}_{12}(i,j)$, is given by
\[
{\small 
\bi{A}_{12}(i,j)=\left(-0.5\sqrt{({t}_{5}(i)-{t}_{95}(j))^2+({s}_{5}(1,i)-{s}_{95}(1,j))^2
+({s}_{5}(2,i)-{s}_{95}(2,j))^2}\right),
}
\]
for $i=1,\ldots 5$ and $j = 1, \ldots, 95$.
\item Next we generate one 5 dimensional random sample, $\bi{x}_{5}$, from a 5 variate normal distribution with mean function
\[
\bi{\mu}_{5}^{T} = \bi{D}_{5}\bi{\beta},
\]
and covariance matrix $\bi{A}_{11}$, where
$\bi{\beta}_{5}^{T}$ = $(0.1, 0.01, 0.02)$ and $\bi{D}_{5}$ = $(\bi{t}_{5}\vdots\bi{S}_5)$ is the design matrix. 
%obtained as row augmented matrix, where $\bi{t}_{5}^{T}$ augmented with $\bi{S}_{5}$. 
Note that $\bi{D}_{5}$ is a $5\times 3$ matrix.
\item Given $\bi{x}_{5}$ we simulate a $95\times 1$ random vector, $\bi{x}_{(95|5)}$ from a conditional 95 variate 
normal distribution with mean 
\[
\bi{\mu}_{(95|5)}^{T} = \bi{D}_{95}\bi{\beta} + (\bi{x}_{5}-\bmu_{5})^{T} \bi{A}_{11}^{-1} \bi{A}_{12}
\]
and covariance
\[
\bSigma_{(95|5)} = \bi{A}_{22} - \bi{A}_{21} \bi{A}_{11}^{-1} \bi{A}_{12},
\]
where $\bi{D}_{95}$ is obtained exactly same as $\bi{D}_{5}$, only $\bi{t}_{5}$ and $\bi{S}_{5}$ are replaced with 
$\bi{t}_{95}$ and $\bi{S}_{95}$, respectively.
\item The last step is to simulate a $95\times 1$ vector, $\bi{y}_{(95|\bi{x}_{(95|5)})}$, conditionally on 
$\bi{x}_{(95|5)}$. We simulate $\bi{y}_{(95|\bi{x}_{(95|5)})}$ from a 95 variate normal distribution with mean
\[
\bi{\mu}_{\bi{y}}^{T} = 0.01 (\bi{x}_{(95|5)})^{T}
\]
and covariance matrix
\[
\Sigma_{\bi{y}} = \begin{cases}
1 & \mbox{ if } i=j\\[1ex]
\mbox{exp}\left(-0.5\left(|{x}_{(95|5)}(i)-{x}_{(95|5)}(j)|\right)\right) & \mbox{ if } i\neq j,
\end{cases}
\]
for $i=1,\ldots 95$ and $j=1,\ldots 95$.  
\end{enumerate}

\begin{figure}[H]
%\begin{center}
\centering
\includegraphics[trim={0 0 0 0},clip, totalheight=0.25\textheight]{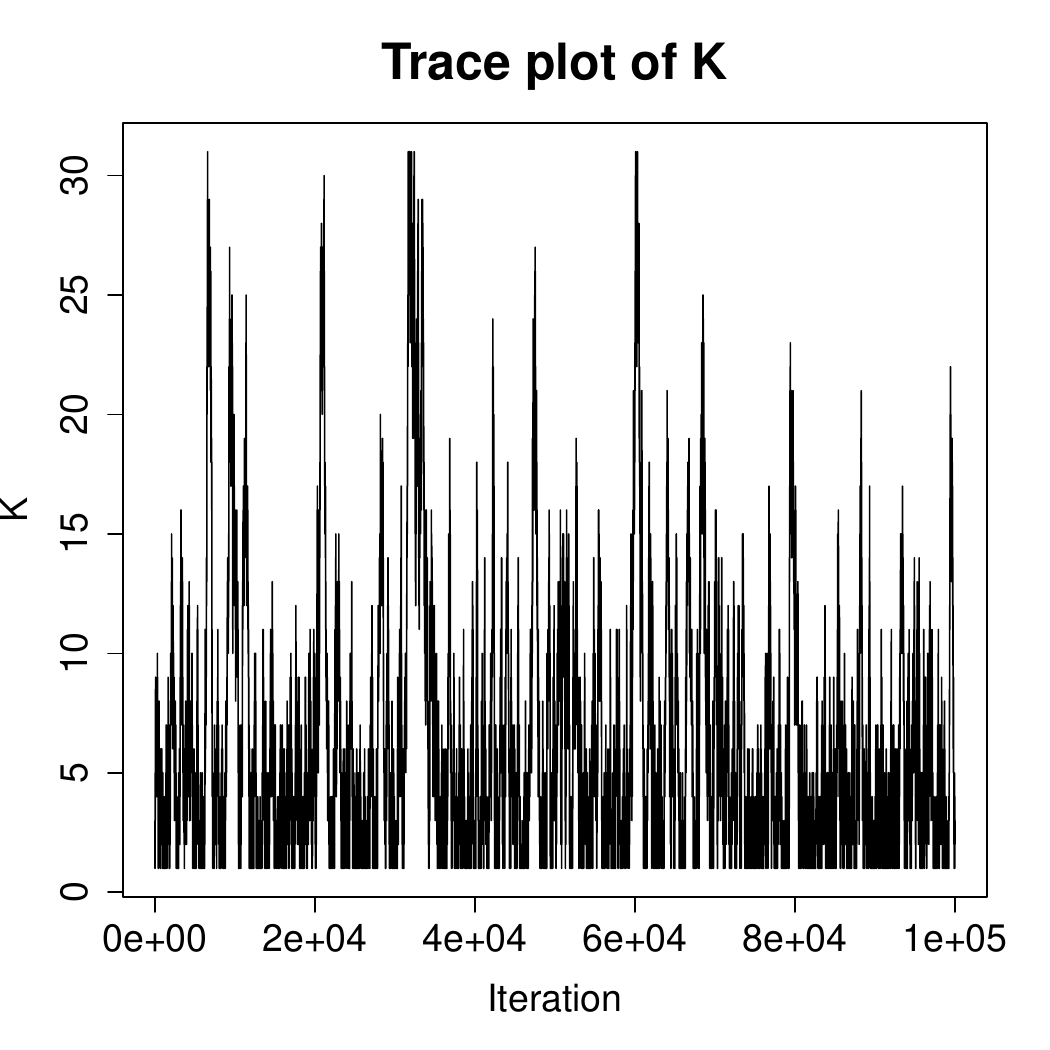}
\includegraphics[trim={0 0 0 0},clip, totalheight=0.25\textheight]{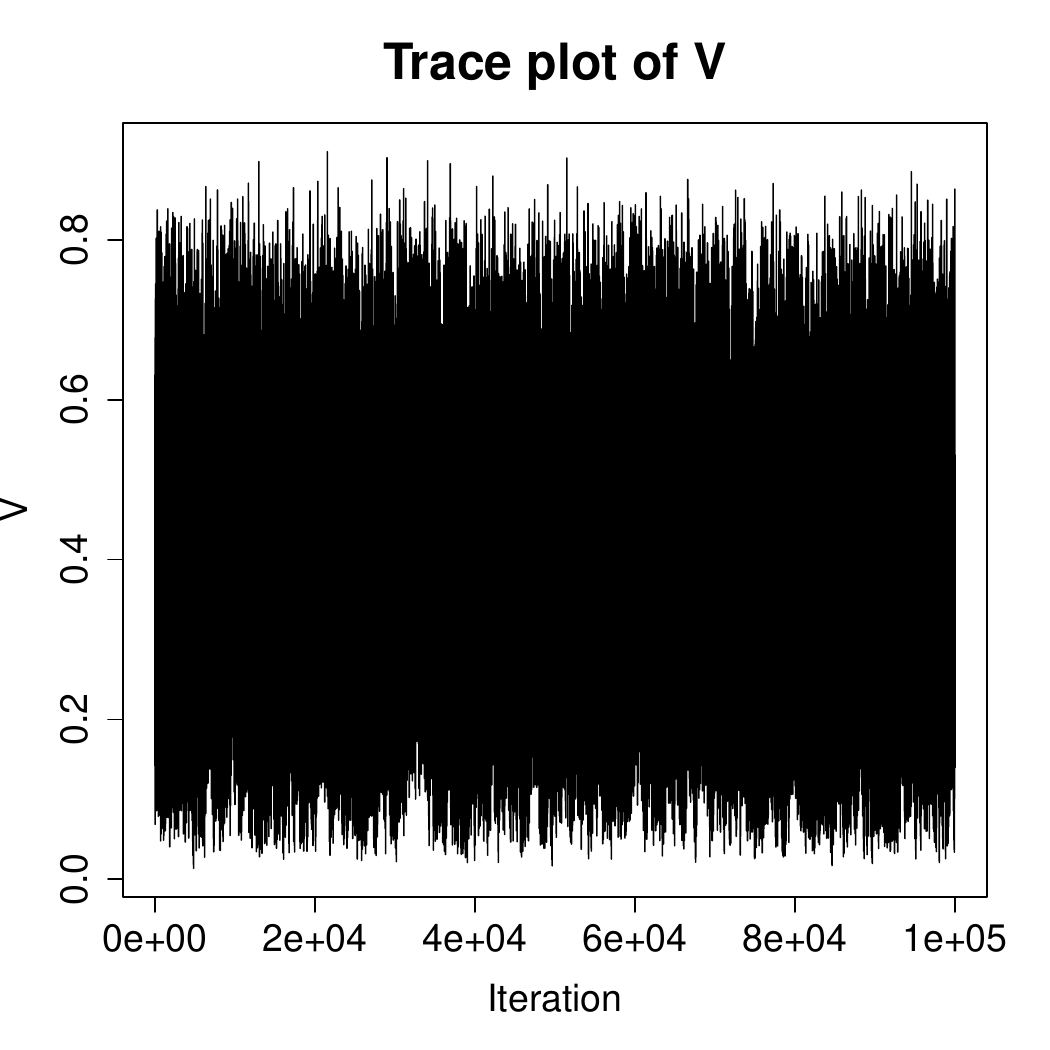}
\includegraphics[trim={0 0 0 0},clip, totalheight=0.25\textheight]{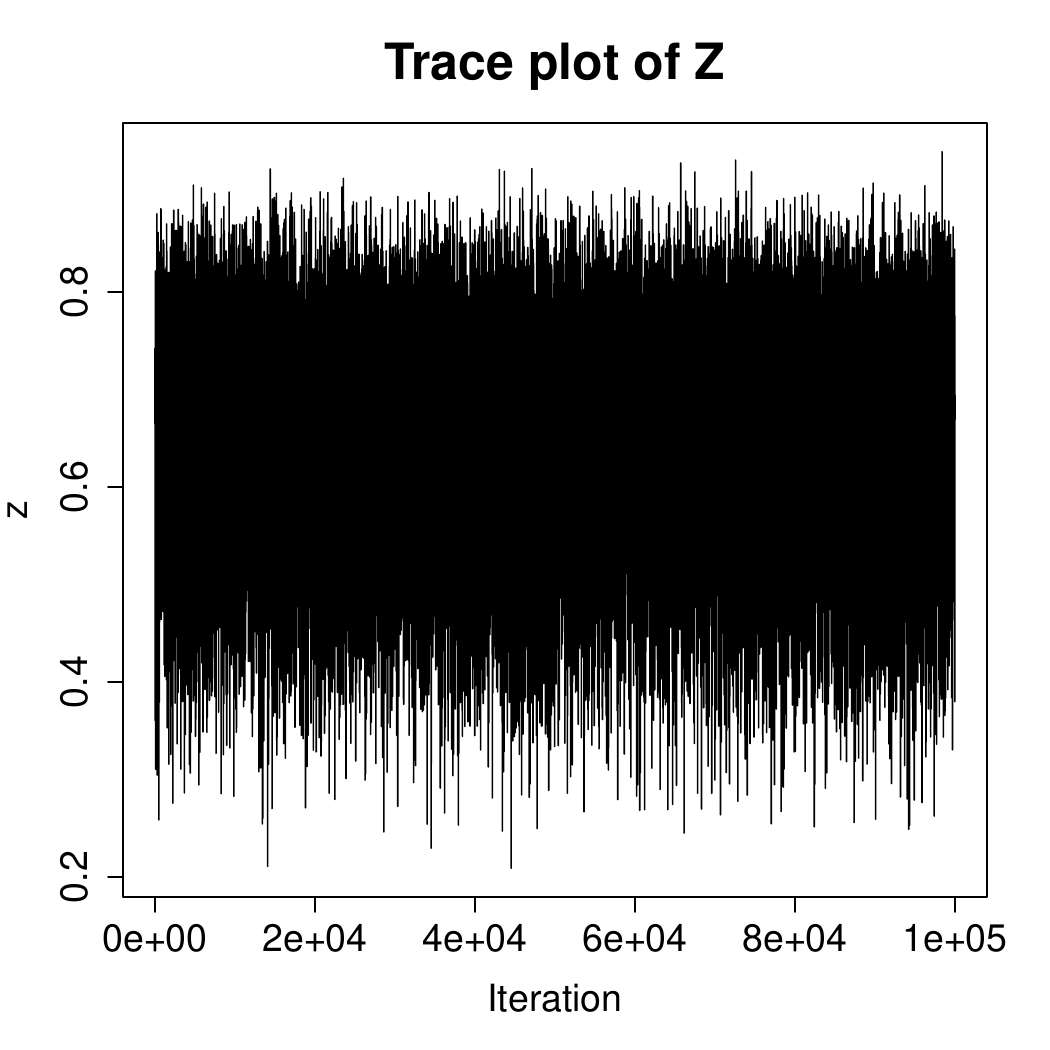}
\includegraphics[trim={0 0 0 0},clip, totalheight=0.25\textheight]{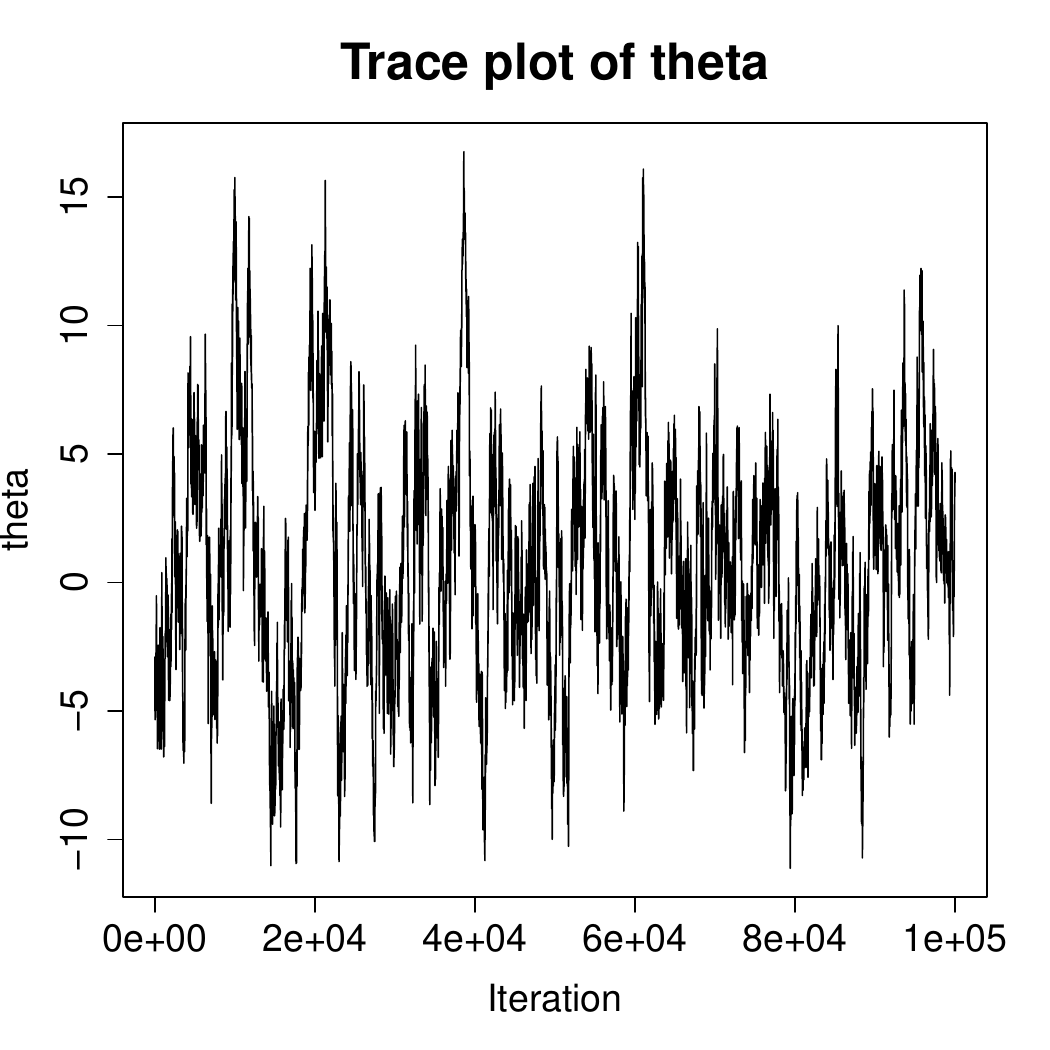}
\caption{{\bf Simulation study}: Traceplots of variable dimensional parameters.}
\label{fig:traceplotsimulation}
%\end{center}
\end{figure}

\section{Real  data analysis}

\subsection{Spatial Data}
\begin{figure}[H]
%\begin{center}
\centering
\includegraphics[trim={0 0 0 0},clip, totalheight=0.25\textheight]{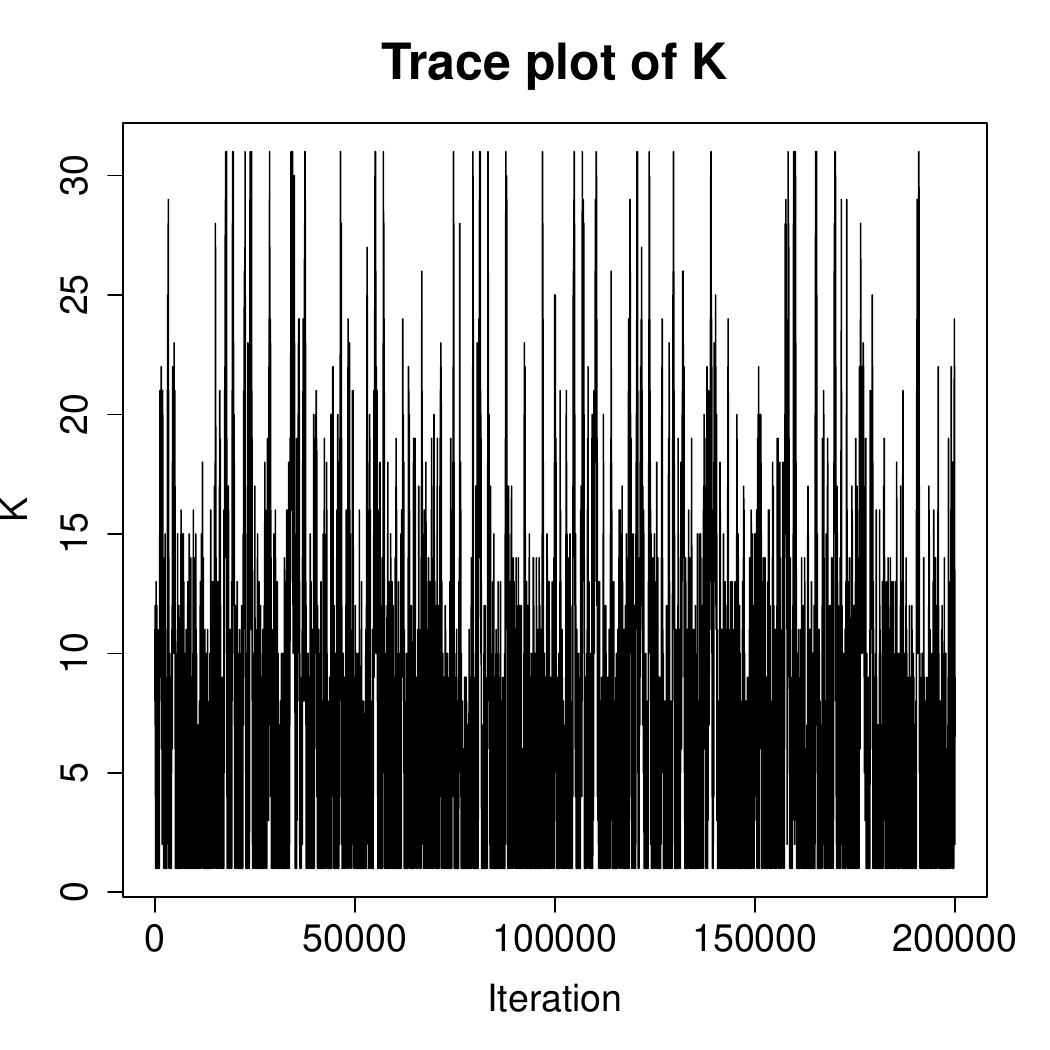}
\includegraphics[trim={0 0 0 0},clip, totalheight=0.25\textheight]{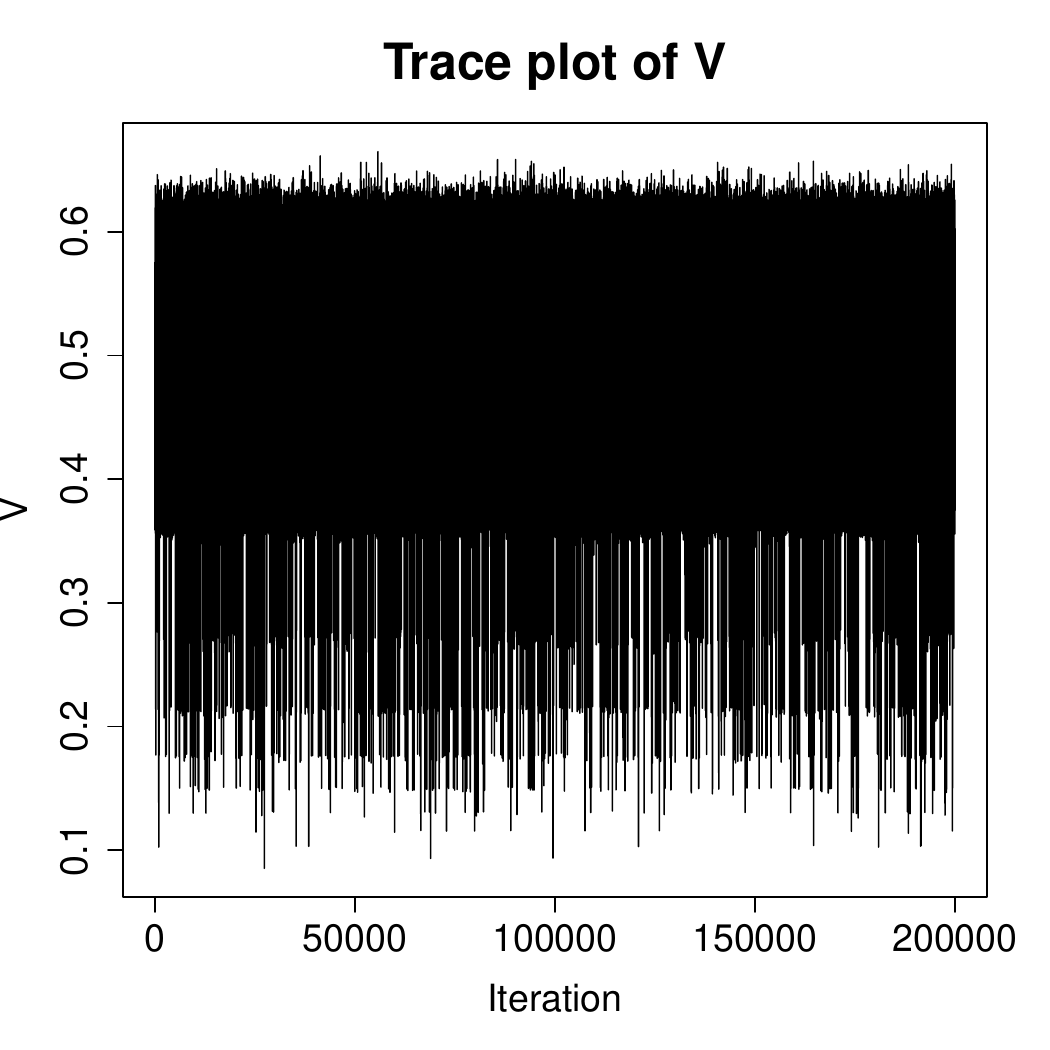}
\includegraphics[trim={0 0 0 0},clip, totalheight=0.25\textheight]{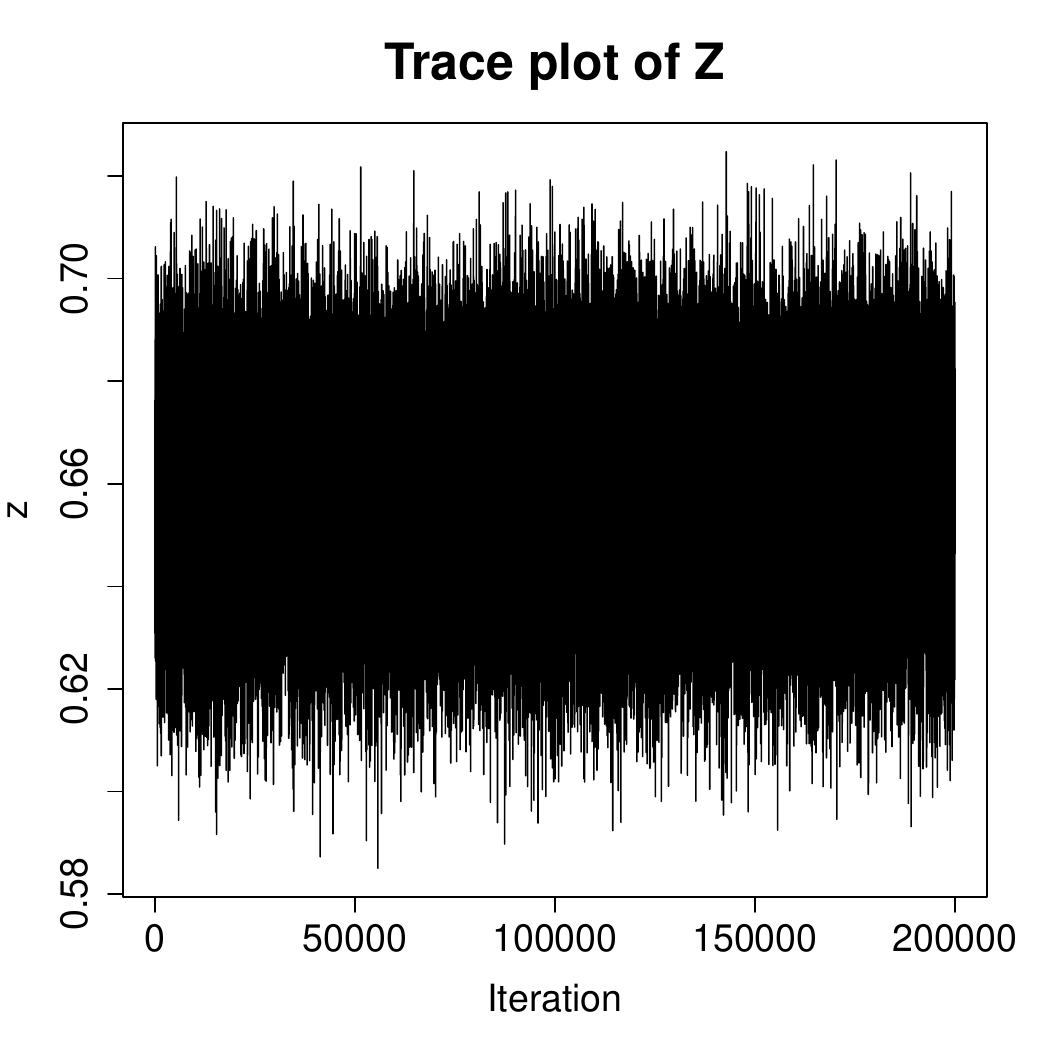}
\includegraphics[trim={0 0 0 0},clip, totalheight=0.25\textheight]{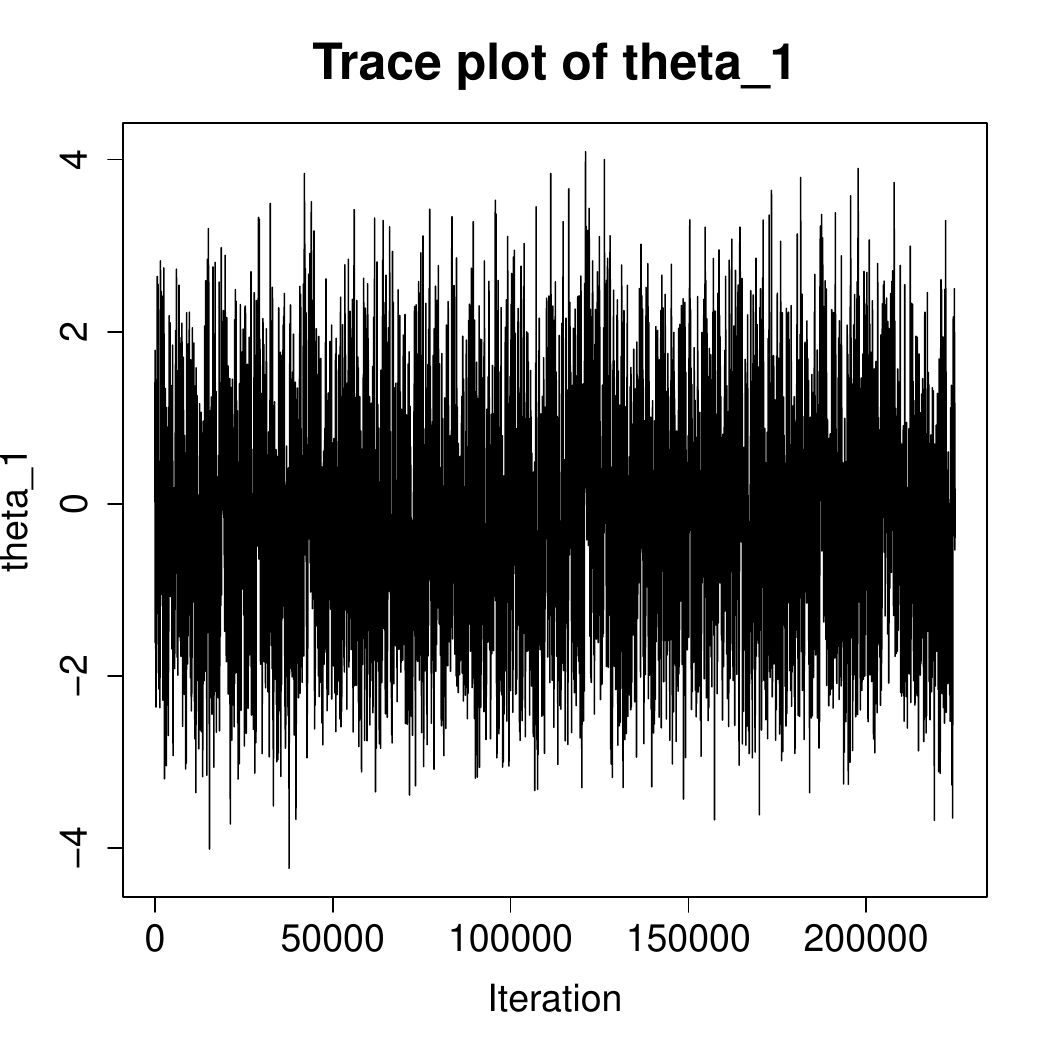}
\caption{{\bf Real spatial data analysis}: Traceplots of variable dimensional parameters.}
\label{fig:traceplot}
%\end{center}
\end{figure}

\begin{figure}[htb]
\begin{center}
\centering
\includegraphics[height=1.5in,width=1.5in]{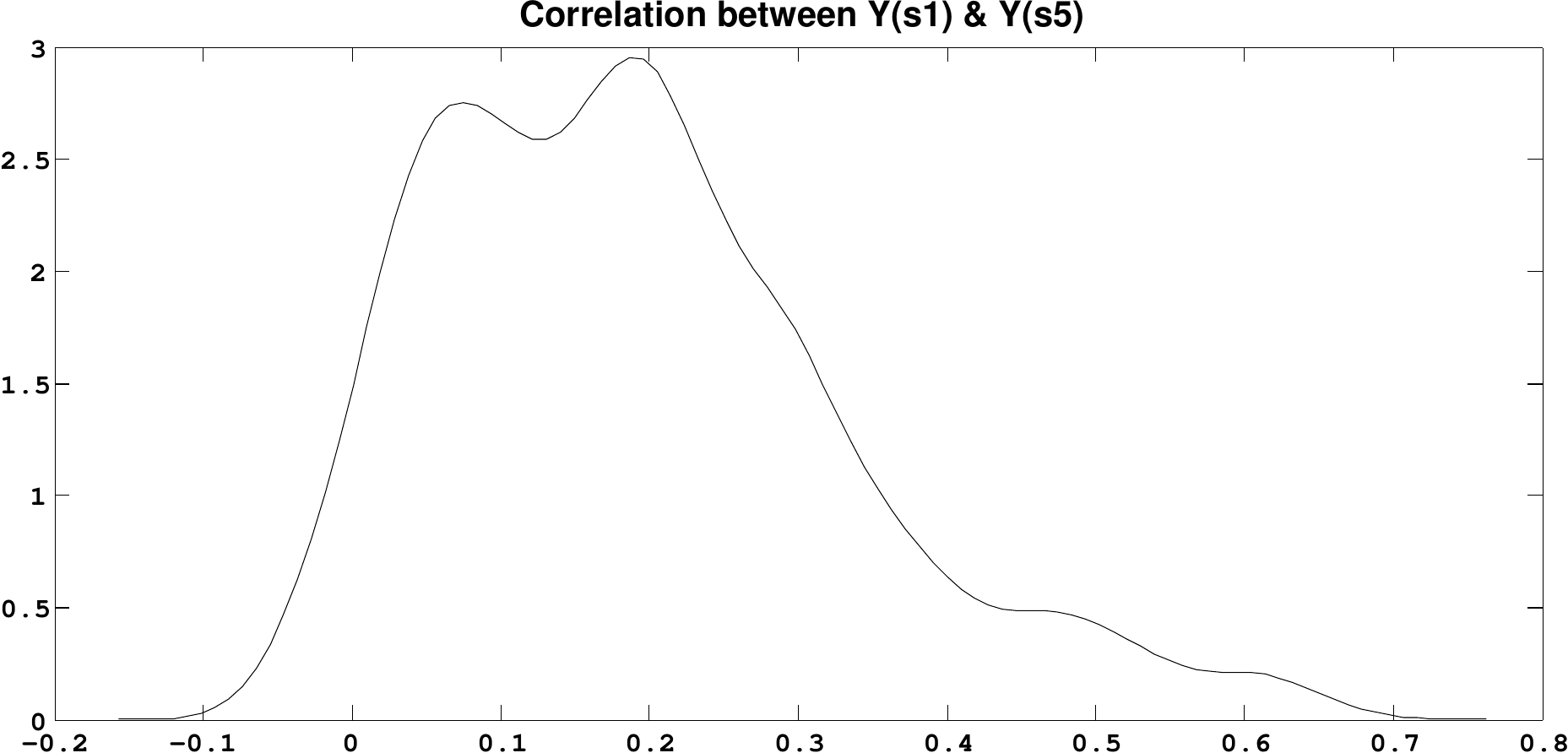}
\includegraphics[height=1.5in,width=1.5in]{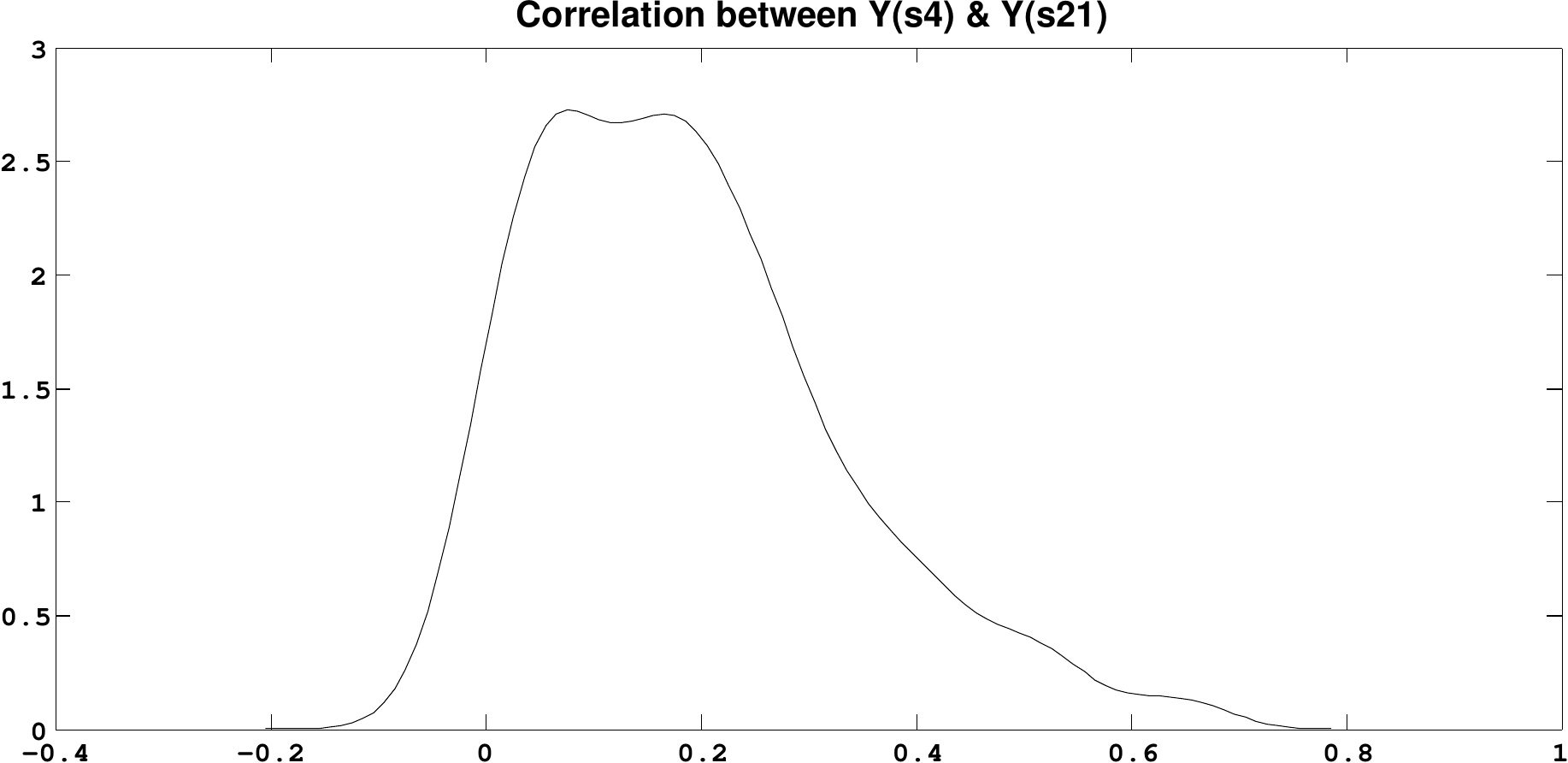}
\includegraphics[height=1.5in,width=1.5in]{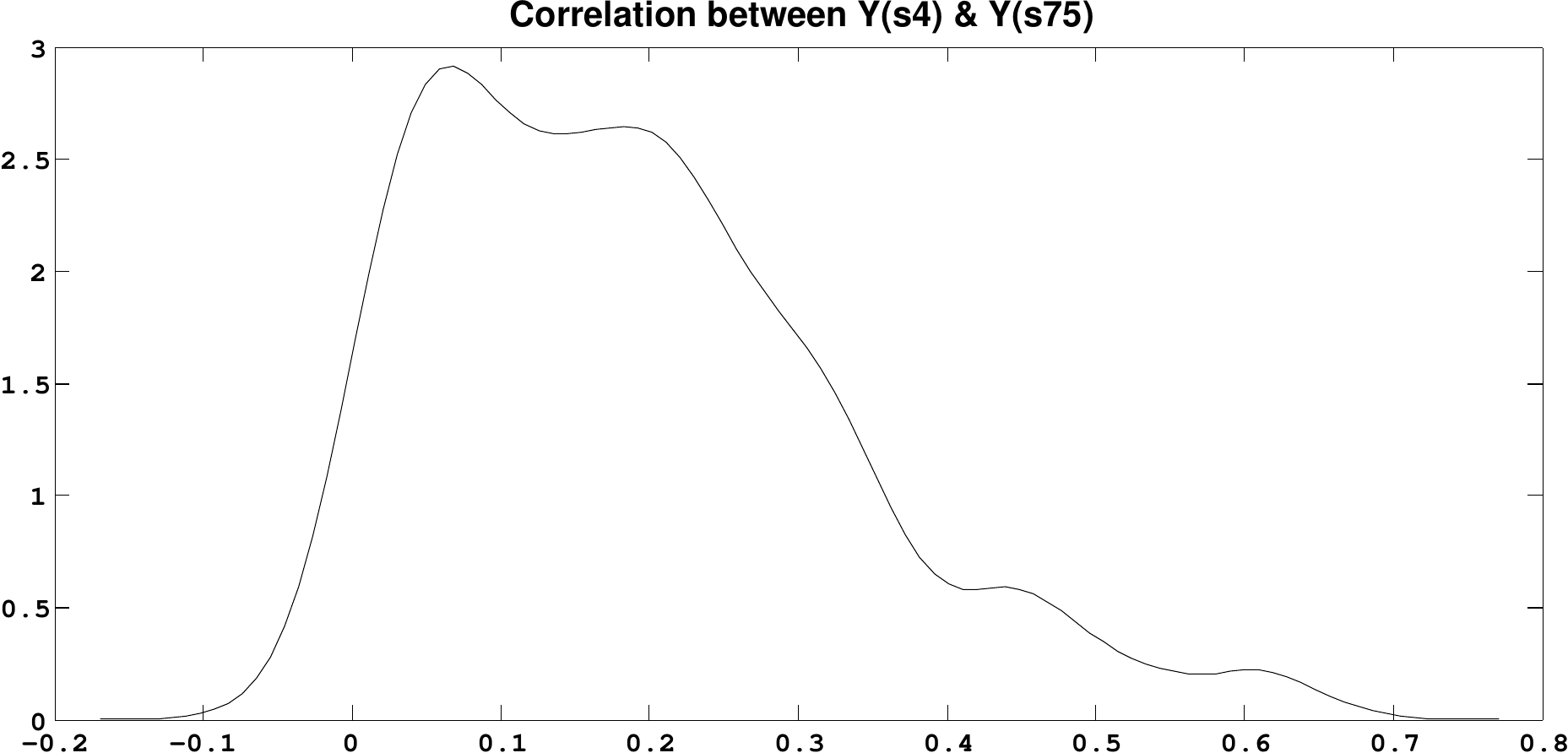}
\\[5mm]
\includegraphics[height=1.5in,width=1.5in]{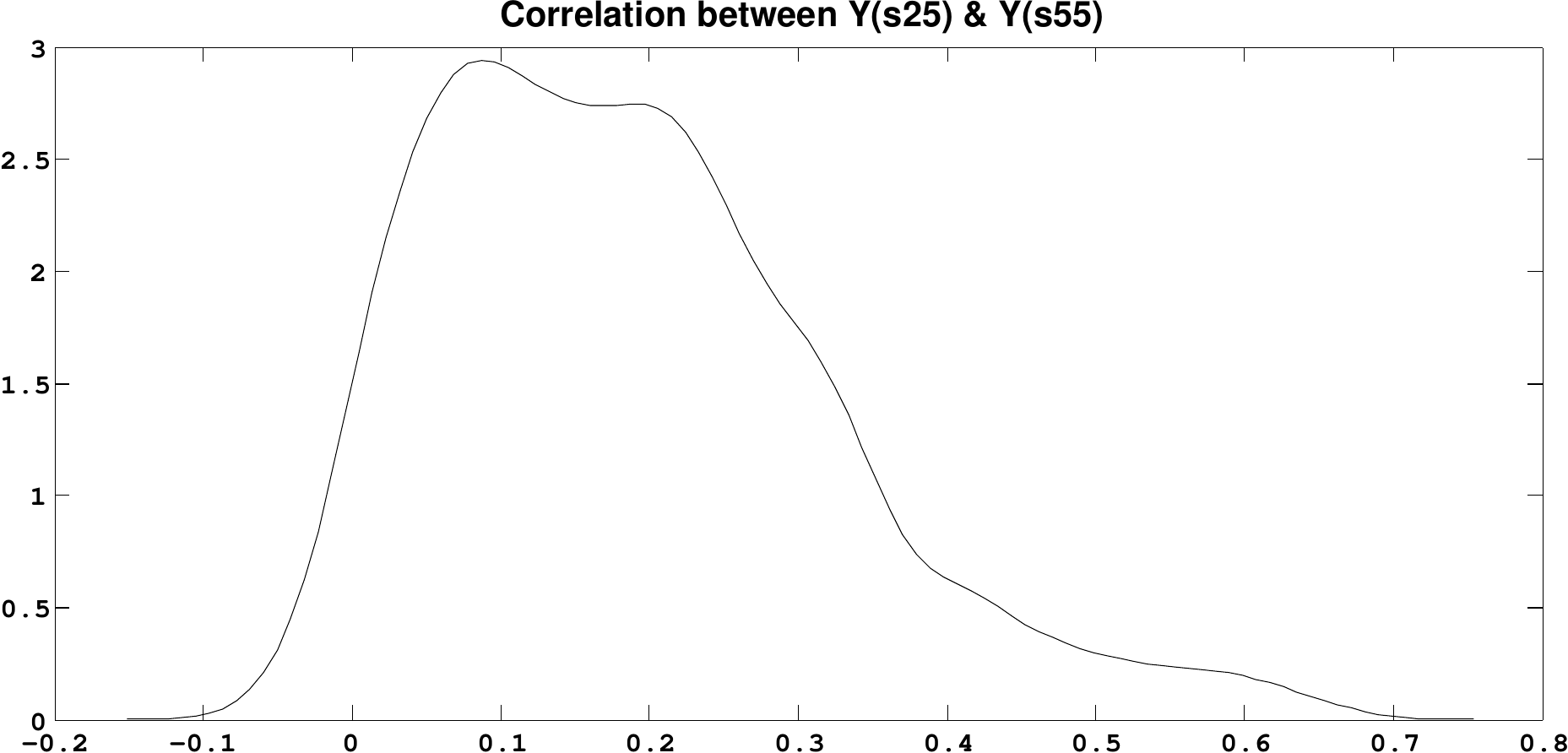}
\includegraphics[height=1.5in,width=1.5in]{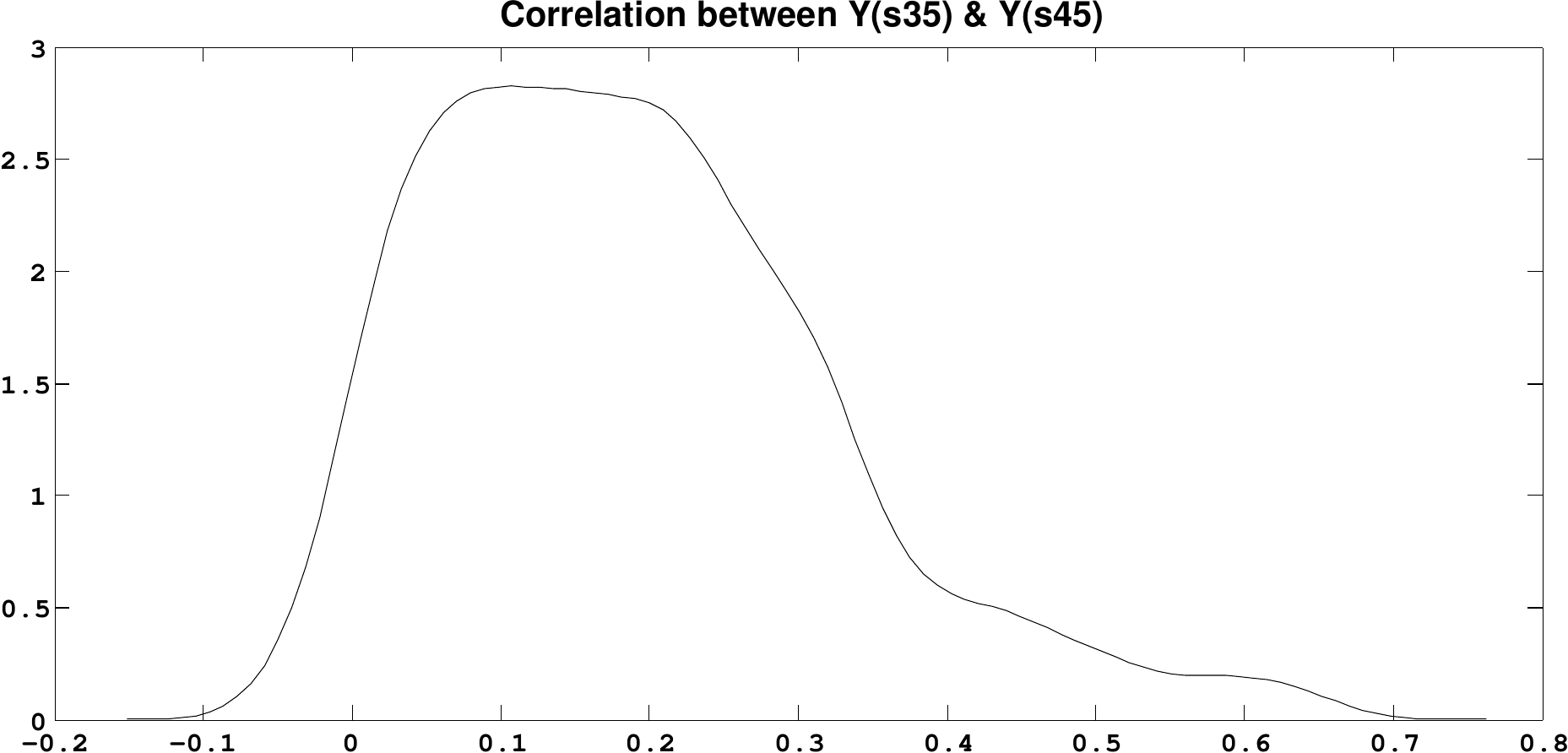}
\includegraphics[height=1.5in,width=1.5in]{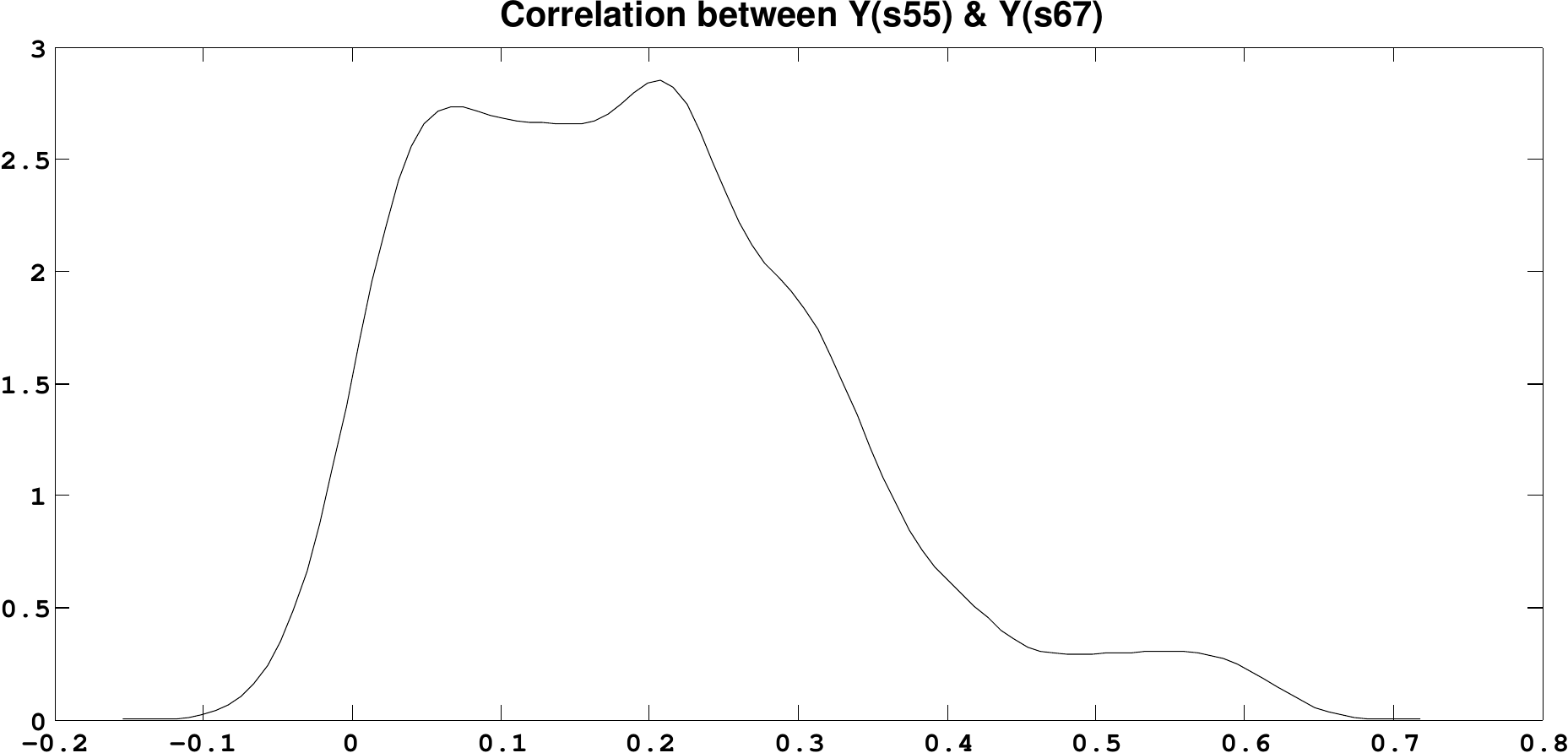}
\caption{{\bf Real spatial data analysis}: Posterior densities of correlations for 6 different pairs of locations.}
\label{fig:correlation_real}
\end{center}
\end{figure}

\subsection{Spatio-temporal Data}

\begin{figure}[H]
%\begin{center}
\centering
\includegraphics[trim={0 0 0 0},clip, totalheight=0.25\textheight]{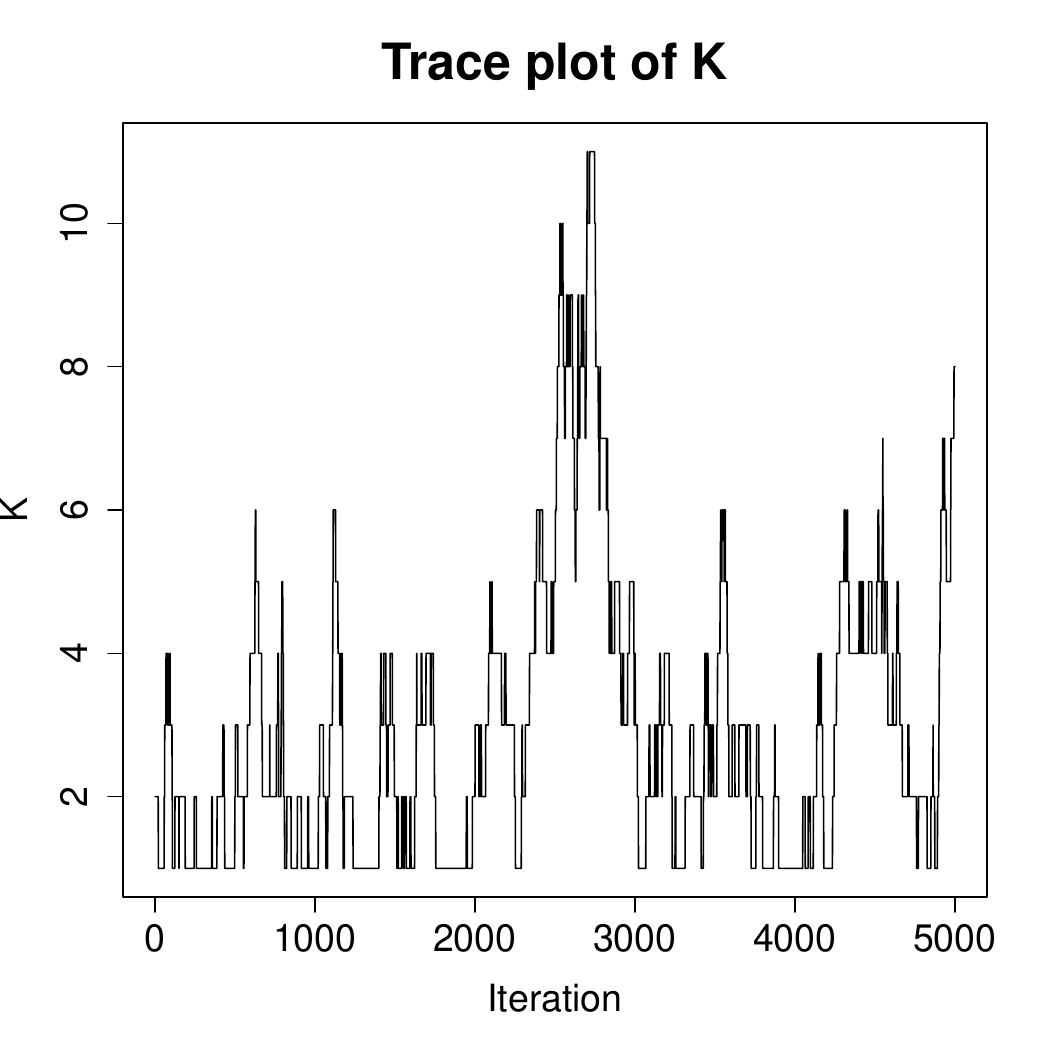}
\includegraphics[trim={0 0 0 0},clip, totalheight=0.25\textheight]{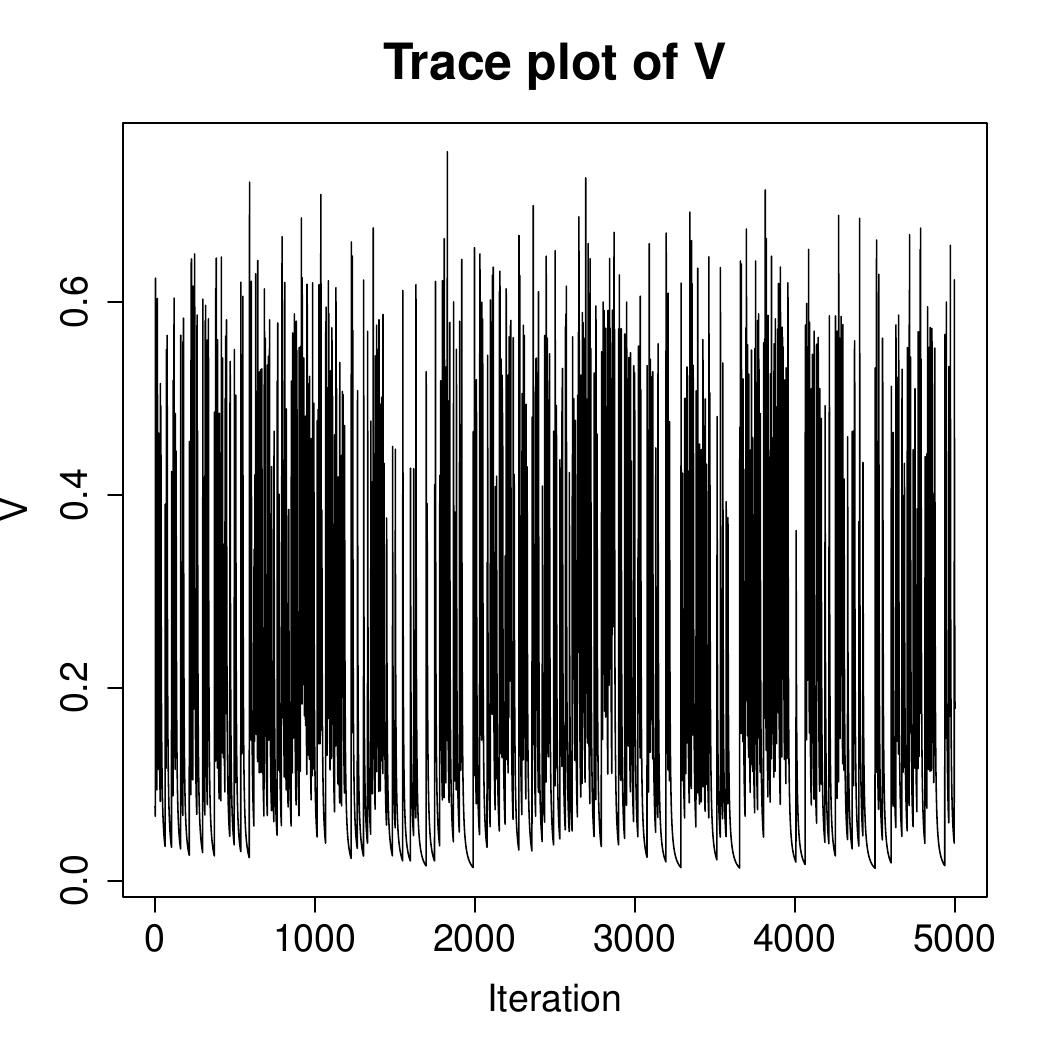}
\includegraphics[trim={0 0 0 0},clip, totalheight=0.25\textheight]{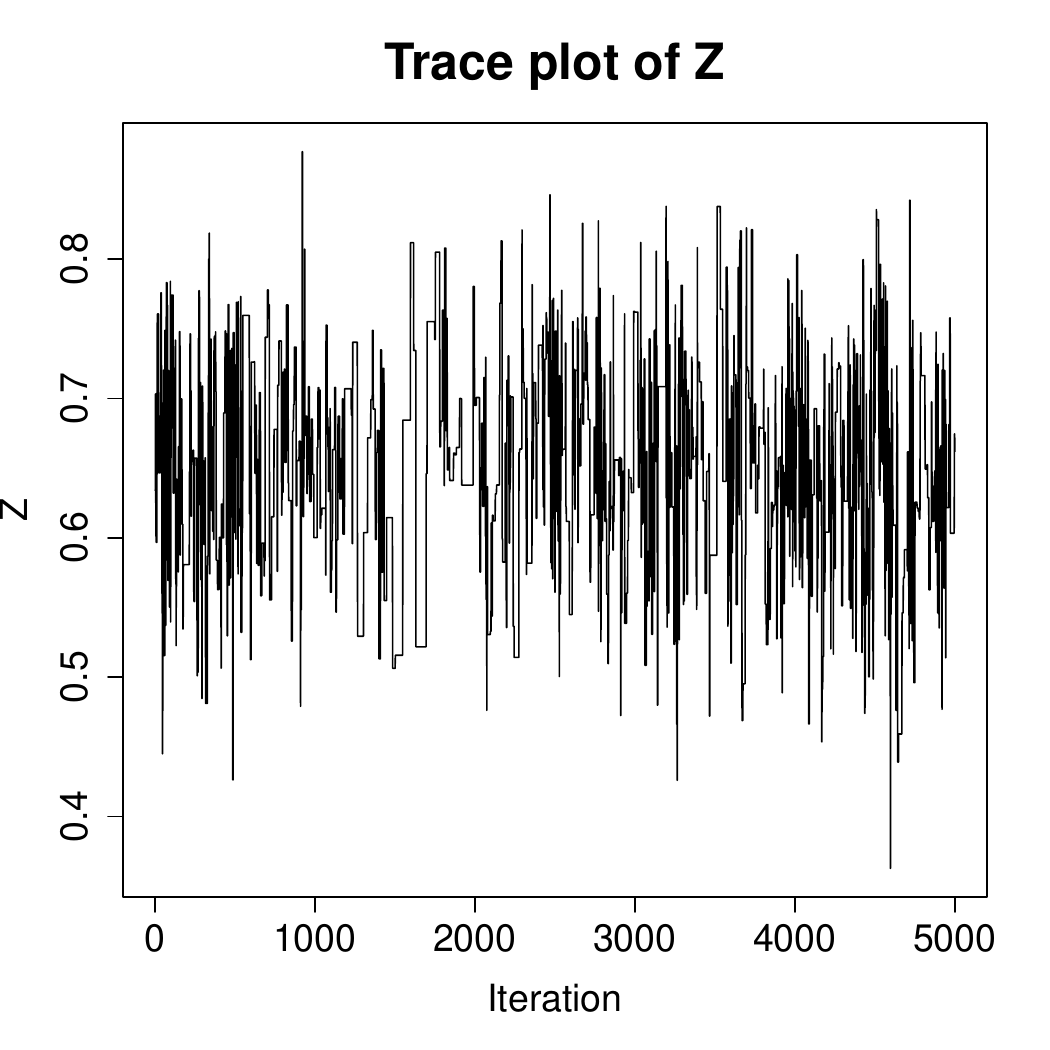}
\includegraphics[trim={0 0 0 0},clip, totalheight=0.25\textheight]{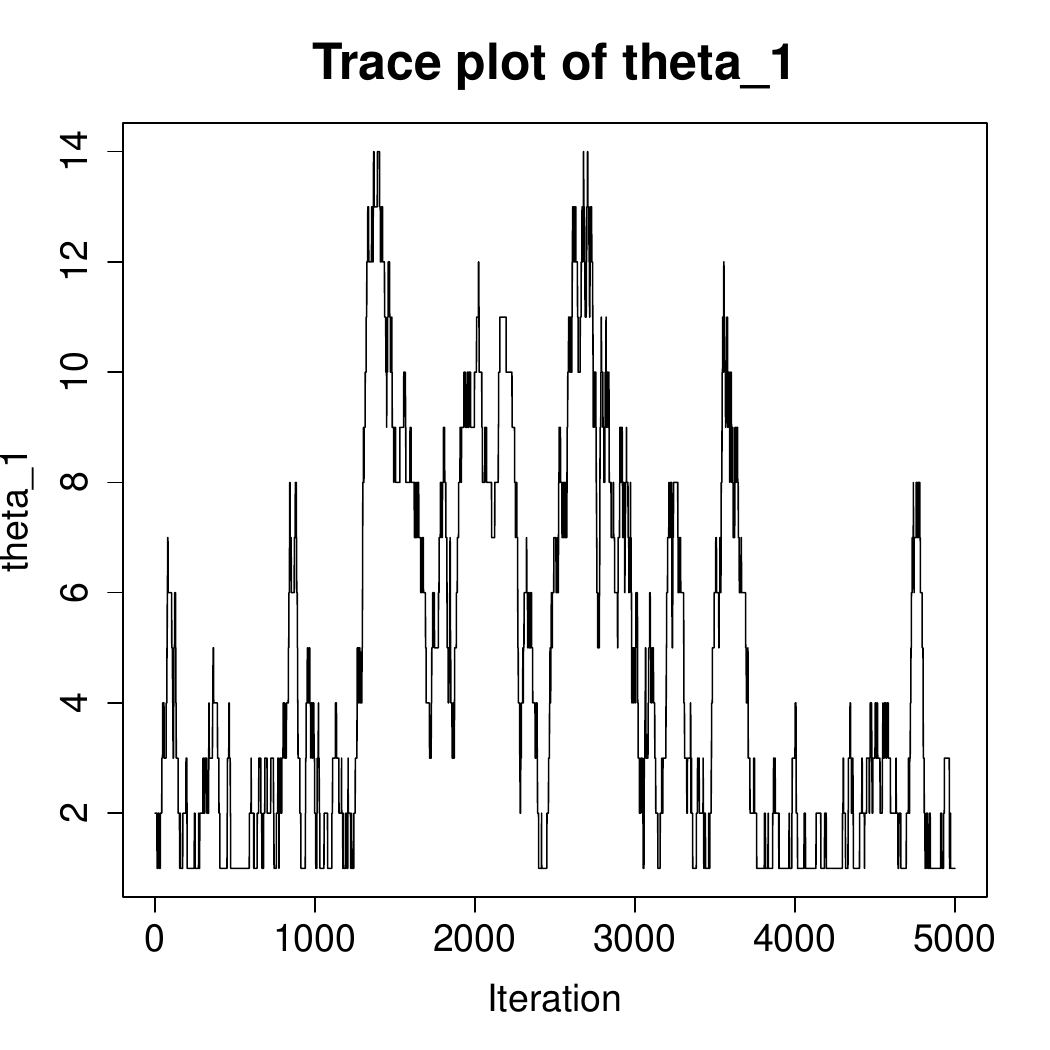}

\caption{{\bf Real spatio-temporal data analysis}: Traceplots of variable dimensional parameters.}
\label{fig:traceplotreal2}
%\end{center}
\end{figure}

%All of them seem to give high posterior probability to the approximate range $(0.1,0.3)$.
\begin{figure}%[htp]
\centering
\subfigure[Correlation between($y_{1,21},y_{2,21}$)]{ 
\includegraphics[width=6cm,height=5cm]{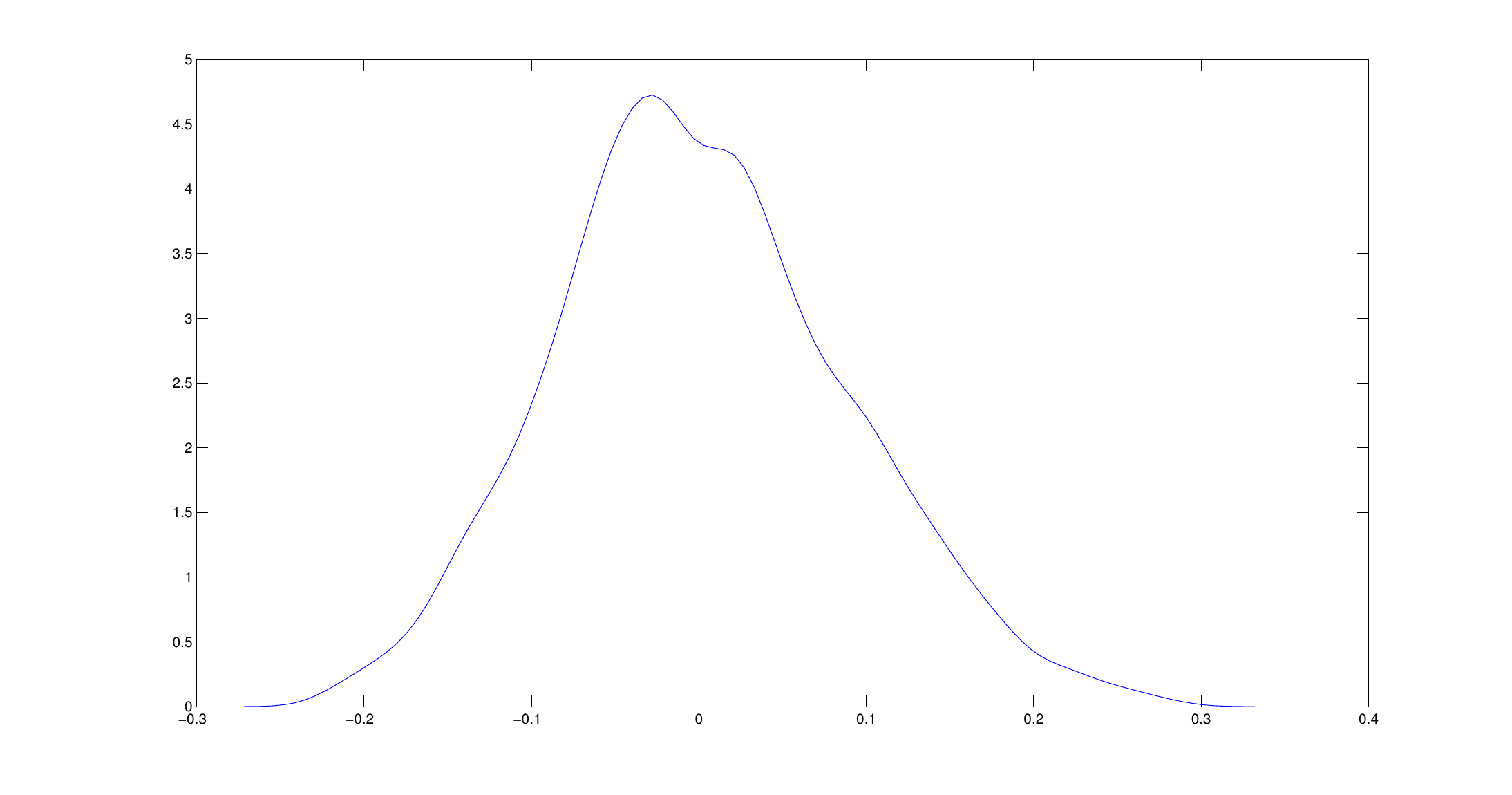}}
\hspace{2mm}
\subfigure[Correlation between($y_{1,21},y_{7,21}$)]{ 
\includegraphics[width=6cm,height=5cm]{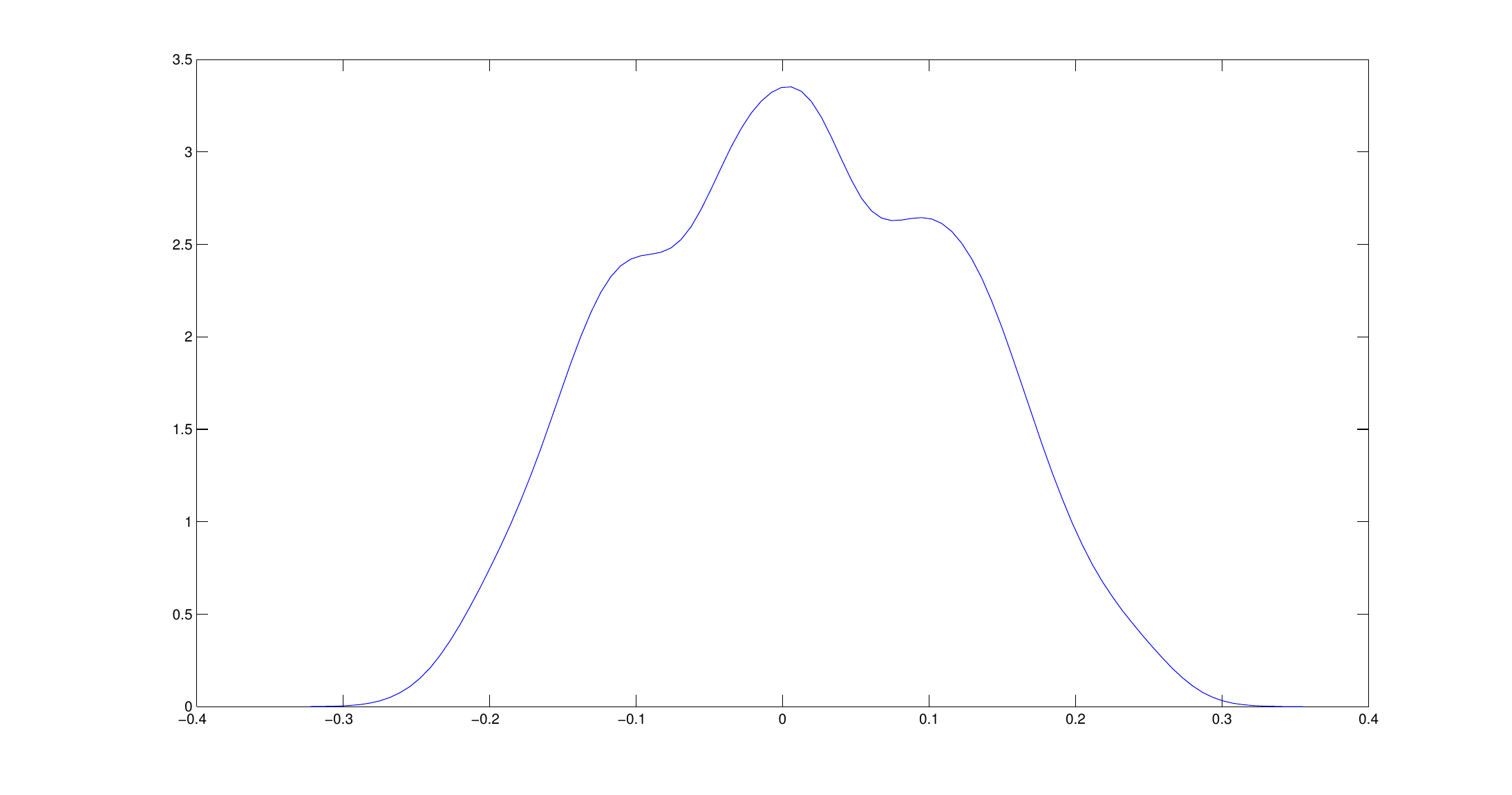}}
\hspace{2mm}
\subfigure[Correlation between($y_{2,21},y_{7,21}$)]{ 
\includegraphics[width=6cm,height=5cm]{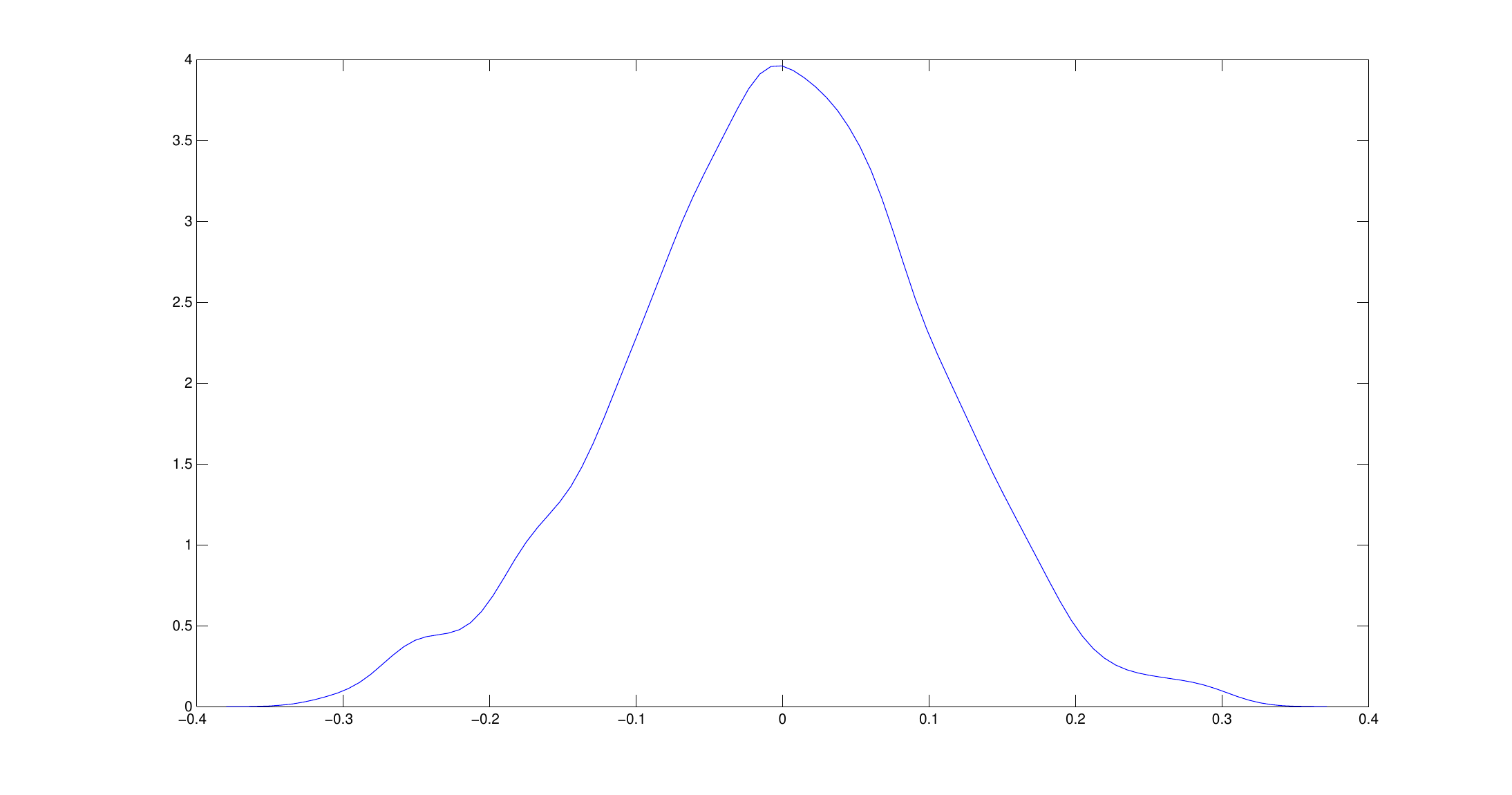}}
%\hspace{2mm}
\vspace{2mm}
\subfigure[Correlation between($y_{24,15},y_{40,15}$)]{ 
\includegraphics[width=6cm,height=5cm]{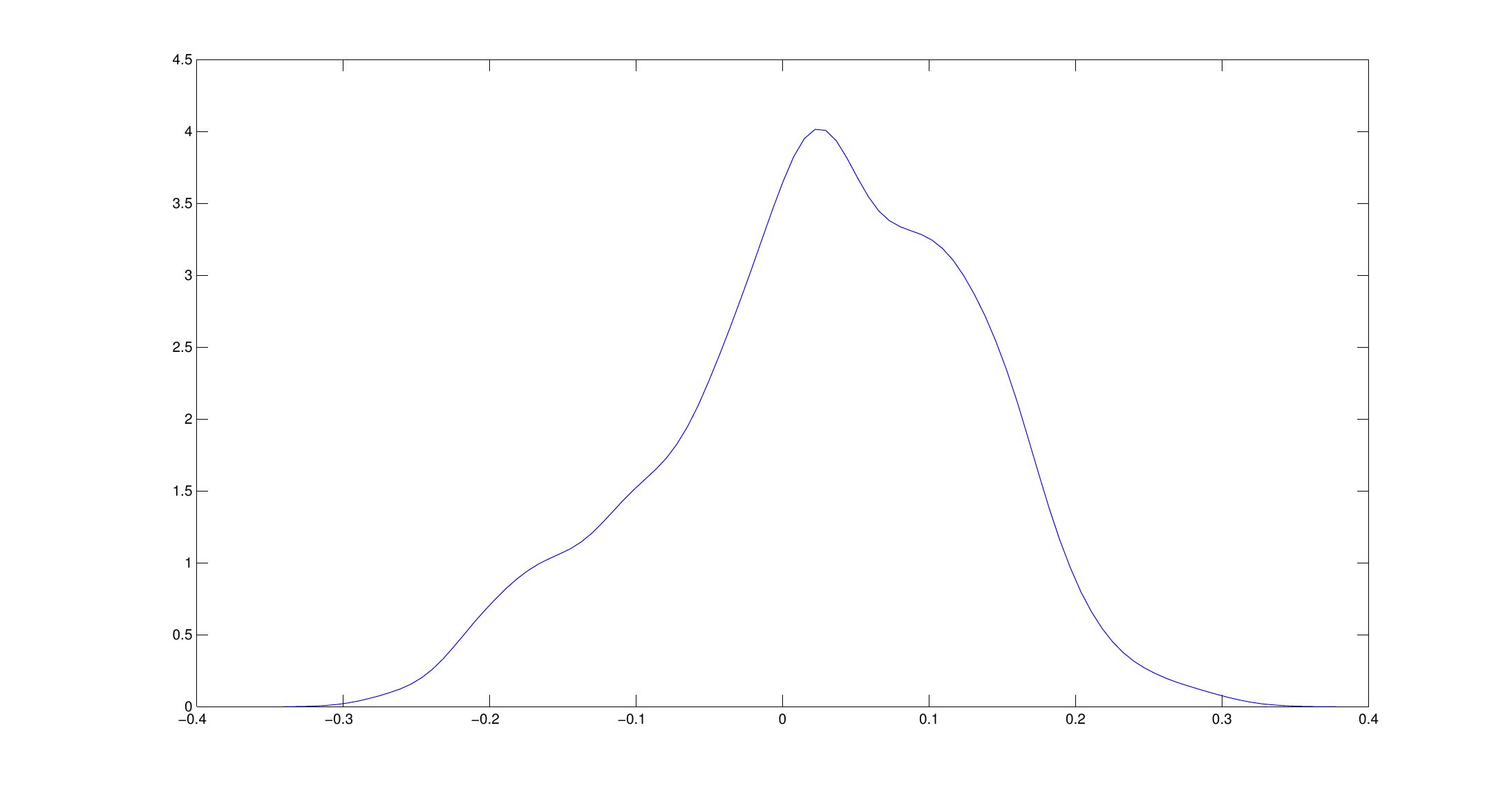}}
\hspace{2mm}
\subfigure[Correlation between($y_{24,15},y_{49,15}$)]{ 
\includegraphics[width=6cm,height=5cm]{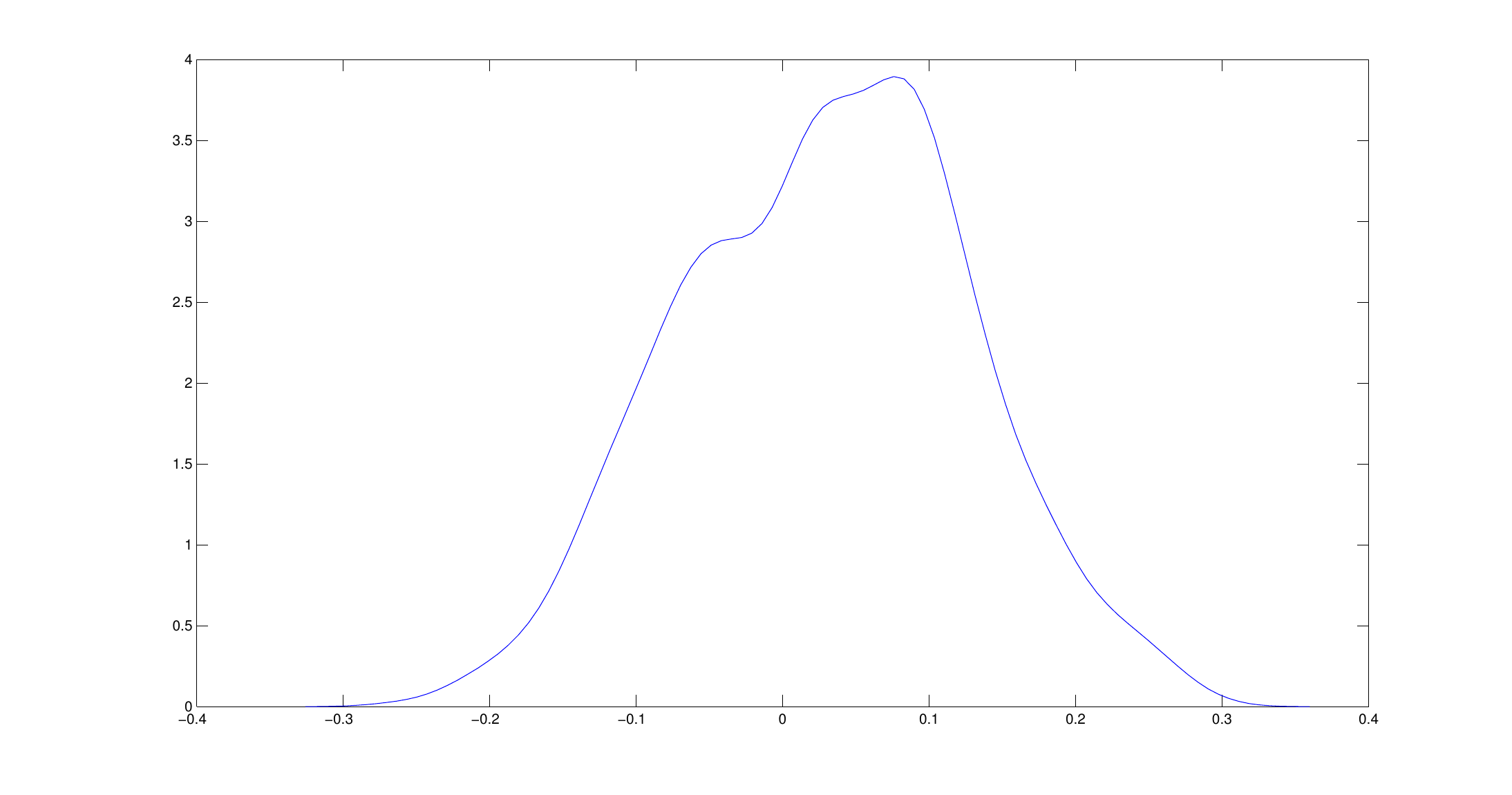}}
\hspace{2mm}
\subfigure[Correlation between($y_{40,15},y_{49,15}$)]{ 
\includegraphics[width=6cm,height=5cm]{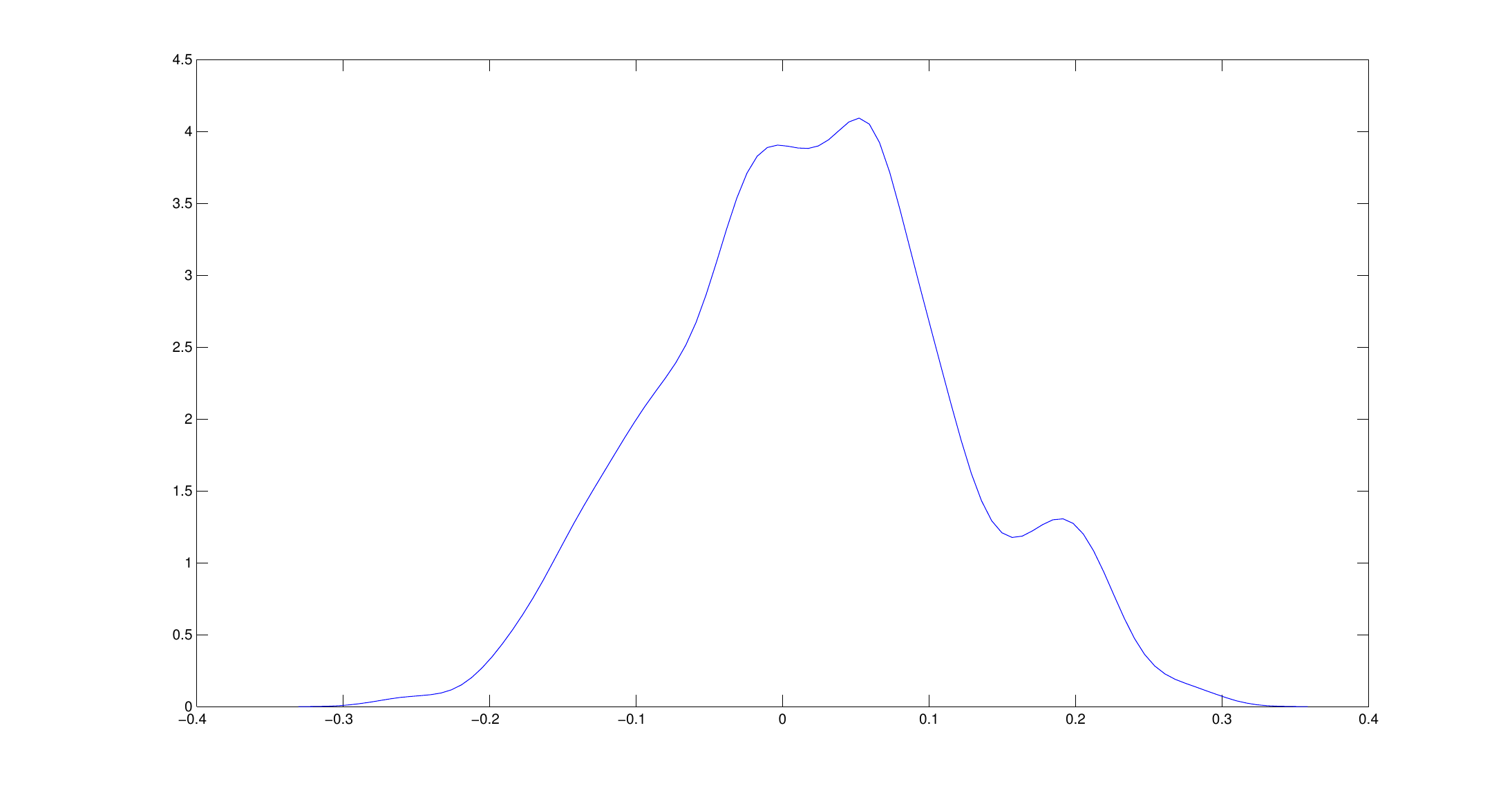}}

\caption{{\bf Real spatio-temporal data analysis:} Posterior densities of correlations for 6 different pairs of spatio-temporal points.}
\label{fig:correlation_real_big}
\end{figure}

\newpage

\renewcommand\baselinestretch{1.3}
\normalsize
\bibliography{references,references1}

\begin{thebibliography}{}

\bibitem[Banerjee and Gelfand(2003)Banerjee and Gelfand]{Banerjee03}
Banerjee, S. and Gelfand, A.~E. (2003).
\newblock {O}n {S}moothness {P}roperties of {S}patial {P}rocesses.
\newblock {\em Journal of Multivariate Analysis\/}, {\bf 84}, 85–100.

\bibitem[Chang {\em et~al.}(2011)Chang, Hsu, and Huang]{Chang10}
Chang, Y.-M., Hsu, N.-J., and Huang, H.-C. (2011).
\newblock {S}emiparametric {E}stimation and {S}election for {N}onstationary
  {S}patial {C}ovariance {F}unctions.
\newblock {\em Journal of Computational and Graphical Statistics\/}, {\bf 19},
  117--139.

\bibitem[Cressie and Wikle(2011)Cressie and Wikle]{Cressie11}
Cressie, N. A.~C. and Wikle, C.~K. (2011).
\newblock {\em {S}tatistics for {S}patio-{T}emporal {D}ata\/}.
\newblock Wiley, New York.

\bibitem[Damian {\em et~al.}(2001)Damian, Sampson, and Guttorp]{Damian01}
Damian, D., Sampson, P.~D., and Guttorp, P. (2001).
\newblock {B}ayesian {E}stimation of {S}emi-{P}arametric {N}on-stationary
  {S}patial {C}ovariance {S}tructures.
\newblock {\em Environmetrics\/}, {\bf 12}, 161--178.

\bibitem[Das and Bhattacharya(2019a)Das and Bhattacharya]{Das14supp}
Das, M. and Bhattacharya, S. (2019a).
\newblock {S}upplement to ``{N}onstationary, {N}onparametric, {N}onseparable
  {B}ayesian {S}patio-{T}emporal {M}odeling {U}sing {K}ernel {C}onvolution of
  {O}rder {B}ased {D}ependent {D}irichlet {P}rocess".
\newblock arXiv preprint.

\bibitem[Das and Bhattacharya(2019b)Das and Bhattacharya]{Das14}
Das, M. and Bhattacharya, S. (2019b).
\newblock {T}ransdimensional {T}ransformation {B}ased {M}arkov {C}hain {M}onte
  {C}arlo.
\newblock {\em Brazilian Journal of Probability and Statistics\/}, {\bf 33},
  87--138.

\bibitem[Dey and Bhattacharya(2018)Dey and Bhattacharya]{Dey14a}
Dey, K.~K. and Bhattacharya, S. (2018).
\newblock {A} {B}rief {T}utorial on {T}ransformation {B}ased {M}arkov {C}hain
  {M}onte {C}arlo and {O}ptimal {S}caling of the {A}dditive {T}ransformation.
\newblock {\em Brazilian Journal of Probability and Statistics\/}, {\bf 31},
  569--617.

\bibitem[Dey and Bhattacharya(2019)Dey and Bhattacharya]{Dey14b}
Dey, K.~K. and Bhattacharya, S. (2019).
\newblock {A} {B}rief {R}eview of {O}ptimal {S}caling of the {M}ain {MCMC}
  {A}pproaches and {O}ptimal {S}caling of {A}dditive {TMCMC} {U}nder
  {N}on-{R}egular {C}ases.
\newblock {\em Brazilian Journal of Probability and Statistics\/}, {\bf 33},
  222--266.

\bibitem[Duan {\em et~al.}(2007)Duan, Guindani, and Gelfand]{Duan07}
Duan, J.~A., Guindani, M., and Gelfand, A.~E. (2007).
\newblock {G}eneralized {S}patial {D}irichlet {P}rocess {M}odels.
\newblock {\em Biometrika\/}, {\bf 94}, 809--825.

\bibitem[Duan {\em et~al.}(2009)Duan, Gelfand, and Sirmans]{Duan09}
Duan, J.~A., Gelfand, A.~E., and Sirmans, C.~F. (2009).
\newblock {M}odeling {S}pace-{T}ime {D}ata {U}sing {S}tochastic {D}ifferential
  {E}quations.
\newblock {\em Bayesian Anslysis\/}, {\bf 4}, 733--758.

\bibitem[Dutta and Bhattacharya(2014)Dutta and Bhattacharya]{Dutta14}
Dutta, S. and Bhattacharya, S. (2014).
\newblock {M}arkov {C}hain {M}onte {C}arlo {B}ased on {D}eterministic
  {T}ransformations.
\newblock {\em Statistical Methodology\/}, {\bf 16}, 100--116.
\newblock Also available at http://arxiv.org/abs/1106.5850. Supplement
  available at http://arxiv.org/abs/1306.6684.

\bibitem[Ferguson(1973)Ferguson]{Ferguson73}
Ferguson, T.~S. (1973).
\newblock {A} {B}ayesian {A}nalysis of {S}ome {N}onparametric {P}roblems.
\newblock {\em The Annals of Statistics\/}, {\bf 1}, 209--230.

\bibitem[Ferguson(1974)Ferguson]{Ferguson74}
Ferguson, T.~S. (1974).
\newblock {P}rior {D}istributions on {S}paces of {P}robability {M}easures.
\newblock {\em The Annals of Statistics\/}, {\bf 2}, 615--629.

\bibitem[Fuentes(2002)Fuentes]{Fuentes02}
Fuentes, M. (2002).
\newblock {S}pectral {M}ethods for {N}onstationary {S}patial {P}rocesses.
\newblock {\em Biometrika\/}, {\bf 89}, 197--210.

\bibitem[Fuentes and Reich(2013)Fuentes and Reich]{Fuentes03}
Fuentes, M. and Reich, B. (2013).
\newblock {M}ultivariate {S}patial {N}onparametric {M}odelling via {K}ernel
  {P}rocess {M}ixing.
\newblock {\em Statistica Sinica\/}, {\bf 23}, 75--97.

\bibitem[Fuentes and Smith(2001)Fuentes and Smith]{Fuentes01}
Fuentes, M. and Smith, R.~L. (2001).
\newblock {A} {N}ew {C}lass of {N}onstationary {S}patial {M}odels.
\newblock Technical Report, Department of Statistics, North Carolina State
  University.

\bibitem[Geisser(1993)Geisser]{Geisser}
Geisser, S. (1993).
\newblock {\em {P}redictive {I}nference : {A}n for {I}ntroduction\/}.
\newblock Chapman \& Hall, London.

\bibitem[Gelfand {\em et~al.}(2005)Gelfand, Kottas, and MacEachern]{Gelfand05}
Gelfand, A.~E., Kottas, A., and MacEachern, S.~N. (2005).
\newblock {B}ayesian {N}onparametric {S}patial {M}odeling {W}ith {D}irichlet
  {P}rocess {M}ixing.
\newblock {\em Journal of the American Statistical Association\/}, {\bf 100},
  1021--1035.

\bibitem[Gilani {\em et~al.}(2016)Gilani, Berrocal, and Batterman]{Gilani16}
Gilani, O., Berrocal, V.~J., and Batterman, S.~A. (2016).
\newblock {N}on-stationary {S}patio-temporal {M}odeling of {T}raffic-related
  {P}ollutants in {N}ear-road {E}nvironments.
\newblock {\em Spatial and Spatio-temporal Epidemiology\/}, {\bf 18}, 24--37.

\bibitem[Griffin and Steel(2004)Griffin and Steel]{Griffin04}
Griffin, J.~E. and Steel, M. F.~J. (2004).
\newblock {S}emiparametric {B}ayesian {I}nference for {S}tochastic {F}rontier
  {M}odels.
\newblock {\em Journal of Econometrics\/}, {\bf 123}, 121--152.

\bibitem[Griffin and Steel(2006)Griffin and Steel]{Griffin06}
Griffin, J.~E. and Steel, M. F.~J. (2006).
\newblock {O}rder-{B}ased {D}ependent {D}irichlet {P}rocesses.
\newblock {\em Journal of the American Statistical Association\/}, {\bf 101},
  179--194.

\bibitem[Guttorp and Sampson(1994)Guttorp and Sampson]{Guttorp94}
Guttorp, P. and Sampson, P.~D. (1994).
\newblock {M}ethods for {E}stimating {H}eterogeneous {S}patial {C}ovariance
  {F}unctions with {E}nvironmental {A}pplications.
\newblock In G.~P. Patil and C.~R. Rao, editors, {\em Handbook of Statistics
  XII: Environmental Statistics\/}, pages 663--690, New York. Elsevier/North
  Holland.

\bibitem[Guttorp {\em et~al.}(2013)Guttorp, Schmidt, Bartlett, and
  Besag]{Guttorp2013}
Guttorp, P., Schmidt, A.~M., Bartlett, M., and Besag, J. (2013).
\newblock {C}ovariance {S}tructure of {S}patial and {S}patiotemporal
  {P}rocesses.
\newblock {\em WIREs Comput Stat\/}, pages 279--287.

\bibitem[Haas(1995)Haas]{Haas95}
Haas, T.~C. (1995).
\newblock {L}ocal {P}rediction of a {S}patio-{T}emporal {P}rocess with an
  {A}pplication to {W}et {S}ulfate {D}eposition.
\newblock {\em Journal of the American Statistical Association\/}, {\bf 90},
  1189–1199.

\bibitem[Higdon(1998)Higdon]{Higdon98}
Higdon, D. (1998).
\newblock {A} {P}rocess-{C}onvolution {A}pproach to {M}odeling {T}emperatures
  in the {N}orth {A}tlantic {O}cean.
\newblock {\em Environmental and Ecological Statistics\/}, {\bf 5}, 173–190.

\bibitem[Higdon(2001)Higdon]{Higdon02}
Higdon, D. (2001).
\newblock {S}pace and {S}pace-{T}ime {M}odeling {U}sing {P}rocess
  {C}onvolutions.
\newblock In C.~W. A.~V. Barnett, P.~C. Chatwin, and A.~H. El-Sharaawi,
  editors, {\em {Q}uantitative {M}ethods for {C}urrent {E}nvironmental
  {I}ssues\/}, pages 37--56, London. Springer-Verlag.

\bibitem[Higdon {\em et~al.}(1999)Higdon, Swall, and Kern]{Higdon99}
Higdon, D., Swall, J., and Kern, J. (1999).
\newblock {N}on-{S}tationary {S}atial {M}odeling.
\newblock In J.~M. Bernardo, J.~O. Berger, A.~P. Dawid, and A.~F.~M. Smith,
  editors, {\em {B}ayesian {S}tatistics 6\/}, pages 761--768, Oxford. Oxford
  University Press.

\bibitem[Ingebrigtsen {\em et~al.}(2014)Ingebrigtsen, Lindgren, and
  Steinsland]{Ing14}
Ingebrigtsen, R., Lindgren, F., and Steinsland, I. (2014).
\newblock {S}patial {M}odels with {E}xplanatory {V}ariables in the {D}ependence
  {S}tructure.
\newblock {\em Spatial Statistics\/}, {\bf 8}, 20--38.

\bibitem[Ishwaran and James(2001)Ishwaran and James]{Ishwaran01}
Ishwaran, H. and James, L.~F. (2001).
\newblock {G}ibbs {S}ampling {M}ethods for {S}tick-{B}reaking {P}rior.
\newblock {\em Journal of the American Statistical Association\/}, {\bf 96},
  161--173.

\bibitem[Kim {\em et~al.}(2005)Kim, Mallick, and Holmes]{Kim2005}
Kim, H.-m., Mallick, B.~K., and Holmes, C.~C. (2005).
\newblock {A}nalyzing {N}onstationary {S}patial {D}ata using {P}iecewise
  {G}aussian {P}rocesses.
\newblock {\em Journal of the American Statistical Association\/}, pages
  653--668.

\bibitem[Kottas {\em et~al.}(2007)Kottas, Duan, and Gelfand]{Kottas07}
Kottas, A., Duan, J.~A., and Gelfand, A.~E. (2007).
\newblock {M}odeling {D}isease {I}ncidence {D}ata with {S}patial and
  {S}patio-{T}emporal {D}irichlet {P}rocess {M}ixtures.
\newblock {\em Biometrical Journal\/}, {\bf 49}, 1--14.

\bibitem[Neto {\em et~al.}(2014)Neto, Schmidt, and Guttorp]{Neto14}
Neto, J. H.~V., Schmidt, A.~M., and Guttorp, P. (2014).
\newblock {A}ccounting for {S}patially {V}arying {D}irectional {E}ffects in
  {S}patial {C}ovariance {S}tructures.
\newblock {\em Journal of the Royal Statistical Society. Series C (Applied
  Statistics)\/}, {\bf 63}, 103--122.

\bibitem[Nott and Dunsmuir(2002)Nott and Dunsmuir]{Nott02}
Nott, D.~J. and Dunsmuir, W. T.~M. (2002).
\newblock {E}stimation of {N}onstationary {S}patial {C}ovariance {S}tructure.
\newblock {\em Biometrika\/}, {\bf 89}, 819--829.

\bibitem[Paciorek(2003)Paciorek]{Paciorek03}
Paciorek, C.~J. (2003).
\newblock {\em Nonstationaty Gaussian Process for Regression and Spatial
  Modeling\/}.
\newblock Doctoral thesis, Carnegie Mellon University.

\bibitem[Paciorek {\em et~al.}(2009)Paciorek, Yanosky, and Puett]{Paciorek09}
Paciorek, C.~J., Yanosky, J.~D., and Puett, R.~C. (2009).
\newblock {P}ractical {L}arge-scale {S}patio-{T}emporal {M}odeling of
  {P}articulate {M}atter {C}oncentrations.
\newblock {\em The Annals of Applied Statistics\/}, {\bf 3}, 370--397.

\bibitem[Petrone {\em et~al.}(2009)Petrone, Guindani, and Gelfand]{Petrone09}
Petrone, S., Guindani, M., and Gelfand, A.~E. (2009).
\newblock {H}ybrid {D}irichlet {M}ixture {M}odels for {F}unctional {D}ata.
\newblock {\em Journal of the Royal Statistical Society. Series B\/}, {\bf 71},
  755--782.

\bibitem[Pettit(1990)Pettit]{Pettit1}
Pettit, L. (1990).
\newblock {T}he {C}onditional of {P}redictive-{O}rdinate for the {N}ormal
  {D}istribution.
\newblock {\em Journal of the Royal Statistical Society: Series B\/}, {\bf 52},
  175--184.

\bibitem[Reich {\em et~al.}(2011)Reich, Fuentes, and Dunson]{Reich11}
Reich, B.~J., Fuentes, M., and Dunson, D.~B. (2011).
\newblock {B}ayesian {S}patial {Q}uantile {R}egression.
\newblock {\em Journal of the American Statistical Association\/}, {\bf 106},
  6--20.

\bibitem[Risser and Calder(2015)Risser and Calder]{Risser15}
Risser, M.~D. and Calder, C.~A. (2015).
\newblock {R}egression-based {C}ovariance {F}unctions for {N}onstationary
  {S}patial {M}odeling.
\newblock {\em Environmetrics\/}, {\bf 26}, 284--297.

\bibitem[Risser {\em et~al.}(2019)Risser, Calder, Berrocal, and
  Berrett]{Risser19}
Risser, M.~D., Calder, C.~A., Berrocal, V.~J., and Berrett, C. (2019).
\newblock {N}onstationary {S}patial {P}rediction of {S}oil {O}rganic {C}arbon:
  {I}mplications for {S}tock {A}ssessment {D}ecision {M}aking.
\newblock {\em The Annals of Applied Statistics\/}, {\bf 13}, 165--188.

\bibitem[Roy and Bhattacharya(2020)Roy and Bhattacharya]{Roy20}
Roy, S. and Bhattacharya, S. (2020).
\newblock {B}ayesian {C}haracterizations of {P}roperties of {S}tochastic
  {P}rocesses with {A}pplications.
\newblock ArXiv Preprint.

\bibitem[Sampson and Guttorp(1992)Sampson and Guttorp]{Sampson92}
Sampson, P.~D. and Guttorp, P. (1992).
\newblock {N}onparametric {E}stimation of {N}onstationary {S}patial
  {C}ovariance {S}tructure.
\newblock {\em Journal of the American Statistical Association\/}, {\bf 87},
  108–119.

\bibitem[Schmidt and O'Hagan(2003)Schmidt and O'Hagan]{Schmidt03}
Schmidt, A.~M. and O'Hagan, A. (2003).
\newblock {B}ayesian {I}nference for {N}onstationary {S}patial {C}ovariance
  {S}tructure via {S}patial {D}eformations.
\newblock {\em Journal of the Royal Statistical Society. Series B\/}, {\bf 65},
  743–758.

\bibitem[Schmidt {\em et~al.}(2011)Schmidt, Guttorp, and O'Hagan]{Schmidt11}
Schmidt, A.~M., Guttorp, P., and O'Hagan, A. (2011).
\newblock {C}onsidering {C}ovariates in the {C}ovariance {S}tructure of
  {S}patial {P}rocesses.
\newblock {\em Enironmetrics\/}, {\bf 22}, 487--500.

\bibitem[Sethuraman(1994)Sethuraman]{Sethuraman94}
Sethuraman, J. (1994).
\newblock {A} constructive definition of {D}irichlet priors.
\newblock {\em Statistica Sinica\/}, {\bf 4}, 639--650.

\bibitem[Stein(1999)Stein]{Stein99}
Stein, M.~L. (1999).
\newblock {\em {I}nterpolation of {S}patial {D}ata: {S}ome {T}heory for
  {K}riging\/}.
\newblock Springer-Verlag, New York.

\bibitem[Wolpert {\em et~al.}(2011)Wolpert, Clyde, and Tu]{Wolpert11}
Wolpert, R.~L., Clyde, M.~A., and Tu, C. (2011).
\newblock {S}tochastic {E}xpansions {U}sing {C}ontinuous {D}ictionaries:
  {L}\'{e}vy {A}daptive {R}egression {K}ernels.
\newblock {\em Annals of Statistics (to appear)\/}.

\bibitem[Yaglom(1987a)Yaglom]{Yaglom87a}
Yaglom, A.~M. (1987a).
\newblock {\em {C}orrelation {T}heory of {S}tationary and {R}elated {R}andom
  {F}unctions--{V}olume-{I}: {B}asic {R}esults\/}.
\newblock Springer-Verlag, New York.

\bibitem[Yaglom(1987b)Yaglom]{Yaglom87b}
Yaglom, A.~M. (1987b).
\newblock {\em {C}orrelation {T}heory of {S}tationary and {R}elated {R}andom
  {F}unctions--{V}olume-{II}: {S}upplemtary {N}otes and {R}eferences\/}.
\newblock Springer-Verlag, New York.

\bibitem[Yanosky {\em et~al.}(2008a)Yanosky, Paciorek, and Suh]{Yanosky08b}
Yanosky, J.~D., Paciorek, C.~J., and Suh, H.~H. (2008a).
\newblock {P}redicting {C}hronic {F}ine {P}articulate {E}xposures {U}sing
  {S}patio-temporal {M}odels for the {N}ortheastern and {M}idwestern {U.S.}
\newblock {\em Environmental Health Perspectives\/}, {\bf 117}, 522--529.

\bibitem[Yanosky {\em et~al.}(2008b)Yanosky, Paciorek, Schwartz, Laden, Puett,
  and Suh]{Yanosky08a}
Yanosky, J.~D., Paciorek, C.~J., Schwartz, J., Laden, F., Puett, R.~C., and
  Suh, H.~H. (2008b).
\newblock {S}patio-temporal {M}odeling of {C}hronic {P}m10 {E}xposures of
  {N}urses {H}ealth {S}tudy.
\newblock {\em Atmospheric Environment\/}, {\bf 47}, 4047--4062.

\end{thebibliography}

%\newpage
%\input{figures}

\end{document}